\documentclass[11pt]{article}
\usepackage{graphicx} 

\usepackage[ruled,vlined]{algorithm2e}
\usepackage{fullpage}
\usepackage{float}
\usepackage{bbm}
\usepackage{graphicx}
\usepackage{upref}
\usepackage{enumerate}
\usepackage{latexsym}
\usepackage{color,graphics}
\usepackage{comment} 
\usepackage{relsize}
\usepackage{url} 
\usepackage{bm} 
\usepackage{marvosym}
\usepackage{booktabs}
\usepackage{hyperref}
\usepackage{amsmath,amsthm,amssymb,epsfig}
\usepackage{thmtools}
\usepackage{tikz}
\usepackage{xcolor}
\usepackage{authblk}
\usetikzlibrary{arrows.meta,positioning}
\definecolor{qoptdone}{RGB}{38,116,178}

\usepackage{caption}      

\usetikzlibrary{decorations.pathreplacing}
\usepackage{amsfonts}
\usepackage{varioref}

\newtheorem{theorem}{Theorem}[section]
\newtheorem{Lemma}[theorem]{Lemma}

\newtheorem{proposition}[theorem]{Proposition}
\newtheorem{corollary}[theorem]{Corollary}
\newtheorem{fact}[theorem]{Fact}

\newtheorem{definition}[theorem]{Definition}

\newtheorem{remark}[theorem]{Remark}

\newtheorem{result}{Result}

\newcommand{\R}{\mathbb{R}}
\newcommand{\Z}{\mathbb{Z}}
\newcommand{\C}{\mathbb{C}}
\newcommand{\N}{\mathbb{N}}

\newcommand{\remarkref}[1]{\hyperref[#1]{{Remark~\ref*{#1}}}}
\newcommand{\propref}[1]{\hyperref[#1]{{Proposition~\ref*{#1}}}}

\def\bra#1{\mathinner{\langle{#1}|}}
\def\ket#1{\mathinner{|{#1}\rangle}}
\newcommand{\braket}[2]{\langle #1|#2\rangle}

\renewcommand{\part}[2]{\frac{\partial #1}{\partial #2}}

\newcommand{\all}[2]{\begin{align}\label{#2} #1\end{align}}
\newcommand{\al}[1]{\begin{align} #1\end{align}}

\newcount\Comments
\newcommand{\kibitz}[2]{\ifnum\Comments=1\textcolor{#1}{#2}\fi}

\newcommand{\notagbreak}{\notag \\}

\newcommand{\norm}[1]{\lVert#1\rVert}

\newcommand{\thmref}[1]{\hyperref[#1]{{Theorem~\ref*{#1}}}}
\newcommand{\lemref}[1]{\hyperref[#1]{{Lemma ~\ref*{#1}}}}
\newcommand{\remref}[1]{\hyperref[#1]{{Remark~\ref*{#1}}}}
\newcommand{\corref}[1]{\hyperref[#1]{{Corollary~\ref*{#1}}}}
\newcommand{\eqnref}[1]{\hyperref[#1]{{Equation~(\ref*{#1})}}}
\newcommand{\claimref}[1]{\hyperref[#1]{{Claim~\ref*{#1}}}}
\newcommand{\factref}[1]{\hyperref[#1]{{Fact~\ref*{#1}}}}
\newcommand{\defref}[1]{\hyperref[#1]{{Definition~\ref*{#1}}}}
\newcommand{\exampleref}[1]{\hyperref[#1]{{Example~\ref*{#1}}}}
\newcommand{\hypref}[1]{\hyperref[#1]{{Hypothesis~\ref*{#1}}}}
\newcommand{\secref}[1]{\hyperref[#1]{{Section~\ref*{#1}}}}
\newcommand{\chapref}[1]{\hyperref[#1]{{Chapter~\ref*{#1}}}}
\newcommand{\apref}[1]{\hyperref[#1]{{Appendix~\ref*{#1}}}}
\newcommand{\tabref}[1]{\hyperref[#1]{{Table~\ref*{#1}}}}

\title{Quantum algorithms for orthogonal polynomial transforms}
\author[1]{Anupam Prakash\thanks{\texttt{anupam.prakash@jpmchase.com}}}
\author[1]{Shree Hari Sureshbabu}
\author[1]{Dylan Herman}
\author[1]{Shouvanik Chakrabarti\thanks{\texttt{shouvanik.chakrabarti@jpmchase.com}}}
\affil[1]{Global Technology Applied Research, JPMorganChase, New York, NY 10001, USA}
\date{\today}

\begin{document}

\maketitle

\begin{abstract}
    A quantum orthogonal polynomial transform (QOPT) is an algorithmic primitive that maps a superposition of standard basis states $\sum_k \alpha_{k} \ket{k}$ coherently to a basis of normalized univariate polynomials orthogonal with respect to a probability measure $\mu(x)$. We provide new discrete and continuous QOPTs for polynomial families in the Askey scheme extending the results for the quantum Hermite transform (Jain et al., STOC'26). Our efficient QOPT algorithms require time $O(\text{polylog}(N, 1/\epsilon))$, where $N$ is the grid size and $\epsilon$ is the error, for the discrete Charlier, Meixner and Krawtchouk transforms and for the integer-order Laguerre transform in the continuous setting. The efficient QOPTs are obtained by uncovering the links between Askey scheme polynomials and Gaussian quantum optical gates and developing a compilation framework for $SU(2)$ and $SU(1,1)$ optical gates on Cartesian grids extending the framework developed by Iyer et al. (arXiv:2602.15180).
    Further, we reduce the continuous Jacobi transform to the discrete Hahn transform and provide an $O\!\left(N\operatorname{polylog}((N+\alpha+\beta+1)/\epsilon)\right)$-time Hahn transform, a quadratic speedup over the naive implementation. This is based on a more efficient compilation of the Clebsch--Gordan transform for coupling $\mathrm{SU}(2)$ representations with spins $(j_{1}, j_{2})$.
    Finally, we develop a new framework for Laguerre transforms for all orders $\nu>0$ by fast-forwarding the corresponding radial oscillator using a 3-term chirp decomposition and an efficient algorithm for the Quantum Hankel Transform on a logarithmic grid.

\end{abstract}

\tableofcontents

\section{Introduction}
Transforms into a basis of orthogonal functions are among the most versatile
primitives in computational mathematics. The discrete Fourier transform
reshaped signal processing, numerical linear algebra, and algorithm design once
the fast Fourier transform reduced its cost from $O(N^2)$ to $O(N\log N)$
\cite{cooleytukey1965}. Its polynomial counterparts, which are the expansions in Chebyshev, Legendre, Hermite, and Laguerre bases, are the computational backbone of spectral methods for differential equations, Gaussian quadrature, and function approximation \cite{trefethen2013, shen2011spectral}. The development of fast orthogonal polynomial transforms (OPTs) has consequently been a sustained program in numerical analysis. Examples include the near-linear algorithm for Legendre expansions by Alpert and Rokhlin \cite{alpertrokhlin1991}, spherical-harmonic transforms on the sphere by Driscoll and Healy \cite{driscollhealy1994}, and algorithms for general Vandermonde-like matrix families obeying a three-term recurrence by Potts, Steidl, and Tasche \cite{pottssteidltasche1998}.
These algorithms achieve $O(N\log N)$ or $O(N\log^2 N)$ complexity precisely
because the underlying polynomials share structural ties with the Fourier
transform, exploited through fast cosine transforms and Toeplitz-Hankel
factorizations \cite{townsendwebbolver2018}.

Quantum computation promises to sharpen this speedup from polynomial to
exponential: an orthogonal basis change encoded as a unitary can, in favorable
cases, be realized by a circuit whose size is poly-logarithmic in the
dimension. The quantum Fourier transform is the archetype---$O(\log^2 N)$ gates
against the classical $O(N\log N)$ \cite{nielsenchuang2010}---and it underwrites
a large fraction of the exponential quantum speedups we know, from phase
estimation to Shor's algorithm. It is therefore natural to ask which orthogonal
polynomial transforms admit comparably efficient quantum realizations, and what
algorithmic capabilities such primitives would unlock. The answer is not a
matter of transcription. The efficiency of the quantum Fourier transform rests
on the recursive factorization of the Fourier matrix, and the unitaries
underlying generic orthogonal polynomial transforms possess no such structure to
exploit; identifying the structure that does enable a fast quantum transform is
the central question of this work.

Let us examine what additional structure is needed for an orthogonal family to have a fast transform. 
Every positive measure with finite moments yields, by Gram--Schmidt on the
monomials, a unique orthonormal family, and Favard's theorem places these
families in bijection with three-term recurrences
\begin{align}
  x\,p_k(x) = b_k\,p_{k+1}(x) + a_k\,p_k(x) + b_{k-1}\,p_{k-1}(x),
  \qquad b_k>0,
\end{align}
equivalently with tridiagonal Jacobi matrices whose spectral measure is $\mu$
\cite{ismail2005}. The recurrence is a \emph{complete}
characterization of orthogonality---and precisely for that reason it cannot, on
its own, accelerate the transform: for a generic measure the resulting change of
basis is a dense matrix, and applying it costs $\Theta(N^2)$ operations. A fast
transform, classical or quantum, must therefore exploit structure beyond
orthogonality itself.

The organizing principle that we uncover is that the orthogonal polynomial families
admitting fast quantum transforms are, to a large extent, those of the
\emph{Askey scheme} of hypergeometric orthogonal polynomials
\cite{koekoekleskyswarttouw2010} (Figure~\ref{fig:askey}). Beyond the three-term recurrence shared by all univariate
orthogonal polynomial families, the orthogonal polynomial families in the Askey scheme are distinguished by the existence of a second-order
differential or difference operator such that
\al{
  \mathcal{L}\,p_{k}(x) = \lambda_{k}\,p_{k}(x).
}
The polynomial states $\ket{\mathbf{k}}$ are in fact the eigenstates of a single second-order differential or difference operator $\mathcal{L}$  \cite{askeywilson1985, koekoekleskyswarttouw2010}. The 
operator $\mathcal{L}$ specializes to a differential operator for the continuous families and a difference operator for the discrete families
\cite{nikiforovsuslovuvarov1991}.

The relevance of the
Askey scheme for our purposes is representation-theoretic rather than
combinatorial: in the classical harmonic analysis developed by Askey,
Koornwinder, Vilenkin, Klimyk, and others, these families arise as the matrix
elements and spherical functions of the rank-one Lie groups---the
Heisenberg--Weyl group, $SU(2)$, and $SU(1,1)$---with the second-order operator
$\mathcal{L}$ distinguishing each family realized as the Casimir invariant and
the three-term recurrence encoding the action of the raising and lowering
generators \cite{vilenkinklimyk1991, koornwinder1984, nikiforovsuslovuvarov1991}.
This is precisely the structure that quantum computation is equipped to exploit.
Seen in this light, the search for quantum speedups has largely been a search
for efficiently implementable, group-adapted changes of basis: the quantum
Fourier transform is harmonic analysis on abelian and, in the hidden-subgroup
program, nonabelian finite groups \cite{childsvandam2010}, and the Schur--Weyl
transform is harmonic analysis on the pair $(S_n, SU(d))$
\cite{bacon2006efficient}. The rank-one groups underlying the Askey scheme have,
with isolated exceptions, been absent from this program. The reason is
structural: their natural carrier is the harmonic oscillator rather than the
qubit, and the transforms they generate are \emph{continuous}, modeled on
bosonic modes and Gaussian optical operations rather than on discrete gates. It
is exactly these transforms that we bring within reach of gate-based quantum
computation, extending the harmonic-analytic view of quantum algorithms from the
finite groups to the rank-one continuous groups.

An efficient quantum Hermite transform was introduced recently in \cite{jain2025efficient}, where an efficient change of basis from the standard basis to a basis of Hermite polynomials defined on a discrete grid is carried out with complexity similar to the quantum Fourier transform. Unlike the classical algorithm, the efficient QOPT relies on the fast forwarding of the quantum harmonic oscillator \cite{wei1963lie}. 
This motivates the following definition of a \emph{quantum orthogonal polynomial transform} (QOPT).
\begin{definition} \label{def:qopt}
 (QOPT) Let $(\mu(x), p_{k}(x))$ be a base measure and a family of orthogonal polynomials where $\mu(x)$ is supported on $\R$ in the continuous and on a lattice $\Z$ in the discrete case. The QOPT is a quantum algorithm for generating a superposition of orthonormal polynomial states starting with a superposition in a known basis,
 \al{
  \sum_{k} \alpha_{k} \ket{k} \leftrightarrow  \sum_{k} \alpha_{k} \ket{\bf{k}}.
 }
The state $\ket{\bf{k}} = \frac{1}{\sqrt{Z}}\sum_{x} \sqrt{\mu(x)}p_{k}(x)\ket{x}$ has amplitudes proportional to the corresponding orthonormal polynomials.
\end{definition}
\noindent An implementation of a continuous QOPT in the circuit model must in addition specify the grid for $x$. Further, the transform implemented in the circuit model does not preserve exact orthogonality and is applicable up to an energy cutoff specified in the implementations. These features are inherited due to representing continuous variables on a discrete grid and are unavoidable \cite{jain2025efficient}.

\definecolor{qoptexp}{RGB}{38,116,178}    
\definecolor{qoptsub}{RGB}{200,120,20}    

\begin{figure}[htp]
\centering
\begin{tikzpicture}[
  fam/.style   = {draw, rounded corners, align=center, font=\footnotesize,
                  minimum width=2.5cm, minimum height=0.85cm, inner sep=2pt},
  exp/.style   = {fam, fill=qoptexp!22, draw=qoptexp!85!black, very thick},
  sub/.style   = {fam, fill=qoptsub!25, draw=qoptsub!85!black, very thick},
  prior/.style = {fam, fill=qoptexp!22, draw=qoptexp!85!black, very thick, dashed},
  open/.style  = {fam, fill=black!4, draw=black!45},
  lvl/.style   = {font=\scriptsize\itshape, text=black!55},
  leg/.style   = {font=\scriptsize},
  lim/.style   = {-{Latex[length=1.8mm]}, draw=black!50,
                  shorten >=1pt, shorten <=1pt},
  x=1cm, y=1cm]

  \node[lvl] at (-2.9,8) {Level 4};
  \node[lvl] at (-2.9,6) {Level 3};
  \node[lvl] at (-2.9,4) {Level 2};
  \node[lvl] at (-2.9,2) {Level 1};
  \node[lvl] at (-2.9,0) {Level 0};

  \node[open]  (wil) at (2.4,8)  {Wilson};
  \node[open]  (rac) at (10.2,8) {Racah};

  \node[open]  (cdh) at (0.4,6)  {Continuous\\dual Hahn};
  \node[open]  (ch)  at (4.4,6)  {Continuous\\Hahn};
  \node[sub]   (hah) at (8.2,6)  {Hahn};
  \node[sub]  (dh)  at (12.2,6) {Dual Hahn};

  \node[open]  (mp)  at (0.4,4)  {Meixner--\\Pollaczek};
  \node[sub]   (jac) at (4.4,4)  {Jacobi};
  \node[exp]   (mei) at (8.2,4)  {Meixner};
  \node[exp]   (kra) at (12.2,4) {Krawtchouk};

  \node[exp]   (lag) at (4.4,2)  {Laguerre};
  \node[exp]   (cha) at (8.2,2)  {Charlier};

  \node[prior] (her) at (6.3,0)  {Hermite};

  \draw[lim] (wil) -- (cdh);
  \draw[lim] (wil) -- (ch);
  \draw[lim] (wil) -- (jac);
  \draw[lim] (rac) -- (hah);
  \draw[lim] (rac) -- (dh);

  \draw[lim] (cdh) -- (mp);
  \draw[lim] (ch)  -- (mp);
  \draw[lim] (ch)  -- (jac);
  \draw[lim] (hah) -- (jac);
  \draw[lim] (hah) -- (mei);
  \draw[lim] (hah) -- (kra);
  \draw[lim] (dh)  -- (mei);
  \draw[lim] (dh)  -- (kra);

  \draw[lim] (mp)  -- (lag);
  \draw[lim] (jac) -- (lag);
  \draw[lim] (mei) -- (lag);
  \draw[lim] (mei) -- (cha);
  \draw[lim] (kra) -- (cha);
  \draw[lim] (kra) -- (her);
  \draw[lim] (lag) -- (her);
  \draw[lim] (cha) -- (her);

  \begin{scope}[shift={(0.4,-1.5)}]
    \node[exp,   minimum width=0.55cm, minimum height=0.4cm] (l1) at (0,0) {};
    \node[leg, right=2pt of l1] (t1) {Poly-log (exp.\ speedup)};
    \node[sub,   minimum width=0.55cm, minimum height=0.4cm, right=1.1cm of t1] (l2) {};
    \node[leg, right=2pt of l2] (t2) {Not poly-log (Hahn: quadratic; Jacobi: recursive)};

    \node[prior, minimum width=0.55cm, minimum height=0.4cm] (l3) at (0,-0.7) {};
    \node[leg, right=2pt of l3] (t3) {Hermite \cite{jain2025efficient} (prior work)};
    \node[open,  minimum width=0.55cm, minimum height=0.4cm, right=1.1cm of t3] (l4) {};
    \node[leg, right=2pt of l4] {Not addressed (open)};
  \end{scope}

\end{tikzpicture}
\caption{The Askey scheme of hypergeometric orthogonal polynomials, organized
top-to-bottom by decreasing level; arrows denote limit transitions between
families. Blue families admit poly-logarithmic QOPTs with exponential speedups
(Charlier, Krawtchouk, Meixner, Laguerre). Orange families are realized here but
are \emph{not} poly-logarithmic: the Hahn transform attains a quadratic speedup
over the naive $O(N^2)$ implementation, and the Jacobi transform can be implemented using the Hahn in a restricted setting. The Hermite transform at the base is the
foundational QOPT of \cite{jain2025efficient} on which the digital
implementations are built.}
\label{fig:askey}
\end{figure}
In this paper we realize efficient QOPTs for the Charlier, Meixner, and Krawtchouk polynomials in the discrete setting and for Laguerre polynomials in the continuous setting. We identify an efficient Hahn transform as a bottleneck for an efficient Jacobi transform for integer-valued parameters $\alpha, \beta$. We obtain a quadratic speedup over the naive implementation of the Hahn transform; however, this falls short of providing exponentially efficient Hahn and Jacobi transforms. The levels of the Askey scheme are organized by the number of free parameters in the hypergeometric function defining the corresponding polynomials as illustrated in Figure \ref{fig:askey}. The Wilson, Racah, continuous Hahn and continuous dual Hahn families associated with continuous Jacobi transforms and scattering-type problems are not addressed by the techniques developed in this work. The Meixner–Pollaczek family lies within reach of $SU(1,1)$ constructions and is a natural extension that we do not pursue here.

The distinction between the efficient QOPTs and the non-efficient ones can be explained in terms of the underlying algorithm. In the efficient case, the evolution $e^{-i\mathcal{L}t}$ can be \emph{fast-forwarded}: a Wei--Norman factorization into a fixed product of exponentials of the group generators reduces the simulation to $O(\mathrm{poly}\log N)$ gates, independent of the evolution time
\cite{wei1963lie}. For the Charlier,
Krawtchouk, Meixner, and Laguerre families the transform is fast-forwardable
directly, and we obtain QOPTs with gate complexity $O(\mathrm{poly}\log N)$,
matching the quantum Fourier transform up to polylogarithmic factors and giving
an exponential speedup over the $O(N\log^{2} N)$ cost of the corresponding
classical fast transforms. For the Hahn and univariate
Jacobi families no such direct fast-forwarding is known for the underlying P\"oschl-Teller Hamiltonian; there we obtain a
recursive speedup, quadratic in the discrete case. However, even with the qudratic speedup for the Hahn transforms, speedups for mltivariate Jacobi transforms analogous to the recursive speedps Schur--Weyl transform inspired speedups are plausible. \cite{bacon2006efficient}. 

The base measures for the orthogonal polynomial families for which we have efficient QOPTs with an exponential speedup over classical are commonly arising probability distributions, namely the Poisson, negative binomial, binomial, and gamma distributions. The Hahn and Jacobi transforms are orthogonal with respect to the discrete hypergeometric and beta distributions. Thus all the QOPTs discussed in this paper have a probabilistic interpretation.

We discuss next the conceptual links between the Askey scheme and efficient QOPTs, these links are primarily representation-theoretic. The Askey scheme organizes the harmonic analysis
of the rank-one groups: the Heisenberg--Weyl group, $SU(2)$, and $SU(1,1)$. The orthogonal polynomial families arise either as the matrix elements of the generators of these groups, as second-order operators $\mathcal{L}$ describing the Casimir invariant/coupling structure of the group or recoupling operators at the higher levels of the scheme. The three-term recurrence encodes the action of the raising and lowering
generators in the appropriate setting. The link to quantum computing follows as the rank-one groups are generated by the elementary
Gaussian gates in continuous-variable quantum computation.

These connections are particularly clear in the discrete OPT setting. 
Displacements
generate the Heisenberg--Weyl group, beam splitters generate $SU(2)$, and
single-mode squeezers generate $SU(1,1)$. This can be made precise using Perelomov's coherent state formalism \cite{perelomov1986}. We use this observation to obtain discrete QOPTs for the Charlier, Krawtchouk, and Meixner polynomials. The QOPTs for the Krawtchouk transform have appeared recently in the quantum computing literature \cite{guan2026quantumKrawtchouktransformusing}; the Krawtchouk transform is also implicit in the analysis of quantum algorithms like permanent sampling \cite{aaronson2011computational} and the decoded quantum interferometry (DQI) algorithm \cite{marwaha2026complexitydecodedquantuminterferometry}. It remains an open question to obtain quantum applications for the Charlier and Meixner transforms that we demonstrated to be closely related to the Krawtchouk transform. The Krawtchouk transform is adapted to $SU(2)$ and thus to qubits; the other transforms are more natural in a continuous-variable setting.

The remaining discrete transform Hahn transform can be viewed as a special case of the Schur--Weyl transform when restricted to Young diagrams with at most two rows. We provide an $O(N \text{polylog} N)$ quantum algorithm for the Hahn transform, improving upon the naive $O(N^{2})$ algorithm. The algorithm uses the identification between the Hahn polynomials and Clebsch--Gordan coefficients for $SU(2)$ that has so far not been used for the design of quantum algorithms. 

The continuous Laguerre and Jacobi orthogonal polynomial families, on the other hand, are related to coupled $SU(2)$ and $SU(1,1)$ representations, which can be equivalently described in terms of multi-mode isotropic oscillators. Unlike a single harmonic oscillator, which has integer eigenvalues and one-dimensional eigenspaces, a multivariate quantum harmonic oscillator exhibits degeneracy in its spectrum. However, the multi-oscillator system is superintegrable: there exist commuting
Hamiltonians whose simultaneous diagonalization resolves the Hilbert space into
one-dimensional eigenspaces \cite{miller2013superintegrability}.
For the two-mode harmonic oscillator, the Hamiltonian and the angular momentum operator $L_{z}$ provide one such system of commuting operators. The eigenfunctions split as Laguerre polynomials in the radial coordinate and
Fourier modes in the angular coordinates. We leverage this decomposition to obtain an efficient QOPT for the one-parameter family of Laguerre polynomials with integer parameters. The additional ingredient for the efficient Laguerre transform is the efficent change of basis from cartesian to circular modes implemented via a circular beam splitter, this arises due to the Schwinger $SU(1,1)$ representations being realized as fixed number difference sectors of the isotropic oscillator. 

If we try to extend the same approach to the two-parameter Jacobi polynomial family, we find that the 4-dimensional super-integrable Smorodinsky--Winternitz oscillator system consisting of coupled radial oscillators has an uncoupled joint radial and angular momentum basis for which the Jacobi polynomials appear as eigenfunctions. The Hahn transform appears as the overlap between the uncoupled and the coupled bases in this setting. An efficient Hahn transform and an efficient procedure to measure the angular quantum number $l$ remain the bottlenecks for obtaining an efficient Jacobi transform.

\subsection{Discrete QOPT Results}
The main contribution of this work is the development of a single unifying framework that implements efficient quantum orthogonal polynomial transforms for a broad set of orthogonal polynomial families in the Askey scheme. However, it is helpful to consider discrete and continuous orthogonal polynomial transforms separately in order to describe the main results as the algorithmic mechanisms arising in the two cases are distinct. 

The discrete QOPTs that we obtain in this work are those for the Charlier, Meixner, Krawtchouk and Hahn polynomials. These polynomials are supported on a discrete lattice and are orthogonal with respect to the Poisson, negative binomial, binomial and hypergeometric weights respectively. A unified perspective on these polynomial families is obtained by viewing them as solutions to the discrete hypergeometric equation as described in greater detail in Section \ref{sec:dqopt}. 

The Charlier, Meixner and Krawtchouk polynomial families are degree $1$ solutions of the discrete hypergeometric equation and further can be identified as the matrix elements of Gaussian linear optical gates. The identification is made using Perelomov's coherent state framework for the Heisenberg group, $SU(1,1)$ and $SU(2)$ respectively \cite{perelomov1986} and is explained further in section \ref{pmorov}. The QOPTs are then reduced to the implementation of Gaussian and linear optical gates with spatial degrees of freedom represented by a discrete grid ~\cite{jain2025efficient}. The choice of the quantum optical operation, such as Displacement, Squeezing, Beam-Splitter action, phase-space Rotation, selects the polynomial family. 

The above three discrete transforms can each be obtained using a single Gaussian gate. However, it is helpful to have a framework for simulating compositions of Gaussian gates, as this also allows phase estimation and resolution into the corresponding eigenbasis of orthogonal polynomials. Such a framework was given in \cite{iyer2026efficientquantumcircuitshighdimensional}. Here we generalize it to include compositions of gates from a Gaussian gate set that includes squeezing gates and provide a better analysis in terms of some of the parameters.

\begin{result}[Informal Theorem \ref{thm:digital_implementation_gaussian_optical_gates}]
    Let $\ket{\psi}$ be an
arbitrary normalized state on $K$ optical modes each with energy at most $N$. Let
$\widehat{U} = \widehat{W}_R \cdots \widehat{W}_1$ be a product of $R$
elementary Gaussian gates, where each $W_j$ can be broken into $M$ bounded evolution time gates. Then, for any target error $\epsilon < 1$, and taking $N = \Theta\!\left( \log(1/\epsilon) + \log K + RM \right)$, there is a digital quantum circuit $\overline{U}$ acting on
$\mathcal{O}(\textup{poly}(\log(K), \log(N), \log(1/\epsilon), RM))$ qubits and using
$\mathcal{O}\!\left( (RM \log(KN/\epsilon))^3 \right)$ one- and two-qubit gates
that $\epsilon$-approximates $\widehat{U}$ on inputs restricted to the truncated subspace.
\end{result}

The improvement over the analysis in \cite{iyer2026efficientquantumcircuitshighdimensional} concerns the dependence on $K$. It allows the number of modes to be non-constant because it suppresses the doubly exponential error term $L^K\exp(-\gamma L)$ in the error analysis. Our analysis may therefore provide improvements for large $K$.

The informal statement for the quantum Charlier transform 
is given below. The theorem statement serves as a template for QOPTs that are poly-logarithmic in the grid size for both discrete and continuous families. The underlying asymptotics for $L, N$ are common to all the cases and are obtained from the reduction to the quantum Hermite transform \cite{jain2025efficient}, arising from the Plancherel-Rotach/WKB asymptotics that determine the energy cutoffs in terms of grid size.

\begin{theorem}[Quantum Charlier transform] \label{intro:qct}
Fix $\mu>0$, target error $\epsilon<1$. Let
\begin{align}
    \ket{\psi}=\sum_{m\in\{0,\ldots,N-1\}}a_{m}
    \ket{m},
    \qquad \sum_{m}|a_{m}|^2=1,
\end{align}
be an arbitrary state in the effective qubit space $\mathcal{H}_{N}$. Set the grid size to
\begin{align*}
L=\Theta\left((2^{\sqrt{\mu}}N)^{9/4}/\epsilon^{13/4} \right).
\end{align*}
Then there is a digital quantum circuit $\overline{U}$ on $\mathcal{O}(\log(L))$ qubits. Its gate complexity is
\begin{align*}
\mathcal{O}\left(\textup{poly}(\mu, \log(1/\epsilon))\right),
\end{align*}
and it satisfies
\begin{align*}
    \lVert (U_{\mathrm{Ch}}(\mu) - \overline{U})P_{N}\rVert < \epsilon,
\end{align*}
where $U_{Ch}(\mu)$ is the coherent quantum Charlier transform and $P_N$ is the orthogonal projection of  $\mathcal{H}_L$ onto  $\mathcal{H}_N$. 
 \end{theorem}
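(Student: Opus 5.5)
The plan is to realize the Charlier transform as the matrix of a single displacement gate in the Fock basis, and then compile that gate with the Gaussian-gate framework of Informal Theorem~\ref{thm:digital_implementation_gaussian_optical_gates}. By Perelomov's construction for the Heisenberg--Weyl group, the matrix elements of the displacement operator $D(\alpha)$ in the Fock basis, with $|\alpha|^2=\mu$, are
\begin{align*}
\bra{x} D(\alpha)\ket{m} = \sqrt{e^{-\mu}\mu^{x}/x!}\;\sqrt{\mu^{m}/m!}\;(-1)^{m} C_m(x;\mu)
\end{align*}
up to phases of the form $\alpha^{x-m}$. These amplitudes are exactly those of the normalized Charlier state $\ket{\mathbf{m}}$, because the Poisson weight is $e^{-\mu}\mu^x/x!$ and the Charlier norm is $m!\mu^{-m}$. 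Hence, after a fixed diagonal phase correction (a phase-space rotation, itself Gaussian), $U_{\mathrm{Ch}}(\mu)=R(\theta)\,D(\sqrt{\mu})$ restricted to Fock states.

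\textbf{Step 1.} Verify this identity on the full Fock space, either from the generating function of $C_m$ or from the normal-ordered form $D=e^{-\mu/2}e^{\alpha a^\dagger}e^{-\bar\alpha a}$.

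\textbf{Step 2.} Express $D(\sqrt\mu)=e^{-i\sqrt{2\mu}\,\hat p}$ as a translation on the position grid. Split it into $M=O(\sqrt{\mu})$ bounded-time pieces, so that the framework applies with $R=O(1)$ and $RM=O(\sqrt\mu)$.

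\textbf{Step 3.} Move between Fock and grid representations with the quantum Hermite transform of \cite{jain2025efficient}. Input Fock states with $m<N$ are mapped to Hermite functions on a grid of size $L$. After displacement, the output has non-negligible support only on Fock levels up to $N'\approx N+\mu+O(\sqrt{\mu\log(1/\epsilon)})$, by Poisson tails. The inverse Hermite transform must therefore be accurate up to this larger cutoff $N'$.

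\textbf{Step 4.} Choose $L$ to make the Hermite-transform error, which comes from Plancherel--Rotach asymptotics, at most $\epsilon/3$ at cutoff $N'$. Add the compilation error and the truncation error, each $\epsilon/3$. The gate count is then $\mathrm{poly}(\log L,\log(1/\epsilon),\sqrt\mu)$, which is $\mathrm{poly}(\mu,\log(1/\epsilon))$ up to the $\log N$ factors hidden in $\log L$.

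\textbf{Main obstacle.} The hardest part will be the energy bookkeeping that produces the stated $L=\Theta((2^{\sqrt\mu}N)^{9/4}/\epsilon^{13/4})$. The factor $2^{\sqrt\mu}$ should come from how the displacement inflates the effective energy cutoff through the $M$ bounded-time steps: each step may double the relevant level bound in the framework's error analysis, giving $N'\sim 2^{O(\sqrt\mu)}N$. Substituting $N'$ into the Hermite-transform grid requirement $L=\Theta(N'^{9/4}/\epsilon^{13/4})$ from \cite{jain2025efficient} then yields the claimed grid size. I would first try to bound leakage out of $\mathcal{H}_{N'}$ per step using the framework's truncated-subspace estimate, and then check that the constants compose to give exactly the exponents above.
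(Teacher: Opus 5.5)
Your route is the paper's: identify the Fock-basis matrix of one displacement with the Charlier kernel, split $D(\sqrt\mu)$ into $r_\mu=O(\sqrt\mu)$ bounded-time translations, and invoke Theorem~\ref{thm:digital_implementation_gaussian_optical_gates} with $K=1$ and $RM=r_\mu$. Your guess about where the exponential-in-$\sqrt\mu$ factor comes from is also right: the framework's error analysis inflates the cutoff at each bounded step. A few points need fixing.

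\textbf{Order of the sign correction.} The sign $(-1)^m$ sits on the \emph{input} label, so the correction must act before the displacement. The paper uses $U_{\mathrm{Ch}}(\mu)=D(\sqrt\mu)\,\Pi_\pi$ with $\Pi_\pi=e^{i\pi a^\dagger a}$. Your $R(\theta)D(\sqrt\mu)$ only rephases the output label $x$, so it cannot remove $(-1)^m$. Since $\Pi_\pi$ is diagonal in the Fock encoding, the paper applies it exactly. It therefore neither enlarges the cutoff nor adds bounded-time factors.

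\textbf{The cutoff that fixes $L$.} Your Step~3 cutoff $N'\approx N+\mu+\cdots$ is the physical photon-number spread. For input $\ket{m}$ that spread is actually $\sqrt{\mu(2m+1)}$, not $\sqrt\mu$. More importantly, this is not what determines $L$. The bridge estimates of Lemma~\ref{lem:basic_gaussian_operation_error} are proved by Taylor truncation, with a cutoff that grows from $N$ to $5N$ at every bounded step. Lemma~\ref{lem:unitary_error} therefore uses $E_r=5^rN$. The formal version of the theorem accordingly reads $L=\Theta\bigl((5^{r_\mu}N)^{9/4}\epsilon^{-13/4}\bigr)$ with $r_\mu=\lceil 4\sqrt2\,e\sqrt\mu\rceil$. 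The growth is a factor $5$ per step, not $2$. So the $2^{\sqrt\mu}$ in the informal statement means $2^{O(\sqrt\mu)}$, and you should not expect to reproduce it literally.

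\textbf{Controlling the truncation error.} In the framework the per-step truncation error is $CN^2e^{-\gamma N}$, which decays in the cutoff $N$, not in $L$. Controlling it requires the hypothesis $N=\Omega(\log(1/\epsilon))$, padding the cutoff if necessary. Your Step~4 error budget leaves this implicit.

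\textbf{Side remark.} Your $D(\sqrt\mu)=e^{-i\sqrt{2\mu}\,\hat p}$ is the correct form with the paper's normalization of $a$. The paper's $e^{-i\sqrt\mu\,\hat p}$ drops a $\sqrt2$, which only changes the constant in $r_\mu$.
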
 
\noindent The statement for the quantum Charlier transform holds verbatim for the Krawtchouk and Meixner transforms that are efficient QOPTs in the same sense with the displacement gates replaced by beam splitter and squeeze gates respectively. 

The Hahn transform is the discrete QOPT that corresponds to degree 2 solutions of the discrete hypergeometric equation.
It is specified by two distinct integer parameters $(\alpha, \beta)$. 
The matrix entries of the Hahn transform are related to the Clebsch-Gordan (CG) coefficients. This allows us to implement the Hahn transform using a CG and inverse CG cascade like that used for the Schur transform \cite{bacon2006efficient}, along with some Racah recoupling steps. We obtain a linear time algorithm for the two parameter Hahn transform, which attains a quadratic speedup over the naive $O(N^{2})$ implementation and remains poly-logarithmic in $(\alpha + \beta)$ for the allowed values of these parameters. 

\begin{theorem}[Quantum Hahn transform]
\label{thm:quantum_hahn_informal}
Let $\alpha,\beta,N\in\Z_{\geq0}$ and $0<\epsilon<1$.  Algorithm
\ref{alg:hahn_cg_qubit_registers} implements the Hahn state-synthesis map, and
its adjoint implements the forward analysis map, to operator-norm error at most
$\epsilon$.
The circuit uses $\mathcal{O}\!\left(
            N\operatorname{polylog}\!\left(\frac{N+\alpha+\beta+1}{\epsilon}\right)
        \right)$ one and two-qubit gates along with $\mathcal{O}\!\left(
            N+\operatorname{polylog}\!\left(\frac{N+\alpha+\beta+1}{\epsilon}\right)
        \right)$ qubits.

\end{theorem}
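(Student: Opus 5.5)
The proof reduces the Hahn transform to coupling two $SU(2)$ spins and then compiles a Clebsch--Gordan (CG) transform for fixed $(j_1,j_2)$ in time linear in the smaller spin.

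\textbf{Step 1 (Hahn polynomials as CG coefficients).} I will use the classical Koornwinder--Vilenkin identity expressing the orthonormalized Hahn polynomials $Q_n(x;\alpha,\beta,N)$, weighted by the square root of the hypergeometric measure, as $SU(2)$ CG coefficients $\langle j_1 m_1; j_2 m_2 \mid J M\rangle$. The parameters are set by
\begin{align*}
j_1=\tfrac{\alpha+N}{2},\qquad j_2=\tfrac{\beta+N}{2},\qquad M=j_1+j_2-N .
\end{align*}
The lattice variable is $x \leftrightarrow m_1$ and the degree is $n \leftrightarrow J = j_1+j_2-n$. With $M$ fixed, both the uncoupled vectors $\{\ket{j_1,m_1}\ket{j_2,M-m_1}\}$ and the coupled vectors $\{\ket{J,M}\}$ span spaces of dimension $N+1$. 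So the Hahn state-synthesis map $\ket{n}\mapsto\ket{\mathbf{n}}$ is exactly the restriction of the CG unitary to the fixed-$M$ sector, up to signs and a relabeling. I will verify the normalization and sign conventions directly against the weight $\binom{\alpha+x}{x}\binom{\beta+N-x}{N-x}$.

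\textbf{Step 2 (compiling the CG transform on one $M$-sector).} I will decompose the spin-$j_2$ representation as the symmetric part of $2j_2$ spin-$\tfrac12$ factors. Only $N$ of these factors can carry excitations above the extremal weight within the sector, so I expect it suffices to couple $\min(N,2j_2)$ spin-$\tfrac12$ factors one at a time to the running spin $j_1$.

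Each step is the Bacon--Chuang--Harrow elementary CG gate $\ket{J,M}\ket{\pm\tfrac12}\to\ket{J\pm\tfrac12,M'}$. This gate is a controlled $2\times2$ rotation with angle $\arccos\sqrt{(J\pm M+1)/(2J+1)}$. I will compute the angle reversibly with $\operatorname{polylog}((N+\alpha+\beta+1)/\epsilon)$ arithmetic and apply it as a controlled rotation. The registers for $J$ and $M$ need only $O(\log(N+\alpha+\beta))$ qubits, because $J$ ranges over a window of width $N+1$ offset by the classical number $j_1-j_2$. The large spins enter only as classical offsets in the arithmetic, which is what keeps the cost polylogarithmic in $\alpha+\beta$.

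\textbf{Step 3 (recoupling back to a single irrep).} Coupling the qubits sequentially yields a Schur basis with intermediate labels. To recover the coupled spin-$j_2$ factor, I will apply an inverse CG cascade on the spin-$\tfrac12$ block, projecting it to its symmetric irrep. This block is prepared from $\ket{j_2,M-m_1}$ by Dicke-state synthesis with $O(N)$ gates. Each Racah $6j$ recoupling reduces again to $O(N)$ controlled rotations with classically computable angles.

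The main cost is $N$ stages of $\operatorname{polylog}$ gates each, giving the stated $O(N\operatorname{polylog})$ gate count. The $N$ spin-$\tfrac12$ qubits plus the ancilla arithmetic give the $O(N+\operatorname{polylog})$ qubit count. Errors add linearly over the $O(N)$ rotations, so I allot $\epsilon/N$ to each rotation, which keeps the per-gate cost polylogarithmic. The adjoint circuit gives the analysis map.

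\textbf{Main obstacle.} I expect the hard part to be Step 3: disentangling the intermediate coupling labels so that the output is exactly $\ket{J,M}$ with all ancillas returned to $\ket{0}$. Naive sequential coupling leaves junk in the Schur basis. I would first try to show that, restricted to the fixed-$M$ sector with symmetric input on the qubit block, the intermediate labels are forced to a deterministic path, so that no uncomputation is needed. If that fails, I would insert explicit $6j$ recoupling gates and bound their number by $O(N)$.
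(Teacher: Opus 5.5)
Step 2 is where the argument breaks, and the cause is the parametrization you chose in Step 1. With $j_1=\tfrac{\alpha+N}{2}$ and $j_2=\tfrac{\beta+N}{2}$, both spins grow with the Hahn parameters. The realization $V_{j_2}\cong\mathrm{Sym}^{2j_2}(\mathbb{C}^2)$ therefore needs $\beta+N$ constituents.

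The fact that $j_2-m_2\le N$ on the sector does not let you keep only $N$ of them. A Dicke state with $N-y$ flipped spins is spread over all $\beta+N$ qubits. More bluntly, coupling $V_{j_1}$ to $N$ spin-$\tfrac12$ factors produces total spin at most $j_1+\tfrac N2=\tfrac\alpha2+N$, whereas the sector needs $J$ up to $j_1+j_2=\tfrac{\alpha+\beta}{2}+N$.

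So your Dicke preparation, CG cascade and recoupling each need $\Theta(N+\beta)$ layers and qubits, or $\Theta(N+\alpha)$ if you expand $V_{j_1}$ instead. That is linear rather than polylogarithmic in $\min(\alpha,\beta)$, and it violates both claimed bounds.

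The missing idea is to pass to the Regge-equivalent block in which the expanded spin is exactly $N/2$. Take $j_1=\tfrac{N+\alpha+\beta}{2}$, $j_2=\tfrac N2$, $M=\tfrac{\beta-\alpha}{2}$. Then the sector is all of $V_{j_2}$, and $\widehat Q_\ell(x)=(-1)^x C^{(\alpha+\beta)/2+\ell,\,M}_{j_1,\,M+N/2-x;\;j_2,\,x-N/2}$. The paper proves this by matching the Casimir's tridiagonal matrix on the block with the normalized Hahn difference equation, and fixing the sign at $x=0$. With this choice, exactly $N$ qubits carry $V_{j_2}$, the large spin $j_1$ appears only in binary labels, and each of the $O(N)$ layers costs $\operatorname{polylog}((N+\alpha+\beta+1)/\epsilon)$.

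Separately, your degree labelling is reversed. With your parameters, the $J=j_1+j_2$ column is the positive vector proportional to $\sqrt{\binom{\alpha+N}{y}\binom{\beta+N}{N-y}}$ with $y=j_1-m_1$, and this is not $\sqrt{\rho}$. In both parametrizations the Hahn degree is $J-\tfrac{\alpha+\beta}{2}$, not $j_1+j_2-J$.

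In Step 3, your first plan cannot succeed. After the sequential cascade, the branch register holds a genuine superposition $\ket{\eta_J}$ of coupling histories, entangled with $J$, so the intermediate labels are not forced to a single path. What works, and what your fallback should be made into, is a Racah sweep. For $t=2,\ldots,T$, recouple $((j_1B_{t-1})_{K_{t-1}}b_t)_{K_t}$ to $(j_1(B_{t-1}b_t)_{S_t})_{K_t}$ by one $2\times2$ rotation whose angle is computed reversibly from $(j_1,S_{t-1},K_t)$.

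The sweep is block diagonal in $J$. Because the constituent block is symmetric, it sends $\ket{\eta_J}$ to the $J$-independent maximal path $S_t=t/2$, which can then be uncomputed. With $T=N$ this adds $N-1$ rotations, so the total stays at $O(N)$ polylog-cost layers, and the adjoint circuit gives the other direction.
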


The construction of the quantum circuit for the Hahn transform uses the
fact that the ordered fixed-$M$ block of the Hahn transform differs from the CG
transform only by two known diagonal sign gauges. The circuit realizes the first 
$SU(2)$ representation $V_{j_2}$ as the symmetric (Dicke) subspace of $T=N$ spin-$\tfrac12$
constituents, and couples it to $V_{j_1}$
through a cascade of elementary two-level CG rotations $V_{K,\nu}$, and
then applies Racah recoupling to move the bracketing onto the tree in which
$V_{j_1}$ couples to the single symmetric irrep $j_2$. Every rotation angle is
computed reversibly from the running labels, so large $\alpha+\beta$ enlarges
only arithmetic bit-width, not the constituent register.

In particular, the quantum Hahn transform has no additional polynomial dependence on $|\alpha-\beta|$. The parameter magnitudes affect the reversible arithmetic only through their logarithmic bit width. The quantum Hahn transform by itself does not have any consequences for the more general Schur transform. It corresponds to the $d=2$ case and matches the resources for the optimal known algorithm, which is $O(d^{4} n)$ \cite{burchardt2025high}. This concludes our discussion of the main results for discrete QOPTs. To summarize, we obtain poly-logarithmic dependence on the grid size for the degree 1 transforms using the generalized coherent state formalism and a quadratic speedup over the naive implementation for the two-parameter Hahn transform using a cascade similar to the Schur transform.

\subsection{Continuous QOPT Results}
Let us now switch to a continuous setting. By a classical result of Bochner, the orthogonal polynomial solutions of the continuous hypergeometric equation are precisely the Hermite, Laguerre and Jacobi polynomials. The Laguerre and Jacobi transforms are the degree 1 and 2 analogs of the discrete polynomial transforms discussed above. The continuous QOPTs have an additional feature: the second-order hypergeometric equation reduces to a Schr\"odinger operator after application of a suitable ground state transform. The theory is discussed in further detail in Section \ref{sec:cqopt}.

The Laguerre and Jacobi polynomials are closely related to the separation of variables in higher-dimensional isotropic oscillators. The eigenfunctions of the two-dimensional isotropic quantum harmonic oscillator have a tensor product decomposition in Cartesian as well as polar eigenbases. The Laguerre polynomials arise as the radial part of the polar eigenfunctions, with the angular parts organized by the Fourier transform. That is, the Laguerre polynomial $L_{n}^{|m|}(r^{2})$ embeds in the angular momentum sector $m$ for the 2D isotropic harmonic oscillator and can be viewed as the eigenfunction of a radial oscillator. The Jacobi polynomials, being the continuous counterparts of the Hahn polynomials, arise as eigenfunctions of coupled radial oscillators. If the radial oscillators being coupled have identical $m$ parameters, the Jacobi polynomials reduce to spherical harmonics in dimension 3.

As the Laguerre polynomials are associated with a two-dimensional polar grid, there are two possible implementations of the Laguerre transform.
The first method works for integer-valued $m$ parameters and reduces to fast-forwarded oscillator evolution on the Cartesian grid used by the quantum Hermite transform. The informal theorem for the quantum Laguerre transform using Cartesian grids is given below.

\begin{theorem}[Cartesian-grid Laguerre transform from the QHT]
\label{thm:laguerre-cartesian-qht-informal}
Let $m\in\Z$, $N\geq1$, and $0<\epsilon<1$.  Put
\begin{align}
    N_{\rm Cart}=2N+|m|,
    \qquad
    N_{\rm F}=\Theta\!\left(N_{\rm Cart}+\log(1/\epsilon)+1\right),
\end{align}
where the implicit constant is chosen as in
Theorem~\ref{thm:digital_implementation_gaussian_optical_gates}.
There is a power-of-two Cartesian grid with
\begin{align}
    L_{\rm C}=O\!\left(N_{\rm F}^{9/4}\epsilon^{-13/4}\right)
\end{align}
points per coordinate and a digital circuit
$\mathsf{LagCart}_{m,N,\epsilon}$ such that, for every normalized
$\sum_{n<N}a_n\ket n$,
\begin{align}
 \left\|
 \mathsf{LagCart}_{m,N,\epsilon}
 \left(\sum_{n<N}a_n\ket n\ket0\ket0_{\rm work}\right)
 -\left(\sum_{n<N}a_n
 \ket{\Phi_{n,m}^{\mathrm{Cart},(L_{\rm C})}}\right)
 \ket0_{\rm work}
 \right\|_2\leq\epsilon.
 \label{eq:laguerre-cartesian-qht-guarantee-informal}
\end{align}
The gate count and live workspace are polynomial in
$\log(N_{\rm F}+1)$ and $\log(1/\epsilon)$.
\end{theorem}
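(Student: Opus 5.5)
The plan is to realize the polar eigenfunction $\ket{\Phi_{n,m}^{\mathrm{Cart},(L_{\rm C})}}$ of the two-dimensional isotropic oscillator as a fixed, finite linear-optical image of a two-mode Fock state. The map is then compiled with Theorem~\ref{thm:digital_implementation_gaussian_optical_gates} and sandwiched between quantum Hermite transforms.

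\textbf{Step 1: the state-level identity.} The 2D oscillator eigenfunction with radial quantum number $n$ and angular momentum $m$ is $\propto r^{|m|}L_n^{|m|}(r^2)e^{-r^2/2}e^{im\phi}$. It is the Fock state $\ket{n_+,n_-}$ in the circular modes $a_\pm=(a_x\mp i a_y)/\sqrt2$, with
\[ n_+=n+\max(m,0),\qquad n_-=n+\max(-m,0). \]
Its total energy is $n_++n_-=2n+|m|\le N_{\rm Cart}$. The circular modes are obtained from the Cartesian ones by a balanced beam splitter composed with a phase rotation. This is a product of $R=O(1)$ elementary Gaussian gates, each of bounded evolution time, so $M=O(1)$. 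Hence
\[ \ket{\Phi_{n,m}}=U_{\rm circ}\ket{n_+}\ket{n_-}, \]
where $\ket{\cdot}$ denotes Cartesian Hermite–Fock states.

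\textbf{Step 2: preparing the Fock register.} Starting from $\sum_{n<N}a_n\ket n\ket0\ket0_{\rm work}$, I will use reversible arithmetic on $O(\log N_{\rm Cart})$-bit registers to map $\ket n\ket0$ to $\ket{n_+}\ket{n_-}$ (with $m$ a fixed classical constant). This step is exact and uses $\mathrm{polylog}$ gates. It is injective, and its inverse is computable, so no garbage remains in the work register.

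Next, I apply the quantum Hermite transform of \cite{jain2025efficient} on each coordinate register. On a power-of-two grid of $L_{\rm C}$ points this maps the number states to discretized Hermite functions $\ket{\mathbf{n_\pm}}$, with error $\epsilon/3$ for energies up to $N_{\rm F}$. The grid size $L_{\rm C}=O(N_{\rm F}^{9/4}\epsilon^{-13/4})$ is exactly the scaling the QHT requires at cutoff $N_{\rm F}$. The cutoff $N_{\rm F}$ is chosen above $N_{\rm Cart}$ by the $\Theta(\log(1/\epsilon)+RM)$ margin demanded in Theorem~\ref{thm:digital_implementation_gaussian_optical_gates}.

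\textbf{Step 3: the circular beam splitter and error accounting.} I then apply the digital circuit $\overline{U}_{\rm circ}$ provided by Theorem~\ref{thm:digital_implementation_gaussian_optical_gates} with $K=2$ and $R,M=O(1)$. This costs $O(\log^3(N_{\rm F}/\epsilon))$ gates and approximates $U_{\rm circ}$ to error $\epsilon/3$ on the truncated subspace.

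Whether the QHT is applied first, or the theorem is used directly in the Fock basis and then the QHT is applied, depends on which basis the theorem's truncated subspace refers to. I will use whichever ordering matches it. Because $U_{\rm circ}$ commutes with the total number operator, the image stays in the energy-$\le N_{\rm Cart}$ sector, so the truncation is consistent.

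The triangle inequality over the three stages (each unitary, so errors add) gives total error $\le\epsilon$. This holds uniformly over normalized inputs because the errors are operator-norm bounds on the projected subspace. Finally, I will define $\ket{\Phi^{\mathrm{Cart},(L_{\rm C})}_{n,m}}$ as the ideal image of $\ket{n_+,n_-}$ under the exact QHT followed by $U_{\rm circ}$. The guarantee then holds by construction.

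\textbf{Main obstacle.} The delicate point is the interface between the grid-discretized Hermite functions and the exact optical action. On the grid, the discretized Hermite states are only approximately orthonormal, and $U_{\rm circ}$ is only approximately represented. I therefore need the error bound of Theorem~\ref{thm:digital_implementation_gaussian_optical_gates} to apply with the same $L_{\rm C}$ that the QHT uses at cutoff $N_{\rm F}$. I would first check that both are governed by the same Plancherel–Rotach tail estimate at energy $N_{\rm F}$, so that a single choice of $L_{\rm C}$ serves both.
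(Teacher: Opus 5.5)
Your architecture matches the paper's: reversible loading $\ket n\mapsto\ket{n+m_+,n+m_-}$, the constant-size Cartesian-to-circular beam splitter compiled by Theorem~\ref{thm:digital_implementation_gaussian_optical_gates} with $K=2$ and $R,M=O(1)$, sampling by $\mathsf{QHT}\otimes\mathsf{QHT}$, and a triangle inequality. The gap is the last sentence of your Step~3.

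\textbf{The target cannot be redefined.} The target $\ket{\Phi_{n,m}^{\mathrm{Cart},(L_{\rm C})}}$ is fixed in advance as $\mathsf P_m\ket{\Psi^{(L_{\rm C})}_{n,m}}$. This is the normalized grid sample of the continuum function $\Psi_{n,m}$ in \eqref{eq:radial-eigenfunction}, which has a positive normalization constant, multiplied by the diagonal phase $e^{-im\theta_{ab}}$ of \eqref{eq:laguerre-angular-phase-removal}. Calling it ``the image under the exact QHT followed by $U_{\rm circ}$'' has two readings, and neither works. If $U_{\rm circ}$ is the continuum operator, the definition is meaningless, because that operator does not act on grid vectors. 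If it is the grid realization, the guarantee becomes a tautology about a vector that is not the sampled Laguerre function.

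\textbf{Two exact gates are missing.} Measured against the actual target, your circuit fails for two concrete reasons.
\begin{enumerate}
\item Step~1 holds only up to an $n$-dependent sign. The top-degree part of $(a_+^\dagger)^{n_+}(a_-^\dagger)^{n_-}\ket{0,0}$ is a positive multiple of $(x+iy)^{n_+}(x-iy)^{n_-}e^{-r^2/2}$. By contrast, $L_n^{(|m|)}$ has leading coefficient $(-1)^n/n!$, so the circular Fock state equals $(-1)^n\ket{\Psi_{n,m}}$. On a superposition $\sum_na_n\ket n$ this is a relative phase, not a global one. For example, $a_0=a_1=1/\sqrt2$ gives error $\sqrt2$. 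The paper removes it with $D_{\rm rad}\ket n=(-1)^n\ket n$ before $A_m$.
\item Without the diagonal phase $\mathsf P_m$ you output $\Psi_{n,m}$, which for $m\neq0$ is far from $\Phi^{\mathrm{Cart}}_{n,m}$.
\end{enumerate}

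\textbf{The missing identification.} What connects the circuit to the real target is linearity of sampling on a finite energy shell, and this also removes the obstacle you flag. By number conservation, $\mathsf U_{\textup{C2R}}A_mD_{\rm rad}\ket n=\sum_{k_1+k_2=2n+|m|}c_{\mathbf k}\ket{k_1,k_2}$, where $\sum_{\mathbf k}c_{\mathbf k}\psi_{k_1}(x)\psi_{k_2}(y)=\Psi_{n,m}(x,y)$. The sampling map \eqref{eq:sampled-hermite-synthesis} therefore sends this vector exactly to the raw column $\ket{\widetilde\Psi^{(L_{\rm C})}_{n,m}}$, since the two factors $(2\pi/L_{\rm C})^{1/4}$ multiply to $\Delta_{\rm C}$.

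Hence the Fock-basis circuit of Theorem~\ref{thm:digital_implementation_gaussian_optical_gates}, followed by $\mathsf{QHT}\otimes\mathsf{QHT}$ at cutoff $N_{\rm F}$, is $O(\epsilon)$-close in operator norm to $\sum_na_n\ket n\mapsto\sum_na_n\ket{\widetilde\Psi^{(L_{\rm C})}_{n,m}}$. That theorem already uses a single grid dimension throughout. Its inner $\mathsf{QHT}^\dagger$ also cancels the outer $\mathsf{QHT}$. So both your ordering question and your separate Plancherel--Rotach check are unnecessary.

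\textbf{The normalization step.} You still need to pass from raw to normalized columns. Approximate orthonormality of the sampled Hermite states (Fact~\ref{fact:orthogonality}) places each raw column norm within $O(\epsilon)$ of one and the raw Gram matrix near the identity. Replacing raw by normalized columns therefore costs another $O(\epsilon)$, uniformly over superpositions. Applying the reversibly evaluated phase $\mathsf P_m$ then finishes. With $D_{\rm rad}$, this identification, the normalization estimate and $\mathsf P_m$ added, your argument becomes the paper's proof.
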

Further, in continuous OPTs we present a framework for implementing Laguerre transforms on a logarithmic grid and identify the bottlenecks for implementing an efficient Jacobi transform.

\subsection{Summary and related work}
Applying this identification to the optical realization of each family yields the quantum orthogonal polynomial transforms of \tabref{tab:transforms}. We organize
them by the support of the underlying orthogonality measure into discrete
transforms, whose measure is supported on a lattice (Charlier, Krawtchouk,
Meixner, Hahn), and continuous transforms, whose measure is supported on an interval or half-line (Laguerre, Hankel, Jacobi), together with their
multidimensional extensions.

\begin{table}[htp]
    \centering
    \begin{tabular}{l l l}
        \toprule[0.5mm]
        \textbf{Transform} & \textbf{Optical Operation} & \textbf{Support} \\
        \midrule
        Hermite~\cite{jain2025efficient} &  Quadrature basis of $\ket{n}$ & $\R$ \\
        Charlier &  Displacement $D(\alpha)$ & $\Z_{\ge 0}$ \\
        Krawtchouk~\cite{guan2026quantumKrawtchouktransformusing} &  Beamsplitter on $\ket{N,0}$ & $\{0,\ldots,N\}$  \\
        Meixner & Single-mode Squeeze $S(r)$ &  $\Z_{\ge 0}$ \\
        Hahn & Clebsch-Gordan circuits &  $\{0,\ldots,N\}$ \\
        Laguerre & Cartesian to radial mode change &  $[0,\infty)$ \\
        \bottomrule[.5mm]
    \end{tabular}
    \caption{The quantum orthogonal polynomial transforms established by viewing each as matrix elements of a quantum optical gate.}
    \label{tab:transforms}
\end{table}

\noindent The following table summarizes the different quantum orthogonal polynomial transforms along with the algorithmic and computational mechanisms used to obtain them. We also comment on the numerical analysis on discrete grids used for the error analysis.
\begin{enumerate}
    \item \emph{Charlier transform}: The displacement operator
    $D(\alpha)$ with $\mu=|\alpha|^2$ realizes the Charlier transform on
    $\Z_{\ge 0}$, with the intertwining identity $D(\alpha)a^\dagger D(\alpha)^\dagger = a^\dagger-\bar\alpha$ and the orthogonality measure being the Poisson distribution.

    The transform itself can be implemented arithmetically in the grid representation or using the compositional framework for Gaussian gates. 
    
    \item \emph{Krawtchouk transform}: Beamsplitters at fixed total photon number realize the Krawtchouk transform on $\{0,\ldots,N\}$.

    In the grid representation itself, a single beam splitter can be realized as an arithmetic rotation on two registers or by using the compositional framework for Gaussian gates. Different representations of $SU(2)$ allow different implementations of the beam splitter. For example, on Dicke states, beam splitters can be represented as qubit rotations.

    \item \emph{Meixner transform}: Single-mode Squeezing on the $SU(1,1)$ number basis realizes the Meixner transform, whose orthogonality measure is the negative binomial (thermal) distribution of squeezed light.

    The Meixner transform can again be implemented arithmetically using hyperbolic rotations and the compositional framework for Gaussian gates.

    \item \emph{Hahn transform}: An $SU(2)$ Clebsch-Gordan coupling in a fixed spin-$j$ sector realizes the Hahn transform on $\{0,\ldots,N\}$. The implementation given goes through the identification of the Hahn polynomials with the CG coefficients and is similar to the cascade implementation of the Schur transform. More efficient poly-logarithmic implementations are left open. Possible paths to a poly-logarithmic Hahn transform are discussed in the conclusion.

    \item \emph{Laguerre transform}: The Laguerre transform requires an efficient procedure for grid conversion between Cartesian and polar grids and a circular beam splitter. The transform depends on a single integer parameter.

    An implementation of the Laguerre transform on the logarithmic grid is also proposed. The logarithmic grid can be used to implement the transformfor non integer Laguerre parameters.

\end{enumerate}

\textbf{Related work:} We discuss briefly the relation between our work and other work on QOPTs that have recently appeared in the quantum computing literature. The first QOPT in a quantum circuit was the quantum Hermite transform introduced by \cite{jain2025efficient}. The result is based on a digital implementation of the fast-forwarding of the quantum harmonic oscillator, which is derived from Lie algebraic considerations and has been well known in the physics literature under Wei--Norman type transformations \cite{wei1963lie}. However, an exponentially efficient implementation in terms of quantum circuits on a discrete grid requires the careful error analysis carried out in \cite{jain2025efficient}.

The follow-up work \cite{iyer2026efficientquantumcircuitshighdimensional} generalizes the framework of the quantum Hermite transform to fast-forwarding other operators quadratic in position and momentum operators, which are effectively quantum optical gates corresponding to $SU(2)$. In this work, we show that the simulation extends to the
$SU(1,1)$ optical gates, and we sharpen the analysis in parameter regimes that
were treated as constant in
\cite{iyer2026efficientquantumcircuitshighdimensional}. The digital
implementations of all our QOPTs are built on this extended simulation
framework.

A quantum Krawtchouk transform realizing the QOPT definition has recently been given in \cite{guan2026quantumKrawtchouktransformusing}. The Krawtchouk transform has also been implicit in the analysis of the decoded quantum interferometry algorithm \cite{marwaha2026complexitydecodedquantuminterferometry} and in permanent sampling \cite{aaronson2011computational}. These are the places where the beam splitter gates acting on $SU(2)$ representations have been useful in the design of quantum algorithms. Our work provides both a conceptual framework that unifies these seemingly disparate appearances of the Krawtchouk transform and a family of new QOPTs that will potentially have further algorithmic implications.

As we noted previously, classical OPTs are heavily used for the solution of PDEs and in spectral methods more generally. It would be interesting to come up with applications of efficient QOPTs to solving PDEs and quadrature problems. In this direction, we note the work \cite{Childs_2021} that utilizes a quantum Chebyshev transform for the solution of PDEs.
The Chebyshev transform is a special case of the Jacobi transform with parameters $\alpha=\beta=-1/2$. Investigating applications of the QOPTs designed in this paper towards solution of PDEs and quadrature remains an interesting direction for future work.

The optical gates underlying our proposed orthogonal transforms are native operations for continuous-variable (CV) hardware, and our constructions are particularly well matched to the emerging class of hybrid oscillator-qubit quantum processors. The recent review~\cite{liu2026hybrid} discusses in detail full-stack architectures for such
hybrid continuous-variable--discrete-variable (CV-DV) devices, in which bosonic
modes and qubits coexist and are programmed through
instruction set architectures whose primitive gates include precisely the Gaussian bosonic operations that generate the Heisenberg--Weyl, $SU(2)$, and $SU(1,1)$ groups. On such hardware the gates realizing our discrete QOPTs are available natively, and a single displacement, beam splitter, or squeeze gate is a single physical instruction rather than a compiled subroutine. The discrete QOPTs embed directly into the instruction set of a hybrid processor. 
This offers a natural and hardware-efficient route to implementing the Askey scheme transforms, and it complements the algorithmic
viewpoint we adopt here. However, in the present work, we intentionally follow
the digital implementation route by expressing each optical gate on a truncated,
qubit-encoded Fock space and rendering it through the efficient quantum Hermite
transform~\cite{jain2025efficient}, so that our transforms run on standard gate-based quantum systems
without assuming access to native bosonic hardware.

\section{Preliminaries}
We briefly introduce the notation for the harmonic oscillator and the two-mode Schwinger representations of $SU(2)$ and $SU(1,1)$. These are used directly for the Charlier, Meixner and Krawtchouk QOPTs. The Schwinger representations also appear naturally in the Laguerre transform because they are embedded in the 2D isotropic oscillator.

\subsubsection{Quantum harmonic oscillator}

The quantum harmonic oscillator represents a single bosonic mode. Let $\hat{x}$ and $\hat{p}$ denote the position and momentum operators acting on $L^2(\R)$ by
\begin{align}
    (\hat{x}f)(x)=x f(x), \qquad (\hat{p}f)(x)=-i\frac{d}{dx}f(x),
\end{align}
so that $[\hat{x},\hat{p}]=iI$. The one-dimensional quantum harmonic oscillator Hamiltonian is
\begin{align}
    \hat{H}=\frac{1}{2}\left(\hat{x}^2+\hat{p}^2\right).
\end{align}
It is often convenient to use the annihilation and creation operators
\begin{align}
    a=\frac{1}{\sqrt{2}}(\hat{x}+i\hat{p}), \qquad
    a^{\dagger}=\frac{1}{\sqrt{2}}(\hat{x}-i\hat{p}),
\end{align}
which satisfy $[a,a^{\dagger}]=I$. Equivalently,
\begin{align} 
    \hat{x}=\frac{a+a^{\dagger}}{\sqrt{2}}, \qquad
    \hat{p}=\frac{a-a^{\dagger}}{i\sqrt{2}}.
\end{align}
The number operator $\hat{N}=a^{\dagger}a$ has eigenstates $\ket{n}$, called Fock states, with
\begin{align}
    \hat{N}\ket{n}=n\ket{n}, \qquad
    a\ket{n}=\sqrt{n}\ket{n-1}, \qquad
    a^{\dagger}\ket{n}=\sqrt{n+1}\ket{n+1}.
\end{align}
In this basis, $\hat{H}=\hat{N}+1/2$, so the Fock states are precisely the energy eigenstates of the oscillator. In the position representation they are the Hermite functions
\begin{align}
    \psi_n(x)=\frac{1}{\sqrt{2^n n!\sqrt{\pi}}}e^{-x^2/2}H_n(x),
\end{align}
up to a harmless phase convention.

For $m$ modes we write $a_j,a_j^{\dagger}$ and $\hat{x}_j,\hat{p}_j$ for the corresponding operators on the $j$-th tensor factor, with nontrivial commutators $[a_j,a_k^{\dagger}]=\delta_{j k}I$ or, equivalently, $[\hat{x}_j,\hat{p}_k]=i\delta_{j k}I$. The multi-mode Fock basis is
\begin{align}
    \ket{n_1,\ldots,n_m}:=\ket{n_1}\otimes\cdots\otimes\ket{n_m}.
\end{align}
The Gaussian quantum optical gates are generated by low-degree polynomials in the mode operators. Displacements are generated by linear combinations of $a$ and $a^{\dagger}$, beamsplitters by terms such as $a_j a_k^{\dagger}+a_k a_j^{\dagger}$, and squeezers by quadratic terms such as $a_j^2+(a_j^{\dagger})^2$.

\subsubsection{Schwinger $SU(2)$ representations}

We recall the Schwinger realization of $SU(2)$ on two oscillator
modes on sectors with fixed total photon number.  Given two modes with annihilation operators $a_1,a_2$, define
\begin{align}
    J_+ = a_1^{\dagger}a_2, \qquad
    J_- = a_2^{\dagger}a_1, \qquad
    J_z = \frac{1}{2}(a_1^{\dagger}a_1-a_2^{\dagger}a_2).
\end{align}
Equivalently, $J_x=(J_++J_-)/2$ and $J_y=(J_+-J_-)/(2i)$.  These operators
satisfy the commutation relations for $SU(2)$, 
\begin{align}
    [J_z,J_+]=J_+, \qquad [J_z,J_-]=-J_-, \qquad [J_+,J_-]=2J_z.
\end{align}
They commute with the total number operator
\begin{align}
    \hat{N}_{12}=a_1^{\dagger}a_1+a_2^{\dagger}a_2.
\end{align}
The two-mode Fock space decomposes into Schwinger SU(2) representations organized by the total-number, 
\begin{align}
    \mathcal{H}^{(M)}_{12}=\textup{span}\{\ket{x,M-x}:x=0,\ldots,M\}.
\end{align}
On this sector the Schwinger operators give the spin-$j$ irreducible
representation with $j=M/2$.  The dictionary between Fock labels and spin labels
is
\begin{align}
    \ket{j,m}\quad \longleftrightarrow \quad \ket{j+m,\,j-m},
    \qquad j=\frac{M}{2},\quad m=x-\frac{M}{2}.
\end{align}
Thus fixing the total number of two oscillator modes is the same as fixing a
finite-dimensional $SU(2)$ irrep.  Number-preserving optical operations generated
by $J_x,J_y,J_z$ act inside this sector as $SU(2)$ rotations.  For example,
beamsplitters act as rotations in this Schwinger representation.

For later use, if we have two independent two-mode pairs, then fixing their
total photon numbers to $2j_1$ and $2j_2$ realizes a tensor product of spin
representations $V_{j_1}\otimes V_{j_2}$.  The Clebsch--Gordan transform is the
basis change from the product of the two fixed-sector Fock bases to the basis of
total spin.  
this fixed-sector Clebsch--Gordan transform to realize the Hahn transform.

\subsubsection{Schwinger \texorpdfstring{$SU(1,1)$}{SU(1,1)} representations}

There is an analogous oscillator realization of $SU(1,1)$ obtained from
quadratic two-mode operators.  Given two modes, define
\begin{align}
    K_+ = a_1^{\dagger}a_2^{\dagger}, \qquad
    K_- = a_1a_2, \qquad
    K_0 = \frac{1}{2}(a_1^{\dagger}a_1+a_2^{\dagger}a_2+1).
\end{align}
These operators satisfy
\begin{align}
    [K_0,K_+]=K_+, \qquad [K_0,K_-]=-K_-, \qquad [K_+,K_-]=-2K_0.
\end{align}
Unlike the $SU(2)$ generators above, the operators $K_+$ and $K_-$ do not
preserve the total photon number.  Instead they create or annihilate photons in
pairs and therefore commute with the number-difference operator
\begin{align}
    \Delta_{12}=a_1^{\dagger}a_1-a_2^{\dagger}a_2.
\end{align}
Thus the two-mode Fock space decomposes into fixed number-difference sectors.
For $q\geq0$ one such sector is
\begin{align}
    \mathcal{H}^{(q)}_{12}=\textup{span}\{\ket{n+q,n}:n\geq0\},
\end{align}
and for $q<0$ the two modes are exchanged.  On this sector the above generators
realize a positive discrete-series representation of $SU(1,1)$ with Bargmann
index
\begin{align}
    k=\frac{|q|+1}{2}.
\end{align}
Using the basis
\begin{align}
    \ket{n,k}\quad \longleftrightarrow \quad
    \begin{cases}
        \ket{n+q,n}, & q\geq0,\\
        \ket{n,n+|q|}, & q<0,
    \end{cases}
\end{align}
the generator $K_0$ has eigenvalue $n+k$.  In this sense, fixing a photon-number
difference is the noncompact analogue of fixing the total photon number in the
Schwinger $SU(2)$ construction.

The two-mode squeezing gate is the basic optical operation generated by this
algebra.  In the above notation it has the form
\begin{align}
    S_2(z)=\exp(zK_+-\overline{z}K_-),
\end{align}
up to the harmless sign convention for $z$.  Since $S_2(z)$ preserves
$\Delta_{12}$, it acts independently on each fixed-$q$ sector, equivalently on
each $SU(1,1)$ irrep labelled by $k=(|q|+1)/2$.  This is the
$SU(1,1)$ counterpart of the way beamsplitters act as $SU(2)$ rotations on fixed
total-number sectors.

The same positive discrete-series label also appears in the radial oscillator
notation.  A fixed angular momentum sector with angular momentum $m$ has
$k=(|m|+1)/2$, matching the fixed number-difference label above after passing to
circular modes.  Thus the Laguerre and Meixner constructions both use the same
underlying $SU(1,1)$ representation theory, but in different oscillator bases.

\subsection{Quantum optical gates and their digital implementations}

\label{eqn:relevant_gate_set}

We now recall the quantum optical operations used below. These operations are unitary gates generated by linear or quadratic expressions in the oscillator operators. We list each operation first in oscillator notation and then in position notation.

\paragraph{Single-mode operations.}
The single-mode displacement operator is generated by translations in position and momentum. Writing $\alpha=x_0+ip_0$, we have
\begin{align}
    D(\alpha)=\exp(\alpha a^{\dagger}-\overline{\alpha}a), \qquad D(\alpha)=\exp(ip_0\hat{x}-ix_0\hat{p}),
\end{align}
up to the usual global phase convention.

The single-mode squeezing operator is equivalently generated by the dilation operator in position and momentum variables,
\begin{align}
    S(r)=\exp\left(\frac{r}{2}\left(a^2-(a^{\dagger})^2\right)\right), \qquad S(r)=\exp\left(-\frac{ir}{2}(\hat{x}\hat{p}+\hat{p}\hat{x})\right).
\end{align}

The phase rotation of a single mode is
\begin{align}
    R(\theta)=\exp(i\theta a^{\dagger}a), \qquad R(\theta)=\exp\left(\frac{i\theta}{2}(\hat{x}^2+\hat{p}^2)\right),
\end{align}
where the second equality follows from $\hat{H}=a^{\dagger}a+1/2$, up to an irrelevant global phase.

\paragraph{Two-mode operations.}
For real squeezing parameter $r$, the two-mode squeezing operator has the form
\begin{align}
    S_2(r)=\exp\left(r(a_1a_2-a_1^{\dagger}a_2^{\dagger})\right), \qquad S_2(r)=\exp\left(ir(\hat{x}_1\hat{p}_2+\hat{p}_1\hat{x}_2)\right).
\end{align}
The complex-parameter version is obtained by taking the corresponding complex linear combination of the same two-mode quadratic generators.

The beamsplitter operations mix two oscillators while preserving the total oscillator degree. We use two closely related conventions. The real beamsplitter is
\begin{align}
    \textup{BS}_{Y}(\theta)=\exp\left(\theta(a_1a_2^{\dagger}-a_2a_1^{\dagger})\right), \qquad
    \textup{BS}_{Y}(\theta)=\exp\left(i\theta(\hat{x}_1\hat{p}_2-\hat{p}_1\hat{x}_2)\right).
\end{align}
The complex beamsplitter is
\begin{align}
    \textup{BS}_{X}(\theta)=\exp\left(i\theta(a_1a_2^{\dagger}+a_2a_1^{\dagger})\right), \qquad
    \textup{BS}_{X}(\theta)=\exp\left(i\theta(\hat{x}_1\hat{x}_2+\hat{p}_1\hat{p}_2)\right).
\end{align}
\label{subsec:digital_implementation_optics}
\subsection{Digital implementation of optical operations}
The focus will be on the Gaussian optical gates because in the digital model the non-Gaussian operations are very easy to implement, as we use qubits to encode the Fock basis. The Kerr gate $K(\gamma) = e^{i\gamma\hat{n}^2}$ is diagonal in the Fock basis. Thus, using our main result below, Theorem \ref{thm:digital_implementation_gaussian_optical_gates}, we can perform universal optical quantum computation: an optical unitary on $K$ modes that is decomposed into $n$ elements of the Gaussian and Kerr gate set can be simulated to error $\epsilon < 1$ using $\mathcal{O}(\textup{poly}(n, K, \log(1/\epsilon)))$ one- and two-qubit gates and ${\mathcal{O}}(\textup{poly}(\log(K), \log(n)))$ qubits. This is mostly a byproduct of our results, as the full gate set will not be needed for implementing the QOPTs.

We now recall the finite-dimensional oscillator used to implement the gates digitally and explain how it interfaces with the truncated Fock space. For a target cutoff $N$, the logical oscillator register stores the truncated number basis (which we refer to as the \emph{effective qubit space})
\begin{align}
    \mathcal{H}_{N}=\textup{span}\{\ket{n}:0\leq n<N\},
\end{align}
using $\mathcal{O}(\log N)$ qubits. The quadrature evolutions are not applied directly on this register. Instead, we embed the low-energy oscillator into a larger grid space of dimension $L=2^{\ell}$. Let the grid be indexed by $j=-L/2,\ldots,L/2-1$, with
\begin{align}
        x_j=j\sqrt{2\pi/L}.
\end{align}
The $L$-dimensional discretized Hermite state associated with the $n$-th oscillator eigenfunction is
\begin{align}
        \ket{\psi_n^{(L)}}=\left(\frac{2\pi}{L}\right)^{1/4}\sum_{j=-L/2}^{L/2-1}\psi_n(x_j)\ket{j},
\end{align}
where $\psi_n$ is the Hermite function from the oscillator notation section. These states discretize and truncate the continuum oscillator eigenfunctions; for $L$ sufficiently large compared with the cutoff $N$, the first $N$ such states are nearly orthonormal and approximate the low-energy eigenstates of the discrete oscillator. The state $|\psi_n^{(L)}\rangle$ lives in the \emph{operational qubit space} $\mathcal{H}_{L}$, where of course $\mathcal{H}_N \subset \mathcal{H}_{L}$.

The discrete quantum Hermite transform is the unitary basis change
\begin{align}
    \mathsf{QHT}:\ket{n}\mapsto \ket{\psi_n^{(L)}}, \qquad 0\leq n<N,
\end{align}
with inverse $\textup{QHT}^{\dagger}$ mapping discretized oscillator states back to the number basis. We also use the discretized quadrature operators
\begin{align}
        \overline{x}=\sum_{j=-L/2}^{L/2-1}x_j\ket{j}\bra{j}, \qquad \overline{p}=F_L^{\dagger}\overline{x}F_L,
\end{align}
where $F_L$ is the centered finite Fourier transform. For several modes, these definitions are applied tensor factor by tensor factor, giving operators $\overline{x}_j$ and $\overline{p}_j$ on the $j$-th discretized oscillator.

We will use the following efficient implementation of this basis change.
\begin{restatable}[quantum Hermite transform, \cite{jain2025efficient} adapted]{theorem}{quantumHermiteTransform}
\label{thm:qht}
Let $\epsilon > 0$. If $L = \mathcal{O}(N^{9/4}/\epsilon^{13/4})$, then there exists a quantum circuit $\mathsf{QHT}$ of gate complexity $\mathcal{O}((\log N + \log(1/\epsilon))^3 \times \log(1/\epsilon))$ that can perform the following map from $\mathcal{H}_{N} \rightarrow \mathcal{H}_{L}$ with error $\epsilon$:
\begin{equation}
\sum_{m=0}^{N-1} \alpha_m |m\rangle \mapsto \sum_{m=0}^{N-1} \alpha_m |\psi_m\rangle.
\end{equation}
The coefficients $\alpha_m \in \mathbb{C}$ are arbitrary and normalized, i.e., $\sum_m |\alpha_m|^2 = 1$.
\end{restatable}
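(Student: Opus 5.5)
This theorem is an adaptation of the main result of \cite{jain2025efficient}, so the plan is to reconstruct that argument and track the constants. Throughout, write $\overline{H}=\tfrac12(\overline{x}^2+\overline{p}^2)$ for the discretized oscillator Hamiltonian on $\mathcal{H}_L$.

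\textbf{Step 1 (spectral approximation).} Show that for $n<N$ the discretized Hermite states $\ket{\psi_n^{(L)}}$ are approximate eigenvectors of $\overline{H}$ with eigenvalues close to $n+\tfrac12$. The argument uses the Fourier self-duality of Hermite functions: $F_L\ket{\psi_n^{(L)}}\approx (-i)^n\ket{\psi_n^{(L)}}$, which follows from Poisson summation together with Gaussian tail bounds. Plancherel--Rotach asymptotics bound the tails outside the grid window $|x|\le\sqrt{\pi L/2}$ and the aliasing errors; the Hermite functions are concentrated on $|x|\lesssim\sqrt{2N}$. From this I get residuals $\|(\overline{H}-(n+\tfrac12))\ket{\psi_n^{(L)}}\|$ and Gram-matrix deviations $|\braket{\psi_n^{(L)}}{\psi_m^{(L)}}-\delta_{nm}|$, each polynomially small in $N/L$.

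\textbf{Step 2 (the circuit).} Pass from the number basis to $\overline{H}$-eigenvectors in three stages.
\begin{enumerate}
\item Prepare the discretized Gaussian $\ket{\psi_0^{(L)}}$ by standard state preparation.
\item Use the ladder and fast-forwarding structure, i.e.\ a Wei--Norman factorization of $e^{-it\overline{H}}$ into chirps $e^{ic\overline{x}^2}$ and Fourier-conjugated chirps $F_L^\dagger e^{ic'\overline{x}^2}F_L$. Each factor costs $\mathrm{polylog}(L/\epsilon)$ gates via reversible arithmetic and the QFT.
\item Run phase estimation on $e^{-i\overline{H}t}$ to coherently write the index $n$, and uncompute it against the input register, so that $\ket{n}\ket{0}$ is mapped to the eigenvector with label $n$.
\end{enumerate}
Concretely, I would start from the input $\ket{n}$ and an ancilla holding a uniform superposition $\sum_{y}\ket{y}$. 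After the QFT the phase-estimation register is anticorrelated with $n$, and a controlled evolution rotates it onto the $n$-labelled eigenvector. Because the eigenvalues are integer-spaced, phase estimation with $O(\log N+\log(1/\epsilon))$ bits suffices. Each bit needs $O(\log(1/\epsilon))$ repetitions or amplitude boosting, and each controlled evolution costs $O((\log N+\log(1/\epsilon))^2)$ gates. Together these give the claimed $O((\log N+\log 1/\epsilon)^3\log(1/\epsilon))$ bound.

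\textbf{Step 3 (error and choice of $L$).} Combine the eigen-residuals from Step 1 with a Davis--Kahan argument, using a spectral gap of about $1$, to show that the true low eigenvectors of $\overline{H}$ are $\epsilon$-close to $\ket{\psi_n^{(L)}}$. Then sum the errors over $N$ basis states, in operator norm on $\mathcal{H}_N$. Requiring the total error from aliasing, truncation and phase estimation to be at most $\epsilon$ forces $L=\mathcal{O}(N^{9/4}/\epsilon^{13/4})$. The exponents come from balancing the Plancherel--Rotach error near the turning point, which is $O(N^{-1/12})$-type, against the grid spacing $\sqrt{2\pi/L}$.

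\textbf{Main obstacle.} The hard step is Step 1 uniformly in $n<N$. Near the turning point $\sqrt{2n+1}$ the Hermite functions are Airy-like, and the discrete second-difference, or Fourier, action has its worst error there. Obtaining the precise polynomial exponents $9/4$ and $13/4$ requires careful uniform asymptotics, so I would quote these estimates from \cite{jain2025efficient} rather than rederive them.
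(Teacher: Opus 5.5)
The paper does not prove this theorem itself. It imports it from \cite{jain2025efficient}, together with the $e^{-\gamma L}$ Gram bound of Fact~\ref{fact:orthogonality} and the eigen-residual bound displayed after it.

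Your reconstruction has a genuine gap in Step~2: nothing in it can produce $\ket{\psi_n}$ for $n\neq 0$. Apart from preparing the $n$-independent seed $\ket{\psi_0}$, every operation you list falls into one of two kinds. Either it acts only on the label and clock registers (the uniform superposition, the QFTs, arithmetic, uncomputation against the input), or it is a controlled power of an approximation to $e^{-it\overline H}$ on the grid register, which is all the Wei--Norman chirp factorization provides.

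All of these commute, up to implementation error, with $I\otimes P_\lambda$, where $P_\lambda$ are the spectral projectors of $\overline H$. So the grid register's weight on each eigenvector is conserved. Starting from $\ket{\psi_0}$, that weight is $\approx\delta_{n0}$. Phase estimation therefore just writes $0$, and no arrangement of clock registers and controlled evolutions will ``rotate'' the grid onto the $n$-labelled eigenvector. Operations that are diagonal in the eigenbasis can label eigenvectors but cannot create them. Applying the non-unitary ladder operator $n$ times instead would cost $\Omega(n)$, not polylog.

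What is missing is a non-diagonal, $n$-controlled ingredient. You need an efficiently preparable trial state $\ket{\mu_n}$ with $\braket{\psi_n}{\mu_n}\ge c>0$ uniformly in $n<N$, for example a semiclassical/WKB state with closed-form amplitudes and phases. This is followed by:
\begin{enumerate}
\item phase estimation;
\item an equality test against the input $n$;
\item fixed-point amplitude amplification, whose reflections about $\ket{\mu_n}$ are exactly the steps that do not commute with the $P_\lambda$;
\item coherent erasure of $n$ by labelling, subtracting and unlabelling.
\end{enumerate}
This is the template the paper itself uses for the log-grid Laguerre transform (Algorithm~\ref{alg:laguerre-qopt}, Propositions~\ref{prop:laguerre-sampled-preparation} and~\ref{prop:laguerre-spectral-filter}). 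Positivity of the overlap is also what fixes the phase of each output column, so that superpositions over $n$ come out with the correct relative signs.

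This changes your accounting as well. With integer-spaced eigenvalues, $O(\log N)$ phase-estimation bits need no repetition. The natural source of the factor $\log(1/\epsilon)$ is instead the depth of the fixed-point amplification, with each round costing $O(\log^3 L)$ gates.

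Likewise, the polynomial grid size cannot come from your Steps~1 and~3. The imported residuals are $e^{-\gamma L}$ for $n\le cL$, so the spectral approximation and the Davis--Kahan comparison need only $L=\Omega(N+\log(1/\epsilon))$. In a seed-and-filter construction, $L=\mathcal{O}(N^{9/4}\epsilon^{-13/4})$ is instead the price of sampling a trial state that is non-smooth near the turning points $\pm\sqrt{2n+1}$. That is where the Plancherel--Rotach/WKB asymptotics enter.
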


This gives a direct way to implement Gaussian optical gates on truncated Fock states. Suppose $|\psi\rangle\in\mathcal{H}_N^{\otimes K}$ and $U$ is a single- or two-mode Gaussian gate generated by the quadratures $\hat{x}_j,\hat{p}_j$ and, when two modes are involved, $\hat{x}_k,\hat{p}_k$. In the oscillator basis the same gate is described by the mode operators $a_j,a_j^{\dagger}$ and $a_k,a_k^{\dagger}$, but for digital implementation we move to the grid representation, replace the continuum quadratures by $\overline{x}_j,\overline{p}_j,\overline{x}_k,\overline{p}_k$, apply the corresponding grid evolution, and then return to the number basis:
\begin{align*}
    (\mathsf{QHT}_j^{\dagger} \otimes \mathsf{QHT}_k^{\dagger})U(\overline{x}_j, \overline{p}_j, \overline{x}_k, \overline{p}_k)(\mathsf{QHT}_j \otimes \mathsf{QHT}_k)|\psi\rangle,
\end{align*}
with the evident one-mode version when $U$ acts on only one oscillator. This expression should be read as an approximate implementation: the input is supported on the first $N$ Fock levels, while the intermediate quadrature evolution takes place in the larger $L$-dimensional grid space and may temporarily populate levels above the input cutoff. The corollary below, proved in the detailed digital implementation section, states that choosing $N$ and $L$ as indicated makes both this leakage and the discretization error negligible while keeping the gate complexity polynomial in the number of modes, the number of gates, and the target precision parameter.

Additionally, we decompose each $U(\overline{x}_j, \overline{p}_j, \overline{x}_k, \overline{p}_k)$ into simpler evolutions that are at most quadratic in the quadrature operators (Section \ref{subsec:phasespace}). Each component has a fixed upper bound on its evolution time, so a longer evolution is implemented by repeating evolutions with a smaller angle. Thus, if $\widehat{U}=\widehat {W}_R\cdots \widehat{W}_1$ is a product of $R$ elementary Gaussian factors (Section \ref{eqn:relevant_gate_set}), this decomposition multiplies the number of factors by a quantity that we call the \emph{amplification factor} $M$. The amplification factor accounts for the decomposition into basic Gaussian gates with evolution time bounded by a constant. Under this assumption, $R\times M$ is an effective measure of the number of basic Gaussian gates.

\begin{restatable}
[Digital implementation of Gaussian optical gates]{theorem}{digitalImplTheorem}
\label{thm:digital_implementation_gaussian_optical_gates}
Let
\begin{align}
    \ket{\psi}=\sum_{\mathbf m\in\{0,\ldots,N-1\}^K}a_{\mathbf m}
    \ket{\mathbf m},
    \qquad \sum_{\mathbf m}|a_{\mathbf m}|^2=1,
\end{align}
be an arbitrary state in the effective qubit space $\mathcal{H}_{N}^{\otimes K}$.  Let
$\widehat{U}=\widehat {W}_R\cdots \widehat{W}_1$ be a product of $R$ elementary Gaussian factors (Section \ref{eqn:relevant_gate_set}). Suppose the maximum amplification factor of any component $\widehat{W}_j$ is at most $M$.
For target
error $\epsilon<1$, assume
\begin{align*}
    N=\Omega\!\left(\log(1/\epsilon)\right),
\end{align*}
with a sufficiently large implicit constant, and use an operational grid
space $\mathcal{H}_{L}^{\otimes K}$ with
\begin{align*}
    L=\Theta\!\left(
       K^{13/4}(5^{RM}N)^{9/4}\epsilon^{-13/4}+\log K
    \right).
\end{align*}
Then there is a digital quantum circuit $\overline{U}$ on
$\mathcal{O}(K\log L)$ qubits and
\begin{align*}
\mathcal{O}\!\left(
K\bigl(\log(5^{RM}N)+\log(K/\epsilon)\bigr)^3\log(K/\epsilon)
+RM\log^2L\right)
\end{align*}
one- and two-qubit gates
that satisfies
\begin{align*}
    \lVert (\widehat{U} - \overline{U})P_{N}\rVert < \epsilon,
\end{align*}
where $P_N$ is the orthogonal projection onto
$\mathcal H_N^{\otimes K}$ in the number-basis encoding.

\end{restatable}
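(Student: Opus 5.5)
The plan is to build $\overline{U}$ as a sandwich: QHT into the grid on every mode, a fixed Wei--Norman style product of exactly implementable grid evolutions, and inverse QHT out. We write $\overline{U}=(\mathsf{QHT}^{\dagger})^{\otimes K}\,\overline{W}_{RM}\cdots\overline{W}_1\,\mathsf{QHT}^{\otimes K}$. Following Section~\ref{subsec:phasespace}, each $\widehat{W}_j$ is split into at most $M$ bounded-time pieces. Each piece is then factored into evolutions of the form $e^{i t q(\overline{x})}$ (diagonal, computed by reversible arithmetic in $\mathcal{O}(\log^2 L)$ gates), $e^{it q(\overline{p})}$ (conjugated by $F_L$), and bilinear $e^{it\overline{x}_j\overline{p}_k}$ terms (conjugated by one-mode Fourier transforms). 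This gives the additive term $RM\log^2 L$. Applying Theorem~\ref{thm:qht} on each of the $K$ modes with per-mode error $\epsilon/(4K)$ and cutoff $N'=5^{RM}N$ gives the first gate term $K(\log(5^{RM}N)+\log(K/\epsilon))^3\log(K/\epsilon)$ and the choice $L\sim K^{13/4}N'^{9/4}\epsilon^{-13/4}$.

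The error analysis proceeds by a hybrid argument over the $RM$ pieces, carried out in three steps.
\begin{enumerate}
\item \emph{Energy growth.} I will show that a bounded-time Gaussian piece maps $\mathcal{H}_{N'}$ into states whose weight above level $cN'$ (with $c\le 5$) is exponentially small in $N'$. Squeezing and displacement are the cases that genuinely raise energy. After the $RM$ pieces the state is therefore essentially supported below $5^{RM}N$ in every mode. Since $N=\Omega(\log(1/\epsilon))$, the tails $e^{-\gamma N}$ are below $\epsilon/(RM K)$.
\item \emph{Grid commutation.} On discretized Hermite states with $n\lesssim 5^{RM}N\ll L^{4/9}$, the operators $\overline{x},\overline{p}$ act like the continuum quadratures up to errors polynomially small in $L$. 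This uses the Plancherel--Rotach decay of $\psi_n$ at the grid edge $\pm\sqrt{\pi L/2}$ together with the near-invariance of $\ket{\psi_n^{(L)}}$ under $F_L$. Consequently each grid piece $\overline{W}$ agrees with $\mathsf{QHT}\,\widehat{W}\,\mathsf{QHT}^{\dagger}$ on the relevant low-energy subspace.
\item \emph{Telescoping.} Summing the errors per piece and per mode gives $\epsilon$.
\end{enumerate}

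The main obstacle is doing the leakage step for $K$ modes without the $L^K e^{-\gamma L}$ blow-up of \cite{iyer2026efficientquantumcircuitshighdimensional}. I would avoid bounding the error on the full $K$-mode grid, which is what produces $L^K$ counts of basis states. Instead I will use that each piece acts on at most two modes and bound the error as an operator acting on those modes tensored with the identity. The only cross-mode quantity is the joint projector onto levels $<5^{RM}N$, and a union bound over modes costs a factor $K$, not $L^K$. The $+\log K$ in $L$ and the $K/\epsilon$ inside the logarithms absorb this union bound.

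The delicate estimate is the per-gate energy growth bound for squeezers, since $SU(1,1)$ evolutions are unbounded. I would first try the explicit Bogoliubov action $S(r)^{\dagger}aS(r)=a\cosh r-a^{\dagger}\sinh r$. This gives moment bounds $\langle \hat{N}^p\rangle\le (e^{2|r|}(N+1))^p$ for bounded $|r|$, and a Markov/Chernoff argument then gives the constant $5$.
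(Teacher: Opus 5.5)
Your construction and gate count match the paper's: the same $\mathsf{QHT}$--grid--$\mathsf{QHT}^\dagger$ sandwich, per-mode QHT error $\epsilon/(4K)$ at cutoff $5^{RM}N$, and $\mathcal O(\log^2L)$ gates per bounded-time grid factor. The error bookkeeping, however, does not prove the statement under its hypothesis. You bound every piece's Fock tail by the same $e^{-\gamma N}$. You then pay a factor $RM$ for telescoping and a factor $K$ for a union bound over modes, and conclude that $N=\Omega(\log(1/\epsilon))$ gives $e^{-\gamma N}\le\epsilon/(RMK)$. That conclusion needs $N=\Omega(\log(RMK/\epsilon))$. The theorem deliberately assumes a condition on $N$ that does not depend on $K$ or $RM$; this independence is the advertised improvement over \cite{iyer2026efficientquantumcircuitshighdimensional}. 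The slack you point to cannot absorb the extra factors. The Fock tails are governed by the cutoff $N$, not by $L$, and the $K/\epsilon$ inside the logarithms only reflects the QHT precision $\epsilon/(4K)$.

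The paper removes both factors in Lemma~\ref{lem:unitary_error}. It compares $\widehat UQ_N$ with the truncated prefix $T_b=Q_{E_b}\widehat W_bT_{b-1}$, where $E_b=5^bN$ and $Q_E$ is the \emph{joint} projector onto $\max_jm_j\le E$. Since $\widehat W_b$ acts on two modes and every other mode of $T_{b-1}$ is already below $E_{b-1}\le E_b$, the tail created at step $b$ is a two-mode quantity, so no union bound over the $K$ modes is needed. That tail is also exponentially small in the \emph{current} cutoff, $\eta_b=AE_{b-1}^2e^{-\gamma E_{b-1}}$, so $\sum_b\eta_b\le A'N^2e^{-\gamma'N}$ with no dependence on $RM$. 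Your own Step~1 already gives this if you track $N'=5^{b-1}N$ instead of bounding every term by the first one.

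The only genuine $K$-dependence in the error is the norm of the synthesis map on the untouched modes, $\|\mathsf S_{E,L}\|\le(1+(E+1)e^{-\gamma L})^{K/2}$ (Lemma~\ref{lem:synth_spec_op}). This norm enters when the two-mode grid estimate is tensored up to $K$ modes, and it is what the $+\log K$ in $L$ pays for. Your proposal does not address it.

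Two smaller points. First, $\langle\hat N^p\rangle\le(e^{2|r|}(N+1))^p$ is not true uniformly in $p$; already the squeezed vacuum has factorially growing moments. What holds, and suffices for a Markov step with $p=\Theta(N)$, is a bound like $(e^{2|r|}(N+2p))^p$, and it must be proved for the shears you actually implement rather than for the squeezer. Second, your Step~2 goes from the generators $\overline x,\overline p$ to the exponentials $\overline W$ without controlling all powers. The paper handles this by truncating the Taylor series of $e^{i\vartheta\widehat O}$ at order $\Theta(N)$ using $|\vartheta|\le 1/(4\sqrt2e)$, and the same truncation yields the energy-growth bound.
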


This is proven in Section \ref{sec:eff_implement}. Thus one can interpret the above as saying that the digital implementation requires at most a cubic overhead in circuit complexity.  The gates listed in Section~\ref{eqn:relevant_gate_set} generate the Gaussian unitaries on finitely many modes: displacements supply the affine translations, while phase rotations, squeezers, and beamsplitters generate the relevant symplectic transformations.  Adding any non-quadratic gate such as the Kerr gate $K(\gamma)=e^{i\gamma\hat n^2}$ gives a universal continuous-variable gate set in the standard finite-energy approximation sense, and in our digital Fock encoding the Kerr gate is diagonal.  Thus the above implies a digital encoding of a universal optical gate set. Thus we have the following simple corollary that may be of independent interest.

\begin{corollary}[Digital encoding of universal quantum optics]
Suppose we use the digital encoding spaces $\mathcal{H}_{N}, \mathcal{H}_L$ as specified in Theorem \ref{thm:digital_implementation_gaussian_optical_gates}. Let $\widehat{U}$ be an optical unitary on $K$ modes that decomposes into $R$ one- and two-mode Gaussian or Kerr gates. Suppose the amplification factor of the Gaussian components is bounded by $M$. Then there is a digital quantum circuit $\overline{U}$ on $\mathcal{O}(K\log(L))$ qubits and 
\begin{align*}
\mathcal{O}\!\left(
K\bigl(\log(5^{RM}N)+\log(K/\epsilon)\bigr)^3\log(K/\epsilon)
+RM\log^2L\right)
\end{align*}
one- and two-qubit gates
that satisfies
\begin{align*}
    \lVert (\widehat{U} - \overline{U})P_{N}\rVert < \epsilon,
\end{align*}
where $P_N$ is the orthogonal projection onto
$\mathcal H_N^{\otimes K}$ in the number-basis encoding.

\end{corollary}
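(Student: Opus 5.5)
The plan is to reduce the corollary to Theorem~\ref{thm:digital_implementation_gaussian_optical_gates} by treating the Kerr gates as exact diagonal operations in the number-basis encoding.

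\textbf{Step 1 (grouping).} Write $\widehat{U}=\widehat{G}_{s}\widehat{K}_{s}\cdots\widehat{G}_1\widehat{K}_1$. Here each $\widehat{G}_i$ is a maximal block of consecutive Gaussian factors. Each $\widehat{K}_i$ is a product of Kerr gates $K(\gamma)=e^{i\gamma\hat n^2}$, possibly empty. The total number of Gaussian factors across all blocks is at most $R$, and each has amplification factor at most $M$.

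\textbf{Step 2 (Kerr gates).} A Kerr gate commutes with $\hat n$, so it preserves the truncation $\mathcal H_N^{\otimes K}$ and acts on it as a diagonal phase $\ket{n}\mapsto e^{i\gamma n^2}\ket n$. Implement it exactly with reversible arithmetic: compute $n^2$ into an ancilla, apply a controlled phase, and uncompute. This costs $\mathcal{O}(\log^2 L)$ gates per Kerr gate, or $\mathcal O(\mathrm{polylog}(1/\epsilon))$ if the phase is only approximated. These costs are absorbed into the $RM\log^2L$ term.

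\textbf{Step 3 (Gaussian blocks and the obstacle).} The main obstacle is that the Gaussian blocks leak. A Gaussian block maps $\mathcal H_N^{\otimes K}$ into states whose Fock support extends beyond $N$, and those states are only approximately supported below some larger cutoff $N'$. The next Gaussian block must then be applied to an input truncated at $N'$, not $N$.

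To handle this, I would not invoke the theorem block by block. Instead, I would reuse its proof structure. In that proof, the energy cutoff grows by a factor of at most $5$ per bounded-time Gaussian gate, which is the source of the $5^{RM}N$ term. Since Kerr gates preserve every cutoff exactly, they do not increase this growth.

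Concretely, I would run the argument of Section~\ref{sec:eff_implement} on the full sequence. The Gaussian steps are carried out on the grid, conjugated by $\mathsf{QHT}$ exactly as in the theorem. Before each Kerr layer, one applies $\mathsf{QHT}^\dagger$, so the Kerr gates act in the number-basis register of dimension $L$. The step is completed with $\mathsf{QHT}$ afterwards.

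Alternatively, and more simply, one can keep the state in the grid representation throughout and apply the Kerr gate there as $\mathsf{QHT}\,K(\gamma)\,\mathsf{QHT}^\dagger$. Either version leaves the tail bounds and the discretization error per Gaussian step unchanged. It also adds only the extra $\mathsf{QHT}$ errors, each controlled by Theorem~\ref{thm:qht} at the same $L$.

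\textbf{Step 4 (error and resources).} Sum the errors by the triangle inequality and unitarity. Allocate $\epsilon/(2(R+1))$ of the error budget to each stage, which only rescales $\epsilon$ by $\mathrm{poly}(R)$ inside logarithms. The choice $L=\Theta(K^{13/4}(5^{RM}N)^{9/4}\epsilon^{-13/4}+\log K)$ then still suffices, with the constants enlarged. The result is qubit count $\mathcal O(K\log L)$ and the stated gate bound.
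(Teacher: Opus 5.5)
Your route is the paper's: its only justification is the remark that $K(\gamma)=e^{i\gamma\hat n^2}$ is diagonal in the Fock encoding, so Theorem~\ref{thm:digital_implementation_gaussian_optical_gates} applies. Your handling of leakage is sound, since Kerr factors commute with every $Q_E$ and can be inserted into the truncated prefix $T_b=Q_{E_b}\widehat W_bT_{b-1}$ without raising the cutoff or adding error.

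The gap is in Step~4. You claim the theorem's gate count, but you charge the extra $\mathsf{QHT}$ calls of Step~3 only for their error, not for their gates. Every Kerr layer sandwiched between Gaussian factors on the same mode forces a $\mathsf{QHT}^\dagger$/$\mathsf{QHT}$ pair on that mode; your grid variant $\mathsf{QHT}\,K(\gamma)\,\mathsf{QHT}^\dagger$ is the same circuit. By Theorem~\ref{thm:qht}, each call costs $\mathcal O\bigl((\log(5^{RM}N)+\log(1/\epsilon_H))^3\log(1/\epsilon_H)\bigr)$. That is about $\log L\cdot\log(1/\epsilon_H)$ times the $\mathcal O(\log^2L)$ cost of one elementary grid evolution. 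The term $K(\cdots)^3\log(K/\epsilon)$ pays for only $\mathcal O(K)$ such calls, and $RM\log^2L$ covers $R$ of them only when $M\gtrsim\log L\,\log(1/\epsilon_H)$.

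For example, take $K=M=1$ with Gaussian gates alternating with Kerr gates. Your circuit makes $\Theta(R)$ QHT calls at cutoffs up to $5^{R}N$. That is on the order of $R\,(R+\log N+\log(R/\epsilon))^3\log(R/\epsilon)$ gates, roughly a factor $R$ above the displayed bound.

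These calls also force $\epsilon_H=\Theta(\epsilon/(K+R))$. So $L$ grows by a factor $\Theta\bigl(((K+R)/K)^{13/4}\bigr)$ rather than by a constant, and $\log(K/\epsilon)$ becomes $\log((K+R)/\epsilon)$. This is harmless inside $\log L$, but it is not the same $L$ with enlarged constants.

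What your argument does establish is the theorem's bound plus an additive $\mathcal O\bigl(R_{\rm Kerr}(\log(5^{RM}N)+\log((K+R)/\epsilon))^3\log((K+R)/\epsilon)\bigr)$, where $R_{\rm Kerr}$ is the number of interleaved Kerr layers. This is polynomial in $R$, $K$ and $\log(1/\epsilon)$, consistent with the paper's informal $\mathcal O(\mathrm{poly}(n,K,\log(1/\epsilon)))$ claim preceding Theorem~\ref{thm:digital_implementation_gaussian_optical_gates}. It collapses to the displayed bound when every Kerr gate can be commuted to the beginning or end of the circuit, since it is then applied in the number register before the initial or after the final QHTs.

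Recovering the displayed bound in general would need a further idea, such as applying $e^{i\gamma\hat n^2}$ directly on the quadrature grid at $\mathcal O(\log^2L)$-type cost. Neither your proposal nor the paper's one-line justification provides one. So the verbatim reuse of the theorem's gate count is not supported by this route.
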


\section{Discrete orthogonal quantum polynomial transforms} \label{sec:dqopt}
The discrete QOPTs that we obtain in this work are those for the Charlier, Meixner, Krawtchouk and Hahn polynomials. These polynomials are supported on a discrete lattice and are orthogonal with respect to the Poisson, negative binomial, binomial and hypergeometric weights respectively. The Askey scheme provides a unified perspective on these polynomial families by viewing them as solutions to the discrete hypergeometric equation.

Let $\Delta f = f(x+1) - f(x)$ and $\nabla f = f(x) - f(x-1)$ be the forward
and backward finite-difference operators on the integer lattice. The discrete
hypergeometric equation is the second-order difference equation
\all{
H_{\Delta}\,y(x) = \Big[\,P_{2}(x)\,\Delta\nabla + P_{1}(x)\,\Delta + \lambda\,\Big]y(x) = 0,
}{Hdelta}
with the degree constraints $\deg P_{2}(x) \leq 2$ and $\deg P_{1}(x) \leq 1$.
The Charlier, Meixner and Krawtchouk polynomials are special cases of this discrete framework where $\deg P_{2} \leq 1$. The Hahn polynomials are the discrete analogs of the Jacobi polynomials and correspond to $\deg P_{2} = 2$.

The hypergeometric equation transforms to a self-adjoint Sturm--Liouville form when multiplied by a weight $\rho(x)$ satisfying the constraint $\rho(x) P_{2}(x) \Delta \nabla + \rho(x) P_{1}(x) \Delta + \rho(x)\lambda=
\Delta(\rho(x) P_{2}(x) \nabla) + \lambda \rho(x)$. The constraint 
implies that multiplication by $\rho(x)$ reduces the
hypergeometric equation to self-adjoint form,
$$
\Delta\!\left[\,\rho(x)P_{2}(x)\,\nabla y(x)\,\right] + \lambda\,\rho(x)\,y(x) \;=\; 0.
$$
Matching the coefficients of $y(x+1)$ on both sides yields the discrete
Pearson equation,
\all{
\frac{\rho(x+1)}{\rho(x)} = \frac{P_{2}(x)+P_{1}(x)}{P_{2}(x+1)}
} {pearson}
which provides an explicit expression for $\rho(x)$. The $y(x-1)$
coefficients match identically and impose no further constraint.

The hypergeometric equation and the weights for converting the four discrete polynomial families Charlier, Meixner, Krawtchouk and Hahn to the self-adjoint form are given in Table~\ref{tab:3}. The weights correspond to the Poisson, negative binomial, binomial and hypergeometric distributions respectively.

\begin{table}[H]
\renewcommand{\arraystretch}{1.8}
\centering
\resizebox{\textwidth}{!}{%
\begin{tabular}{|c|c|c|c|c|}
\hline
\textbf{Family} & \textbf{Support} & $P_{2}(x)$ & $P_{1}(x)$ & \textbf{Weight} $\rho(x)$ \\
\hline
Charlier $(\mu)$ & $\Z_{\ge0}$ & $x$ & $\mu - x$ & $\dfrac{\mu^{x}}{x!}\,e^{-\mu}$ \\
\hline
Meixner $(\beta,c)$ & $\Z_{\ge0}$ & $x$ & $\beta c-(1-c)x$ & $\dfrac{(\beta)_{x}\,c^{x}}{x!}$ \\
\hline
Krawtchouk $(p,\mathcal{N})$ & $\{0,\dots,\mathcal{N}\}$ & $x$ & $\dfrac{\mathcal{N}p-x}{1-p}$ & $\dbinom{\mathcal{N}}{x}p^{x}(1-p)^{\mathcal{N}-x}$ \\
\hline
Hahn $(\mathcal{N}, \alpha, \beta )$& $\{0,\dots,\mathcal{N}-1\}$ & $x(\mathcal{N}+\alpha-x)$ & $(\beta+1)(\mathcal{N}-1)-(\alpha+\beta+2)x$ & $\dbinom{\alpha+x}{x}\dbinom{\beta+\mathcal{N}-1-x}{\mathcal{N}-1-x}$ \\
\hline
\end{tabular}
}
\caption{Coefficients $P_2,P_1$ and orthogonality weights $\rho$ for the four discrete QOPT families. Krawtchouk is supported on $\mathcal{N}+1$ points and Hahn on $\mathcal{N}$ points; the weights are the Poisson, negative binomial, binomial and hypergeometric distributions respectively.}
\label{tab:3}
\end{table}

Similar to the organization of the table, the discrete QOPT constructions are organized by the degree of the corresponding hypergeometric equation. For the Charlier, Meixner and Krawtchouk polynomials, the QOPT is realized by a generalized displacement operator and is poly-logarithmic in the grid dimension. For the discrete Hahn transform, the QOPT is realized by relating the Hahn polynomials to the $SU(2)$ Clebsch--Gordan coefficients and requires resources linear in the lattice dimension.

\subsection{Unified framework for Charlier, Meixner and Krawtchouk Transforms} \label{pmorov}

The three QOPTs for the Charlier, Meixner and Krawtchouk polynomials are the degree $1$ discrete QOPTs according to Table~\ref{tab:3}. Further, they can be grouped together as they share the single algorithmic mechanism of being realized by a generalized displacement operator for one of the rank-one groups.
For the Heisenberg--Weyl group, the operator is an ordinary displacement, while for $SU(2)$ this is a beam-splitter gate and for $SU(1,1)$ it is the squeeze gate.

The states these operators generate from vacuum are the Perelomov coherent states of that group \cite{perelomov1986}. In the notation of the three dynamical algebras used in this work, the operators read:
\begin{align}
    \label{eq:generalized-displacements}
    \text{HW:}&\quad D(\alpha)=e^{\alpha a^\dagger-\bar\alpha a},
    \qquad \alpha\in\C,
    \\
    SU(2):&\quad R(z)=e^{\arctan|z|\,\left(e^{i\arg z}J_+-e^{-i\arg z}J_-\right)},
    \qquad z\in\C,
    \\
    SU(1,1):&\quad S(z)=e^{\operatorname{arctanh}|z|\,\left(e^{i\arg z}K_+-e^{-i\arg z}K_-\right)},
    \qquad |z|<1 .
\end{align} 
Acting on the respective lowest weight states in the representations $\ket{0}$, $\ket{j,-j}$, and $\ket{0,k}$, these operators produce the coherent states:
\begin{align}
    \ket{\alpha}=D(\alpha)\ket 0,
    \qquad
    \ket{z,j}=R(z)\ket{j,-j},
    \qquad
    \ket{z,k}=S(z)\ket{0,k}
\end{align}
The key structural observation is that measuring a coherent state in the number basis of its representation yields the orthogonality measure of the corresponding polynomial family. Measuring each coherent state in the number basis reproduces the Poisson, binomial or negative binomial distribution respectively.

\begin{table}[h!]
    \centering
    \renewcommand{\arraystretch}{1.8}
    \resizebox{\textwidth}{!}{%
    \begin{tabular}{lcc}
    \toprule[0.5mm]
     & \textbf{Coherent-state expansion} & \textbf{Number basis measure} \\
    \midrule
    HW & $\ket{\alpha}=e^{-\mu/2}\sum_{n\ge0}\dfrac{\alpha^n}{\sqrt{n!}}\ket n$ & $e^{-\mu}\dfrac{\mu^n}{n!}$ (Poisson) \\
    $SU(2)$ & $\ket{z,j}=\dfrac{1}{(1+|z|^2)^{j}}\displaystyle\sum_{m=-j}^{j}\binom{2j}{j+m}^{1/2}\!z^{j+m}\ket{j,m}$ & $\binom{2j}{j+m}p^{j+m}(1-p)^{j-m}$ (binomial) \\
    $SU(1,1)$ & $\ket{z,k}=(1-|z|^2)^{k}\displaystyle\sum_{n\ge0}\binom{n+2k-1}{n}^{1/2}\!z^{n}\ket{n,k}$ & $\binom{n+2k-1}{n}(1-|z|^2)^{2k}|z|^{2n}$ (neg.\ binomial) \\
    \bottomrule[.5mm]
    \end{tabular}
    }
    \caption{Measuring a Perelomov coherent state in the number basis of its representation reproduces the orthogonality measure of the polynomial family. Here the Poisson parameter
    $\mu=|\alpha|^2$ in the first row, the binomial parameter $p=|z|^2/(1+|z|^2)$, and the negative binomial parameter is $|z|^{2}$ for $|z|<1$.}
    \label{tab:coherent-measures}
\end{table}

The coherent state provides the ground state for the orthogonal polynomial family. The remaining question concerns the generation of the excited states corresponding to the degree $l$ polynomials. The main result for the Charlier, Meixner and Krawtchouk transforms is that the full transform is obtained by applying a generalized displacement to an
arbitrary number state, rather than just the vacuum. Concretely, the number basis matrix elements of the displacement gate factor as a coherent-state amplitude times the orthonormal polynomial of the family:
\begin{align}
    \label{eq:transform-principle}
    \bra{n}U(z)\ket{l} =\underbrace{\sqrt{w(n)}}_{\text{measure}}
    \;\times\;
    \underbrace{\widehat P_l(n)}_{\text{orthonormal polynomial}}
    \;\times\;(\text{phase})
\end{align}
where $U(z)$ is the relevant displacement of
\eqref{eq:generalized-displacements} and $\widehat P_l$ is the degree-$l$ orthonormal polynomial for the weight $w$. Equivalently, the displacement gate is the polynomial transform, wherein it maps the number basis coherently to the basis of orthonormal polynomial states $\ket{\bf l}$ of
\defref{def:qopt}, up to the diagonal coherent-state envelope and a phase. In the next three subsections, we obtain the corresponding QOPTs according to this template, establish correctness and carry out the error analysis for the grid-based implementations.

\subsubsection{Charlier transform}

The Charlier transform is the simplest instance of the coherent state based QOPT framework as it is generated by the single-mode displacement operator $D(\alpha) = \exp(\alpha a^\dagger-\bar\alpha a)$, which translates the vacuum to the coherent state $\ket{\alpha}$. Measuring $\ket{\alpha}$ in the Fock basis returns the photon number $n$, sampled according to the Poisson distribution $w_\mu(n)=e^{-\mu}\mu^n/n!$, where $\mu=|\alpha|^2$ is the mean photon number, as recorded in the Heisenberg--Weyl row of \tabref{tab:coherent-measures}. The
Charlier polynomials $C_k(n;\mu)$ are the orthogonal polynomials with respect to the Poisson distribution on the discrete integer lattice.

The Charlier transform can be viewed as a prototype for the other discrete QOPT proofs. First, we collect the properties of the Charlier polynomials that are shared with other members of the Askey scheme and then we move to the specific properties of the Charlier polynomials that result in the QOPT being realized by the displacement operator. 

The main underlying claim for the efficient quantum Charlier transform is that the displacement operator not only maps the Fock ground state $\ket{0}$ to the Poisson ground state $\sum_{n} \sqrt{ w_{\mu}(n)} \ket{n}$, but also that $D(\alpha)$ maps the higher number states $ \ket{k}$ coherently to the Charlier polynomial states of degree $k$. The displacement operator implements the Charlier transform in the following sense: applying $D(\sqrt\mu)$ to the number state $\ket{k}$ produces the degree-$k$ Charlier polynomial state. 

In common with other Askey scheme polynomials, the Charlier polynomials can be defined as a generating function as well as a hypergeometric series. 

\begin{definition}
For $\mu>0$ and $k,n\in\Z_{\geq 0}$, the Charlier polynomial $C_k(n;\mu)$ is
\begin{align}
    C_k(n;\mu)=\sum_{s=0}^{\min(k,n)}(-1)^s\frac{k(k-1)\cdots(k-s+1)\,n(n-1)\cdots(n-s+1)}{s!\,\mu^s}= {}_2F_0\!\left(\begin{matrix}-k,\,-n\\ -\end{matrix};\,-\frac1\mu\right),
\end{align}
with the $s=0$ term equal to $1$.
These polynomials are characterized by the generating function
\all{
\sum_{k\ge0}C_k(n;\mu)\frac{t^k}{k!}
=
e^t\left(1-\frac{t}{\mu}\right)^n.
} {gfc}
Here $k$ is the degree of the polynomial.
\end{definition}
Further, the Charlier polynomials are the eigenfunctions of a second order birth-death operator on a discrete lattice. This property is not required for the analysis of the Charlier transform but is recorded here as it is the property 
common to the discrete polynomial transforms in the Askey scheme. 

\begin{proposition}[Charlier birth--death operator] \label{thm:bd}
Let $w_\mu(n)=e^{-\mu}\frac{\mu^n}{n!}$ for $n\in\Z_{\ge 0}$ and define the
birth--death operator
\all{
(\mathcal{L}f)(n)
=
\mu\big(f(n+1)-f(n)\big)
+
n\big(f(n-1)-f(n)\big).
} {Ldef}
Then $\mathcal{L}$ is symmetric on $\ell^2(\mathbb Z_{\ge0},w_\mu)$ and has the
Charlier polynomials as eigenfunctions,
$$
\mathcal{L}\,C_k(\cdot;\mu)
=
-k\,C_k(\cdot;\mu).
$$
\end{proposition}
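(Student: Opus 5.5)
Symmetry is a summation-by-parts computation, and the eigenvalue identity I would derive from the generating function \eqref{gfc}, since that avoids working with the ${}_2F_0$ coefficients directly.

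\textbf{Symmetry.} For finitely supported $f,g$, write $\langle f,\mathcal{L}g\rangle_{w_\mu}=\sum_n w_\mu(n)f(n)(\mathcal{L}g)(n)$. The key identity is the detailed balance relation
\[
\mu\,w_\mu(n)=(n+1)\,w_\mu(n+1),
\]
which follows from $w_\mu(n+1)/w_\mu(n)=\mu/(n+1)$. This is the Pearson equation \eqref{pearson} with $P_2(x)=x$ and $P_1(x)=\mu-x$.

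Using it, the forward term can be rewritten as
\[
\sum_n w_\mu(n)\,\mu\, f(n)\big(g(n+1)-g(n)\big)=\sum_n (n+1)w_\mu(n+1) f(n)\big(g(n+1)-g(n)\big).
\]
Shifting the index and combining with the backward term gives the Dirichlet form
\[
\langle f,\mathcal{L}g\rangle_{w_\mu}=-\sum_{n\ge0}\mu\, w_\mu(n)\big(f(n+1)-f(n)\big)\big(g(n+1)-g(n)\big).
\]
This expression is symmetric in $f$ and $g$. The factor $n$ in the backward term vanishes at $n=0$, so no boundary term arises. The identity extends to polynomials because $w_\mu$ has finite moments of all orders, so all the sums converge absolutely.

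\textbf{Eigenvalues.} Define the generating function
\[
G(n,t)=\sum_{k}C_k(n;\mu)\,t^k/k!=e^t(1-t/\mu)^n.
\]
Apply $\mathcal{L}$ in the variable $n$. Then
\[
G(n+1)-G(n)=G(n)\big(-t/\mu\big),\qquad G(n-1)-G(n)=G(n)\Big(\tfrac{1}{1-t/\mu}-1\Big)=G(n)\,\tfrac{t/\mu}{1-t/\mu}.
\]
Therefore
\[
\mathcal{L}G=G\Big(-t+\tfrac{n\,t/\mu}{1-t/\mu}\Big).
\]
Now compare this with $-t\,\partial_t G$. Since $\partial_t G=G\big(1-\tfrac{n/\mu}{1-t/\mu}\big)$, we get
\[
-t\,\partial_t G=G\Big(-t+\tfrac{n\,t/\mu}{1-t/\mu}\Big)=\mathcal{L}G.
\]
The operator $\mathcal{L}$ acts only on $n$ and commutes with the formal power series expansion in $t$. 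Hence we may equate coefficients of $t^k/k!$. On the left this gives $\mathcal{L}C_k$, and on the right $-t\partial_t$ multiplies the $k$-th coefficient by $-k$. This yields $\mathcal{L}C_k=-kC_k$.

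There is no real obstacle here. The only points needing care are the boundary at $n=0$, where $f(-1)$ is multiplied by $n=0$, and justifying the termwise manipulation of the formal series. Both are routine.
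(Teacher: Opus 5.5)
Your proof is correct, and it is more complete than what the paper gives. The paper has no separate proof: it only remarks after the statement that $\mathcal{L}$ is the operator $H_\Delta$ of \eqref{Hdelta} specialized to $P_2(x)=x$, $P_1(x)=\mu-x$, with eigenvalue $-k$, and that symmetry is the self-adjointness supplied by the Pearson weight \eqref{pearson}.

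Your symmetry argument is that remark carried out in full. The detailed-balance relation $\mu w_\mu(n)=(n+1)w_\mu(n+1)$ is the Pearson equation for the Charlier row, and your Dirichlet form is the summation-by-parts version of the self-adjoint form $\Delta[\rho P_2\nabla y]$.

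The eigenvalue part is where you genuinely differ. Read through the framework, the argument would go as follows. $\mathcal{L}$ maps polynomials of degree at most $k$ into themselves, with $\mathcal{L}x^k=-kx^k$ plus lower-degree terms. So there is a degree-$k$ eigenpolynomial with eigenvalue $-k$, unique up to scale. Symmetry makes these eigenpolynomials mutually orthogonal in $\ell^2(w_\mu)$. One then identifies them with $C_k$ using the orthogonality of the Charlier family with respect to the Poisson weight. That orthogonality is an extra input: the paper only asserts it, though it could be extracted later from the unitarity of $D(\alpha)$ in Theorem~\ref{thm:charlier-disp}.

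Your identity $\mathcal{L}G=-t\,\partial_t G$ on the generating function \eqref{gfc} bypasses that input entirely. It works directly from the paper's definition of $C_k$ and fixes the normalization automatically. The framework argument, in exchange, applies uniformly to the Charlier, Meixner, Krawtchouk and Hahn rows of Table~\ref{tab:3}, which is the paper's stated reason for recording the proposition.
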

\noindent The birth--death operator above is the discrete hypergeometric operator $H_\Delta$ of
\eqref{Hdelta} specialized to the Charlier row of Table~\ref{tab:3}
($P_2(x)=x$, $P_1(x)=\mu-x$), with eigenvalue $\lambda_k=-k$. Its symmetry on
$\ell^2(\mathbb Z_{\ge0},w_\mu)$ is exactly the self-adjointness guaranteed by
the Pearson weight $\rho=w_\mu$ obtained from \eqref{pearson}. 

We now turn to the specific properties of the Charlier polynomials. These  polynomials also admit a ladder-type description: each $C_k$ is generated from the constant function by $k$-fold application of a single backward-shift operator.

\begin{restatable}{proposition}{charlierShift} \label{thm:cop}

Let $T_-$ denote the backward shift operator acting on functions $f:\mathbb Z_{\ge0}\to\mathbb C$ by
\[
(T_-f)(n)=f(n-1),
\]
with $f(-1):=0$.
The Charlier polynomials with parameter $\mu>0$ defined in equation \eqref{gfc} have the expression, 

\[
C_k(n;\mu)
=
\left(1-\frac n\mu T_-\right)^k1,
\]
where $1(n)=1$ denotes the constant function on $\mathbb Z_{\ge0}$.

\end{restatable}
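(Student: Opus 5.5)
The plan is to verify that the right-hand side satisfies the same recursion in the degree $k$ as the Charlier polynomials, and then conclude by induction on $k$. Write $A := 1-\frac{n}{\mu}T_-$, so that $(Af)(n)=f(n)-\frac{n}{\mu}f(n-1)$. Here multiplication by $n/\mu$ is applied after the shift. It then suffices to prove
\begin{align*}
C_0(n;\mu)=1, \qquad C_{k+1}(n;\mu)=C_k(n;\mu)-\frac{n}{\mu}\,C_k(n-1;\mu)\quad (n\geq 0),
\end{align*}
since then $C_{k+1}=A\,C_k$, and iterating from $C_0=1$ gives $C_k=A^k 1$.

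Both facts follow from the generating function \eqref{gfc}. Let $G(t,n)=e^t(1-t/\mu)^n=\sum_{k\ge0}C_k(n;\mu)\,t^k/k!$. Setting $t=0$ gives $C_0(n;\mu)=1$ for all $n$. Differentiating in $t$ gives
\begin{align*}
\partial_t G(t,n) = e^t\Big(1-\frac{t}{\mu}\Big)^n - \frac{n}{\mu}\,e^t\Big(1-\frac{t}{\mu}\Big)^{n-1} = G(t,n)-\frac{n}{\mu}\,G(t,n-1)
\end{align*}
for $n\geq 1$. On the other side, $\partial_t G=\sum_k C_{k+1}(n;\mu)\,t^k/k!$. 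Comparing the coefficients of $t^k/k!$ yields the recursion. Both sides are entire in $t$ (polynomial times exponential), so this term-by-term comparison is legitimate.

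The only point needing care is the boundary $n=0$, which is where the convention $f(-1):=0$ enters. At $n=0$ we have $G(t,0)=e^t$, so $\partial_t G=G$, and the derivative formula holds trivially. In the operator, the prefactor $n/\mu$ vanishes at $n=0$, so the value assigned to $f(-1)$ never contributes. Hence $(A f)(0)=f(0)$, which matches $C_{k+1}(0;\mu)=C_k(0;\mu)=1$.

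I do not expect a real obstacle. The one thing to state explicitly is the operator ordering: $\frac{n}{\mu}$ multiplies after the shift, not before. With the opposite order one would get $\frac{n-1}{\mu}$, and the recursion would fail. Optionally, one can cross-check the $k=1$ case against the ${}_2F_0$ series: $C_1(n;\mu)=1-n/\mu=A1$.
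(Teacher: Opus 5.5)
Your proof is correct, but it takes a different route from the paper's.

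The paper computes the exponential generating function of the right-hand side directly. It writes $e^{tA}1=\sum_k \frac{t^k}{k!}A^k1$ and splits $e^{tA}=e^{t}e^{-(t/\mu)nT_-}$, which is allowed because the identity commutes with $nT_-$. It then evaluates $(nT_-)^r1=n^{\underline r}$ and resums with the binomial theorem, obtaining $e^t(1-t/\mu)^n$ in closed form.

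You instead differentiate the known generating function $G(t,n)=e^t(1-t/\mu)^n$ in $t$, read off the degree recursion $C_{k+1}=AC_k$, and induct from $C_0=1$. In effect you check that $G$ solves $\partial_t G=AG$ with $G(0,\cdot)=1$. Coefficientwise, the unique solution of this equation is $e^{tA}1$, so your argument verifies that equation while the paper solves it explicitly.

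What each approach buys:
\begin{itemize}
\item \emph{Your route} never manipulates the operator exponential and does not need the falling-factorial identity. It also isolates the one-step ladder relation $C_{k+1}(n)=C_k(n)-\frac{n}{\mu}C_k(n-1)$, which is the structure exploited in Theorem~\ref{thm:charlier-disp}.
\item \emph{The paper's route} gives the closed form of $e^{tA}1$ in a single computation with no induction. It also shows where each factor comes from: $e^t$ from the identity part of $A$, and $(1-t/\mu)^n$ from the shift part.
\end{itemize}

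Your remarks on operator ordering (multiply by $n/\mu$ after the shift) and on the boundary $n=0$ are well placed. They agree with the paper's convention $(nT_-)^r1=n^{\underline r}$.
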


\begin{proof}

Let $A:=1-\frac n\mu T_-$ so that the claim is equivalent to the statement $C_k(n;\mu)=A^k1$. Packaging the claim for all $k$ into a single exponential generating function, 
\all{
e^{tA}1 = \sum_{k\ge0}\frac{t^k}{k!}A^k1 = \sum_{k\ge0}C_k(n;\mu)\frac{t^k}{k!}
}{opgen1}
It therefore suffices to show that $e^{tA}1$ agrees with the generating function for the Charlier polynomials. 
Since $A=I-\frac n\mu T_-$, the exponential $e^{tA}
=e^t e^{-(t/\mu)nT_-}$, evaluating the powers of the $nT_-$ operator on the constant function $ 
(nT_-)^r1
=
n(n-1)\cdots(n-r+1)
=
n^{\underline r}$ where $n^{\underline r}$ denotes the falling factorial. The exponential $e^{tA}$ can therefore be expanded as, 
\[
e^{-(t/\mu)nT_-}1
=
\sum_{r\ge0}
\frac{(-t/\mu)^r}{r!}
n^{\underline r}.
\]
Using the binomial expansion $(1-x)^n
=
\sum_{r\ge0}
\binom{n}{r}(-x)^r
=
\sum_{r\ge0}
\frac{n^{\underline r}}{r!}(-x)^r$ with $x=t/\mu$, we conclude that
\[
e^{-(t/\mu)nT_-}1
=
\left(1-\frac t\mu\right)^n.
\]
Hence $e^{tA}1
=
e^t\left(1-\frac t\mu\right)^n$, establishing the equality in equation \eqref{opgen1}.

\end{proof}
Propositions~\ref{thm:bd} and~\ref{thm:cop} give the two important classical properties of the Charlier family. It is the ladder formulation in Proposition~\ref{thm:cop} that is used to prove the main result on the correctness of the quantum Charlier transform. The proof shows that, when intertwined by the displacement operator, the creation operator on Fock space transforms into the ladder operator for the Charlier polynomials on the discrete lattice of Fock-space eigenstates.

\begin{restatable}{theorem}{charlierDisp}\label{thm:charlier-disp}
Let $D(\alpha)=\exp(\alpha a^\dagger-\bar\alpha a)$ and $\mu=|\alpha|^2$. Then $\langle n|D(\alpha)|k\rangle=e^{-\mu/2} \frac{\alpha^{n}} {\sqrt{n!}}\, \frac{(-\bar{\alpha})^k}{\sqrt{k!}}\,C_k(n;\mu)$, where $C_k(n;\mu)$ is the $k$-th Charlier polynomial.
\end{restatable}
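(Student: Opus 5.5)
The plan is to reduce the matrix element to the ladder description of Proposition~\ref{thm:cop}. I will use the intertwining identity for the displacement operator to move $D(\alpha)$ past the creation operators in $\ket{k}$, and then identify the action of $a^\dagger$ on the Fock expansion of a coherent state with the operator $nT_-$ on the lattice.

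\textbf{Step 1 (intertwining).} Write $\ket{k}=\frac{(a^\dagger)^k}{\sqrt{k!}}\ket{0}$. The standard BCH identity $D(\alpha)\,a^\dagger\,D(\alpha)^\dagger=a^\dagger-\bar\alpha$ follows from $[\alpha a^\dagger-\bar\alpha a,\,a^\dagger]=-\bar\alpha$, a central element, so the series terminates. It gives
\begin{align*}
D(\alpha)\ket{k}=\frac{(a^\dagger-\bar\alpha)^k}{\sqrt{k!}}\,D(\alpha)\ket{0}=\frac{(a^\dagger-\bar\alpha)^k}{\sqrt{k!}}\ket{\alpha}.
\end{align*}
Here $\ket{\alpha}=\sum_n c_n\ket{n}$ with $c_n=e^{-\mu/2}\alpha^n/\sqrt{n!}$, as in the Heisenberg--Weyl row of \tabref{tab:coherent-measures}.

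\textbf{Step 2 (dictionary to lattice functions).} Encode states of the form $\sum_n c_n f(n)\ket{n}$ by the function $f:\Z_{\ge0}\to\C$. Then $a^\dagger\sum_n c_nf(n)\ket n=\sum_n \sqrt{n}\,c_{n-1}f(n-1)\ket n$. Since $c_{n-1}/c_n=\sqrt n/\alpha$, this equals $\sum_n c_n\,\frac{n}{\alpha}f(n-1)\ket n$, with the $n=0$ term vanishing, which matches the convention $f(-1)=0$. So $a^\dagger$ acts on $f$ as $\frac{n}{\alpha}T_-$. Consequently $a^\dagger-\bar\alpha$ acts as
\begin{align*}
-\bar\alpha\Bigl(1-\frac{n}{\alpha\bar\alpha}T_-\Bigr)=-\bar\alpha\Bigl(1-\frac{n}{\mu}T_-\Bigr)=-\bar\alpha A,
\end{align*}
where $A$ is the operator from the proof of Proposition~\ref{thm:cop}.

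\textbf{Step 3 (conclude).} Starting from $f=1$, which corresponds to $\ket\alpha$, the state $D(\alpha)\ket k$ corresponds to the function $\frac{(-\bar\alpha)^k}{\sqrt{k!}}A^k1$. By Proposition~\ref{thm:cop} this equals $\frac{(-\bar\alpha)^k}{\sqrt{k!}}C_k(n;\mu)$. Reading off the coefficient of $\ket n$ gives
\begin{align*}
\bra{n}D(\alpha)\ket{k}=e^{-\mu/2}\frac{\alpha^n}{\sqrt{n!}}\frac{(-\bar\alpha)^k}{\sqrt{k!}}C_k(n;\mu).
\end{align*}

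The main point requiring care is Step 2. The sign and conjugation conventions must come out exactly as $1-\frac{n}{\mu}T_-$, which needs $\alpha\bar\alpha=\mu$ and depends on the convention $D a^\dagger D^\dagger=a^\dagger-\bar\alpha$ rather than $a^\dagger+\bar\alpha$. There is also a domain issue: $a^\dagger$ is unbounded and acts on the infinite expansion of $\ket\alpha$. This is harmless because each matrix element involves only finitely many terms; alternatively, one can verify everything at the level of generating functions, since coherent states are analytic vectors for $a^\dagger$. As a cross-check I would test the case $k=1$ directly: $\bra n D\ket 1=c_n(n/\alpha-\bar\alpha)$, which agrees with $-\bar\alpha\,c_n\,C_1(n;\mu)$ because $C_1(n;\mu)=1-n/\mu$.
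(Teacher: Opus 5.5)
Your proposal is correct and follows the paper's proof essentially step for step. The paper uses the same intertwining identity, then conjugates $a^\dagger-\bar\alpha$ by the coherent-state envelope into $-\bar\alpha\bigl(1-\frac{n}{\mu}T_-\bigr)$; your dictionary $f\mapsto\sum_n c_n f(n)\ket{n}$ is exactly the paper's diagonal similarity $W_\alpha$. It then concludes with Proposition~\ref{thm:cop}, as you do.
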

\begin{proof}
The proof is sketched here with the full proof included in the Appendix~\ref{App:B} for completeness.

Write $\ket k=(a^\dagger)^k\ket 0/\sqrt{k!}$ and use the intertwining identity
$D(\alpha)a^\dagger D(\alpha)^\dagger=a^\dagger-\bar\alpha$ to reduce the kernel to
$(a^\dagger-\bar \alpha)$ acting on the coherent state $D(\alpha)\ket0$. On
discrete Fock space the creation operator acts as $\sqrt n\,T_-$; a diagonal
similarity $W_\alpha$ then conjugates $a^\dagger-\bar\alpha$ into
$\bar\alpha\big(\tfrac n\mu T_- - I\big)$, which is exactly the Charlier ladder of
Proposition~\ref{thm:cop}. Applying that proposition gives
$\langle n|D(\alpha)|k\rangle
= e^{-\mu/2}\tfrac{\alpha^{n}}{\sqrt{n!}}\tfrac{(-\bar\alpha)^{k}}{\sqrt{k!}}
C_k(n;\mu)$.
\end{proof}

\noindent We next describe the quantum Charlier transform and its algorithm explicitly, then carry out the error analysis for its grid-based implementation.
For $\mu>0$, set
$\alpha=\sqrt{\mu}$ and let
\begin{align}
    \Pi_\pi=e^{i\pi a^\dagger a},\qquad
    U_{\mathrm{Ch}}(\mu)=D(\sqrt{\mu})\Pi_\pi .
\end{align}
Then
\begin{align}
    \langle n|U_{\mathrm{Ch}}(\mu)|k\rangle
    =e^{-\mu/2}\frac{\mu^{(n+k)/2}}{\sqrt{n!k!}}C_k(n;\mu)
    =:\varphi_k(n;\mu).
\end{align}
With $\alpha=\sqrt\mu$ real we have $(-\bar\alpha)^k=(-1)^k\mu^{k/2}$, and the
factor $\Pi_\pi:\ket k\mapsto(-1)^k\ket k$ cancels the $(-1)^k$, leaving the real-valued kernel
$\varphi_k(n;\mu)=\sqrt{w_\mu(n)}\,\widehat C_k(n;\mu)$, where
$\widehat C_k=\mu^{k/2}C_k/\sqrt{k!}$ is the Charlier polynomial normalized to be orthogonal with respect to the Poisson weight.
This is exactly the Charlier instance of the template
\eqref{eq:transform-principle}. The exact Charlier state of degree $k$ is
\begin{align}
    |\mathsf C_k(\mu)\rangle
    :=U_{\mathrm{Ch}}(\mu)|k\rangle
    =\sum_{n\geq 0}\varphi_k(n;\mu)|n\rangle .
\end{align}

\begin{algorithm}[H]
\caption{Quantum Charlier Transform}
\label{alg:charlier_transform}
\textbf{Input:} A Charlier parameter $\mu>0$ and amplitudes $\{a_k\}_{k=0}^{N-1}$ encoding Charlier coefficients in the state
$\ket{\phi}_{\mathrm{Ch}}=\sum_{k=0}^{N-1}a_k\ket{k}$.

\textbf{Output:} The Charlier polynomial state
$\sum_{k=0}^{N-1}a_k\ket{\mathsf C_k(\mu)}=U_{\mathrm{Ch}}(\mu)\ket{\phi}_{\mathrm{Ch}}$, up to the chosen digital approximation error.

\begin{enumerate}
    \item Apply $\Pi_\pi=e^{i\pi a^\dagger a}$, mapping
    $\sum_{k=0}^{N-1}a_k\ket{k}$ to
    $\sum_{k=0}^{N-1}(-1)^k a_k\ket{k}$.
    \item Apply $\mathsf{QHT}$ to move from the number basis to the discretized oscillator grid basis.
    \item Implement $D(\sqrt{\mu})=\exp(-i\sqrt{\mu}\,\hat p)$ on the grid register, splitting the evolution into bounded-time pieces as in the proof of the Charlier transform theorem.
    \item Apply $\mathsf{QHT}^{\dagger}$ on the enlarged cutoff guaranteed by Theorem~\ref{thm:digital_implementation_gaussian_optical_gates}.
\end{enumerate}
\textbf{Return:} $\sum_{k=0}^{N-1}a_k\ket{\mathsf C_k(\mu)}$.
\end{algorithm}

\begin{theorem}[Quantum Charlier transform]
Fix $\mu>0$, target error $\epsilon<1$. Let
\begin{align}
    \ket{\psi}=\sum_{m\in\{0,\ldots,N-1\}}a_{m}
    \ket{m},
    \qquad \sum_{m}|a_{m}|^2=1,
\end{align}
be an arbitrary state in the effective qubit space $\mathcal{H}_{N}$.  Assume $N=\Omega(\log(1/\epsilon))$, set
$r_\mu=\lceil4\sqrt{2}\,e\sqrt\mu\rceil$, and take
$L=\Theta\!\left((5^{r_\mu}N)^{9/4}\epsilon^{-13/4}\right)$. Then there is a digital quantum circuit $\overline{U}$ on $\mathcal{O}(\log(L))$ qubits with gate complexity
\begin{align*}
    \mathcal O\!\left(\bigl(\log(5^{r_\mu}N)+\log(1/\epsilon)\bigr)^3
        \log(1/\epsilon)+r_\mu\log^2L
    \right)
\end{align*}
that satisfies
\begin{align*}
    \lVert (U_{\mathrm{Ch}}(\mu) - \overline{U})P_{N}\rVert < \epsilon,
\end{align*}
where $P_N$ is the orthogonal projection of  $\mathcal{H}_L$ onto  $\mathcal{H}_N$. 
 \end{theorem}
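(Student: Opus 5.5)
The plan is to reduce the statement to Theorem~\ref{thm:digital_implementation_gaussian_optical_gates} with $K=1$. Theorem~\ref{thm:charlier-disp} gives the exact identity $U_{\mathrm{Ch}}(\mu)\ket{k}=\ket{\mathsf C_k(\mu)}$, so correctness of the target unitary is already settled. What remains is to exhibit $U_{\mathrm{Ch}}(\mu)=D(\sqrt\mu)\Pi_\pi$ as a product of elementary Gaussian factors with bounded amplification factor, and then read off the parameters.

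\textbf{The parity factor.} $\Pi_\pi=e^{i\pi a^\dagger a}$ is diagonal in the number basis. It can be applied exactly on the $\mathcal O(\log N)$-qubit encoding of $\mathcal H_N$ as a single $Z$ gate on the least significant bit, at negligible cost. It preserves $\mathcal H_N$, so it introduces no error and no leakage. (Alternatively it is the phase rotation $R(\pi)$, but the exact diagonal implementation is simpler.)

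\textbf{The displacement and the choice of $r_\mu$.} With $\alpha=\sqrt\mu$ real, $D(\sqrt\mu)=\exp(-i\sqrt{2\mu}\,\hat p)$ up to the paper's quadrature conventions, which is a pure momentum evolution. I would write it as $r_\mu$ identical factors $\exp(-i(\sqrt{2\mu}/r_\mu)\hat p)$, each with evolution parameter bounded by a fixed constant. This is the splitting into bounded-time pieces used in Section~\ref{subsec:phasespace}. The amplification factor is then $M=r_\mu$ with $R=1$, so $RM=r_\mu$.

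The specific value $r_\mu=\lceil 4\sqrt2\,e\sqrt\mu\rceil$ should come from the per-step leakage bound that underlies the factor $5^{RM}$. A single bounded-time piece must map energy-$N$ states into energy at most about $5N$, up to a tail that is exponentially small. For displacement, the tail comes from the Poisson/Stirling bound $|\alpha|^{n}/\sqrt{n!}\le (e|\alpha|^2/n)^{n/2}$. Requiring the per-step displacement $\sqrt{2\mu}/r_\mu$ to be at most $1/(4e)$ is what makes this ratio uniformly small, and it is where the constant $4\sqrt2 e$ arises.

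\textbf{Assembly.} With $RM=r_\mu$ and $K=1$, Theorem~\ref{thm:digital_implementation_gaussian_optical_gates} gives the grid size $L=\Theta((5^{r_\mu}N)^{9/4}\epsilon^{-13/4})$, with $\log K$ vanishing. It uses $\mathcal O(\log L)$ qubits and a gate count of $\mathcal O((\log(5^{r_\mu}N)+\log(1/\epsilon))^3\log(1/\epsilon)+r_\mu\log^2L)$, which is exactly the claim. The QHT factors account for the first term. Each of the $r_\mu$ momentum steps is implemented as $F_L^\dagger e^{-ic\bar x}F_L$, a centered QFT plus a diagonal phase, costing $\mathcal O(\log^2 L)$. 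The hypothesis $N=\Omega(\log 1/\epsilon)$ is inherited directly. The operator-norm bound $\lVert(U_{\mathrm{Ch}}(\mu)-\overline U)P_N\rVert<\epsilon$ then follows by composing the exact parity gate with the theorem's guarantee for $D(\sqrt\mu)$. Since $\Pi_\pi$ maps $\mathcal H_N$ to itself and is unitary, the error does not increase.

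\textbf{Main obstacle.} I expect the hardest step to be checking that the displacement is admissible as an elementary Gaussian factor in the theorem's leakage analysis. The analysis in \cite{iyer2026efficientquantumcircuitshighdimensional} is phrased for symplectic (homogeneous quadratic) gates, whereas a displacement is affine. I would first verify the energy-growth step directly: bound $\lVert P_{>5N}D(\beta)P_N\rVert$ for $|\beta|\le 1/(4e)$ using the Charlier kernel of Theorem~\ref{thm:charlier-disp} together with Stirling's estimate. This would justify $5^{r_\mu}$ as the final cutoff growth, after which the remaining bookkeeping is routine.
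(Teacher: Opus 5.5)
Your route is the paper's route. You apply $\Pi_\pi$ exactly as a diagonal phase, cut the displacement into $r_\mu$ bounded momentum steps, and invoke Theorem~\ref{thm:digital_implementation_gaussian_optical_gates} with $K=1$ and $RM=r_\mu$. Your bookkeeping for $L$, the qubit count and the gate count matches.

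The one step that does not go through as written is the justification of $r_\mu$. In the paper, $4\sqrt2\,e$ is simply $1/\delta_*$, where $\delta_*=1/(4\sqrt2 e)$ is the step threshold $|\vartheta|\le\delta_*$ of Lemma~\ref{lem:basic_gaussian_operation_error}. That threshold comes from the bilinear Taylor estimate $(4e|\vartheta|)^k\le 2^{-k/2}$, not from a Poisson tail. The paper also writes $D(\sqrt\mu)=e^{-i\sqrt\mu\,\hat p}$, so that $\sqrt\mu/r_\mu\le\delta_*$ holds exactly. Your normalization $e^{-i\sqrt{2\mu}\,\hat p}$ is what $a=(\hat x+i\hat p)/\sqrt2$ actually gives. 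Under it, each step has $|\vartheta|=\sqrt{2\mu}/r_\mu$, which can be as large as $1/(4e)=\sqrt2\,\delta_*$. That violates the hypothesis of Lemma~\ref{lem:basic_gaussian_operation_error}. So the theorem cannot be quoted as a black box with amplification factor $r_\mu$, and your claim that the bound $1/(4e)$ is ``where $4\sqrt2 e$ arises'' is unsupported.

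The repair is easy, in either of two ways. First, since $\mu$ is fixed, you can use $\lceil\sqrt{2\mu}/\delta_*\rceil=\lceil 8e\sqrt\mu\rceil$ steps. This changes $5^{r_\mu}$ and $r_\mu$ only by $\mu$-dependent constant factors. Second, to keep $r_\mu$ exactly, note that for a linear generator the Taylor-tail argument of Lemma~\ref{lem:basic_gaussian_operation_error} tolerates any bounded step. Indeed $\hat p^k$ raises the Fock level by at most $k$, and for $k>4N$ and $m\le N$ one gets $\|(\vartheta\hat p)^k\ket{m}\|/k!\le\bigl(e|\vartheta|\sqrt{5/(2k)}\bigr)^k$. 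This is superexponentially small already for $|\vartheta|\le 1/(4e)$.

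The obstacle you singled out, that a displacement is affine rather than symplectic, is not actually an issue. Lemma~\ref{lem:basic_gaussian_operation_error} already lists $\hat x_j,\hat p_j$ among the admissible generators. If you verify the displacement step by hand, you must supply both of that lemma's estimates: the Fock tail \eqref{eq:elementary-fock-tail} and the grid bridge \eqref{eq:elementary-grid-bridge}. Lemma~\ref{lem:unitary_error} uses both, so a bound on $\lVert P_{>5N}D(\beta)P_N\rVert$ alone does not suffice.
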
 
 \begin{proof} 
 The phase-space form of the displacement operator, for real
$\alpha=\sqrt{\mu}$ we have
\begin{align}
    D(\sqrt{\mu})=\exp(-i\sqrt{\mu}\,\hat p).
\end{align}

The factor $\Pi_\pi=e^{i\pi a^\dagger a}$ is diagonal in the Fock basis (and hence in the qubit encoding) and can
be applied exactly as the phase $|k\rangle\mapsto (-1)^k|k\rangle$. It does not
increase the Fock cutoff and does not contribute to the QHT discretization
error. Thus the only Gaussian evolution that has to be implemented through the
digital optical-gate theorem is the displacement. Splitting it into
$r_\mu=\lceil \sqrt{\mu}/\delta_*\rceil$ pieces gives
\begin{align}
    D(\sqrt{\mu})
    =\prod_{b=1}^{r_\mu}\exp\left(-i\frac{\sqrt{\mu}}{r_\mu}\hat p\right),
    \qquad
    \left|\frac{\sqrt{\mu}}{r_\mu}\right|\leq \frac{1}{4\sqrt{2}\,e}.
\end{align}
We can choose
\begin{align}
    r_\mu:=\left\lceil \frac{\sqrt{\mu}}{\delta_*}\right\rceil
    =\left\lceil 4\sqrt{2}\,e\sqrt{\mu}\right\rceil = \mathcal{O}(\sqrt{\mu}) .
\end{align}
Thus the above, combined with the decomposition of the displacement gate from Section~\ref{subsec:phasespace}, implies that the amplification factor is $\mathcal{O}(\sqrt{\mu})$. Applying Theorem~\ref{thm:digital_implementation_gaussian_optical_gates} implies the result.
\end{proof}

\subsubsection{Meixner transform}
The Meixner transform is the $SU(1,1)$ instance of the coherent state based QOPT framework. We will be working on the two mode Schwinger representation with 
$(K_{+}, K_{-}, K_{0}) = (a_{1}^{\dagger} a_{2}^{\dagger}, a_{1} a_{2}, \frac{1}{2}( n_{1} + n_{2} + 1))$ acting on fixed number difference sector as introduced in the preliminaries. The displacement 
operator $S(z)$ of equation \eqref{eq:generalized-displacements} acting on the two mode vacuum $\ket{0,0}$ generates the two mode squeezed vacuum state corresponding to the $SU(1,1)$ row in Table \ref{tab:coherent-measures}. The squeeze operator $S(z)$ preserves the photon number difference $q$ and hence
acts separately within each fixed-$q$ sector; we write $S(z,k)$ for its
restriction to the sector of number difference $q$, where $k=(|q|+1)/2$ is the
corresponding Bargmann index, so that $S(z)=\bigoplus_{q\in\Z}S(z,k(q))$.

Measuring the squeezed vacuum states samples from the negative binomial distribution $\text{Neg-Bin}(|q|+1, |z|^{2})$. As $|q|+1 \in \Z_{+}$, the Meixner transform is realized for the integer $\beta$ family of Meixner polynomials. We define below the Meixner polynomial family in terms of their hypergeometric representation and generating function, 

\begin{definition} \label{def:meixner}
For $\beta>0$, $0<c<1$, and $k,n\in\Z_{\geq 0}$, the Meixner polynomial $M_k(n;\beta,c)$ is
\begin{align}
    M_k(n;\beta,c)=\sum_{s=0}^{\min(k,n)}\frac{k(k-1)\cdots(k-s+1)n(n-1)\cdots(n-s+1)}{\beta(\beta+1)\cdots(\beta+s-1)s!}\left(1-\frac{1}{c}\right)^s
    ={}_2F_1\!\left(-k,-n;\,\beta;\,1-\frac{1}{c}\right),
\end{align}
with the $s=0$ term equal to $1$. These polynomials are characterized by the generating function
$$
\sum_{k\ge0}\frac{(\beta)_k}{k!}M_k(n;\beta,c)t^k
=
(1-t)^{-\beta-n}
\left(1-\frac{t}{c}\right)^n.
$$
\end{definition}
The quantum Meixner transform rests on the same principle as the Charlier
transform, with one structural difference. For Charlier, the adjoint action
$D(\alpha)a^\dagger D(\alpha)^\dagger=a^\dagger-\bar\alpha$ is affine, which is
what reduces the kernel to the first-order ladder of
Proposition~\ref{thm:cop}. The corresponding $SU(1,1)$ action
$S(z,k)K_+S(z,k)^\dagger$ mixes all three generators $K_+,K_-,K_0$, so no such
reduction is available and the transition kernel
$\bra{n,k}S(z,k)\ket{W,k}$ needs to be computed directly in terms of Meixner polynomials. The computation uses the
normal-ordered form of the squeeze operator,
\begin{align}
    \label{eq:disentangle}
    S(z,k)=e^{zK_+}\,(1-|z|^2)^{K_0}\,e^{-\bar z K_-},
\end{align}
which sends $\ket{W,k}$ down to an intermediate level $\ket{m,k}$ and back up to
$\ket{n,k}$. The kernel is therefore a single sum over $m$, running to
$\min(W,n)$ since the lowering factor cannot pass the ground state and the
raising factor must supply the remaining $n-m$ quanta. The square roots from
$K_-\ket{m,k}=\sqrt{m(m+2k-1)}\ket{m-1,k}$ and its adjoint assemble into the
binomial prefactors and the ratio $(-W)_m(-n)_m/(2k)_m\,m!$, yielding the
${}_2F_1$ in Definition~\ref{def:meixner} of the Meixner polynomials with $\beta=2k$ and $c=|z|^2$. 

\begin{restatable}{theorem}{meixnerSqueeze}
Let $ |z| < 1$, then
the matrix elements of $S(z,k)$ in the $SU(1,1)$ number basis are given by
$$
\bra{M,k}S(z,k)\ket{W,k} = (-1)^W z^{*W} z^M (1-|z|^2)^k
\sqrt{\binom{W+2k-1}{W}\binom{M+2k-1}{M}}\, M_W(M;\,2k,\,|z|^2),
$$
where $M_W(M; \beta, c) = {}_2F_1(-W,-M;\,\beta;\,1-1/c)$ is the $W$-th
Meixner polynomial with $\beta=2k$ and $c=|z|^2$, orthogonal with respect to
the negative binomial distribution $\text{Neg-Bin}(2k, |z|^{2})$.
\end{restatable}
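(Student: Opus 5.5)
The plan is to compute the kernel directly from the normal-ordered form \eqref{eq:disentangle}, $S(z,k)=e^{zK_+}(1-|z|^2)^{K_0}e^{-\bar zK_-}$, and then match the result with the Meixner series of Definition~\ref{def:meixner}. First I will fix the conventions. With $z=\tanh\theta\,e^{i\varphi}$, the disentangling identity follows from the $SU(1,1)$ commutators $[K_0,K_\pm]=\pm K_\pm$, $[K_+,K_-]=-2K_0$. The quickest check is the faithful $2\times2$ non-unitary representation $K_+=\begin{pmatrix}0&1\\0&0\end{pmatrix}$, $K_-=\begin{pmatrix}0&0\\-1&0\end{pmatrix}$, $K_0=\tfrac12\mathrm{diag}(1,-1)$. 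I will also record the ladder actions
\begin{align*}
K_-^s\ket{W,k}=\sqrt{\tfrac{W!}{(W-s)!}\tfrac{\Gamma(W+2k)}{\Gamma(W-s+2k)}}\ket{W-s,k},\qquad K_0\ket{m,k}=(m+k)\ket{m,k},
\end{align*}
together with the adjoint formula for $K_+^r$. These follow from $K_-\ket{m,k}=\sqrt{m(m+2k-1)}\ket{m-1,k}$.

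Next I will expand both exponentials and insert these actions. Only the terms with intermediate level $m=W-s$ and $r=M-m$ survive, so $0\le m\le\min(W,M)$. This produces a single finite sum
\begin{align*}
\bra{M,k}S(z,k)\ket{W,k}=\sqrt{W!\,M!\,\Gamma(W+2k)\Gamma(M+2k)}\sum_{m}\frac{(-\bar z)^{W-m}z^{M-m}(1-|z|^2)^{m+k}}{(W-m)!\,(M-m)!\,\Gamma(m+2k)}.
\end{align*}
I will then pull out $\bar z^{W}z^{M}(1-|z|^2)^k$ and rewrite $1/(W-m)!=(-1)^m(-W)_m/W!$, and similarly for $M$. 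I will also use $\Gamma(m+2k)=\Gamma(2k)(2k)_m$. The binomial prefactors $\sqrt{\binom{W+2k-1}{W}\binom{M+2k-1}{M}}$ and the overall sign $(-1)^W$ then come out directly.

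The main obstacle is the last step: identifying the remaining sum with ${}_2F_1(-W,-M;2k;1-1/c)$ at $c=|z|^2$. Done naively, the reindexed sum is in powers of $(1-c)/c$ with Pochhammer weights, and it need not literally match the ${}_2F_1$ termwise. I expect the bookkeeping of the $1/m!$ factor and of the argument to require a terminating hypergeometric transformation; the first thing I would try is a Pfaff-type reversal $m\mapsto\min(W,M)-m$ of the finite sum. To avoid relying on a delicate termwise match, I will instead verify the identity through generating functions. Multiply by $\frac{(2k)_W}{W!}t^W$ times the square-root normalisations and sum over $W$; on the state side this amounts to applying $S(z,k)$ to $e^{sK_+}\ket{0,k}$. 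Using \eqref{eq:disentangle}, this is again a coherent state $\propto e^{s'K_+}\ket{0,k}$ with $s'$ a Möbius function of $s$. Reading off its $\ket{M,k}$ coefficient gives a closed form $(1-t)^{-2k-M}(1-t/c)^M$ after an appropriate rescaling of $t$. This is exactly the Meixner generating function of Definition~\ref{def:meixner}, so equality of generating functions fixes both the polynomial and the phase $(-1)^W\bar z^Wz^M$. Finally, orthogonality with respect to $\mathrm{Neg\text{-}Bin}(2k,|z|^2)$ follows from unitarity of $S(z,k)$ together with the $SU(1,1)$ row of Table~\ref{tab:coherent-measures}.
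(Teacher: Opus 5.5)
Your generating-function finish is a valid proof and a genuinely different route from the paper's, but the obstacle that sent you there does not exist. It comes from an error in your displayed sum.

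\textbf{The error in the direct computation.} Each ladder action contributes a factor $1/\sqrt{m!}$: one from $K_-^{W-m}\ket{W,k}$ and one from $\bra{M,k}K_+^{M-m}$. So the denominator must be $(W-m)!\,(M-m)!\,m!\,\Gamma(m+2k)$. As you wrote it, the $m=W=M=2$ term comes out as $2(1-|z|^2)^{k+2}$. It should be $(1-|z|^2)^{k+2}$, since it involves no ladder operators at all.

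Restore the $1/m!$ and track two things:
\begin{itemize}
\item the sign $(-\bar z)^{-m}z^{-m}=(-1)^m|z|^{-2m}$;
\item the rewriting $\frac{1}{(W-m)!(M-m)!}=\frac{(-W)_m(-M)_m}{W!\,M!}$.
\end{itemize}
The sum then becomes, term by term,
\[
\sum_m\frac{(-W)_m(-M)_m}{m!\,(2k)_m}\Bigl(1-\frac1c\Bigr)^m={}_2F_1(-W,-M;2k;1-1/c).
\]
This is exactly how the paper concludes. No Pfaff-type reversal or other transformation is needed, and the argument is $-(1-c)/c=1-1/c$, not $(1-c)/c$.

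\textbf{Your generating-function route.} What it needs concretely is the composition law
\[
S(z,k)e^{sK_+}\ket{0,k}=(1-|z|^2)^k(1+\bar zs)^{-2k}\exp\Bigl(\tfrac{s+z}{1+\bar zs}K_+\Bigr)\ket{0,k}
\]
for small $|s|$. This uses one reordering identity beyond \eqref{eq:disentangle}, namely
\[
e^{-\bar zK_-}e^{sK_+}=e^{\frac{s}{1+\bar zs}K_+}(1+\bar zs)^{-2K_0}e^{-\frac{\bar z}{1+\bar zs}K_-}.
\]
Your faithful $2\times2$ check verifies this identity just as well; say so explicitly.

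With $t=-\bar zs$, the $\ket{M,k}$ coefficient is
\[
z^M(1-|z|^2)^k\sqrt{\tbinom{M+2k-1}{M}}\,(1-t)^{-2k-M}(1-t/c)^M.
\]
Comparing coefficients of $t^W$ with the generating function in Definition~\ref{def:meixner} gives the stated kernel, including the phase $(-1)^W\bar z^Wz^M$, for $z\neq0$.

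\textbf{Comparison.} The paper's route is shorter and uses only \eqref{eq:disentangle} and the ${}_2F_1$ series. Yours replaces the Pochhammer bookkeeping with the M\"obius action of $SU(1,1)$ on Perelomov coherent states, which fixes phase and normalization in one stroke. The cost is that it relies on the generating-function characterization of $M_W$ and the extra disentangling identity.

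Your unitarity argument for orthogonality with respect to $\mathrm{Neg\text{-}Bin}(2k,|z|^2)$ is correct, and the paper leaves that point implicit.
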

The proof of the Theorem is sketched above and given in Appendix \ref{App:D}. 
We now move over to the implementation of the Meixner transform. 
Fix a number-difference, equivalently angular-momentum, sector $q\in\mathbb Z$
and write
\begin{align}
    \kappa=\frac{|q|+1}{2}.
\end{align}
Let $\{|n,\kappa\rangle:n\geq0\}$ denote the corresponding $SU(1,1)$/radial oscillator number
basis (Section \ref{sec:radial_osc_not}).
For $0<|z|<1$, let
\begin{align}
    \omega_\ell(n;\kappa,z)
    :=(-1)^\ell z^{*\ell} z^n(1-|z|^2)^\kappa
    \sqrt{\binom{\ell+2\kappa-1}{\ell}\binom{n+2\kappa-1}{n}}
    M_\ell(n;2\kappa,|z|^2).
\end{align}
The exact Meixner state of degree $\ell$ in the sector $q$ is
\begin{align}
    |\mathsf M_\ell^{(q)}(z)\rangle
    :=S(z,\kappa)|\ell,\kappa\rangle
    =\sum_{n\geq0}\omega_\ell(n;\kappa,z)|n,\kappa\rangle.
\end{align}

The following optical basis change converts between the Cartesian Fock basis and
the circular, hence radial, basis.

\begin{restatable}[Cartesian to Radial Mode Basis Change]{Lemma}{cardToRad} \label{CBS}
Let \begin{align}
\mathsf{U}_{\textup{C2R}}:=R_1(-\pi/4)R_2(-\pi/4)\textup{BS}_{Y}(-\pi/4)\textup{BS}_{X}(\pi/4),
\end{align}
where $R_j$ denotes a phase/rotation gate on mode $j$. Then
\begin{align}
\mathsf{U}_{\textup{C2R}}\ket{n_1,n_2}=\ket{n_1,n_2}_{+}=\ket{\min(n_1,n_2),\,n_1-n_2}.
\end{align}
\end{restatable}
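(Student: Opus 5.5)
The plan is to work entirely in the Heisenberg picture: compute how $\mathsf{U}_{\textup{C2R}}$ conjugates the annihilation operators $a_1,a_2$, and check that the result is the pair of circular modes
\begin{align}
a_{\pm}=\frac{1}{\sqrt2}\left(a_1\mp i a_2\right),
\end{align}
up to phases. The target state $\ket{n_1,n_2}_{+}$ is the two-mode Fock state built from circular quanta,
\begin{align}
\ket{n_1,n_2}_{+}=\frac{(a_+^\dagger)^{n_1}(a_-^\dagger)^{n_2}}{\sqrt{n_1!n_2!}}\ket{0,0}.
\end{align}
It is labelled by the radial quantum number $\min(n_1,n_2)$ and the angular momentum $L_z=a_+^\dagger a_+-a_-^\dagger a_-=n_1-n_2$, matching the radial/$SU(1,1)$ labels $\ket{n,k}$ with $q=n_1-n_2$ from the preliminaries. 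Every factor of $\mathsf{U}_{\textup{C2R}}$ is a passive (number-preserving) Gaussian unitary and fixes $\ket{0,0}$. Hence
\begin{align}
\mathsf{U}_{\textup{C2R}}\ket{n_1,n_2}=\frac{\bigl(\mathsf{U} a_1^\dagger \mathsf{U}^\dagger\bigr)^{n_1}\bigl(\mathsf{U} a_2^\dagger \mathsf{U}^\dagger\bigr)^{n_2}}{\sqrt{n_1!n_2!}}\ket{0,0},
\end{align}
so it suffices to show $\mathsf{U}a_1^\dagger\mathsf{U}^\dagger=a_+^\dagger$ and $\mathsf{U}a_2^\dagger\mathsf{U}^\dagger=a_-^\dagger$, possibly up to global phases that can be absorbed.

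The key steps, in order, are as follows.

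\textbf{Step 1: single-gate conjugations.} Compute each factor's action on the mode vector $(a_1,a_2)^T$ as a $2\times2$ unitary $u$. For $\textup{BS}_X(\theta)$, the generator $i\theta(a_1a_2^\dagger+a_2a_1^\dagger)$ gives $u=\begin{pmatrix}\cos\theta & -i\sin\theta\\ -i\sin\theta&\cos\theta\end{pmatrix}$, up to the sign of $\theta$ fixed by convention. For $\textup{BS}_Y(\theta)$ the matrix is the real rotation $\begin{pmatrix}\cos\theta&-\sin\theta\\ \sin\theta&\cos\theta\end{pmatrix}$. For $R_j(\phi)$ the effect is $a_j\mapsto e^{\mp i\phi}a_j$. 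Each of these follows from the BCH identity $e^{X}ae^{-X}=\sum_n \mathrm{ad}_X^n(a)/n!$, using the commutators $[a_1a_2^\dagger,a_2]=-a_1$ and so on.

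\textbf{Step 2: multiply and compare.} Multiply the matrices in the order dictated by $\mathsf{U}=R_1R_2\,\textup{BS}_Y\textup{BS}_X$. The conjugation $\mathsf{U}a\mathsf{U}^\dagger$ composes the individual actions in reverse order. Then verify that the product is $\frac{1}{\sqrt2}\begin{pmatrix}1&-i\\1&i\end{pmatrix}$ times a diagonal phase matrix. With $\theta=\pi/4$, $\textup{BS}_X$ gives the matrix $\frac1{\sqrt2}\begin{pmatrix}1&-i\\-i&1\end{pmatrix}$, and $\textup{BS}_Y(-\pi/4)$ followed by the $-\pi/4$ rotations should then cancel the residual relative phase.

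\textbf{Step 3: absorb phases.} Any leftover diagonal phases contribute only a phase $e^{i(\phi_1 n_1+\phi_2 n_2)}$ on $\ket{n_1,n_2}$, which is either absorbed into the phase convention of $\ket{\cdot}_{+}$ or shown to vanish.

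\textbf{Step 4: identify the labels.} Match the circular Fock basis with the radial labels. Since $[L_z,a_\pm^\dagger]=\pm a_\pm^\dagger$ and $H=a_+^\dagger a_++a_-^\dagger a_-+1$, the state has $m=n_1-n_2$ and energy $n_1+n_2+1=2\min(n_1,n_2)+|m|+1$. This gives radial index $n=\min(n_1,n_2)$ and Bargmann index $k=(|m|+1)/2$, as stated in the preliminaries.

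The main obstacle is Step 2. The sign and ordering conventions, namely the signs in the generators of $\textup{BS}_X$ and $\textup{BS}_Y$, the sign of $R(\theta)=e^{i\theta a^\dagger a}$, and the reversal of order under conjugation, must align exactly so that the product equals the circular-mode matrix rather than a variant with swapped $\pm$ or an extra relative phase. I would first carry out the multiplication symbolically. If a relative phase $e^{i\pi/2}$ between the modes survives, I would check whether it is removed by redefining $a_-\to -ia_-$; this amounts to the same phase convention for $\ket{\cdot}_+$ and leaves $L_z$ and the labels unchanged.
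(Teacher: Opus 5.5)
Your route is the paper's: push the mode operators through $\mathsf U_{\textup{C2R}}$, use that every factor fixes $\ket{0,0}$, and read off the circular Fock state. But the lemma's entire content is the sign bookkeeping you deferred, and as written one sign is wrong.

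\textbf{The $\textup{BS}_Y$ sign.} For $\textup{BS}_Y(\theta)=\exp(\theta(a_1a_2^\dagger-a_2a_1^\dagger))$ one has $[\theta(a_1a_2^\dagger-a_2a_1^\dagger),a_1]=\theta a_2$ and $[\theta(a_1a_2^\dagger-a_2a_1^\dagger),a_2]=-\theta a_1$. Hence $a_1\mapsto\cos\theta\,a_1+\sin\theta\,a_2$ and $a_2\mapsto-\sin\theta\,a_1+\cos\theta\,a_2$. This is the transpose of the matrix you wrote; yours is $\textup{BS}_Y(-\theta)$. Likewise $R_j(\phi)a_jR_j(\phi)^\dagger=e^{-i\phi}a_j$.

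\textbf{Why Step 3 does not rescue it.} With your $\textup{BS}_Y$ matrix the product is $\mathrm{diag}(i,-i)\cdot\frac{1}{\sqrt2}\begin{pmatrix}1&-i\\1&i\end{pmatrix}$. That gives $\mathsf U_{\textup{C2R}}\ket{n_1,n_2}=(-i)^{n_1-n_2}\ket{n_1,n_2}_+$. Absorbing this phase, or redefining $a_-\to -ia_-$, proves a different statement. The state $\ket{\cdot}_+$ is pinned down by $a_\pm^\dagger=(a_1^\dagger\pm i a_2^\dagger)/\sqrt2$, so the phase is not yours to choose.

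\textbf{The correct product.} With the right signs,
\[
u_{\textup{BS}_X(\pi/4)}\,u_{\textup{BS}_Y(-\pi/4)}\,e^{i\pi/4}I=\frac12\begin{pmatrix}1&-i\\-i&1\end{pmatrix}\begin{pmatrix}1&-1\\1&1\end{pmatrix}e^{i\pi/4}=\frac{1}{\sqrt2}\begin{pmatrix}1&-i\\1&i\end{pmatrix}
\]
exactly. So $\mathsf U a_1\mathsf U^\dagger=a_+$ and $\mathsf U a_2\mathsf U^\dagger=a_-$, and Step 3 is vacuous.

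\textbf{How the paper organizes it.} The paper avoids the matrix product. Since $N_-=a_-^\dagger a_-=\tfrac12\bigl(n_1+n_2+i(a_1^\dagger a_2-a_2^\dagger a_1)\bigr)$, the last three factors combine to $R_1(-\pi/4)R_2(-\pi/4)\textup{BS}_Y(-\pi/4)=e^{-i\pi N_-/2}$. Then $\textup{BS}_X(\pi/4)$ alone sends $a_1^\dagger\mapsto a_+^\dagger$ and $a_2^\dagger\mapsto i a_-^\dagger$. The remaining gate is a pure phase rotation of the $a_-$ mode. It commutes with $a_+^\dagger$ and multiplies $a_-^\dagger$ by $-i$, which is why the phases cancel exactly rather than ``up to convention.''

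\textbf{A caveat in the paper itself.} The phase-space form $\exp(i\theta(\hat x_1\hat p_2-\hat p_1\hat x_2))$ listed for $\textup{BS}_Y$ in the preliminaries equals $\exp(-\theta(a_1a_2^\dagger-a_2a_1^\dagger))$. It therefore has your sign. The lemma holds exactly only for the oscillator-form definition, which is the one the paper's proof uses.
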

\noindent A proof is included in  Appendix \ref{App:A} for completeness.

Let $A_q$ be the unitary mapping $|n\rangle|0\rangle \mapsto |n\rangle|n + q\rangle$. We define the Meixner transform to be
\begin{align*}
    U_{\textup{Mx}} = A_q^{\dagger}\mathsf{U}_{\textup{C2R}}^{\dagger}S(z,k)\mathsf{U}_{\textup{C2R}}A_q.
\end{align*}

\begin{algorithm}[H]
\caption{Quantum Meixner Transform}
\label{alg:meixner_transform}
\textbf{Input:} Meixner parameters $(\beta,c)=(2\kappa,|z|^2)$ with
$\kappa=(|q|+1)/2$ and $0<|z|<1$, together with amplitudes $\{a_\ell\}_{\ell=0}^{N-1}$ in the state
$\ket{\phi}_{\mathrm{Mx}}=\sum_{\ell=0}^{N-1}a_\ell\ket{\ell}$.

\textbf{Output:} The Meixner polynomial state
$\sum_{\ell=0}^{N-1}a_\ell U_{\textup{Mx}}\ket{\ell}$, up to the chosen digital approximation error.

\begin{enumerate}
    \item With $q_+=\max(q,0)$ and $q_- = \max(-q,0)$, apply the reversible arithmetic map $A_q:\sum_{\ell=0}^{N-1}a_\ell\ket{\ell}\ket{0}\mapsto\sum_{\ell=0}^{N-1}a_\ell\ket{\ell+q_+}\ket{\ell+q_-}$.
    \item Apply $\mathsf{U}_{\textup{C2R}}$ from Lemma~\ref{CBS} to express the state in the fixed $SU(1,1)$ sector labelled by $\kappa=(|q|+1)/2$.
    \item Apply the sector-wise squeeze $S(z,\kappa)$ by decomposing the two-mode squeezing gate into bounded Gaussian pieces and applying Theorem~\ref{thm:digital_implementation_gaussian_optical_gates}.
    \item Apply $\mathsf{U}_{\textup{C2R}}^{\dagger}$ and then $A_q^{\dagger}$ by reversing the arithmetic embedding from the first step.
\end{enumerate}
\textbf{Return:} $\sum_{\ell=0}^{N-1}a_\ell U_{\textup{Mx}}\ket{\ell}$.
\end{algorithm}

\begin{theorem}[Quantum Meixner transform]
\label{thm:meixner_transform}
Fix $q\in\mathbb Z$, $0<|z|<1$ and target error
$\epsilon<1$. Set
$\rho_z:=\operatorname{arctanh}|z|$.

Let
\begin{align}
    \ket{\psi}=\sum_{m\in\{0,\ldots,N-1\}}a_{m}
    \ket{m},
    \qquad \sum_{m}|a_{m}|^2=1,
\end{align}
be an arbitrary state in the effective qubit space $\mathcal{H}_{N}$.
Set $N_q:=N+|q|$ and assume
$N_q=\Omega(\log(1/\epsilon))$. Let $r_{\mathrm{Mx}}$ be the total number of
bounded-time elementary Gaussian factors in the two Cartesian-to-radial basis
changes and the squeeze implementation; then
$r_{\mathrm{Mx}}=\mathcal O(1+\rho_z)$. Choose
\begin{align*}
    {
    L=\Theta\!\left((5^{r_{\mathrm{Mx}}}N_q)^{9/4}
    \epsilon^{-13/4}\right).}
\end{align*}
Then there is a digital quantum circuit $\overline{U}$ on $\mathcal{O}(\log(L))$ qubits with gate complexity
\begin{align*}
    {
    \mathcal O\!\left(
    \begin{aligned}
        &\bigl(\log(5^{r_{\mathrm{Mx}}}N_q)+\log(2/\epsilon)\bigr)^3
        \log(2/\epsilon)+r_{\mathrm{Mx}}\log^2L
        +\operatorname{polylog}(N_q)
    \end{aligned}
    \right)}
\end{align*}
that satisfies
\begin{align*}
    \lVert (U_{\textup{Mx}} - \overline{U})P_{N}\rVert < \epsilon,
\end{align*}
 where $P_N$ is the orthogonal projection of  $\mathcal{H}_L$ onto  $\mathcal{H}_N$.
\end{theorem}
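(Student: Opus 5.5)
The plan mirrors the proof of the quantum Charlier transform. The circuit is split into an exact arithmetic part ($A_q$ and $A_q^\dagger$) and a Gaussian part $\widehat U:=\mathsf U_{\textup{C2R}}^\dagger S(z)\mathsf U_{\textup{C2R}}$. The Gaussian part is handed to Theorem~\ref{thm:digital_implementation_gaussian_optical_gates} with $K=2$ modes.

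\textbf{Step 1: the arithmetic embedding.} The map $A_q$ (with $q_\pm$ as in Algorithm~\ref{alg:meixner_transform}) sends $\ket{\ell}$ with $\ell<N$ to $\ket{\ell+q_+,\ell+q_-}$. Both entries are below $N_q=N+|q|$, so $A_q$ maps $\mathcal H_N$ into $\mathcal H_{N_q}^{\otimes 2}$. It is a reversible adder controlled on $q$, costing $\operatorname{polylog}(N_q)$ gates with no error.

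\textbf{Step 2: passing from $S(z,\kappa)$ to the full squeeze.} By Lemma~\ref{CBS}, $\mathsf U_{\textup{C2R}}$ maps this image into the circular-mode sector with number difference $q$. The full two-mode squeeze $S(z)=\bigoplus_q S(z,k(q))$ preserves that sector. Hence on the relevant inputs $S(z,\kappa)$ may be replaced by the full Gaussian gate $S(z)$.

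\textbf{Step 3: bounded-time factorization.} The gate $S(z)$ has parameter $\rho_z=\operatorname{arctanh}|z|$ and a phase $\arg z$; the phase is absorbed into single-mode rotations. I would split it as $S(z)=\prod_{b=1}^{r}\exp\bigl((\rho_z/r)(e^{i\arg z}K_+-e^{-i\arg z}K_-)\bigr)$ with $r=\lceil\rho_z/\delta_*\rceil$, so each piece has bounded evolution time as required in Section~\ref{subsec:phasespace}. Each $\mathsf U_{\textup{C2R}}$ is a constant number of bounded rotations and beamsplitters. In total $r_{\mathrm{Mx}}=\mathcal O(1+\rho_z)$ bounded elementary factors.

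\textbf{Step 4: invoke the theorem.} Apply Theorem~\ref{thm:digital_implementation_gaussian_optical_gates} with $K=2$, cutoff $N_q$, error $\epsilon$ (the statement's $\log(2/\epsilon)$ suggests splitting the budget in half, which costs only constants), and $RM=r_{\mathrm{Mx}}$. This gives $L=\Theta((5^{r_{\mathrm{Mx}}}N_q)^{9/4}\epsilon^{-13/4})$ and the stated gate count. The circuit uses $\mathcal O(\log L)$ qubits since $K$ is constant.

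\textbf{Step 5: composition and correctness.} Conjugating by the exact unitaries $A_q,A_q^\dagger$ does not increase the operator-norm error. Hence $\|(U_{\textup{Mx}}-\overline U)P_N\|\le\|(\widehat U-\overline{\widehat U})P_{N_q}\|<\epsilon$. Correctness of the exact map, namely that $U_{\textup{Mx}}\ket\ell$ has Meixner amplitudes, follows from the matrix-element theorem for $S(z,k)$.

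\textbf{Main obstacle.} The delicate point is the output side. The theorem controls $\widehat U$ only on inputs in $\mathcal H_{N_q}^{\otimes 2}$. The output $S(z)\mathsf U_{\textup{C2R}}A_q\ket\ell$ has unbounded support in $n$, so $A_q^\dagger$ must be a well-defined unitary on the enlarged register and must not discard any amplitude. I would define $A_q^\dagger$ as the inverse reversible arithmetic map $\ket{a,b}\mapsto\ket{a-q_+}\ket{b-a+q_+-q_-}$ on the full $\mathcal H_L^{\otimes2}$. With that definition, operator-norm error is preserved exactly, with no truncation argument needed after the Gaussian part. Any leakage above the cutoff is already accounted for inside the theorem's choice of enlarged grid.
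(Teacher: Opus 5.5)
Your proposal is correct and follows the paper's proof essentially step for step: embed via $A_q$ so that inputs lie below the cutoff $N_q=N+|q|$, treat $S(z,\kappa)$ as the restriction of the full sector-preserving squeeze, split it into $\lceil\rho_z/\delta_*\rceil$ bounded pieces plus $\mathcal O(1)$ factors for the two $\mathsf U_{\textup{C2R}}$ basis changes and the phase rotations, invoke Theorem~\ref{thm:digital_implementation_gaussian_optical_gates} with $K=2$, and charge $A_q,A_q^\dagger$ to the $\operatorname{polylog}(N_q)$ term. One small correction is that the $\log(2/\epsilon)$ in the gate count is simply $\log(K/\epsilon)$ at $K=2$ from that theorem, not a halving of the error budget; your extra care in defining $A_q^\dagger$ as a permutation on the full (suitably padded) register is consistent with the paper and goes beyond what it spells out.
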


\begin{proof}
Write $z=e^{i\phi}\tanh\rho_z$.  The sector operator $S(z,\kappa)$ is the
restriction of the two-mode squeezing gate $S_2(z)$ to the fixed
number-difference sector $q$. Let
\begin{align}
    \delta_*:=\frac{1}{4\sqrt{2}e},
    \qquad
    r_z:=\left\lceil \frac{\rho_z}{\delta_*}\right\rceil.
\end{align}
We can perform $A_{q}$ efficiently with quantum arithmetic. Splitting the
squeezing evolution into $r_z=\lceil \rho_z/\delta_*\rceil$ equal pieces
gives steps of squeezing time at most $\delta_*$. Applying the fixed
phase-space decomposition of each small two-mode squeezing step
{(Section~\ref{subsec:phasespace}) gives $\mathcal O(r_z)$ bounded factors.}
{The two fixed basis changes and the phase rotations needed for complex
$z$ contribute only a constant number more, so
$r_{\mathrm{Mx}}=\mathcal O(1+r_z)=\mathcal O(1+\rho_z)$. The sector input is
contained in the two-mode cutoff $N_q=N+|q|$, since $|n,\kappa\rangle$ is
represented by $|n+q_+,n+q_-\rangle$, where $q_+=\max(q,0)$ and
$q_-=\max(-q,0)$. Applying
Theorem~\ref{thm:digital_implementation_gaussian_optical_gates} with $K=2$,
effective cutoff $N_q$, and total bounded-factor count $r_{\mathrm{Mx}}$ gives
the stated grid size and Gaussian gate count; the final
$\operatorname{polylog}(N_q)$ term accounts for $A_q$ and $A_q^\dagger$.}
\end{proof}

\subsubsection{Krawtchouk transform}

The Krawtchouk polynomials are orthogonal polynomials with respect to the
binomial distribution. They arise naturally in the matrix elements of the
beam-splitter (BS) operator, which couples two bosonic modes while preserving
the total photon number. Acting on $\ket{\mathcal{N},0}$ it produces the $SU(2)$
coherent state of the $\mathcal{N}$-photon representation, whose photon number in
the first mode is distributed as $\text{Bin}(\mathcal{N},p)$ with
$p=\sin^2\theta$, corresponding to the $SU(2)$ row in
Table~\ref{tab:coherent-measures}. We define the Krawtchouk polynomials next in
terms of their generating function and hypergeometric representation.

\begin{definition} \label{def:krawtchouk}
The Krawtchouk polynomials $K_k(n;p,\mathcal{N})$ are polynomials of degree $k$ in
the discrete variable $n \in \{0, 1, \ldots, \mathcal{N}\}$ defined by the
generating function
\all{
\sum_{k=0}^{\mathcal{N}} \binom{\mathcal{N}}{k} K_k(n;p,\mathcal{N})\, t^k
= (1+t)^{\mathcal{N}-n}\left(1 - \frac{q}{p}t\right)^n,
}{gfk}
where $0 < p < 1, q=1-p$, and
$K_k(n;p,\mathcal{N}) = {}_2F_1(-k, -n; -\mathcal{N}; 1/p)$.
\end{definition}
\noindent The quantum Krawtchouk transform rests on the same principle as the  Meixner transform, with the same structural difference from the Charlier case. The adjoint action $B(\theta)J_+B(\theta)^\dagger$ mixes all three generators $J_+,J_-,J_0$, so the
kernel $\bra{n,\mathcal{N}-n}B(\theta)\ket{k,\mathcal{N}-k}$ is computed
directly. The normal-ordered form of the beam splitter sends
$\ket{k,\mathcal{N}-k}$ down to an intermediate level and back up to
$\ket{n,\mathcal{N}-n}$, so the kernel is a single sum terminating at
$\min(k,n)$, and the ladder normalizations assemble it into the ${}_2F_1$ of
Definition~\ref{def:krawtchouk}. The $SU(2)$ representation is realized as
the two-mode Schwinger representation, as for the Meixner transform. Here $p=\sin^2\theta$ corresponds to $p=|z|^2/(1+|z|^2)$ for the stereographic
parameter $z=\tan\theta$ of the $SU(2)$ case in
\eqref{eq:generalized-displacements}.

\begin{restatable}{theorem} {kravchuckBs}
Let $\mathcal{N}\in\Z_{\geq 0}$ and let
$B(\theta) = \exp[\theta(a_1^\dagger a_2 - a_1 a_2^\dagger)]$ be the beam-splitter
operator, $p = \sin^2\theta$, and $q = \cos^2\theta$. Then for
$n, k \in \{0, 1, \ldots, \mathcal{N}\}$,
\begin{align}
    \langle n, \mathcal{N}-n | B(\theta) | k, \mathcal{N}-k \rangle
    = (-1)^k \sqrt{\binom{\mathcal{N}}{n}\binom{\mathcal{N}}{k}}\,
    p^{(n+k)/2}\, q^{(\mathcal{N}-n-k)/2}\, K_k(n;p,\mathcal{N}),
\end{align}
where $K_k(n;p,\mathcal{N}) = {}_2F_1(-k, -n; -\mathcal{N}; 1/p)$ is the $k$-th
Krawtchouk polynomial.
\end{restatable}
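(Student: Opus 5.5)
The plan is to follow the template used for the Meixner kernel and compute the matrix element directly from a disentangled (normal-ordered) form of the beam splitter in the Schwinger $SU(2)$ representation. With $J_+=a_1^\dagger a_2$, $J_-=a_2^\dagger a_1$, the generator of $B(\theta)$ is $\theta(J_+-J_-)$. The $SU(2)$ Baker--Campbell--Hausdorff disentangling formula, which can be verified in the faithful $2\times2$ representation, gives
\begin{align}
B(\theta)=e^{\tan\theta\,J_+}\,(\cos\theta)^{2J_z}\,e^{-\tan\theta\,J_-},
\end{align}
with $J_z=\tfrac12(n_1-n_2)$. This is the compact analogue of \eqref{eq:disentangle}.

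On the sector $\mathcal{H}^{(\mathcal N)}_{12}$ we use the Fock labels $\ket{x,\mathcal N-x}$. The ladder actions are
\begin{align}
J_-\ket{x,\mathcal N-x}=\sqrt{x(\mathcal N-x+1)}\,\ket{x-1,\mathcal N-x+1},
\end{align}
and analogously for $J_+$.

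The matrix element is then computed in three steps. First, apply $e^{-\tan\theta J_-}$ to $\ket{k,\mathcal N-k}$. This lowers the first mode to an intermediate level $m$, and the amplitude is
\begin{align}
\frac{(-\tan\theta)^{k-m}}{(k-m)!}\sqrt{\frac{k!\,(\mathcal N-m)!}{m!\,(\mathcal N-k)!}}.
\end{align}
Second, apply the diagonal factor, which contributes $(\cos\theta)^{2m-\mathcal N}$. Third, raise from $m$ to $n$ with $e^{\tan\theta J_+}$. This gives the amplitude
\begin{align}
\frac{(\tan\theta)^{n-m}}{(n-m)!}\sqrt{\frac{n!\,(\mathcal N-m)!}{m!\,(\mathcal N-n)!}}.
\end{align}
The result is a finite sum over $0\le m\le\min(k,n)$.

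The main obstacle is matching this sum to the ${}_2F_1$ in Definition~\ref{def:krawtchouk}. The natural summation index is $m$, counting down from $\min(k,n)$, whereas the hypergeometric series is indexed by $s$ with $(-k)_s(-n)_s/((-\mathcal N)_s s!)$. I would first try to bring the sum to a terminating ${}_2F_1(-k,-n;-\mathcal N;\cdot)$ form and compare. Since the direct sum is naturally a ${}_2F_1(-k,-n;\mathcal N-k-n+1;\cdot)$-type series, I expect a Pfaff/Euler transformation of terminating hypergeometric series to be needed.

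An alternative route may be cleaner. One could use the generating function \eqref{gfk}: the Schwinger realization gives $B(\theta)$ acting on $(a_1^\dagger)^k(a_2^\dagger)^{\mathcal N-k}\ket{0,0}$ by the linear substitution $a_1^\dagger\mapsto \cos\theta\,a_1^\dagger-\sin\theta\,a_2^\dagger$, $a_2^\dagger\mapsto \sin\theta\,a_1^\dagger+\cos\theta\,a_2^\dagger$, with signs fixed by the convention of $B$. Extracting the coefficient of $(a_1^\dagger)^n(a_2^\dagger)^{\mathcal N-n}$ then amounts to reading off a coefficient of $(\cos\theta-t\sin\theta)^k(\sin\theta+t\cos\theta)^{\mathcal N-k}$. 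Summing over $k$ with weights $\binom{\mathcal N}{k}u^k$ and factoring should produce exactly $(1+t)^{\mathcal N-n}(1-\tfrac qp t)^n$ after rescaling $t$ by $\sqrt{q/p}$.

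I would carry out the generating-function route as the primary argument. It makes the prefactors $\sqrt{\binom{\mathcal N}{n}\binom{\mathcal N}{k}}\,p^{(n+k)/2}q^{(\mathcal N-n-k)/2}$ fall out from normalizations $1/\sqrt{k!(\mathcal N-k)!}$ and powers of $\sin\theta,\cos\theta$. I would use the disentangling computation only as a check of the sign $(-1)^k$, which is the delicate point and depends on the orientation of $\theta$.
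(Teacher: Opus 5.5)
Your primary generating-function route is exactly the paper's proof. It uses the intertwining substitution $a_1^\dagger\mapsto\cos\theta\,a_1^\dagger-\sin\theta\,a_2^\dagger$, $a_2^\dagger\mapsto\sin\theta\,a_1^\dagger+\cos\theta\,a_2^\dagger$, then sums over $k$ with weights $\binom{\mathcal N}{k}u^k$ (equivalent to the paper's $t^k/\sqrt{k!(\mathcal N-k)!}$ on normalized states). The rescaling must be $u=-\sqrt{q/p}\,\tau$, not $\sqrt{q/p}$, and that minus sign is precisely what produces $(-1)^k$, so no separate sign check is required.

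Your cross-check is misstated, though. The correct disentangling is $B(\theta)=e^{\tan\theta\,J_+}(\cos\theta)^{-2J_z}e^{-\tan\theta\,J_-}$. Your version gives $\cos\theta-\sin^2\theta/\cos^3\theta$ for $\langle 1,0|B(\theta)|1,0\rangle$ in the $2\times2$ check, instead of $\cos\theta$. With the correct sign, the middle factor contributes $(\cos\theta)^{\mathcal N-2m}$. The sum over the intermediate level $m$ is then, up to an $m$-independent prefactor, $\sum_m\frac{(-k)_m(-n)_m}{(-\mathcal N)_m\,m!}\,p^{-m}={}_2F_1(-k,-n;-\mathcal N;1/p)$. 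So $m$ is itself the hypergeometric index, and no Pfaff or Euler transformation is needed.
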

\noindent The proof of the above Theorem as sketched above is included in Appendix \ref{App:C} for completeness. 

Specializing to the 2-mode Schwinger representation with fixed total photon number $\mathcal{N}$, let
\begin{align}
    A_{\mathcal{N}}\ket{k}=\ket{k,\mathcal{N}-k}, \qquad k=0,\ldots,\mathcal{N},
\end{align}
be the isometric embedding of the degree register into the two-mode fixed-number
sector. For $0<\theta<\pi/2$ set $p=\sin^2\theta$ and define the Krawtchouk
sector transform
\begin{align}
    U_{\mathrm{Kv}}(p,\mathcal{N}):=A_{\mathcal{N}}^{\dagger}B(\theta)A_{\mathcal{N}}.
\end{align}
Then
\begin{align}
    U_{\mathrm{Kv}}(p,\mathcal{N})\ket{k}
    =\sum_{n=0}^{\mathcal{N}}(-1)^k\sqrt{\binom{\mathcal{N}}{n}\binom{\mathcal{N}}{k}}
    p^{(n+k)/2}(1-p)^{(\mathcal{N}-n-k)/2}K_k(n;p,\mathcal{N})\ket{n}.
\end{align}
The diagonal sign $(-1)^k$ can be removed before or after the transform by a
number-basis phase gate, depending on the chosen Krawtchouk phase convention.

\begin{algorithm}[H]
\caption{Quantum Krawtchouk Transform}
\label{alg:Krawtchouk_transform}
\textbf{Input:} Krawtchouk parameters $(p,\mathcal{N})$ with $0<p<1$, equivalently
a total photon number $\mathcal{N}$ and beam-splitter angle $p=\sin^2\theta$, and
amplitudes $\{a_k\}_{k=0}^{\mathcal{N}}$ in the state
$\ket{\phi}_{\mathrm{Kv}}=\sum_{k=0}^{\mathcal{N}}a_k\ket{k}$.

\textbf{Output:} The Krawtchouk polynomial state
$\sum_{k=0}^{\mathcal{N}}a_k\, U_{\mathrm{Kv}}(p,\mathcal{N})\ket{k}$, up to the
chosen digital approximation error.

\begin{enumerate}
    \item \textbf{Fixed-Number Sector Embedding.} Reversibly compute the second-mode occupation and apply $A_{\mathcal{N}}:\sum_{k=0}^{\mathcal{N}}a_k\ket{k}\ket{0}\mapsto\sum_{k=0}^{\mathcal{N}}a_k\ket{k,\mathcal{N}-k}$.
    \item \textbf{Beam-Splitter Evolution.} Apply $B(\theta)$ on the two oscillator modes using the phase-space decomposition of Section~\ref{subsec:phasespace} and Theorem~\ref{thm:digital_implementation_gaussian_optical_gates}.
    \item \textbf{Sector Unembedding.} Since the beam splitter preserves total photon number, apply $A_{\mathcal{N}}^{\dagger}$ by reversing the arithmetic used in the first step.
    \item \textbf{Phase Gauge.} If the convention without the factor $(-1)^k$ is desired, apply the corresponding diagonal number-basis sign gate before or after the beam splitter.
\end{enumerate}
\textbf{Return:} $\sum_{k=0}^{\mathcal{N}}a_k\, U_{\mathrm{Kv}}(p,\mathcal{N})\ket{k}$.
\end{algorithm}

\begin{theorem}[Quantum Krawtchouk transform]
\label{thm:Krawtchouk_transform}
Fix $0<p<1$, total photon number $\mathcal{N}$, and target error $\epsilon<1$. Let
\begin{align}
    \ket{\psi}=\sum_{m=0}^{\mathcal{N}}a_m\ket{m},
    \qquad \sum_{m=0}^{\mathcal{N}}|a_m|^2=1,
\end{align}
be an arbitrary state in the fixed-number input register. 

Choose a digital
oscillator cutoff
\begin{align*}
    {
    N=\Theta\!\left(\mathcal{N}+\log(1/\epsilon)\right),
    \qquad N\geq\mathcal N+1,}
\end{align*}
and let $r_{\mathrm{Kv}}$ denote the total number of bounded-time
elementary Gaussian factors in the three-shear beam-splitter implementation.
As shown below, $r_{\mathrm{Kv}}=\mathcal O(1)$ uniformly for
$0<p<1$. Choose operational grid dimension
\begin{align*}
    {
    L=\Theta\!\left((5^{r_{\mathrm{Kv}}}N)^{9/4}\epsilon^{-13/4}\right)
    =\Theta\!\left(N^{9/4}\epsilon^{-13/4}\right).}
\end{align*}
Then there is a digital quantum circuit $\overline{U}$ on
$\mathcal{O}(\log L)$ qubits with gate complexity
\begin{align*}
    {
    \mathcal O\!\left(
    \begin{aligned}
        &\bigl(\log(5^{r_{\mathrm{Kv}}}N)+\log(2/\epsilon)\bigr)^3
        \log(2/\epsilon)
        &\quad{}+r_{\mathrm{Kv}}\log^2L
        +\operatorname{polylog}(\mathcal N+1)
    \end{aligned}
    \right)}
\end{align*}
{such that}
\begin{align*}
    \lVert (U_{\mathrm{Kv}}(p,\mathcal{N})-\overline{U})P_{\mathcal{N}+1}\rVert<\epsilon,
\end{align*}
where $P_{\mathcal{N}+1}$ is the orthogonal projection onto the span of
$\{\ket{0},\ldots,\ket{\mathcal{N}}\}$.
\end{theorem}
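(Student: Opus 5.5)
The proof follows the template used for the Charlier and Meixner transforms. By the beam-splitter matrix-element theorem, $U_{\mathrm{Kv}}(p,\mathcal N)=A_{\mathcal N}^\dagger B(\theta)A_{\mathcal N}$ with $\theta=\arcsin\sqrt p\in(0,\pi/2)$. Since $B(\theta)$ preserves total photon number, $A_{\mathcal N}A_{\mathcal N}^\dagger$ commutes with $B(\theta)$ on the sector $\mathcal H^{(\mathcal N)}_{12}$. The exact transform is therefore the composition of three pieces: the reversible embedding $A_{\mathcal N}$, the two-mode Gaussian gate $B(\theta)$, and the unembedding $A_{\mathcal N}^\dagger$. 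The embedding $\ket{k}\ket0\mapsto\ket{k}\ket{\mathcal N-k}$ is a subtraction circuit on $O(\log(\mathcal N+1))$-bit registers, so it and its inverse cost $\operatorname{polylog}(\mathcal N+1)$ gates exactly. Every two-mode input state $\ket{k,\mathcal N-k}$ has both occupations at most $\mathcal N<N$. The input thus lies in $\mathcal H_N^{\otimes 2}$, and the only approximate step is $B(\theta)$.

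To implement $B(\theta)$, I use the phase-space form $B(\theta)=\mathrm{BS}_Y(\pm\theta)=\exp(i\theta(\hat x_1\hat p_2-\hat p_1\hat x_2))$, a real rotation of the $(x_1,x_2)$ plane. I decompose this rotation into three shears, $e^{-i\tan(\theta/2)\hat x_1\hat p_2}\,e^{i\sin\theta\,\hat x_2\hat p_1}\,e^{-i\tan(\theta/2)\hat x_1\hat p_2}$ (up to sign conventions), as in Section~\ref{subsec:phasespace}. Because $\theta<\pi/2$, the shear parameters satisfy $|\tan(\theta/2)|<1$ and $|\sin\theta|\le1$. Each shear therefore splits into at most $\lceil 1/\delta_*\rceil$ bounded-time pieces with $\delta_*=1/(4\sqrt2e)$. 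This gives $r_{\mathrm{Kv}}\le 3\lceil4\sqrt2 e\rceil=O(1)$ uniformly in $p\in(0,1)$; this uniformity is the claim in the statement. If $\theta$ were close to $\pi$ the half-angle tangent would blow up, but that regime does not occur here.

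Next I apply Theorem~\ref{thm:digital_implementation_gaussian_optical_gates} with $K=2$, $R=1$, amplification $M=r_{\mathrm{Kv}}$, cutoff $N$, and target error $\epsilon/2$. The hypothesis $N=\Omega(\log(1/\epsilon))$ holds because $N=\Theta(\mathcal N+\log(1/\epsilon))$. The theorem's grid size specializes to $L=\Theta((5^{r_{\mathrm{Kv}}}N)^{9/4}\epsilon^{-13/4})=\Theta(N^{9/4}\epsilon^{-13/4})$, since $5^{r_{\mathrm{Kv}}}$ and $K^{13/4}$ are constants. Its gate count specializes to the stated $(\log(5^{r_{\mathrm{Kv}}}N)+\log(2/\epsilon))^3\log(2/\epsilon)+r_{\mathrm{Kv}}\log^2L$. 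Adding the arithmetic cost of $A_{\mathcal N}$ and $A_{\mathcal N}^\dagger$ gives the stated total.

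The remaining issue, which I expect to be the main obstacle, is controlling how the error leaves the fixed-$\mathcal N$ sector when we unembed. The digital $\overline B$ satisfies $\|(B-\overline B)P_N\|<\epsilon/2$, but $\overline B$ need not preserve total photon number. So $A_{\mathcal N}^\dagger$, implemented as a reversible map on the full register, might not return the ancilla to $\ket0$. I would handle this by defining $\overline U$ as $A_{\mathcal N}^{-1}\overline B A_{\mathcal N}$ with the full reversible inverse, and comparing it with $A_{\mathcal N}^{-1}BA_{\mathcal N}$. On sector inputs the exact version outputs $U_{\mathrm{Kv}}\ket\psi\otimes\ket0$, and the two versions differ in norm by at most $\|(B-\overline B)P_N\|$, because $A_{\mathcal N}^{-1}$ is unitary. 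The triangle inequality then gives $\|(U_{\mathrm{Kv}}-\overline U)P_{\mathcal N+1}\|<\epsilon$, with the slack absorbing the $\epsilon/2$ choice, which completes the proof.
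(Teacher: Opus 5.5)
Your proposal is correct and matches the paper's proof step for step: the exact arithmetic embedding $A_{\mathcal N}$, the three-shear decomposition of $B(\theta)$ with $|\tan(\theta/2)|,|\sin\theta|<1$ giving $r_{\mathrm{Kv}}\le 3\lceil 1/\delta_*\rceil=\mathcal O(1)$, and Theorem~\ref{thm:digital_implementation_gaussian_optical_gates} applied with $K=2$. Your explicit treatment of the unembedding is sound and somewhat more careful than the paper's one-line remark that the arithmetic is run backwards: you define $\overline U=A_{\mathcal N}^{-1}\overline B A_{\mathcal N}$ and bound its error by $\|(B-\overline B)P_N\|$ using the unitarity of $A_{\mathcal N}^{-1}$.
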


\begin{proof}
The map $A_{\mathcal{N}}$ is implemented reversibly by copying the input value $m$ into a
second register and computing $\mathcal{N}-m$ by quantum arithmetic. Its inverse is the
same arithmetic run backwards after the beam splitter, since $B(\theta)$ preserves
the total photon number and hence maps the sector
$\textup{span}\{\ket{k,\mathcal{N}-k}:0\leq k\leq \mathcal{N}\}$ to itself.

After applying $A_{\mathcal{N}}$, every basis state in the sector has both mode occupations
at most $\mathcal{N}$, and hence lies in $\mathcal{H}_{N}^{\otimes 2}$ for the above choice
of $N$. It remains to implement the beam splitter on these two oscillator modes.
By the phase-space decomposition in Section~\ref{subsec:phasespace},
\begin{align}
    B(\theta)
    =e^{-i\,t(\theta)\,\hat{x}_2\hat{p}_1}
    e^{i\,s(\theta)\,\hat{x}_1\hat{p}_2}
    e^{-i\,t(\theta)\,\hat{x}_2\hat{p}_1},
\end{align}
where $t(\theta)=\tan(\theta/2)$ and $s(\theta)=\sin\theta$. For fixed
$0<\theta<\pi/2$, both $|t(\theta)|$ and $|s(\theta)|$ are less than $1$.
With $\delta_*=1/(4\sqrt{2}e)$, splitting each shear into pieces whose
coefficients have magnitude at most $\delta_*$ gives
$r_{\mathrm{Kv}}\leq3\lceil1/\delta_*\rceil=\mathcal O(1)$. Applying
Theorem~\ref{thm:digital_implementation_gaussian_optical_gates} with $K=2$,
effective Fock cutoff $N$, and total bounded-factor count
$r_{\mathrm{Kv}}$ gives the factor $5^{r_{\mathrm{Kv}}}$ in $L$; it is absorbed
into the second $\Theta$ expression because $r_{\mathrm{Kv}}$ is uniformly
constant.
Composing this approximation with the exact arithmetic embeddings $A_{\mathcal{N}}$ and
$A_{\mathcal{N}}^\dagger$ gives the claimed implementation of $U_{\mathrm{Kv}}(p,\mathcal{N})$.
\end{proof}

\subsection{Hahn Transform} \label{subsec:hahn_transform}
We recall from the definition of the hypergeometric equation \eqref{Hdelta} that the Hahn polynomials on the discrete lattice $\{0,\ldots,N\}$ form a two-parameter family $(\alpha,\beta)$ of polynomial solutions to the equation, 
\begin{align}
x(N+ \beta + 1 -x) \Delta \nabla Q_{\ell}(x) + [N(\alpha+1) - (\alpha + \beta +2)x] \Delta Q_{\ell}(x) + \ell(\ell + \alpha + \beta + 1)Q_{\ell}(x) =0. 
\end{align}
The Hahn transform differs from the discrete transforms considered so far because its difference equation has the quadratic leading coefficient $P_{2}(x)=x(N+\beta+1-x)$, rather than the linear coefficient $P_{2}(x)=x$. Consequently, the Hahn transform is not realized by a single optical gate acting on one $SU(2)$ or $SU(1,1)$ representation. Instead, it arises from the Clebsch--Gordan decomposition of two $SU(2)$ representations.

We define the Hahn polynomials explicitly as a hypergeometric series and specify the Hahn transform as an $(N+1)$-dimensional unitary matrix. 
\begin{definition}[Hahn polynomials]
\label{def:hahn_poly}
Let $\alpha,\beta>-1$ and $N\in\Z_{\geq0}$.  For
$l,x\in\{0,\ldots,N\}$ define
\begin{align}
    Q_l(x;\alpha,\beta,N)
    &:={}_3F_2\!\left(\begin{matrix}-l,\ l+\alpha+\beta+1,\ -x\\
    \alpha+1,\ -N\end{matrix};1\right),                                      \\
    \rho(x)&:=\binom{\alpha+x}{x}\binom{\beta+N-x}{N-x},
    \label{eq:hahn_weight}                                                     \\
    h_l&:=\sum_{x=0}^{N}\rho(x)Q_l(x;\alpha,\beta,N)^2,\qquad
    \widehat Q_l(x):=h_l^{-1/2}\sqrt{\rho(x)}Q_l(x;\alpha,\beta,N).
    \label{eq:hahn_orthonormal}
\end{align}
The Hahn transform matrix $U_{\mathrm H}$ is the unitary matrix with entries
$(U_{\mathrm H})_{x l}=\widehat Q_l(x)$.  
\end{definition}

The inverse quantum Hahn transform (in our convention) is thus the operation
\begin{align*}
     U_{\mathrm H}^{\dagger}:
    \sum_{x=0}^{N}f(x)\ket{x}
    &\longmapsto
    \sum_{l=0}^{N}\left(\sum_{x=0}^{N}\widehat Q_l(x)f(x)\right)\ket{l},
\end{align*}
for $f : \{0, \dots, N\} \rightarrow \mathbb{R}$. The Hahn transform matrix is not symmetric; unlike the Krawtchouk transform, it does not square to the identity. In the standard convention,
\begin{align*}
Q_l(x;\alpha,\beta,N)=R_x(\lambda(l);\alpha,\beta,N),
\qquad \lambda(l)=l(l+\alpha+\beta+1).
\end{align*}
The normalized transposed matrix corresponds to the dual Hahn polynomials. The same circuit therefore implements the normalized dual Hahn transform when the roles of $x$ and $l$ are exchanged.

The Hahn polynomials are closely connected to the coupling of  $\textup{SU}(2)$ representations, which we leverage to implement the transform quantumly.
Let $V_j$ denote the spin-$j$ irreducible representation of $SU(2)$, with its standard basis $|j, m\rangle$ labeled by the magnetic quantum numbers $m=-j,\ldots,j$.

The tensor product of two spin representations has a multiplicity free decomposition into a direct sum of other spin representations, 
\begin{align*}
    V_{j_1} \otimes V_{j_2} \cong \bigoplus_{k=\lvert j_1 - j_2\rvert}^{j_1+j_2} V_{k}.
\end{align*}
The Clebsch--Gordan unitary expresses each uncoupled basis state as
\begin{align*}
    U^{j_1, j_2}_{\textup{CG}}|j_1, m_1\rangle|j_2, m_2\rangle = \sum_{k=\lvert j_1-j_2 \rvert}^{j_1+j_2}C_{j_1, m_1; j_2, m_2}^{k, m_1 + m_2}|k, m_1 + m_2\rangle,
\end{align*}
where the linear combination contains at most one weight vector from each $V_k$. The coefficients $C_{j_1, m_1; j_2, m_2}^{k, m_1 + m_2}$ are the CG coefficients. This is a square unitary basis change because
\begin{align*}
    (2j_1+1)(2j_2+1)=\sum_{k=\lvert j_1-j_2\rvert}^{j_1+j_2}(2k+1).
\end{align*}
The CG transform preserves the total magnetic number $M=m_1+m_2$, so it decomposes into square unitary fixed-$M$ blocks, that are described below.

Define $m_{-}=\max(-j_2,M-j_1)$, $m_+=\min(j_2,M+j_1)$, and $N_M=m_+-m_-$. Let $m_1=M-m_--x$ and $m_2=m_-+x$. We consider the subspace of $V_{j_1}\otimes V_{j_2}$ with fixed value $m_1+m_2=M$:
\begin{align*}
\mathcal{V}_{M}^{j_1, j_2} :=\textup{span}_{\mathbb{C}}\left\{ |j_1, M - m_{-} - x \rangle|j_2, m_{-} + x\rangle ~|~  x \in \{0, \dots, N_M\} \right\},
\end{align*}
which has dimension $N_M+1$. Its coupled basis consists of the vectors $\ket{k,M}$ for the admissible total spins $k\geq |M|$, and therefore has the same dimension. The restriction of the CG transform is an $(N_M+1)\times(N_M+1)$ orthogonal matrix. The Hahn polynomials arise from this block for a particular choice of $M,j_1,j_2$.

Concretely, at the parameters
\begin{align}
\label{eq:hahn-cg-parameters}
    j_1=\tfrac{N+\alpha+\beta}{2},\qquad
    j_2=\tfrac{N}{2},\qquad
    M=\tfrac{\beta-\alpha}{2}
\end{align}
one has $m_\pm=\pm\tfrac N2$ and $J_{\min}=\tfrac{\alpha+\beta}{2}$, and the
Hahn transform is the restriction of the CG transform to
$\mathcal V_M^{j_1,j_2}$, up to a diagonal sign and the relabelling
$\ell\mapsto k=J_{\min}+\ell$ of the total-spin register.

\begin{restatable}[]{Lemma}{lemHahnCG}
\label{lem:equiv_hahn_cg}
Let $\alpha,\beta,N\in\Z_{\geq0}$ and let $j_1,j_2,M$ be as in
\eqref{eq:hahn-cg-parameters}.  Then $m_-=-\tfrac N2$, $m_+=\tfrac N2$,
$J_{\min}=\lvert j_1-j_2\rvert=\tfrac{\alpha+\beta}{2}$, and for all
$x,\ell\in\{0,\ldots,N\}$
\begin{align}
\label{eqn:hahn_cg_equivalent}
    \widehat Q_\ell(x;\alpha,\beta,N)
    =(-1)^{x}\,C_{j_1,\,M-m_--x;\;j_2,\,m_-+x}^{\,J_{\min}+\ell,\;M}.
\end{align}
Equivalently, with $D:=\sum_{x=0}^{N}(-1)^x\ket{x}\!\bra{x}$,
\begin{align}
\label{eq:hahn-cg-operator}
    U_{\mathrm H}^{\dagger}
    &=\left(U^{j_1,j_2}_{\textup{CG}}\big|_{\mathcal V_M^{j_1,j_2}}\right)D,\\
    U_{\mathrm H}
    &=D\left(U^{j_1,j_2}_{\textup{CG}}\big|_{\mathcal V_M^{j_1,j_2}}\right)^{\dagger}.\notag
\end{align}
where the total-spin label $k=J_{\min}+\ell$ is identified with the Hahn
degree $\ell$.
\end{restatable}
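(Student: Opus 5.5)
The plan is to prove \eqref{eqn:hahn_cg_equivalent} column by column in $\ell$. For each fixed $\ell$ I will show that both sides, viewed as vectors in $x\in\{0,\ldots,N\}$, are unit eigenvectors of the same real tridiagonal matrix with nonzero off-diagonal entries, for the same eigenvalue. Such a matrix has simple spectrum, so the two vectors agree up to a global sign. That sign is then fixed exactly by comparing the $x=0$ entries. This gives the identity with no residual $\ell$-dependent sign, and the operator forms \eqref{eq:hahn-cg-operator} follow by reading off entries, using $D=D^\dagger=D^{-1}$.

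\emph{Step 1 (parameters).} Substitute \eqref{eq:hahn-cg-parameters} into the definitions of $m_\pm$. Since $M-j_1=-\tfrac N2-\alpha\le -\tfrac N2$ and $M+j_1=\tfrac N2+\beta\ge\tfrac N2$, we get $m_\pm=\pm\tfrac N2$ and $N_M=N$. Also $J_{\min}=j_1-j_2=\tfrac{\alpha+\beta}{2}\ge|M|$, so all $N+1$ values $k=J_{\min}+\ell$ with $\ell=0,\ldots,N$ occur in $\mathcal V_M^{j_1,j_2}$. Along the block we have $m_1=j_1-\alpha-x$ and $m_2=-j_2+x$.

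\emph{Step 2 (CG side).} Write $\mathbf J^2=\mathbf J_1^2+\mathbf J_2^2+2J_{1z}J_{2z}+J_{1+}J_{2-}+J_{1-}J_{2+}$ and restrict it to $\mathcal V_M^{j_1,j_2}$ in the basis indexed by $x$.
\begin{itemize}
\item The operator is tridiagonal. Its diagonal entry is $j_1(j_1+1)+j_2(j_2+1)+2m_1m_2$.
\item Its off-diagonal entries are $\sqrt{(j_1+m_1)(j_1-m_1+1)(j_2-m_2)(j_2+m_2+1)}$, which are strictly positive for $0\le x<N$.
\end{itemize}
The column $v_k(x)=C^{k,M}_{j_1,m_1;j_2,m_2}$ is its unit eigenvector with eigenvalue $k(k+1)$. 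Conjugating by $D$ flips the sign of every off-diagonal entry. Hence $(Dv_k)(x)=(-1)^xv_k(x)$ is a unit eigenvector of $T:=D\,\mathbf J^2|_{\mathcal V_M}D$, whose off-diagonal entries are all negative.

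\emph{Step 3 (Hahn side).} Multiply the Hahn difference equation of \secref{subsec:hahn_transform} by $\sqrt{\rho}$ to symmetrize it; this is the Sturm--Liouville form coming from the Pearson relation \eqref{pearson}. The result is that $\widehat Q_\ell$ is a unit eigenvector, with eigenvalue $-\ell(\ell+\alpha+\beta+1)$, of a symmetric tridiagonal operator $H$.
\begin{itemize}
\item The off-diagonal entries of $H$ are $-\sqrt{B(x)D(x+1)}<0$, with birth and death rates $B(x)=(N-x)(x+\alpha+1)$ and $D(x)=x(N+\beta+1-x)$.
\item I will check that $B(x)D(x+1)=(j_1+m_1)(j_1-m_1+1)(j_2-m_2)(j_2+m_2+1)$. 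Under Step 1 this is $(N-x)(x+1)(x+\alpha+1)(N+\beta-x)$, so the magnitudes match and, after Step 2, the signs match too.
\item I will check that the diagonals satisfy $T=-H+c$. Here $c$ is the constant forced by
\[
k(k+1)=\ell(\ell+\alpha+\beta+1)+\tfrac{\alpha+\beta}{2}\bigl(\tfrac{\alpha+\beta}{2}+1\bigr),
\]
which holds for $k=J_{\min}+\ell$. Concretely, $2m_1m_2$ must equal $-(B(x)+D(x))$ plus a constant independent of $x$.
\end{itemize}
This bookkeeping, especially confirming that the diagonal discrepancy is genuinely independent of $x$, is the main computational obstacle. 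If it resists, the fallback is to compare the two explicit ${}_3F_2$ formulas, namely Racah's single-sum formula for CG coefficients against Definition~\ref{def:hahn_poly}, term by term.

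\emph{Step 4 (sign).} By simplicity of the spectrum, $\widehat Q_\ell=\pm Dv_{J_{\min}+\ell}$. At $x=0$ we have $Q_\ell(0)=1$ and $h_\ell>0$, so $\widehat Q_\ell(0)>0$. On the CG side $m_2=-j_2$ at $x=0$. Applying the symmetries $C^{kM}_{j_1m_1;j_2m_2}=(-1)^{j_1+j_2-k}C^{k,-M}_{j_1,-m_1;j_2,-m_2}$ and $C^{kM}_{j_1m_1;j_2m_2}=(-1)^{j_1+j_2-k}C^{kM}_{j_2m_2;j_1m_1}$, the two factors $(-1)^{j_1+j_2-k}$ cancel. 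This identifies $v_k(0)$ with $C^{k,-M}_{j_2,j_2;\,j_1,-m_1}$, which is strictly positive by the Condon--Shortley convention, since the first spin sits at its top weight. Hence the sign is $+$ for every $\ell$, proving \eqref{eqn:hahn_cg_equivalent} exactly.
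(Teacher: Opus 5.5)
Your plan is the paper's proof. The paper also writes $\mathbf J^2$ on $\mathcal V_M^{j_1,j_2}$ as a Jacobi matrix with diagonal $d_x=j_1(j_1+1)+j_2(j_2+1)+2m_1m_2$ and positive off-diagonals $u_x=\sqrt{(N+\beta-x)(\alpha+x+1)(N-x)(x+1)}$. It then symmetrizes the Hahn three-term equation by detailed balance, matches the two recurrences after the gauge $(-1)^x$, uses simplicity of the spectrum, and fixes the sign at $x=0$.

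The only difference is that last step. The paper substitutes $m_2=-j_2$ directly into Racah's single-sum formula, where the factor $(j_2+m_2-z)!$ forces $z=0$ and gives an explicit positive $c_n(0)$. Your reduction, via the two symmetries, to a coefficient whose first spin sits at top weight is also valid. The Condon--Shortley convention fixes that sign only at magnetic number $k$, but lowering with $J_-$ propagates positivity to every magnetic number when the first spin is at top weight. The bottom-weight collapse in the paper makes this detour unnecessary.

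Step 3, however, has its signs crossed, and the diagonal identity you propose to verify is false. Let $H$ be the $\sqrt\rho$-symmetrization of the Hahn operator $f\mapsto B(x)f(x+1)+D(x)f(x-1)-(B(x)+D(x))f(x)$, which has eigenvalue $-\ell(\ell+\alpha+\beta+1)$. By the Pearson relation its off-diagonal entries are $\sqrt{\rho(x)/\rho(x+1)}\,B(x)=+\sqrt{B(x)D(x+1)}$, and its diagonal is $-(B(x)+D(x))$. The off-diagonals must be positive, because the top eigenvector $\widehat Q_0\propto\sqrt\rho$ is positive. So $H$ cannot have eigenvalue $-\lambda_\ell$ and negative off-diagonals at once.

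The correct identity is $T=-H+s(s+1)$ with $s=\tfrac{\alpha+\beta}{2}$. Its diagonal part reads
\[
2m_1m_2=B(x)+D(x)-N(\alpha+1)-\tfrac{N(N+\beta-\alpha)}{2},
\qquad\text{i.e.}\qquad d_x=s(s+1)+B(x)+D(x),
\]
with $+(B+D)$, not $-(B+D)$. With your sign, the $-2x^2$ terms of $2m_1m_2$ and $B+D$ would add instead of cancelling; already for $N=2$, $\alpha=\beta=0$ one has $d_x=(2,4,2)=B+D$.

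Once this is corrected, the diagonal check is a two-line expansion and the off-diagonal signs agree with those of $T$. Your argument then goes through exactly as the paper's does, with no need for the ${}_3F_2$ fallback.
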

\begin{proof}
See Appendix~\ref{app:hahn-cg}.
\end{proof}

For the digital algorithm, spin and magnetic labels are stored in binary registers. Only the second factor $V_{j_2}$ is expanded into qubits: it is realized as the symmetric subspace $\operatorname{Sym}^{2j_2}(\mathbb C^2)$ of $2j_2=N$ spin-$\tfrac12$ constituents. Under this realization, every weight vector $\ket{j_2,m_2}$ is represented as a Dicke state with $j_2+m_2$ spin-up constituents, 
\al{ 
\ket{j_{2}, m_{2}} = \ket{ D_{2j_{2}}^{j_{2} + m_{2}}} :=  
\binom{2j_{2}}{ j_{2}+ m_{2} } ^{-1/2} \sum_{z \in \{0, 1\}^{2j_{2}}, |z|= j_{2} + m_{2}} \ket{z}. 
} 
The first factor $V_{j_1}$ remains in the compact binary spin-label representation.

We now explain how to leverage Lemma \ref{lem:equiv_hahn_cg} to perform $U^{\dagger}_H$. First we apply an encoding unitary to 
\begin{align*}
   |f\rangle = \sum_{x=0}^{N}f(x)|x\rangle
\end{align*} defined by 
\begin{align*}
    E_{M}|x\rangle = |j_1, M - m_{-} - x\rangle |j_2, x + m_{-}\rangle.
\end{align*}
Here $E_M$ is shorthand for $E_{j_1,j_2,M}$, and $m_-$ is computed from these three parameters. For the canonical embedding, all four quantities are fixed functions of $(\alpha,\beta,N)$. The map is implemented by reversible affine arithmetic on doubled spin labels, so all stored values are integers.

Additionally, we need a simple phase operation $G_{\mathsf{X}}$ to correct  $(-1)^{x}$  phase factor between the CG coefficients and the Hahn polynomials. Specifically,
\begin{align*}
    G_{\mathsf{X}}|j_1, M- m_{-} -x\rangle|j_2, x+m_{-}\rangle = (-1)^{x}|j_1, M- m_{-} -x\rangle|j_2, x+m_{-}\rangle.
\end{align*}

The inverse quantum Hahn transform is the following sequence:
\begin{align*}
    U_{\textup{CG}}^{j_1,j_2}G_{\mathsf X}E_M|f\rangle
    &=\sum_{l=0}^{N}\left[\sum_{x=0}^{N}f(x)(-1)^x C_{j_1,M-m_--x;\,j_2,m_-+x}^{J_{\min}+l,M}\right]\ket{J_{\min}+l,M}\\
    &=\sum_{l=0}^{N}\left[\sum_{x=0}^{N}f(x)\widehat Q_l(x;\alpha,\beta,N)\right]\ket{J_{\min}+l,M}.
\end{align*}
After the reversible relabeling $\ket{J_{\min}+l,M}\mapsto\ket l\ket M$, the first register is $U_{\mathrm H}^{\dagger}\ket f$. Thus it remains to implement $U_{\textup{CG}}^{j_1,j_2}$.

A naive synthesis of $U_{\textup{CG}}^{j_1, j_2}$ into Givens rotations decomposes the matrix in $\mathcal{O}((N+1)^2)$ basic operations.
Instead, we use $\mathcal{O}(N)$ elementary coupling and recoupling operations. The symmetric realization of $V_{j_2}$ uses $2j_2=N$ constituent qubits, while $V_{j_1}$ and all running labels use binary registers. The total width is therefore $\mathcal{O}(N)$ plus the polylogarithmic arithmetic workspace, rather than $\mathcal{O}(j_1+j_2)$.  Our procedure relies on the following CG primitive, which couples a spin-$\frac12$ irrep to a spin-$j$ irrep:
\begin{align*}
    V_{j} \otimes V_{\frac{1}{2}} \cong V_{j - \frac{1}{2}} \oplus V_{j +\frac{1}{2}}.
\end{align*}
Mathematically, the elementary basis change is, 
\begin{align}
\label{eq:spin-half-cg-up}
    |j, m\rangle|\frac{1}{2}, \frac{1}{2}\rangle
    &=
    \sqrt{\frac{j+m+1}{2j+1}}
    \ket{j + \frac{1}{2}, m+ \frac{1}{2}}
    -
    \sqrt{\frac{j-m}{2j+1}}
    \ket{j-\frac{1}{2},m+\frac{1}{2}}, \\
\label{eq:spin-half-cg-down}
    \ket{j,m}\ket{\frac{1}{2},-\frac{1}{2}}
    &=
    \sqrt{\frac{j-m+1}{2j+1}}
    \ket{j+\frac{1}{2},m-\frac{1}{2}}
    +
    \sqrt{\frac{j+m}{2j+1}}
    \ket{j-\frac{1}{2},m-\frac{1}{2}}.
\end{align}
Operationally, the second register is a qubit from the Dicke state, whereas $j,m\in\tfrac12\mathbb Z$ are spin labels. The circuit stores $2j$ and $2m$ as integers, further $j_{2}+m_{2} \in \Z$ for the Dicke state encoding to be well defined. The circuit computes the signed rotation amplitudes in \eqref{eq:spin-half-cg-up}--\eqref{eq:spin-half-cg-down} reversibly from the binary labels, applies the corresponding controlled rotation to one branch qubit, updates the doubled spin labels, and uncomputes the angle register.

We denote this operation by $\mathsf{CG}_{1/2}$. Our goal is to couple $V_{j_1}$ to $V_{j_2}$ rather than to a single spin-$\tfrac12$ irrep. We use the canonical realization $V_{j_2}\cong\operatorname{Sym}^{2j_2}(\mathbb C^2)$ and expand $V_{j_2}$ into $2j_2$ spin-$\tfrac12$ constituents by repeatedly applying the inverse maximal-spin CG map. 

For a current symmetric block of spin $j$, the inverse maximal-spin CG map is
\begin{align*}
    |j,m \rangle  = \sqrt{\frac{j+m}{2j}}|j - \frac{1}{2}, m -\frac{1}{2}\rangle|\frac{1}{2}, \frac{1}{2}\rangle + \sqrt{\frac{j-m}{2j}}|j - \frac{1}{2}, m + \frac{1}{2}\rangle|\frac{1}{2}, -\frac{1}{2}\rangle.
\end{align*}
Starting from the compactly encoded state $\ket{j_2,m_2}$, repeated application of this peeling map removes one spin-$\tfrac12$ constituent at a time. After $2j_2$ applications, the output is the Dicke state, 
\begin{align*}
    \ket{D^{2j_2}_{\,j_2+m_2}}
    :=\binom{2j_2}{j_2+m_2}^{-1/2}
    \sum_{\substack{\sigma_1,\ldots,\sigma_{2j_2}=\pm\frac12\\
    \sum_t\sigma_t=m_2}}
    \ket{\sigma_1,\ldots,\sigma_{2j_2}}.
\end{align*}
The binomial coefficient counts the strings with exactly $j_2+m_2$ spin-up constituents and therefore gives unit normalization.

For an input $\ket{j_1,m_1}\ket{j_2,m_2}$, we first expand the second factor and then couple its constituents successively to the first factor. Put $T=2j_2$, $K_0=j_1$, and $M_0=m_1$. At layer $t$, let $\sigma_t\in\{\pm\tfrac12\}$ be the magnetic label of the constituent qubit and let $\rho_t\in\{\pm\tfrac12\}$ record the selected total-spin branch. Define
\begin{align*}
    K_t=K_{t-1}+\rho_t,
    \qquad
    M_t=M_{t-1}+\sigma_t.
\end{align*}
The local operation has the form
\begin{align*}
    \ket{K_{t-1},M_{t-1}}\ket{\tfrac12,\sigma_t}
    \longmapsto
    \sum_{\rho_t=\pm\frac12}
    c_{\rho_t}^{\sigma_t}(K_{t-1},M_{t-1})
    \ket{K_t,M_t}\ket{\rho_t}_{\mathcal P_t},
\end{align*}
where $c_{\rho_t}^{\sigma_t}(K_{t-1},M_{t-1})$ is the corresponding coefficient in \eqref{eq:spin-half-cg-up} or \eqref{eq:spin-half-cg-down}, with inadmissible branches assigned coefficient zero. 

The symbols $\sigma_t$ and $\rho_t$ have different roles: $\sigma_t$ changes the magnetic label $M_t$, while $\rho_t$ changes the total-spin label $K_t$. The branch register $\mathcal P_1\cdots\mathcal P_T$ retains the coupling history required for the cascade to be unitary. These histories distinguish the isomorphic copies of an irrep that occur when many spin-$\tfrac12$ representations are tensored together. More explicitly,
\begin{align*}
    (V_{1/2})^{\otimes N} \cong \bigoplus_{s} \mathbb{C}^{m_s} \otimes V_{s},
\end{align*}
where $m_s$ is the multiplicity of $V_s$ and the path labels provide a basis of $\mathbb C^{m_s}$.

The original decomposition of $V_{j_1}\otimes V_{j_2}$ is multiplicity free. Multiplicities appear only after embedding the second factor into $V_{1/2}^{\otimes 2j_2}$, whose irreducible decomposition is not multiplicity free. For each admissible $k$ with a nonzero CG coefficient, the state after the sequential CG cascade can be written as
\begin{align*}
    \sum_{k=J_{\min}}^{j_1+j_2} C_{j_1, m_1; j_2, m_2}^{k, m_1 + m_2}|k, m_1 + m_2\rangle|\eta_k\rangle.
\end{align*}
Here $|\eta_k\rangle$ is the normalized vector in the sequential coupling-path space obtained by summing the histories $(\rho_1,\ldots,\rho_T)$ that end at $K_T=k$. The original Dicke state is invariant under permutations of the $T=2j_2$ constituent qubits. In the sequential coupling basis, this symmetry is represented by the distinguished path-space vector $|\eta_k\rangle$. The vector $|\eta_k\rangle$ is a quantum state, but it is generally a superposition rather than one path basis vector.  Consequently, the total-spin output remains entangled with the path register, which cannot yet be discarded. Racah recoupling changes to a basis in which the symmetric internal path of $V_{j_2}$ is deterministic.

The sequentially coupled space decomposes as
\begin{align*}
    V_{j_1}\otimes V_{1/2}^{\otimes T}
    \cong\bigoplus_k V_k\otimes\mathcal M_k,
\end{align*}
where $\mathcal M_k$ is the path multiplicity space. Since recoupling commutes with the global $SU(2)$ action, it has the block form 
\begin{align*}
    R=\bigoplus_k(I_{V_k}\otimes F_k),
\end{align*}
where $F_k$ acts only on $\mathcal M_k$.

The recoupling map sends $|\eta_k\rangle$ to the deterministic maximal-spin path
\begin{align*}
    |\mathbf S_{\max}\rangle
    :=|S_2=1,S_3=\tfrac32,\ldots,S_T=\tfrac T2=j_2\rangle.
\end{align*}

Because this path does not depend on $k$, it disentangles from the total-spin register and can be uncomputed. The full recoupling is a sequence of local changes of parenthesization, each implemented by one controlled rotation and reversible updates of binary labels.

To describe these changes, write $b_t=|\tfrac12,\sigma_t\rangle$. The register state
\begin{align*}
    |M_T,K_T,\ldots,K_1,j_1\rangle
\end{align*}
records the final magnetic label and every intermediate total spin of the left-associated coupling tree. The same basis vector is written in parenthesized notation as, 
\begin{align*}
    |(\cdots ((j_1b_1)_{K_1}b_2)_{K_2}\cdots b_T)_{K_T},M_T\rangle,
    \qquad K_T\in\{J_{\min},\ldots,j_1+j_2\}.
\end{align*}

Set $S_1=\tfrac12$ and define $B_1=b_1$ and $B_t=(B_{t-1}b_t)_{S_t}$. A local Racah move compares the two orthonormal bases obtained by coupling three representations in the orders $(V_{j_1}\otimes V_{S_{t-1}})\otimes V_{1/2}$ and $V_{j_1}\otimes(V_{S_{t-1}}\otimes V_{1/2})$. Their overlap is the Racah coefficient
\begin{align*}
    \ket{((j_1B_{t-1})_{K_{t-1}}b_t)_{K_t}}
    =\sum_{S_t=S_{t-1}\pm\frac12}
    F_{K_{t-1},S_t}^{j_1,S_{t-1},\frac12,K_t}
    \ket{(j_1(B_{t-1}b_t)_{S_t})_{K_t}}.
\end{align*}

At step $t$, this operation replaces only $K_{t-1}$ by $S_t$ and leaves all labels outside the displayed three-factor subtree unchanged. 

Repeating the move for $t=2,\ldots,T$ replaces $K_1,\ldots,K_{T-1}$ by $S_2,\ldots,S_T$ without changing the final label $K_T=k$. The sequence is therefore block diagonal in $k$. 

The final state is
\begin{align*}
\sum_{k=J_{\min}}^{j_1+j_2} C_{j_1,m_1;j_2,m_2}^{k,m_1+m_2}|k,m_1+m_2\rangle\otimes|\mathbf S_{\max}\rangle.
\end{align*}
The fixed path is then uncomputed. The explicit $2\times2$ Racah matrix, including all four entries and their signs, is derived in Appendix~\ref{sec:racah_computation}. 

The circuit computes each rotation angle from $(j_1,S_{t-1},K_t)$, applies the controlled one-qubit rotation, updates the path label reversibly, and uncomputes the arithmetic workspace.

\paragraph{Register encoding and precision.}
The circuit stores every doubled spin and magnetic label in binary using $\mathcal O(\log(N+\alpha+\beta+1))$ qubits per live label. The constituent register contains $T=N$ qubits, and the path register contains $\mathcal O(N)$ branch qubits. A reusable fixed-point work register computes the square roots and rotation angles. Since the algorithm applies $\mathcal O(N)$ rotations, each rotation is synthesized to operator-norm error $\delta=\Theta(\epsilon/(N+1))$. The work register is uncomputed after every layer.

We now present the quantum Hahn transform $U_{\mathrm H}$ by reversing the preceding CG and Racah operations.

\begin{algorithm}[H]
\caption{Quantum Hahn Transform}
\label{alg:hahn_cg_qubit_registers}
\textbf{Input:} Hahn parameters $\alpha,\beta,N\in\Z_{\geq0}$, a target error
$0<\epsilon<1$, and a degree-register state
$\sum_{l=0}^{N}a_l\ket{l}_{\mathcal L}$.

\textbf{Output:} The coherent Hahn polynomial state
$\sum_{l=0}^{N}a_l\ket{\widehat Q_l}$, where
$\ket{\widehat Q_l}=\sum_{x=0}^{N}\widehat Q_l(x;\alpha,\beta,N)\ket{x}$, up to
error $\epsilon$. Equivalently, the output is $U_{\mathrm H}\sum_{l=0}^{N}a_l\ket l$.

\begin{enumerate}
\item \textbf{Setup.} Compute
\begin{align*}
    j_1&=\frac{N+\alpha+\beta}{2}, & j_2&=\frac N2, \\
    M&=\frac{\beta-\alpha}{2}, & J_{\min}&=\frac{\alpha+\beta}{2}, \\
    m_-&=-j_2, & T&=2j_2=N.
\end{align*}
Reversibly relabel
$l$ as $J=J_{\min}+l$ into a register $\mathcal{K}$, initialize $M$ into a register $\mathcal{M}$, and clear register
$\mathcal{L}$ containing the $\ell$ labels. Initialize register $\mathcal P$ to the basis state
$\mathbf S_{\max}
    =\left(S_2=1,S_3=\frac32,\ldots,S_T=\frac T2=j_2\right)$,
with an empty path when $T<2$.  

\item \textbf{Inverse Racah sequence.} The previous state corresponds to the ordering
\begin{align*}
    \left(j_1(((b_1b_2)_{S_2}b_3)_{S_3}\cdots b_T)_{S_T}\right)_{J},
\end{align*}
apply the inverse of the Racah rotations in the order
$t=T,T-1,\ldots,2$.  

This restores the sequential path
$K_1,\ldots,K_{T-1}$ and maps to the ordering
$(((j_1b_1)_{K_1}b_2)_{K_2}\cdots b_T)_{J}$.  There are $T-1$ moves for
$T\geq1$.

\item \textbf{Undo the sequential coupling.} For
$t=T,T-1,\ldots,1$, apply $\mathsf{CG}_{1/2}^{\dagger}$.

\item \textbf{Compress the symmetric constituent register.} Apply the inverse of
the peeling sequence for
$s=\frac12,1,\ldots,j_2$, to implement $\ket{D^T_{\,j_2+m_2}}\longmapsto\ket{j_2,m_2}.$ Here $m_2=m_-+x$.  Uncompute the constituent and work registers used by the
fusion steps; the path register was already consumed by the inverse CG cascade.

\item \textbf{Postprocessing.} Apply the
inverse encoding $E_M^{\dagger}: \ket{j_1,M-m_--x}\ket{j_2,m_-+x}
    \longmapsto \ket{x}_{\mathcal X}$,
and apply $G_{\mathsf X}\ket{x}=(-1)^x\ket{x}$.  
\end{enumerate}
\textbf{Return:}
$\sum_{l=0}^{N}a_l\sum_{x=0}^{N}\widehat Q_l(x;\alpha,\beta,N)\ket{x}_{\mathcal X}$.
\end{algorithm}

\begin{theorem}[Quantum Hahn Transform]
\label{thm:quantum_hahn}
Let $\alpha,\beta,N\in\Z_{\geq0}$ and $0<\epsilon<1$.  In the reversible
arithmetic and controlled-rotation model described above, Algorithm
\ref{alg:hahn_cg_qubit_registers} implements, to operator-norm error at most
$\epsilon$, the quantum Hahn transform
\begin{align}
    U_{\mathrm H}:\quad
    \sum_{l=0}^{N}a_l\ket{l}
    \longmapsto
    \sum_{x=0}^{N}
    \left(
        \sum_{l=0}^{N}
        \widehat Q_l(x;\alpha,\beta,N)a_l
    \right)\ket{x},
\end{align}
with all ancillary registers returned to their initial states.  This uses
\begin{align}
        \mathcal{O}\!\left(
            N\operatorname{polylog}\!\left(\frac{N+\alpha+\beta+1}{\epsilon}\right)
        \right)
        &\quad\text{one-qubit and two-qubit gates},                              \\
        \mathcal{O}\!\left(
            N+\operatorname{polylog}\!\left(\frac{N+\alpha+\beta+1}{\epsilon}\right)
        \right)
        &\quad\text{qubits}.
\end{align}

\end{theorem}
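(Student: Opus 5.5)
The proof splits into exact correctness in the ideal model, followed by a resource and error count. For correctness, we trace the algorithm forwards on a basis input $\ket{l}$. Step 1 produces $\ket{J_{\min}+l,M}\ket{\mathbf S_{\max}}$, which in the tree $(j_1(\cdots b_T)_{S_T})_J$ is exactly the coupled vector with $V_{j_2}$ realized as the maximal-spin symmetric block. The forward analysis chain described before the algorithm gives
\begin{align*}
R\,\mathsf{CG}\,P\,E_M\ket{x}
=\sum_k C^{k,M}_{j_1,M-m_--x;\,j_2,m_-+x}\ket{k,M}\ket{\mathbf S_{\max}},
\end{align*}
where $P$ is the peeling map, $\mathsf{CG}$ the sequential cascade, and $R$ the Racah sequence. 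The two facts needed are: (i) the sequential cascade on the Dicke state produces $\sum_k C\ket{k,M}\ket{\eta_k}$, because the Dicke vectors span a copy of $V_{j_2}$ inside $V_{1/2}^{\otimes T}$; and (ii) $R=\bigoplus_k I\otimes F_k$ maps $\ket{\eta_k}$ to $\ket{\mathbf S_{\max}}$. For (ii), we argue that the left-associated subtree $B_T$ with all $S_t$ maximal is precisely the symmetric subspace, so recoupling to the tree $(j_1 B_T)_k$ with $S_T=j_2$ reproduces the single-irrep CG coefficient. Steps 2--5 of Algorithm~\ref{alg:hahn_cg_qubit_registers} are the inverses of these maps in reverse order. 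Lemma~\ref{lem:equiv_hahn_cg} then shows that the composite equals $D\,(U_{\mathrm{CG}}|_{\mathcal V_M})^\dagger = U_{\mathrm H}$, with all ancillas returned to $\ket{0}$ since $\mathbf S_{\max}$ is independent of $k$.

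We must also check that the inverse steps, applied to states supported outside the image, never take inadmissible branches. The coefficients assigned zero on inadmissible branches, together with unitarity of each two-level block, ensure this. We also need the encoding to be well defined: $j_2+m_2=x\in\mathbb Z$ and the doubled labels are integers. Both follow from $\alpha,\beta,N\in\Z_{\ge0}$.

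For resources, we count $T=N$ peeling steps, $N$ $\mathsf{CG}_{1/2}$ steps, and $N-1$ Racah moves. Each move consists of reversible arithmetic on $O(\log(N+\alpha+\beta+1))$-bit labels, computing a square root or angle to $b$ bits, one controlled rotation, and uncomputation. This costs $\operatorname{polylog}((N+\alpha+\beta+1)/\epsilon)$ gates, for a total of $O(N\operatorname{polylog})$. The qubit count is $N$ constituents plus $O(N)$ path bits plus a reused polylog workspace.

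For error, each of the $O(N)$ rotations is implemented to error $\delta=\Theta(\epsilon/(N+1))$ by choosing $b=O(\log((N+\alpha+\beta+1)/\epsilon))$. The amplitudes $\sqrt{(j\pm m)/(2j+1)}$ have arguments bounded by $\mathrm{poly}(N+\alpha+\beta)$, so fixed-point evaluation to this precision is polylog. The Racah entries from Appendix~\ref{sec:racah_computation} are of the same square-root-of-rational form. Since errors of unitaries add in operator norm, the total error is at most $O(N)\delta\le\epsilon$. The adjoint statement follows immediately.

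The main obstacle is claim (ii): that the Racah sequence maps the symmetric path vector $\ket{\eta_k}$ deterministically to $\ket{\mathbf S_{\max}}$, with the correct sign conventions so that no residual $k$-dependent phase appears. I would prove this by induction on $t$. After move $t$, the first $t$ constituents, coupled among themselves, carry spin $S_t=t/2$. This holds because the Dicke state lies in $\operatorname{Sym}^t\otimes(\cdot)$ for every prefix, and the symmetric subspace of $t$ qubits is exactly the spin-$t/2$ component. Consequently only the $S_t=S_{t-1}+\tfrac12$ branch has nonzero amplitude. Phase consistency would be checked against the Condon--Shortley convention used in Lemma~\ref{lem:equiv_hahn_cg}.
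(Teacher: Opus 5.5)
Your proposal is correct and follows the paper's proof essentially step for step. You identify the algorithm as the inverse of the peeling / sequential-CG / Racah analysis chain composed with the sign gauge of Lemma~\ref{lem:equiv_hahn_cg}, and you bound resources and error by $O(N)$ controlled-rotation layers over $O(\log(N+\alpha+\beta+1))$-bit labels, each synthesized to error $\Theta(\epsilon/(N+1))$ and combined by telescoping. Your symmetric-prefix argument for why the Racah sweep sends $\ket{\eta_k}$ to $\ket{\mathbf S_{\max}}$ justifies a step the paper only asserts, and it leaves no residual phase: the maximal-spin coupled vectors are exactly the Dicke states with positive (stretched Clebsch--Gordan) coefficients.
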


\begin{proof}
For the  labels in Algorithm~\ref{alg:hahn_cg_qubit_registers},
\begin{align*}
    j_1=\frac{N+\alpha+\beta}{2},\qquad
    j_2=\frac N2,\qquad
    M=\frac{\beta-\alpha}{2},\qquad
    J_{\min}=\frac{\alpha+\beta}{2}.
\end{align*}
They satisfy
\begin{align*}
    m_-=-j_2,\qquad m_+=j_2,\qquad
    T=2j_2=N.
\end{align*}
After the setup step is done, an input
$\sum_l a_l\ket{l}$ has become
\begin{align*}
    \sum_{l=0}^{N}a_l
    \ket{J_{\min}+l,M}\ket{\mathbf S_{\max}}.
\end{align*}
For $K_T = J$, the forward Racah sequence maps
multiplicity vector $\ket{\eta_J}_{\mathrm{seq}}$ to $|\mathbf{S}_{\max}\rangle$.
Consequently, the inverse sweep maps
\begin{align*}
    \ket{J,M}\ket{\mathbf S_{\max}}
    \longmapsto
    \ket{J,M}\ket{\eta_J}_{\mathrm{seq}}.
\end{align*}
The inverse spin-$\frac12$ CG cascade then reverses the sequential coupling of $\left((V_{j_1}\otimes V_{1/2})\otimes \cdots V_{1/2}\right)$. Compressing the symmetric constituent register reconstructs the compact state $\ket{j_2,m_2}$. The resulting state is
\begin{align*}
    \sum_{x=0}^{N}
    \left(
        \sum_{l=0}^{N}a_l
        C_{j_1,M-m_--x;\,j_2,m_-+x}^{J_{\min}+l,M}
    \right)
    \ket{j_1,M-m_--x}\ket{j_2,m_-+x}.
\end{align*}
Applying $E_M^{\dagger}$ replaces the two uncoupled labels by $\ket{x}$.
The phase $(-1)^x$, and
Lemma~\ref{lem:equiv_hahn_cg} gives
\begin{align*}
    (-1)^x
    C_{j_1,M-m_--x;\,j_2,m_-+x}^{J_{\min}+l,M}
    =\widehat Q_l(x;\alpha,\beta,N).
\end{align*}
The returned state is therefore exactly $U_{\mathrm H}\sum_l a_l\ket l$. Every auxiliary label is either consumed by an
inverse basis change or reversibly uncomputed, so all ancillary registers are
clean at the end. 

It remains to bound the resources needed for an approximation. Step 4 uses $T$ inverse maximal-spin peeling rotations, Step 3 uses $T$ elementary inverse CG rotations, and inverse recoupling uses $T-1$ Racah rotations. Path preparation,
the sign gauge, and reversible label updates require $\mathcal O(T)$ further
elementary arithmetic operations.  Hence the construction has $\mathcal O(N)$
controlled-rotation and arithmetic layers. 

Every spin label has magnitude at most
$j_1+j_2=N+(\alpha+\beta)/2$, and therefore has
$\mathcal O(\log(N+\alpha+\beta+1))$ bits.  Standard reversible fixed-point
arithmetic evaluates each required rational square root or rotation angle,
synthesizes the corresponding controlled one-qubit rotation to error at most
$\delta$, and uncomputes its workspace using
\begin{align*}
    \operatorname{polylog}\!\left(
        \frac{N+\alpha+\beta+1}{\delta}
    \right)
\end{align*}
one-qubit and two-qubit gates and the same polylogarithmic number of reusable
work qubits.  Taking $\delta=\Theta(\epsilon/(N+1))$ and using the telescoping
inequality for products of unitaries bounds the total operator-norm error by
$\epsilon$.  Since $N+1\leq N+\alpha+\beta+1$, substituting this value of
$\delta$ and multiplying by the $\mathcal O(N)$ layers yields the stated gate
bound.

The constituent register and a branch-bit encoding of each coupling path use
$\mathcal O(N)$ qubits; the running spin labels and arithmetic workspace are
recomputed as needed and reused.  This proves the qubit bound.  Finally,
\begin{align*}
    \|\widetilde U^{\dagger}-U^{\dagger}\|
    =\|\widetilde U-U\|,
\end{align*}
so the reverse circuit has the same error and resource bounds.  

\end{proof}

\section{Continuous quantum polynomial transforms} \label{sec:cqopt}

The continuous QOPTs considered here correspond to the polynomial solutions to the continuous hypergeometric equation, 
\all{ 
\mathcal{H} = P_{2}(x) \frac{d^{2}}{dx^{2}} + P_{1}(x) \frac{d}{dx} + \lambda=0.  
} {hypergeo}
where $P_{1}, P_{2}$ are polynomials with $\deg(P_{2}(x)) \leq 2$ and $\deg(P_{1}(x)) \leq 1$. Due to these degree constraints, the spaces $P_{n}(x)$ of polynomials of degree at most $n$ are preserved by the hypergeometric operator, that is, $\mathcal{H}P_{n} \subset P_{n}$. The preservation of polynomial subspaces leads to a second structural property: if $f$ solves \eqref{hypergeo}, then $f'$ solves a hypergeometric equation of the same form. This property imposes strong constraints on the derivatives of the orthogonal polynomial families satisfying the hypergeometric equation. By a classical result of Bochner, the Hermite, Laguerre and Jacobi polynomials are the only orthogonal polynomial families that solve the hypergeometric equation.

The multiplier $\rho(x)$ converts the hypergeometric equation to self-adjoint Sturm--Liouville form. It satisfies the operator-level identity $\rho(x) P_{2}(x) \frac{d^{2}}{dx^{2}} + \rho(x) P_{1}(x) \frac{d}{dx} + \rho(x)\lambda=\frac{d}{dx}(\rho(x) P_{2}(x) \frac{d}{dx}) + \rho(x) \lambda$. This identity is equivalent to
$\frac{d}{dx}(\rho(x) P_{2}(x)) = \rho(x) P_{1}(x)$ and therefore yields the following condition on the logarithmic derivative of $\rho$:
\all{ 
\frac{\rho'(x)}{\rho(x)} = \frac{ P_{1}(x) - P_{2}'(x)} {P_{2}(x)}. 
} {hypergeom}
 Table \ref{tab:1} summarizes the calculation of the weight function for the Hermite, Laguerre and Jacobi polynomials. 

Further, Schr\"odinger operators having the classical orthogonal polynomials as  eigenstates can be obtained from the hypergeometric equation. The Sturm--Liouville equation
$$
\frac{d}{dx}\!\left[\rho(x) P_{2}(x)\,y'(x)\right] + \lambda\,\rho(x)\,y(x) = 0
$$
becomes a Schr\"odinger equation under the change of variables $\psi(x) = \sqrt{\rho(x)}\,y(x)$. After applying this ground state transformation, the Schr\"odinger equation has the form
$$
-\psi''(\xi) + V(\xi)\,\psi(\xi) = E\,\psi(\xi),
$$
for some computable potential $V(x)$ and with ground state energy $E_{0}=0$ and ground state given by $\psi_0 = \sqrt{\rho(x)}$.

\begin{table}

\renewcommand{\arraystretch}{1.6}
\centering
\resizebox{\textwidth}{!}{%
\begin{tabular}{|c|c|c|c|c|c|}
\hline
\textbf{Family} & $P_2(x)$ & $P_1(x)$ & \textbf{Interval} & \textbf{Weight} $\rho(x)$ & \textbf{Coefficient} $\rho(x)P_2(x)$ \\
\hline
Hermite & $1$ & $-2x$ & $(-\infty,\infty)$ & $e^{-x^{2}}$ & $e^{-x^{2}}$ \\
\hline
Laguerre $(\alpha)$ & $x$ & $\alpha+1-x$ & $(0,\infty)$ & $x^{\alpha}\,e^{-x}$ & $x^{\alpha+1}\,e^{-x}$ \\
\hline
Jacobi $(\alpha,\beta)$ & $1-x^{2}$ & $\beta-\alpha-(\alpha+\beta+2)x$ & $(-1,1)$ & $(1-x)^{\alpha}(1+x)^{\beta}$ & $(1-x)^{\alpha+1}(1+x)^{\beta+1}$ \\
\hline
\end{tabular}
}
\caption{Continuous orthogonal polynomial families in Sturm-Liouville form.}
\label{tab:1}
\end{table} 
The ground state transformed Schr\"odinger operators have the corresponding orthonormal polynomials as eigenstates. 
Table \ref{tab:2} computes the Schr\"odinger operators obtained after 
applying the ground state transform with weight $\sqrt{\rho(x)}$ and an appropriate change of variables.  The efficient quantum orthogonal polynomial transform is equivalent to fast forwardability of the Schr\"odinger operators in Table \ref{tab:2}. The Schr\"odinger operators corresponding to the Hermite, Laguerre and Jacobi polynomials are the quantum harmonic oscillator, the radial oscillator and the trigonometric P\"oschl-Teller operator.

\begin{table}[H] 

\renewcommand{\arraystretch}{2.2}
\centering
\resizebox{\textwidth}{!}{%
\begin{tabular}{|c|c|c|c|c|}
\hline
\textbf{Family} & \textbf{Variable} & \textbf{Schr\"odinger Operator} & \textbf{Eigenvalues} $E_n$ & \textbf{Eigenfunctions} $\psi_n$ \\
\hline
Hermite  &
$\xi = x$ &
$-\tfrac12\Delta + \tfrac12 x^{2} $ &
$n + \tfrac12$ &
$e^{-x^{2}/2}\,H_n(x)$ \\
\hline
Laguerre  &
$(x, \alpha) = (r^{2}, \ell+\tfrac12) $ &
$-\tfrac12 \Delta + \!\left[\dfrac{\ell(\ell+1)}{2r^{2}} + \tfrac12 r^{2}\right]\!$ &
$2n + \ell + \tfrac32$ &
$r^{\ell+1}\,e^{-r^{2}/2}\,L_n^{(\ell+\tfrac12)}(r^{2})$ \\
\hline
Jacobi &
$x = \cos 2\theta$ &
$-\tfrac12\Delta + \tfrac12\!\left[\dfrac{\alpha^{2}-\tfrac14}{\sin^{2}\!\theta} + \dfrac{\beta^{2}-\tfrac14}{\cos^{2}\!\theta}\right]\!$ &
$\tfrac12\!\left(2n+\alpha+\beta+1\right)^{2}$ &
$\sin^{\alpha+\tfrac12}\!\theta\,\cos^{\beta+\tfrac12}\!\theta\,P_n^{(\alpha,\beta)}(\cos 2\theta)$ \\
\hline
\end{tabular}
}
\caption{Schr\"odinger operators with classical orthogonal polynomial families as eigenfunctions.}
\label{tab:2}
\end{table}
The quantum Hermite transform has been addressed in \cite{jain2025efficient}. In this work we address the quantum Laguerre transform with two different approaches. The first method works for integer values of the Laguerre parameter and reduces the transform to fast-forwarded quantum harmonic oscillator evolution on a Cartesian grid. The second method uses an efficient quantum Hankel transform on a logarithmic grid and works for all values of the Laguerre parameter. We also consider implementing an efficient Jacobi transform by reducing it to oscillator evolution. The bottleneck is a poly-logarithmic implementation of the Hahn transform. Poly-logarithmic Jacobi transforms for integer-valued Jacobi parameters will be addressed in future work.

\subsection{Quantum Laguerre transform}

\label{sec:radial_osc_not}

The quantum Laguerre transform is the continuous QOPT associated with the one-parameter family of Laguerre polynomials. The Laguerre polynomials arise as eigenfunctions of radial oscillators, so the quantum Laguerre transform can be realized in two ways. The first method uses the separation of two-dimensional isotropic-oscillator eigenfunctions in Cartesian and polar bases. A global coordinate change reduces the Laguerre transform to discrete oscillator evolutions, which can then be implemented on a Cartesian grid using the quantum Hermite transform framework~\cite{jain2025efficient}. The second method uses a logarithmic radial grid and fast-forwards the radial oscillator Hamiltonian on that grid. We develop both methods. The Cartesian reduction applies when the Laguerre parameter $\nu \in \Z$, whereas the logarithmic-grid method applies for all $\nu>0$.

In order to set up the Laguerre transform problem, we first work out the relation between the Laguerre polynomials and the 2D isotropic oscillator eigenfunctions. The 2D isotropic quantum harmonic oscillator has the Hamiltonian
\begin{align}
    \hat{H}_{2}=\frac{1}{2}\left(\hat{x}^2+\hat{y}^2+\hat{p}_{x}^2+\hat{p}_{y}^2\right)
\end{align}
The 2D isotropic oscillator is an example of a superintegrable system. It has degenerate energy levels, and its eigenfunctions separate in both the Cartesian and polar coordinate bases. The splitting of the eigenfunctions in Cartesian coordinates and the Schwinger representations have been discussed in the preliminaries. Here we examine the factorization of the eigenfunctions in the polar basis. The splitting is obtained by observing that the angular momentum operator
\begin{align}
    L_z=-i\left(x\part{}{y}-y\part{}{x}\right)=-i\part{}{\theta}
\end{align}
commutes with $\hat{H}_2$ and separates each energy eigenspace into radial sectors indexed by $m\in \Z$. Separation in polar coordinates writes each joint eigenfunction as a Laguerre polynomial radial factor multiplied by the angular part $e^{im\theta}$. The normalized eigenfunctions are given by
\begin{align}
        \Psi_{n_r,m}(r,\theta)
        =\sqrt{\frac{n_r!}{\pi(n_r+|m|)!}}\,
            r^{|m|}e^{-r^2/2}L_{n_r}^{(|m|)}(r^2)e^{im\theta},
        \label{eq:radial-eigenfunction}
\end{align}
with radial degree $n_r\in\N_0$ and angular index $m\in\Z$. The two conserved
labels are the energy $2n_r+|m|+1$ and the angular momentum $m$.

Fixing an angular sector leaves a single-parameter Laguerre family in the radial
variable with order $\nu=|m|$. After the unitary conjugation $R(r)\mapsto r^{1/2}R(r)$ from $L^2(\R_+,r\,dr)$ to $L^2(\R_+,dr)$, the radial factors become
\begin{align}
        \phi_{n,\nu}(r)
        =\sqrt{\frac{2n!}{\Gamma(n+\nu+1)}}\,
            r^{\nu+1/2}e^{-r^2/2}L_n^{(\nu)}(r^2),
        \qquad 0\le n<N,
\end{align}
an orthonormal system in $L^2(\R_+,dr)$.  Inside a fixed order, the polynomial
degree $n$ is the remaining label.

The Laguerre transform is defined to be the isometry that maps a degree register to samples of the fixed order on the sampled grid, 
\begin{align}
    \sum_{n<N}a_n\ket n\ket0_{\rm grid}
    \longmapsto
    \ket0\sum_{n<N}a_n\ket{\phi_{n,\nu}^{(L)}}_{\rm grid},
\end{align}
where $\ket{n}$ is a label register and $\ket{\phi_{n,\nu}^{(L)}}$ denotes the normalized grid state corresponding to $\phi_{n,\nu}$ obtained by restricting the continuous state to the discrete grid and re-normalizing.

The above computation of the Laguerre eigenfunctions applies to $\nu \in \Z_{+}$. In order to extend to $\nu>0$, we use the Friedrichs realization of the radial oscillator defined below, 
\begin{align*}
    H_\nu=\frac12(B_\nu+r^2),
    \qquad
    B_\nu=-\frac{d^2}{dr^2}+\frac{\nu^2-1/4}{r^2}.
\end{align*}
The eigenfunctions are the associated Laguerre polynomials with parameter $\nu$ and the eigenvalues are $2n+\nu+1$, so the polynomial degree appears as a spectral
label. At the critical order $\nu=0$, the Friedrichs form selects the nonlogarithmic $r^{1/2}$ branch. This self-adjoint domain fixes the eigenbasis used by the transform. Phase estimation of
$\exp[-i\pi(H_\nu-(\nu+1)I)/M]$ for normalization constant $M$ reads this spectral label and erases the degree
register.

We present two finite grid based QLT constructions. The first route follows the idea that in an angular momentum sector $m$, the radial degree $n$ is represented by the circular mode occupations $(n+m_+, n+m_-)$ with $m_+=\max(m, 0)$ and $m_-=\max(-m, 0)$. This gives the Cartesian construction for integer orders. The state is converted to Cartesian modes, sampled by two quantum Hermite transforms~\cite{jain2025efficient}, and corrected by a diagonal phase that removes $e^{im\theta}$. 

For the Cartesian construction with integer $\nu=|m|$, introducing the circular
modes
\begin{align}
    a_{\pm}^{\dagger}=\frac{1}{\sqrt{2}}(a_1^{\dagger}\pm i a_2^{\dagger}),
    \qquad
    a_{\pm}=\frac{1}{\sqrt{2}}(a_1\mp i a_2),
\end{align}
In this basis, $\hat H_2=N_++N_-+1$ and $\hat L_z=N_+-N_-$. The state $\ket{n_r+m_+,n_r+m_-}_{+}$, where $m_+=\max(m,0)$ and $m_-=\max(-m,0)$, has radial quantum number $n_r$ and angular momentum $m$.

Fix $m\in\Z$ and write $m_+=\max(m,0)$ and
$m_-=\max(-m,0)$.  With the Hermite functions defined in the oscillator
notation, the circular-mode wavefunction carries a factor $(-1)^n$ relative
to the Laguerre normalization in Equation~\eqref{eq:radial-eigenfunction}.  Introduce the exact
diagonal correction
\begin{align}
    D_{\rm rad}\ket n=(-1)^n\ket n.
\end{align}
The alternate convention $\psi_j\mapsto(-1)^j\psi_j$ changes this formula by
the fixed sector-wide sign $(-1)^{|m|}$, which may be absorbed into the
circuit's global phase.
The reversible embedding
\begin{align}
    A_m\ket n\ket0=\ket{n+m_+,n+m_-}
\end{align}
loads the radial degree into two ordinary Cartesian Fock registers.  We take
$A_m$ to be an in-place basis permutation on suitably padded registers; its
carry bits are uncomputed before the mode change.  Lemma~\ref{CBS} fixes the
direction of that change:
\begin{align}
    \mathsf U_{\textup{C2R}}A_m\ket n
    =\ket{n+m_+,n+m_-}_{+}=\ket{n,m}.
\end{align}
The continuum Hermite transform on each Cartesian mode satisfies
\begin{align}
 (\mathsf{HT}\otimes\mathsf{HT})\mathsf U_{\textup{C2R}}A_m\ket n
 =(-1)^n\ket{\Psi_{n,m}}.
\end{align}
The diagonal correction therefore gives the phase-consistent identity
\begin{align}
 (\mathsf{HT}\otimes\mathsf{HT})\mathsf U_{\textup{C2R}}A_mD_{\rm rad}\ket n
 =\ket{\Psi_{n,m}},
 \label{eq:laguerre-cartesian-continuum-map}
\end{align}
where the coordinate wavefunction of the state on the right is the function
in Equation~\eqref{eq:radial-eigenfunction}.

For the finite circuit, use the centered Hermite grid of
Theorem~\ref{thm:qht}.  Let $L_{\rm C}$ be its one-coordinate dimension,
\begin{align}
    \Delta_{\rm C}=\sqrt{\frac{2\pi}{L_{\rm C}}},\qquad
    x_a=a\Delta_{\rm C},\qquad y_b=b\Delta_{\rm C},
\end{align}
where $-L_{\rm C}/2\leq a,b<L_{\rm C}/2$, and set
$r_{ab}=(x_a^2+y_b^2)^{1/2}$ and
$\theta_{ab}=\operatorname{atan2}(y_b,x_a)$.  Define the raw sampled column
\begin{align}
    \ket{\widetilde\Psi_{n,m}^{(L_{\rm C})}}
    =\Delta_{\rm C}\sum_{a,b}
      \Psi_{n,m}(r_{ab},\theta_{ab})\ket{a,b}.
\end{align}
Its normalized version is
\begin{align}
    \ket{\Psi_{n,m}^{(L_{\rm C})}}
    =\frac{\ket{\widetilde\Psi_{n,m}^{(L_{\rm C})}}}
    {\|\widetilde\Psi_{n,m}^{(L_{\rm C})}\|}.
\end{align}
The diagonal grid phase
\begin{align}
    \mathsf P_m\ket{a,b}
    =e^{-im\theta_{ab}}\ket{a,b}
    \label{eq:laguerre-angular-phase-removal}
\end{align}
removes the Fourier mode.  At the origin we take $\theta_{00}=0$; this choice
is immaterial for $m\ne0$ because the continuum amplitude vanishes there.
The radial Cartesian-grid column is
\begin{align}
 \ket{\Phi_{n,m}^{\mathrm{Cart},(L_{\rm C})}}
 :=\mathsf P_m\ket{\Psi_{n,m}^{(L_{\rm C})}}.
 \label{eq:laguerre-cartesian-target}
\end{align}
Its amplitudes are proportional to
$r_{ab}^{|m|}e^{-r_{ab}^2/2}L_n^{(|m|)}(r_{ab}^2)$, with the normalization
given in Equation~\eqref{eq:radial-eigenfunction}.

\begin{algorithm}[H]
\caption{Cartesian-grid Laguerre transform through the QHT}
\label{alg:laguerre-cartesian-qht}
\textbf{Input:} An integer angular sector $m$, a cutoff $N$, an error
$0<\epsilon<1$, and a normalized state
$\sum_{n<N}a_n\ket n$.

\textbf{Output:} A two-dimensional Cartesian-grid approximation to
$\sum_{n<N}a_n\ket{\Phi_{n,m}^{\mathrm{Cart},(L_{\rm C})}}$.

\begin{enumerate}
    \item Apply $D_{\rm rad}$ and then $A_m$ to load the corrected
    circular-mode occupations.
    \item Apply $\mathsf U_{\textup{C2R}}$ from Lemma~\ref{CBS}.
    \item Apply $\mathsf{QHT}\otimes\mathsf{QHT}$ to the two Cartesian
    oscillator modes.
    \item Apply the diagonal phase $\mathsf P_m$ in
    \eqref{eq:laguerre-angular-phase-removal}.  Omit this step when the full
    angular wavefunction $\Psi_{n,m}$ is required.
\end{enumerate}
\textbf{Return:} The two Cartesian grid registers, with all arithmetic work
uncomputed.
\end{algorithm}

\begin{corollary}[Cartesian-grid Laguerre transform from the QHT]
\label{cor:laguerre-cartesian-qht}
Let $m\in\Z$, $N\geq1$, and $0<\epsilon<1$.  Put
\begin{align}
    N_{\rm Cart}=2N+|m|,
    \qquad
    N_{\rm F}=\Theta\!\left(N_{\rm Cart}+\log(1/\epsilon)+1\right),
\end{align}
where the implicit constant is chosen as in
Theorem~\ref{thm:digital_implementation_gaussian_optical_gates}.
There is a power-of-two Cartesian grid with
\begin{align}
    L_{\rm C}=O\!\left(N_{\rm F}^{9/4}\epsilon^{-13/4}\right)
\end{align}
points per coordinate and a digital circuit
$\mathsf{LagCart}_{m,N,\epsilon}$ such that, for every normalized
$\sum_{n<N}a_n\ket n$,
\begin{align}
 \left\|
 \mathsf{LagCart}_{m,N,\epsilon}
 \left(\sum_{n<N}a_n\ket n\ket0\ket0_{\rm work}\right)
 -\left(\sum_{n<N}a_n
 \ket{\Phi_{n,m}^{\mathrm{Cart},(L_{\rm C})}}\right)
 \ket0_{\rm work}
 \right\|_2\leq\epsilon.
 \label{eq:laguerre-cartesian-qht-guarantee}
\end{align}
The gate count and live workspace are polynomial in
$\log(N_{\rm F}+1)$ and $\log(1/\epsilon)$.
\end{corollary}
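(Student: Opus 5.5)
The plan is to realize $\mathsf{LagCart}_{m,N,\epsilon}$ as Algorithm~\ref{alg:laguerre-cartesian-qht} and to split the error into three pieces: the exact number-basis steps, the Gaussian mode change plus the two QHTs, and the finite-grid renormalization. Steps 1 and 4, $D_{\rm rad}$, $A_m$ and $\mathsf P_m$, are exact. $D_{\rm rad}$ and $\mathsf P_m$ are diagonal phases. $\mathsf P_m$ requires reversibly computing $\theta_{ab}=\operatorname{atan2}(y_b,x_a)$ to precision $\epsilon/4$; this costs $\operatorname{polylog}(L_{\rm C}/\epsilon)$ gates, and the phase error is at most $|m|\cdot\delta\theta$, so we take $\delta\theta=\epsilon/(4(|m|+1))$, still polylogarithmic. $A_m$ is reversible addition on $O(\log N_{\rm Cart})$-bit registers. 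After $A_m$, both Cartesian occupations satisfy $n+m_\pm\le N+|m|\le N_{\rm Cart}$. Since $\mathsf U_{\rm C2R}$ conserves total photon number, and the total is $2n+|m|<N_{\rm Cart}$, its output stays inside $\mathcal H_{N_{\rm Cart}}^{\otimes 2}$. This number conservation is why the relevant cutoff is $N_{\rm Cart}$ rather than anything larger.

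The middle block is $(\mathsf{QHT}\otimes\mathsf{QHT})\mathsf U_{\rm C2R}$. Rather than invoke Theorem~\ref{thm:digital_implementation_gaussian_optical_gates} with grid evolution, I would implement $\mathsf U_{\rm C2R}$ directly in the Fock encoding. It is a product of $R=4$ fixed bounded gates, namely two phase rotations and two beam splitters. Each beam splitter acts on the fixed-total-number sector of dimension at most $N_{\rm Cart}$. One option is to apply Theorem~\ref{thm:digital_implementation_gaussian_optical_gates} with $K=2$ and $RM=O(1)$, at cutoff $N_{\rm F}$ chosen as there, and then apply $\mathsf{QHT}^{\otimes 2}$ from Theorem~\ref{thm:qht} at cutoff $N_{\rm F}$ with error $\epsilon/4$ each. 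This gives $L_{\rm C}=O((5^{O(1)}N_{\rm F})^{9/4}\epsilon^{-13/4})=O(N_{\rm F}^{9/4}\epsilon^{-13/4})$, rounded up to a power of two. The gate count is $O((\log N_{\rm F}+\log(1/\epsilon))^3\log(1/\epsilon)+\log^2 L_{\rm C})$, which is polynomial in $\log N_{\rm F}$ and $\log(1/\epsilon)$. The triangle inequality then bounds the distance from $\sum_n a_n\ket{\Psi_{n,m}}$ sampled, via \eqref{eq:laguerre-cartesian-continuum-map}, by the corresponding tensor products of the discretized Hermite states $\ket{\psi_{j}^{(L)}}\ket{\psi_{k}^{(L)}}$.

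The main obstacle is the last step: identifying $\sum_{j,k}c^{(n)}_{jk}\ket{\psi_j^{(L)}}\ket{\psi_k^{(L)}}$ with the normalized target column $\ket{\Psi^{(L_{\rm C})}_{n,m}}$. Pointwise, the identity $\Psi_{n,m}(x,y)=\sum_{j+k=2n+|m|}c^{(n)}_{jk}\psi_j(x)\psi_k(y)$ is exact, with unit-norm coefficients $c^{(n)}$. Hence the raw sample $\ket{\widetilde\Psi_{n,m}^{(L_{\rm C})}}$ equals exactly $\sum c^{(n)}_{jk}\ket{\psi_j^{(L)}}\ket{\psi_k^{(L)}}$, because the $(2\pi/L)^{1/4}$ factors multiply to $\Delta_{\rm C}$. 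What remains is to show that renormalization changes this by at most $\epsilon/4$ uniformly over superpositions. I would use the near-orthonormality of the first $N_{\rm F}$ discretized Hermite states, which is the Gram bound underlying Theorem~\ref{thm:qht}: the Gram matrix $G$ satisfies $\|G-I\|\le\epsilon/8$ at this $L_{\rm C}$. Then the map $\ket{j,k}\mapsto\ket{\psi_j^{(L)}\psi_k^{(L)}}$ has singular values in $1\pm\epsilon/8$, so the raw columns and their normalizations differ by $O(\epsilon)$ in operator norm over the degree register. Finally I would combine everything by the telescoping bound, checking that the $\mathsf P_m$ phase commutes with nothing that matters, since it is applied last and is unitary.
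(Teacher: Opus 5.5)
Your proposal is correct and follows essentially the same route as the paper: the number-basis steps $D_{\rm rad}$, $A_m$, $\mathsf P_m$ are treated as exact up to arithmetic precision, and photon-number conservation of $\mathsf U_{\textup{C2R}}$ keeps the cutoff at $N_{\rm F}$. You then invoke Theorem~\ref{thm:digital_implementation_gaussian_optical_gates} with $K=2$ and $RM=O(1)$, apply two QHTs, and use near-orthonormality of the sampled Hermite states to control the column renormalization; your exact identification of the raw column $\ket{\widetilde\Psi_{n,m}^{(L_{\rm C})}}$ with the sampled Hermite synthesis of $\mathsf U_{\textup{C2R}}A_mD_{\rm rad}\ket n$ is a sharper version of the paper's remark that the raw column norms lie within the allocated error of one.

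The differences are minor. The paper observes that the inverse QHTs inside the Gaussian-gate theorem can be cancelled against the forward QHTs, whereas you compose them naively, which is still polylogarithmic. Drop the ``direct Fock encoding'' alternative for the beam splitters, since a generic sector-by-sector synthesis would cost $\operatorname{poly}(N_{\rm Cart})$. Also rebalance your constant error split, which as written sums to more than $\epsilon$.
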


\begin{proof}
For $n<N$, the state $A_m\ket n$ has total occupation
$2n+|m|<N_{\rm Cart}\leq N_{\rm F}$.  Every factor in
$\mathsf U_{\textup{C2R}}$ preserves this
total, so both Cartesian modes remain below the cutoff $N_{\rm F}$.  Lemma
\ref{CBS}, the stated Hermite convention, and Equation~\eqref{eq:radial-eigenfunction} give
\eqref{eq:laguerre-cartesian-continuum-map}; multiplication by
$\mathsf P_m$ gives \eqref{eq:laguerre-cartesian-target}.

Apply Theorem~\ref{thm:qht} to the two oscillator registers and use
Theorem~\ref{thm:digital_implementation_gaussian_optical_gates} for the four
fixed-angle Gaussian factors in $\mathsf U_{\textup{C2R}}$.  Here $K=2$, and
the number of factors and their amplification parameters are constants.
In the composed circuit, the inverse QHTs at the output of the digital
Gaussian implementation cancel against the two forward QHTs in
Algorithm~\ref{alg:laguerre-cartesian-qht}.  One may therefore run the two
initial QHTs followed directly by the quadrature-grid realization of
$\mathsf U_{\textup{C2R}}$.
Allocating a constant fraction of the error to each call gives the displayed
choice of $L_{\rm C}$ and polynomial dependence on the register lengths.  The
phase in \eqref{eq:laguerre-angular-phase-removal} is evaluated reversibly to
the same accuracy.  The QHT sampling estimate also places every raw column
norm within the allocated error of one; normalizing the columns changes the
bound by another constant multiple of that error.  The cited bounds act on
the full cutoff space, so the estimate holds for arbitrary coherent
superpositions of $n$.
\end{proof}

The Cartesian construction is tied to the angular sectors of the planar
oscillator and hence to integer order $\nu=|m|$.  Its output occupies an
$L_{\rm C}\times L_{\rm C}$ grid.  The construction on the logarithmic grid below uses one
radial register and accepts any fixed nonnegative order. The finite targets differ, while both sample the Laguerre functions
obtained from the same radial eigenvalue equation.

\subsection{Quantum Jacobi transform}
We briefly describe the relation of the quantum Jacobi transform to coupled radial oscillators and the Hahn transform. The discussion identifies the bottlenecks for obtaining an efficient quantum Jacobi 
transform. 

The quantum Hahn and Jacobi transforms are closely related to the embedding of the angular
P\"oschl--Teller Hamiltonian into the Smorodinsky--Winternitz oscillator
\all{
H_{\mathrm{SW}} = \tfrac{1}{2}\!\left(-\Delta + x_1^{2}+x_2^{2}
+ \tfrac{g_1}{x_1^{2}} + \tfrac{g_2}{x_2^{2}}\right),
\qquad g_j = k_j(k_j-1),\quad k_j > \tfrac{1}{2},
} {SWO} 
acting on $L^{2}((0,\infty)^{2}, dx_1\,dx_2)$. The Smorodinsky--Winternitz oscillator is superintegrable, as is the isotropic 2D quantum harmonic oscillator.
The polar decomposition of the eigenfunctions of the oscillator is given by the following theorem. 

\begin{theorem}\label{thm:sw-polar}
The eigenfunctions for the Smorodinsky--Winternitz Hamiltonian
$H_{\mathrm{SW}}$ are
\al{
\ket{n_r, l}
&= C_{n_r,l}\;
r^{2l + k_1 + k_2}\,e^{-r^{2}/2}\,
L_{n_r}^{\,2l + k_1 + k_2}(r^{2}) \\
&\quad \times (\sin\theta)^{k_2}(\cos\theta)^{k_1}\,
P_l^{(k_2 - 1/2,\, k_1 - 1/2)}(\cos 2\theta),
}
with normalization constant $C_{n_r,l}$ determined by the orthogonality
calculation. The eigenvalues for the commuting Hamiltonians are given by
\al{
H_{\mathrm{SW}}\,\ket{n_r, l}
&= (2 n_r + 2 l + k_1 + k_2 + 1)\,\ket{n_r, l}, \\
A(\theta)\,\ket{n_r, l}
&= (2 l + k_1 + k_2)^{2}\,\ket{n_r, l},
}
for $n_r, l \in \N_0$, where $L_{n_r}^{\alpha}$ is the associated
Laguerre polynomial and $P_l^{(\alpha,\beta)}$ is the Jacobi polynomial.
\end{theorem}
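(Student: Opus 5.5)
The approach is separation of variables in polar coordinates $x_1=r\cos\theta$, $x_2=r\sin\theta$ with $r>0$ and $\theta\in(0,\pi/2)$. I would first rewrite $H_{\mathrm{SW}}$ in these coordinates. Using $-\Delta=-\partial_r^2-\tfrac1r\partial_r-\tfrac1{r^2}\partial_\theta^2$ and $x_1^{-2}+\dots=r^{-2}(g_1/\cos^2\theta+g_2/\sin^2\theta)$, this gives
\begin{align*}
 2H_{\mathrm{SW}}=-\partial_r^2-\tfrac1r\partial_r+r^2+\tfrac{1}{r^2}A(\theta),\qquad A(\theta)=-\partial_\theta^2+\frac{g_1}{\cos^2\theta}+\frac{g_2}{\sin^2\theta}.
\end{align*}
The operator $A(\theta)$ commutes with $H_{\mathrm{SW}}$ because it acts only on $\theta$ and enters only through $r^{-2}A$. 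It is precisely the trigonometric P\"oschl--Teller operator of Table~\ref{tab:2}.

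The first step is to diagonalize $A$. I substitute the ansatz $(\sin\theta)^{k_2}(\cos\theta)^{k_1}P(\cos2\theta)$. The potentials $g_j=k_j(k_j-1)$ are exactly the values that cancel the singular terms produced by differentiating the prefactor, so what remains is an equation for $P$. Changing variables to $u=\cos2\theta$ should turn it into the Jacobi hypergeometric equation of Table~\ref{tab:1} with $(\alpha,\beta)=(k_2-\tfrac12,k_1-\tfrac12)$. Its eigenvalue $4l(l+\alpha+\beta+1)$ combines with the constant from the prefactor into $(2l+k_1+k_2)^2$. This can be checked against the Jacobi row of Table~\ref{tab:2} after the rescaling $2\theta$ versus $\theta$. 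The condition $k_j>\tfrac12$ selects the regular, square-integrable branch at $\theta=0,\pi/2$, so these functions form the complete orthogonal eigenbasis of the Friedrichs realization.

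The second step is the radial problem. On the eigenspace of $A$ with eigenvalue $\nu^2$, where $\nu=2l+k_1+k_2$, the radial operator is $-\partial_r^2-\tfrac1r\partial_r+\nu^2/r^2+r^2$. This is the two-dimensional radial oscillator of order $\nu$, that is, the Laguerre case of Section~\ref{sec:radial_osc_not} with measure $r\,dr$. Its eigenfunctions are $r^{\nu}e^{-r^2/2}L_{n_r}^{(\nu)}(r^2)$ with $2H$-eigenvalue $2(2n_r+\nu+1)$, which gives $H_{\mathrm{SW}}$-eigenvalue $2n_r+2l+k_1+k_2+1$. I would verify this directly by substituting $t=r^2$ into the Laguerre equation $tL''+(\nu+1-t)L'+n_rL=0$. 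The normalization constant $C_{n_r,l}$ then factors as the product of the Laguerre norm and the Jacobi norm, using $dx_1dx_2=r\,dr\,d\theta$.

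The main obstacle is the angular computation: tracking the constants, the factor $2$ from $u=\cos2\theta$, and the identification of which $k_j$ pairs with which Jacobi parameter. I would do this step first and carefully, and use the case $k_1=k_2=1$ as a sanity check, where $H_{\mathrm{SW}}$ reduces to the $g=0$ quadrant oscillator with odd Hermite functions. A secondary point is completeness of the product basis, which follows from completeness of each one-dimensional Sturm--Liouville family.
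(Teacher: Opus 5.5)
Your proposal is correct. Note that the paper gives no proof of Theorem~\ref{thm:sw-polar} at all; it asserts the polar separation as a standard fact. Your argument therefore supplies the missing derivation rather than paralleling one. It is economical, because both of your steps reduce to material the paper does contain:
\begin{itemize}
\item the angular step is the Jacobi row of Table~\ref{tab:2} with $\alpha=k_2-\tfrac12$, $\beta=k_1-\tfrac12$, so that $\alpha^2-\tfrac14=g_2$ and $\beta^2-\tfrac14=g_1$;
\item the radial step is Proposition~\ref{prop:laguerre-radial-basis} with the generally non-integer order $\nu=2l+k_1+k_2$, transported from $L^2(\R_+,dr)$ back to $L^2(\R_+,r\,dr)$ by $R\mapsto r^{-1/2}R$. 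That proposition also supplies the radial completeness your final step needs.
\end{itemize}

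There are two small corrections.

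\emph{The comparison with Table~\ref{tab:2}.} Your $A$ is twice the table's operator, since the table carries factors of $\tfrac12$. No rescaling of $2\theta$ versus $\theta$ is needed: the table already uses $x=\cos2\theta$ with prefactor $\sin^{\alpha+1/2}\theta\cos^{\beta+1/2}\theta$ on $(0,\pi/2)$. Doubling its eigenvalue $\tfrac12(2l+\alpha+\beta+1)^2$ gives $(2l+k_1+k_2)^2$ directly. This agrees with your own computation $(k_1+k_2)^2+4l(l+k_1+k_2)$.

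\emph{The boundary remark.} This should be phrased through the choice of realization, not through square integrability. Near $\theta=0$ the Frobenius exponents are $k_2$ and $1-k_2$, and near $\theta=\pi/2$ they are $k_1$ and $1-k_1$. For $\tfrac12<k_j<\tfrac32$ both branches lie in $L^2$ (the limit-circle case). So it is the Friedrichs realization, which the statement leaves implicit, that selects $\theta^{k_j}$. The role of $k_j>\tfrac12$ is that $k_j$ is then the larger root of $k(k-1)=g_j$, which is exactly the Friedrichs exponent. Radially there is no such ambiguity: $\nu>1$ means $r^{-\nu}\notin L^2(r\,dr)$ near $0$.

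Your sanity check with $k_1=k_2=1$ is sound. The Dirichlet quadrant oscillator has levels $2(a+b)+3$, with multiplicity $N+1$ at $a+b=N$. This matches $2(n_r+l)+3$ from the theorem.
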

Besides the two commuting operators $(H_{\mathrm{SW}}, A)$ that define the polar separation, there is a third algebraically independent second--order
constant of motion. It is the separation constant/number difference operator for the coupled radial oscillators,
\al{
B \;:=\; h_{k_1} - h_{k_2}
\;=\; \tfrac{1}{2}\!\left(
-\partial_{x_1}^{2} + x_1^{2} + \tfrac{g_1}{x_1^{2}}
+ \partial_{x_2}^{2} - x_2^{2} - \tfrac{g_2}{x_2^{2}}
\right),
}
diagonal in the Cartesian basis $\ket{n_1, n_2}$ with eigenvalue
$2(n_1 - n_2) + (k_1 - k_2)$, while $A$ is diagonal in the polar basis
$\ket{n_r, l}$ with eigenvalue $(2l + k_1 + k_2)^{2}$. Both commute with
the total energy,
\al{
[H_{\mathrm{SW}}, A] = [H_{\mathrm{SW}}, B] = 0,
}
but $A$ and $B$ are the second members of the two distinct maximal
commuting sets $\{H_{\mathrm{SW}}, A\}$ and $\{H_{\mathrm{SW}}, B\}$, and
they do not commute with each other. However, the operators $H_{SW}, A, B$ do not close into a Lie algebra. This implies that the method used to obtain the Laguerre transform does not generalize to the Jacobi transform.

The Hahn transform appears as the overlap between the uncoupled $\ket{n_{1}, n_{2}}$ and coupled $\ket{n_{r}, l}$ bases on a single energy shell for the Smorodinsky--Winternitz oscillator. A Hahn transform with poly-logarithmic resource requirements, coupled with efficient Laguerre transforms of integer order, would therefore suffice to obtain a superposition of Jacobi polynomial states $\sum_{l} \alpha_{l} \ket{ N-l, l}$. However, this state is still not the full Jacobi transform because the radial label $l$ needs to be erased. The bottleneck for an efficient quantum Jacobi transform is therefore a procedure that measures the label $l$ efficiently. The development of an efficient quantum Jacobi transform therefore requires techniques beyond the ones considered here and is left to future work.

\section{Quantum Laguerre transform on a logarithmic grid}

This section constructs a fixed-order quantum Laguerre transform whose output is represented on a logarithmic radial grid. The grid points are uniform in $x=\log r$, rather than in the radial coordinate $r$. Every step of the construction uses this same representation. In particular, the result does not include a conversion from logarithmically spaced radii to a grid that is uniform in $r$.

The logarithmic coordinate is useful because it makes the Hankel transform accessible to the quantum Fourier transform. The Hankel kernel depends on the product of its radial arguments. After taking logarithms, this product becomes a sum, so the Hankel transform becomes a Fourier multiplier followed by a reflection. The potential term of the radial oscillator remains diagonal because $r^2=e^{2x}$ can be evaluated from the log-grid coordinate.

The construction has three parts. First, we identify the Laguerre functions with the eigenfunctions of a fixed-order radial oscillator. Second, we factor radial oscillator evolution into four Hankel transforms and three diagonal chirps. Third, we discretize the Hankel transform by FFTLog and use the resulting radial evolution in phase estimation. A WKB trial state supplies a nonzero component of the requested eigenfunction before phase estimation isolates its degree. The endpoint, aliasing, state-preparation, and resource estimates are proved in Appendix~\ref{app:log-grid-analysis}.

\subsection{Radial oscillators and Laguerre functions} \label{sec:radial-normal-form}
This subsection defines the radial oscillator whose eigenbasis will be prepared by the circuit. We begin with the two-dimensional isotropic oscillator because its angular momentum sectors explain the radial operator for integer order. The Friedrichs realization then extends the same eigenvalue problem to every fixed real order $\nu\ge0$.

The two-dimensional isotropic quantum harmonic oscillator $\hat{H}_{2}=\frac{1}{2}\left(\hat{x}^2+\hat{y}^2+\hat{p}_{x}^2+\hat{p}_{y}^2\right)$ commutes with the angular momentum operator $L_z=-i\part{}{\theta}$ and can be written as
\begin{align}
    \hat H_2=\frac12\left(-\partial_r^2-r^{-1}\partial_r
    -r^{-2}\partial_\theta^2+r^2\right).
\end{align}
Its eigenfunctions therefore separate into angular and radial factors.
The angular eigenfunctions are $e^{im\theta}$, and the radial equation in the $m$ sector becomes
\begin{align}
    \frac12\left(-R''-r^{-1}R'+m^2r^{-2}R+r^2R\right)=ER
\end{align}
Substituting $s=r^2$ and using the ansatz $R(r)=r^{|m|}e^{-s/2}y(s)$ gives the Laguerre equation
\begin{align}
    sy''+(|m|+1-s)y'+\frac{E-|m|-1}{2}y=0.
\end{align}
The solution regular at $r=0$ is the confluent hypergeometric solution of this equation. For a generic value of $E$, that solution grows too rapidly as $s\to\infty$ after the factor $e^{-s/2}$ is restored, and the radial function is not square integrable. Square integrability requires termination of the hypergeometric series. Thus
\begin{align}
    \frac{E-|m|-1}{2}=n\in\mathbb Z_{\ge0},
\end{align}
so $y=L_n^{(|m|)}$ and $E=2n+|m|+1$, yielding the standard quantization conditions for the radial oscillator. 

The one-dimensional radial Hilbert space is $L^2(\R_+,dr)$. The map $R(r)\mapsto r^{1/2}R(r)$ sends $L^2(\R_+,r\,dr)$ unitarily to $L^2(\R_+,dr)$ and transforms the radial kinetic term into 
\begin{align}
    -\frac{d^2}{dr^2}+\frac{m^2-1/4}{r^2}.
\end{align}
For any fixed real $\nu\ge0$, set
\begin{align}
    B_\nu=-\frac{d^2}{dr^2}+\frac{\nu^2-1/4}{r^2},
    \qquad
    H_\nu=\frac12(B_\nu+r^2).
\end{align}
We use the Friedrichs realization of this operator. Its endpoint condition selects the regular $r^{\nu+1/2}$ branch, and at $\nu=0$ it excludes the independent $r^{1/2}\log r$ branch.

\begin{proposition}[Laguerre basis choice]
\label{prop:laguerre-radial-basis}
For every fixed $\nu\ge0$, the functions
\begin{align}
    \phi_{n,\nu}(r)=
    \sqrt{\frac{2n!}{\Gamma(n+\nu+1)}}\,
    r^{\nu+1/2}e^{-r^2/2}L_n^{(\nu)}(r^2),
    \qquad n\ge0,
\end{align}
form an orthonormal basis of $L^2(\R_+,dr)$ for the Friedrichs realization of
$H_\nu$, and
\begin{align}
    H_\nu\phi_{n,\nu}=(2n+\nu+1)\phi_{n,\nu}.
\end{align}
\end{proposition}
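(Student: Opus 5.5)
The plan has three parts: verify the eigenvalue equation by direct substitution, obtain orthonormality from the Laguerre orthogonality relation after the change of variables $s=r^2$, and deduce completeness from completeness of Laguerre polynomials in the weighted space $L^2(\R_+,s^\nu e^{-s}ds)$. The Friedrichs identification will be handled separately at the end.

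\emph{Eigenvalue equation.} Write $\phi(r)=r^{\nu+1/2}e^{-r^2/2}y(r^2)$ and compute $B_\nu\phi+r^2\phi$. The $r^{-2}$ terms coming from differentiating $r^{\nu+1/2}$ twice produce $(\nu+1/2)(\nu-1/2)r^{-2}=(\nu^2-1/4)r^{-2}$, which cancels the singular potential. The remaining terms, written in $s=r^2$, collapse to
\[
-4\bigl(sy''+(\nu+1-s)y'\bigr)+2(\nu+1)y .
\]
This is the computation already carried out for integer $m$ before the statement, with $|m|$ replaced by $\nu$. Laguerre's equation $sy''+(\nu+1-s)y'=-ny$ then gives $H_\nu\phi_{n,\nu}=(2n+\nu+1)\phi_{n,\nu}$.

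\emph{Orthonormality.} Substitute $s=r^2$, so $dr=ds/(2\sqrt s)$. Then
\[
\int_0^\infty \phi_{n,\nu}\phi_{n',\nu}\,dr=\frac{2\sqrt{n!n'!}}{\sqrt{\Gamma(n+\nu+1)\Gamma(n'+\nu+1)}}\cdot\frac12\int_0^\infty s^{\nu}e^{-s}L_n^{(\nu)}L_{n'}^{(\nu)}\,ds .
\]
The classical relation $\int_0^\infty s^\nu e^{-s}L_n^{(\nu)}L_{n'}^{(\nu)}\,ds=\Gamma(n+\nu+1)\delta_{nn'}/n!$ gives $\delta_{nn'}$.

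\emph{Completeness.} The map $f(r)\mapsto g(s)=s^{-\nu/2-1/4}e^{s/2}f(\sqrt s)/\sqrt2$ is unitary from $L^2(\R_+,dr)$ onto $L^2(\R_+,s^\nu e^{-s}ds)$ and sends $\phi_{n,\nu}$ to normalized Laguerre polynomials. It therefore suffices that polynomials are dense in $L^2(\R_+,s^\nu e^{-s}ds)$. This holds because the measure has exponential moments, $\int e^{\delta s}s^\nu e^{-s}ds<\infty$ for $\delta<1$: if $g\perp$ all monomials, then $\int g(s)e^{-zs}s^\nu e^{-s}ds$ is analytic on $\operatorname{Re}z>-1/2$, vanishes with all derivatives at $z=0$, hence vanishes identically, and Fourier uniqueness forces $g=0$.

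\emph{Friedrichs identification.} This is the main subtlety. The symmetric operator $H_\nu$ on $C_c^\infty(0,\infty)$ is essentially self-adjoint for $\nu\ge1$, since $r=0$ is limit point, so the closure is the Friedrichs extension. For $0\le\nu<1$ there is a one-parameter family of extensions, and we must show the Friedrichs one has exactly this eigenbasis. The plan is to show that each $\phi_{n,\nu}$ lies in the form domain, i.e.\ the closure of $C_c^\infty$ in the norm $\int(|f'|^2+(\nu^2-1/4)r^{-2}|f|^2+r^2|f|^2)+\|f\|^2$. This follows from the behaviour $\phi\sim r^{\nu+1/2}$ near $0$: the Hardy-type quadratic form is finite, and the cutoffs $\chi(r/\varepsilon)\phi$ converge in the form norm. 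The case $\nu=0$ needs a logarithmic cutoff $\chi(\log r/\log\varepsilon)$, because the Hardy constant is critical there. Since $\phi_{n,\nu}$ also lies in the maximal domain, and the Friedrichs domain is the intersection of the form domain with the maximal domain, each $\phi_{n,\nu}$ is an eigenvector of the Friedrichs extension. The complement of the branch $r^{\nu+1/2}$ is $r^{-\nu+1/2}$, or $r^{1/2}\log r$ when $\nu=0$, and it lies outside the form domain. The orthonormal eigenvectors found above form a complete basis, so they diagonalize the Friedrichs extension entirely.
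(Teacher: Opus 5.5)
Your proposal is correct and follows the same route as the paper's proof. Both use the substitution $s=r^2$ three ways: to reduce $H_\nu\phi_{n,\nu}$ to the Laguerre equation, to turn orthonormality into the classical Laguerre orthogonality relation, and to obtain completeness from density of polynomials in $L^2(\R_+,s^\nu e^{-s}ds)$. You go beyond the paper in two places where it only asserts. First, you prove the polynomial density via the exponential-moment and Fourier-uniqueness argument. Second, you verify that each $\phi_{n,\nu}$ lies in the Friedrichs form domain, using the logarithmic cutoff at the critical order $\nu=0$, and therefore in $D(H_F)=D(\bar q)\cap D(H_{\max})$; the paper does not spell out this Friedrichs identification.
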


\begin{proof}
Put $s=r^2$ and $f(r)=r^{\nu+1/2}e^{-s/2}y(s)$.  Direct differentiation gives
\begin{align}
 H_\nu f=r^{\nu+1/2}e^{-s/2}
 \left[-2\bigl(sy''+(\nu+1-s)y'\bigr)+(\nu+1)y\right].
\end{align}
The Laguerre equation yields the displayed eigenvalue for $y=L_n^{(\nu)}$.
The same substitution gives
\begin{align}
 \langle\phi_{n,\nu},\phi_{k,\nu}\rangle
 =\frac{\sqrt{n!k!}}
 {\sqrt{\Gamma(n+\nu+1)\Gamma(k+\nu+1)}}
 \int_0^\infty s^\nu e^{-s}L_n^{(\nu)}(s)L_k^{(\nu)}(s)\,ds
 =\delta_{nk}.
\end{align}
Completeness follows from density of polynomials in
$L^2(\R_+,s^\nu e^{-s}ds)$ for $\nu>-1$.
\end{proof}

Proposition~\ref{prop:laguerre-radial-basis} makes the polynomial degree accessible to phase estimation. Since
\begin{align}
    H_\nu\phi_{n,\nu}=(2n+\nu+1)\phi_{n,\nu}
\end{align}
subtracting the fixed offset $\nu+1$ leaves the eigenvalue $2n$. We therefore use the offset-corrected evolution
\begin{align}
    U_M=\exp\!\left[-\frac{i\pi}{M}(H_\nu-(\nu+1)I) \right]
\end{align}
On a Laguerre eigenfunction,
\begin{align}
    U_M\phi_{n,\nu}=e^{-2\pi in/M}\phi_{n,\nu}
\end{align}
Thus phase estimation on $U_M$ returns the degree $n$ modulo $M$. Choosing $M$ above the retained spectral cutoff makes the phase label equal to $n$ on the certified subspace. This identifies the spectral labels that phase estimation must recover. What remains is to implement the controlled powers of $U_M$, and therefore the time evolution under $H_\nu$.

\subsection{Radial factorization}
We next reduce radial oscillator evolution to operations that can be implemented on the logarithmic grid. The potential $r^2$ is diagonal in the radial coordinate. The Hankel transform makes the kinetic operator $B_\nu$ diagonal in the same way that the Fourier transform diagonalizes momentum on the line. This exchange permits an exact seven-factor decomposition into Hankel transforms and radial chirps.

Fix a real $\nu\ge0$. Let $J_\nu$ be the Bessel function of the first kind of
order $\nu$, the solution of
$z^2J_\nu''+zJ_\nu'+(z^2-\nu^2)J_\nu=0$ regular at the origin, normalized by
$J_\nu(z)\sim(z/2)^\nu/\Gamma(\nu+1)$ as $z\to0$. 
Define the symmetric Hankel transform by
\begin{align}
    (\mathcal H_\nu f)(r)
    =\int_0^\infty \sqrt{rs}\,J_\nu(rs)f(s)\,ds
\end{align}
The Bessel kernel is chosen because it diagonalizes the kinetic term of the radial oscillator: 
for every $k>0$ the
function $u_k(r)=\sqrt{r}\,J_\nu(kr)$ is a generalized eigenfunction for $B_\nu$,
\begin{align}
\label{eq:bessel-eigen}
    B_\nu u_k=k^2u_k. 
\end{align}
The transform $\mathcal H_\nu$ changes basis from the radial
coordinate to $\{u_k\}_{k>0}$, in which $B_\nu$ is multiplication by $k^2$. It is
its own inverse. The next lemma states both facts precisely, together with the
consequence used below: $e^{-i\gamma B_\nu^{\mathrm F}}=\mathcal H_\nu C_\gamma
\mathcal H_\nu$, so kinetic evolution is a radial chirp between two Hankel
transforms.

\begin{Lemma}[Hankel exchange]
\label{lem:hankel-exchange}
On the Friedrichs radial space,
\begin{align}
    \mathcal H_\nu^2=I,
    \qquad
    \mathcal H_\nu B_\nu\mathcal H_\nu=r^2,
    \qquad
    \mathcal H_\nu r^2\mathcal H_\nu=B_\nu.
\end{align}
\end{Lemma}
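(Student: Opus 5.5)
The plan is to prove the three identities in the order stated, with the Hankel--Plancherel theorem as the backbone. First, establish that $\mathcal H_\nu$ is a bounded self-inverse operator on $L^2(\R_+,dr)$. Start from the Hankel inversion formula for the standard transform $\int_0^\infty J_\nu(kr)g(r)\,r\,dr$, valid for $\nu>-1$. The conjugation $f(r)=\sqrt r\,g(r)$ turns it into the symmetric kernel $\sqrt{rs}\,J_\nu(rs)$. Concretely, check on a dense class, for example the Laguerre functions $\phi_{n,\nu}$ of Proposition~\ref{prop:laguerre-radial-basis}, that $\mathcal H_\nu$ is an isometry, then extend by continuity.

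A cleaner route to this first step is the classical Hankel transform of Laguerre functions (Weber's integral):
\begin{align*}
\int_0^\infty \sqrt{rs}\,J_\nu(rs)\,s^{\nu+1/2}e^{-s^2/2}L_n^{(\nu)}(s^2)\,ds=(-1)^n r^{\nu+1/2}e^{-r^2/2}L_n^{(\nu)}(r^2).
\end{align*}
This gives $\mathcal H_\nu\phi_{n,\nu}=(-1)^n\phi_{n,\nu}$ directly. Since $\{\phi_{n,\nu}\}$ is an orthonormal basis, $\mathcal H_\nu$ is unitary with eigenvalues $\pm1$, hence self-adjoint with $\mathcal H_\nu^2=I$. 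Proposition~\ref{prop:laguerre-radial-basis} also shows that $H_\nu=\frac12(B_\nu+r^2)$ is diagonal in this basis, so $H_\nu$ commutes with $\mathcal H_\nu$.

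Second, prove $\mathcal H_\nu B_\nu\mathcal H_\nu=r^2$. Use the generalized eigenfunction identity \eqref{eq:bessel-eigen}: $B_\nu$ acts on the kernel in the $r$ variable as multiplication by $s^2$. For $f$ in a core of the Friedrichs domain, for instance finite combinations of $\phi_{n,\nu}$, integrate by parts twice:
\begin{align*}
(\mathcal H_\nu B_\nu f)(r)=\int_0^\infty \bigl(B_\nu^{(s)}\sqrt{rs}J_\nu(rs)\bigr)f(s)\,ds=r^2(\mathcal H_\nu f)(r).
\end{align*}
This uses symmetry $B_\nu^{(s)}\sqrt{rs}J_\nu(rs)=r^2\sqrt{rs}J_\nu(rs)$ of the kernel in $r\leftrightarrow s$.

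The main obstacle is justifying the vanishing of the boundary terms, together with the domain statement. At $s\to\infty$ the Gaussian decay of $f$ kills the boundary terms. At $s=0$ both $f$ and the kernel behave like $s^{\nu+1/2}$, the Friedrichs branch, so the Wronskian-type boundary term $[u f'-u'f]_{s=0}$ vanishes. At $\nu=0$ this is exactly where the exclusion of the $s^{1/2}\log s$ branch is needed. This yields $\mathcal H_\nu B_\nu=r^2\mathcal H_\nu$ on the core. Both sides are closed self-adjoint operators, $B_\nu^{\mathrm F}$ is essentially self-adjoint on the Laguerre span, and $\mathcal H_\nu$ is unitary, so the identity extends to equality of self-adjoint operators $\mathcal H_\nu B_\nu\mathcal H_\nu=r^2$ including domains.

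Alternatively, one can avoid integration by parts entirely. From the first step, $\mathcal H_\nu$ commutes with $H_\nu$, so $\mathcal H_\nu(B_\nu+r^2)\mathcal H_\nu=B_\nu+r^2$. It then suffices to show $\mathcal H_\nu(B_\nu-r^2)\mathcal H_\nu=-(B_\nu-r^2)$, which can be checked on the Laguerre basis using the three-term action of $B_\nu-r^2$ (an $SU(1,1)$ raising plus lowering combination) and the alternating signs $(-1)^n$.

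Third, conjugate the second identity by $\mathcal H_\nu$ and use $\mathcal H_\nu^2=I$ to obtain $\mathcal H_\nu r^2\mathcal H_\nu=B_\nu$.
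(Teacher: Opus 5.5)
Your argument is correct. Your main proof of the exchange identity is the paper's; the real difference is in the first step, and your second alternative is new.

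\textbf{First step.} The paper gets $\mathcal H_\nu^2=I$ from Bessel orthogonality on $C_c^\infty$ functions and extends by density. It takes the $L^2$-boundedness of $\mathcal H_\nu$ (Hankel--Plancherel) for granted. You instead use Weber's integral to obtain the spectral decomposition $\mathcal H_\nu=\sum_n(-1)^n\ket{\phi_{n,\nu}}\bra{\phi_{n,\nu}}$. This gives boundedness, unitarity, self-adjointness, $\mathcal H_\nu^2=I$ and $[\mathcal H_\nu,H_\nu]=0$ at once. It also supplies an $\mathcal H_\nu$-invariant core of Gaussian-times-polynomial functions, on which the endpoint check (including the $\nu=0$ logarithmic branch) reduces to a vanishing Wronskian.

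\textbf{Exchange identity.} Your main argument matches the paper's. You move $B_\nu$ onto the kernel via the Bessel equation, discard boundary terms using the Friedrichs behaviour at $s=0$ and decay at infinity, and extend by a core argument.

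\textbf{Your alternative.} This route needs no integration by parts. The Laguerre three-term recurrence makes $r^2$ tridiagonal on $\{\phi_{n,\nu}\}$ with diagonal $2n+\nu+1$. Conjugation by the signs $(-1)^n$ therefore flips only its off-diagonal part, giving $\mathcal H_\nu r^2\mathcal H_\nu=2H_\nu-r^2=B_\nu$ on the span. This is the $\mathfrak{su}(1,1)$ structure behind Proposition~\ref{prop:radial-factorization}.

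\textbf{Trade-off.} The paper's route needs no explicit eigenbasis, only the inversion theorem and generic test functions. Yours gives a self-contained operator statement with boundedness proved rather than assumed.

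\textbf{One missing justification.} You assert without proof that $B_\nu^{\mathrm F}$ is essentially self-adjoint on the Laguerre span. This is true. Since $\mathcal H_\nu$ maps the span to itself and intertwines $B_\nu$ with $r^2$ there, the claim is equivalent to density of polynomials in $L^2((1+s^2)s^\nu e^{-s}\,ds)$. That density holds because this measure has exponential moments. You should include this argument; it plays the same role as the paper's equally unproved claim that smooth functions obeying the endpoint condition are graph-norm dense in the Friedrichs domain.
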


\begin{proof}
The kernel $\sqrt{rs}\,J_\nu(rs)$ is real and symmetric. Bessel orthogonality first gives $\mathcal H_\nu^2=I$ for smooth compactly supported functions, where the integral calculation is justified. Since these functions are dense in $L^2(\R_+,dr)$ and $\mathcal H_\nu$ is bounded, the identity holds on all of $L^2(\R_+,dr)$. 

The exchange with $B_\nu$ comes from the Bessel equation
\begin{align}
    B_{\nu,s}\bigl(\sqrt{rs}J_\nu(rs)\bigr) = r^2\sqrt{rs}J_\nu(rs)
\end{align}
For smooth functions satisfying the Friedrichs endpoint condition, integration by parts transfers $B_\nu$ from the input function to the Bessel kernel. The endpoint condition removes the boundary term at $s=0$, and decay removes the boundary term at infinity.
\begin{align}
    \mathcal H_\nu B_\nu f=r^2\mathcal H_\nu f
\end{align}
for those smooth functions. Every function in the Friedrichs domain can be approximated by such smooth functions together with the action of $B_\nu$.  Passing to the limit gives the same identity on the full Friedrichs domain. Conjugating by the unitary involution $\mathcal H_\nu$ gives
\begin{align}
    \mathcal H_\nu B_\nu\mathcal H_\nu=r^2,
    \qquad
    \mathcal H_\nu r^2\mathcal H_\nu=B_\nu
\end{align}
\end{proof}
The two terms of $H_\nu$ are now diagonal in bases connected by $\mathcal H_\nu$. Thus $e^{-i\gamma r^2}$ is a radial chirp, while $e^{-i\gamma B_\nu}$ is the same chirp conjugated by two Hankel transforms. It remains to express the full oscillator evolution as a constant number of these operations.

Such an expression exists and is exact:
the real span of $ir^2$, $iB_\nu^{\mathrm F}$ and the dilation generator
$r\partial_r+\frac12$, which is skew-adjoint on $L^2(\R_+,dr)$, is a Lie algebra
isomorphic to $\mathfrak{sl}(2,\mathbb{R})$, and the factorization below is the
operator counterpart of the factorization of a rotation in $SL(2,\mathbb{R})$
into three shear matrices.

\begin{proposition}[Radial Hankel--chirp factorization]
\label{prop:radial-factorization}
For $|t|<\pi$, put
\begin{align}
    a(t)=\frac12\tan(t/2),
    \qquad
    b(t)=\frac12\sin t,
    \qquad
    C_\gamma=e^{-i\gamma r^2}
\end{align}
Then the Friedrichs radial oscillator satisfies
\begin{align}
    e^{-itH_\nu}
    &=e^{-ia(t)B_\nu}e^{-ib(t)r^2}e^{-ia(t)B_\nu} \\
    &=\mathcal H_\nu C_{a(t)}\mathcal H_\nu C_{b(t)}
      \mathcal H_\nu C_{a(t)}\mathcal H_\nu
\end{align}
\end{proposition}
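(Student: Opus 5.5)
The plan is to prove the operator identity by moving to the level of the $\mathfrak{sl}(2,\mathbb R)$ algebra identified just before the statement, and then to fix the remaining scalar on the ground state. Put $T=\tfrac12B_\nu$, $V=\tfrac12 r^2$ and $D=-\tfrac{i}{2}\bigl(r\partial_r+\tfrac12\bigr)$, which is symmetric. The key observation is that the inverse-square term $(\nu^2-\tfrac14)r^{-2}$ commutes with $r^2$ and is homogeneous of degree $-2$ under dilations. Hence
\begin{align*}
[T,V]=-2iD,\qquad [D,V]=-iV,\qquad [D,T]=iT,
\end{align*}
on the common core of finite Laguerre combinations. These are exactly the relations satisfied by $\tfrac12\hat p^2$, $\tfrac12\hat x^2$ and $\tfrac14(\hat x\hat p+\hat p\hat x)$ for the one-dimensional oscillator. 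So $H_\nu=T+V$ plays the role of the oscillator Hamiltonian. Both sides of the claimed identity are then images of group elements under a unitary representation of (a cover of) $SL(2,\mathbb R)$. The first equality reduces to the classical matrix identity
\begin{align*}
\begin{pmatrix}\cos t&\sin t\\-\sin t&\cos t\end{pmatrix}
=\begin{pmatrix}1&\tan\frac t2\\0&1\end{pmatrix}
\begin{pmatrix}1&0\\-\sin t&1\end{pmatrix}
\begin{pmatrix}1&\tan\frac t2\\0&1\end{pmatrix},
\end{align*}
with $a(t)B_\nu=\tan(t/2)\,T$ and $b(t)r^2=\sin t\,V$.

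To make this rigorous without invoking Lie-group integration theorems, I would argue by conjugation and Schur's lemma. First, compute the one-parameter conjugations exactly. Using Lemma~\ref{lem:hankel-exchange} and the chirp $C_\gamma$ (a multiplication operator), we have $e^{i\gamma r^2}B_\nu e^{-i\gamma r^2}=B_\nu+$(linear in $D$ and $r^2$). Transporting by $\mathcal H_\nu$ gives the analogous formula for $e^{-i\gamma B_\nu}$ acting on $r^2$ and $D$. Likewise, $e^{-itH_\nu}$ rotates the pair $(r^2,B_\nu)$ together with $D$ according to the adjoint action. Multiplying the three adjoint matrices shows that $W(t):=e^{itH_\nu}e^{-ia B_\nu}e^{-ib r^2}e^{-ia B_\nu}$ commutes with $r^2$, $B_\nu$ and $D$. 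Second, the Friedrichs space is an irreducible lowest-weight representation. By Proposition~\ref{prop:laguerre-radial-basis} the spectrum of $H_\nu$ is simple, with eigenvalues $2n+\nu+1$, and the ladder operators $V-T\pm 2iD$ move between neighbouring $\phi_{n,\nu}$ with nonzero coefficients. So $W(t)$ commutes with $H_\nu$ and with the ladders, and is therefore a scalar $c(t)$.

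The step I expect to be the main obstacle is determining $c(t)$, since this is the metaplectic phase ambiguity, and it is where the restriction $|t|<\pi$ enters. I would evaluate both sides on $\phi_{0,\nu}\propto r^{\nu+1/2}e^{-r^2/2}$. The family $g_z=r^{\nu+1/2}e^{-zr^2/2}$ with $\operatorname{Re}z>0$ is closed under $e^{-i\gamma r^2}$, which shifts $z\mapsto z+2i\gamma$. It is also closed under $e^{-i\gamma B_\nu}$: with the ansatz $\lambda(\gamma)g_{z(\gamma)}$, the Schr\"odinger equation reduces to the Riccati equation $\dot z=-2iz^2$ together with $\dot\lambda/\lambda=-2i(\nu+1)z$. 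These solve explicitly to $z\mapsto z/(1+2i\gamma z)$ with amplitude $(1+2i\gamma z)^{-(\nu+1)}$, using the principal branch. This branch is valid because the argument stays in the right half plane along the path when $|t|<\pi$.

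Composing the three steps starting from $z=1$ returns $z=1$; this is the same matrix identity acting by M\"obius maps. The accumulated amplitude is $(1+i\tan\frac t2)^{-(\nu+1)}(\cdots)$, which simplifies to $e^{-it(\nu+1)/2}\cdot$(the matching factor). It should equal the phase $e^{-it(\nu+1)}$ from $H_\nu\phi_{0,\nu}=(\nu+1)\phi_{0,\nu}$, so the normalization $H_\nu=\tfrac12(B_\nu+r^2)$ needs to be tracked carefully. Continuity of $c$ in $t$, $c(0)=1$, and the branch choice then give $c\equiv1$ on $|t|<\pi$. Finally, the second displayed equality follows by inserting $e^{-ia B_\nu}=\mathcal H_\nu C_a\mathcal H_\nu$, which comes from Lemma~\ref{lem:hankel-exchange} via functional calculus, and then using $\mathcal H_\nu^2=I$ to merge the adjacent Hankel transforms.
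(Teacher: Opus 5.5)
Your proposal is correct, but it takes a different route from the paper.

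\textbf{The paper's route.} The paper differentiates $U(t)=e^{-ia(t)B_\nu}e^{-ib(t)r^2}e^{-ia(t)B_\nu}$ directly. The exact commutators $[\mathcal A,r^2]=2r^2$, $[\mathcal A,B_\nu]=-2B_\nu$, $[r^2,B_\nu]=4\mathcal A$ (with $\mathcal A=r\partial_r+\tfrac12$) give $\dot U U^{-1}=-\tfrac{i}{2}(B_\nu+r^2)=-iH_\nu$ and $U(0)=I$. Hence $U(t)=e^{-itH_\nu}$ by uniqueness for the evolution equation. These commutators close with no identity term, so $\dot UU^{-1}$ is computed inside the span of $B_\nu,r^2,\mathcal A$ and equals $-iH_\nu$ exactly. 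No scalar is ever lost, so the paper needs neither irreducibility nor a separate phase determination.

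\textbf{Your route.} You compare only adjoint actions. That is why you need Schur's lemma (simple spectrum plus nonvanishing ladder coefficients) and then a separate computation of $c(t)$. In return, you replace the product-rule computation by single-factor conjugations, and your argument makes the role of $|t|<\pi$ visible.

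\textbf{The phase computation closes.} The chirp sends $z_1=\cos\tfrac t2\,e^{-it/2}$ to $z_2=\cos\tfrac t2\,e^{it/2}$, and $1+i\tan\tfrac t2\,z_2=\cos\tfrac t2\,e^{it/2}$. So the two $B_\nu$ steps contribute $e^{-i(\nu+1)t/2}\cos^{\nu+1}\tfrac t2$ and $e^{-i(\nu+1)t/2}\cos^{-(\nu+1)}\tfrac t2$, whose product is exactly $e^{-i(\nu+1)t}$. Hence $c(t)=1$ for each $t$ directly, and no continuity argument is needed. The branch is also forced rather than chosen: $1+2i\gamma z_0=z_0/z(\gamma)$ is a ratio of two points in the right half-plane, so it never meets $(-\infty,0]$.

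\textbf{Smaller points.}
\begin{itemize}
\item The finite Laguerre span is not invariant under the chirps and shears. Your conjugation and commutation statements should therefore be made on a larger invariant core, such as finite sums from the spaces $\mathcal Q_E(\alpha)$ of the appendix. The paper's derivative argument implicitly needs the same core.
\item Nothing needs to be merged in the last step. Substituting $e^{-iaB_\nu}=\mathcal H_\nu C_a\mathcal H_\nu$ already yields the seven-factor product.
\item Running your Riccati computation on the whole family $g_z$, $\operatorname{Re}z>0$, which is total, would prove the identity outright and make the Schur step unnecessary.
\end{itemize}
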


\begin{proof}
Let $X=r^2$ and $\mathcal A=r\partial_r+1/2$. On a common analytic core,
\begin{align}
    [\mathcal A,X]=2X,
    \qquad
    [\mathcal A,B_\nu]=-2B_\nu,
    \qquad
    [X,B_\nu]=4\mathcal A.
\end{align}
These commutators close an $\mathfrak{su}(1,1)$ representation. If $U(t)=e^{-ia(t)B_\nu}e^{-ib(t)X}e^{-ia(t)B_\nu}$, product differentiation and the adjoint action of these commutators give
\begin{align}
    \dot U(t)U(t)^{-1}=-\frac{i}{2}(B_\nu+X)=-iH_\nu,
    \qquad U(0)=I
\end{align}
Thus $U(t)=e^{-itH_\nu}$. Lemma~\ref{lem:hankel-exchange} converts each $B_\nu$ shear into a Hankel-conjugated chirp.
\end{proof}

Proposition~\ref{prop:radial-factorization} leaves only the Hankel transform to implement. Each chirp is diagonal and requires only the phase $e^{-i\gamma r^2}$, if we work on a logarithmic grid the phase factor remains diagonal. We can therefore move to the logarithmic coordinate, where the Hankel transform becomes a Fourier multiplier followed by frequency reflection.

\subsection{Hankel transform in logarithmic coordinates}
\label{sec:fftlog}
We argue that in logarithmic coordinates the Hankel transform becomes a circulant matrix composed with an index reversal, and is therefore diagonal in Fourier space. Recall that a kernel matrix that depends on the \emph{difference} of its arguments 
is circulant, and that circulant matrices are exactly the matrices that are diagonalized by the discrete Fourier transform. The Hankel kernel
$\sqrt{rs}J_\nu(rs)$ depends on $r$ and $s$ only through their product, so under the logarithmic change of coordinates $(x, y) = (\log r, \log s)$
(equivalently $r=e^x$, $s=e^y$), the kernel entry is a function of the 
sum $x+y$. A matrix whose entries
are constant along antidiagonals is a circulant matrix composed with the reversal
$y\mapsto-y$; conjugating by the Fourier transform therefore leaves a diagonal
matrix together with the reflection $\omega\mapsto-\omega$ that the reversal
induces on frequencies. The diagonal entries are the Fourier coefficients of the
kernel in the log coordinate, and they evaluate in closed form to a ratio of
Gamma functions,
\begin{align}
\label{eq:mellin-multiplier}
    m_\nu(\omega)=2^{-i\omega}
    \frac{\Gamma((\nu+1-i\omega)/2)}{\Gamma((\nu+1+i\omega)/2)} .
\end{align}
For real $\nu$ and $\omega$ the two Gamma arguments are complex conjugates, so
with $z=(\nu+1+i\omega)/2$ the ratio is $\overline{\Gamma(z)}/\Gamma(z)$. Let $\operatorname{Log}\Gamma$ denote the continuous logarithm on the right half-plane that is real on the positive axis. Then
\begin{align}
\label{eq:mellin-phase}
    m_\nu(\omega)=e^{i\theta_\nu(\omega)},
    \qquad
    \theta_\nu(\omega)=-\omega\log 2-2\operatorname{Im}\operatorname{Log}\Gamma(z) ;
\end{align}
every diagonal entry is a pure phase. An integral transform with a transcendental
kernel is thus reduced to a Fourier transform, a table of phases, and a
reflection, with the order $\nu$ entering only through the table, so that
arbitrary real $\nu$ costs no more than the integer case.

The preceding anti-diagonal description anticipates the finite circuit. We first establish the corresponding continuous identity. Define the unitary
\begin{align}
    (Sf)(x)=e^{x/2}f(e^x),
    \qquad x=\log r,
\end{align}
from $L^2(\R_+,dr)$ to $L^2(\R,dx)$. After this change of variables, the Hankel transform acts on Fourier modes in $x$ by a scalar phase and a frequency reflection. The scalar is obtained by evaluating the classical Mellin transform of the Bessel kernel.

\begin{proposition}[Mellin diagonalization of the Hankel transform]
\label{prop:mellin-hankel}
With
\begin{align}
    (\mathcal F_-u)(\omega)=\int_\R e^{-i\omega x}u(x)\,dx,
    \qquad
    (R\widehat u)(\omega)=\widehat u(-\omega),
\end{align}
one has
\begin{align}
    S\mathcal H_\nu S^{-1}=\mathcal F_-^{-1}M_\nu R\mathcal F_-,
\end{align}
where
\begin{align}
    (M_\nu\widehat u)(\omega)=m_\nu(\omega)\widehat u(\omega),
    \qquad
    m_\nu(\omega)=2^{-i\omega}
    \frac{\Gamma((\nu+1-i\omega)/2)}{\Gamma((\nu+1+i\omega)/2)}.
\end{align}
For real $\omega$, $|m_\nu(\omega)|=1$ and
$m_\nu(-\omega)=\overline{m_\nu(\omega)}$.
\end{proposition}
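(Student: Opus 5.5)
The identity will be established by computing $S\mathcal H_\nu S^{-1}$ explicitly as an integral operator in the logarithmic coordinate and then evaluating its Fourier symbol through the classical Mellin transform of $J_\nu$. For $u\in C_c^\infty(\R)$, write $f=S^{-1}u$, so that $f(s)=s^{-1/2}u(\log s)$. Substituting $s=e^y$ and $ds=e^y\,dy$ gives
\begin{align}
(S\mathcal H_\nu S^{-1}u)(x)=e^{x/2}\int_\R e^{(x+y)/2}J_\nu(e^{x+y})\,e^{-y/2}u(y)\,e^{y}\,dy=\int_\R k(x+y)\,u(y)\,dy,
\end{align}
where $k(z)=e^{z}J_\nu(e^{z})$. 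The kernel therefore depends only on $x+y$, as anticipated in the discussion preceding the proposition. Writing $\tilde u(y)=u(-y)$ turns this into the convolution $k*\tilde u$. By the convolution theorem for $\mathcal F_-$, and since $\mathcal F_-\tilde u=R\mathcal F_-u$, the operator becomes $\mathcal F_-^{-1}\,\widehat k\,R\,\mathcal F_-$. So $m_\nu=\widehat k=\mathcal F_-k$.

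Next I compute $\widehat k(\omega)$. The substitution $t=e^z$ gives
\begin{align}
\widehat k(\omega)=\int_0^\infty t^{-i\omega}J_\nu(t)\,dt,
\end{align}
which is the Mellin transform $\int_0^\infty t^{\mu-1}J_\nu(t)\,dt=2^{\mu-1}\Gamma((\nu+\mu)/2)/\Gamma((\nu-\mu)/2+1)$, valid for $-\nu<\operatorname{Re}\mu<3/2$. Taking $\mu=1-i\omega$ yields $2^{-i\omega}\Gamma((\nu+1-i\omega)/2)/\Gamma((\nu+1+i\omega)/2)$, which is exactly the stated $m_\nu(\omega)$.

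The main obstacle is that $\operatorname{Re}\mu=1$ sits inside the strip of convergence only conditionally. Near $t=0$ the integrand behaves like $t^{\nu}$, which is integrable, but near $\infty$ it decays only like $t^{-1/2}\cos(\cdot)$, and likewise $k(z)$ is not in $L^1(\R)$ as $z\to+\infty$. I plan to handle this by regularization. I would insert a damping factor, for instance $e^{-\varepsilon t}$, or equivalently shift to $\operatorname{Re}\mu<1$ and continue analytically. The identity is then established for $u$ whose Fourier transform is compactly supported and smooth. For such $u$, Fubini applies after damping, and the limit $\varepsilon\to0$ follows from the locally uniform convergence of the damped Mellin integral to the analytic continuation. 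The conditional integral at $\operatorname{Re}\mu=1$ converges by Abel/Dirichlet tests, and both sides depend continuously on the parameter.

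Finally, I settle unimodularity and extend to all of $L^2$. For real $\nu,\omega$, set $z=(\nu+1+i\omega)/2$. Then $\Gamma(\bar z)=\overline{\Gamma(z)}$ and $|2^{-i\omega}|=1$ together give $|m_\nu(\omega)|=1$, and swapping $\omega\mapsto-\omega$ conjugates every factor, so $m_\nu(-\omega)=\overline{m_\nu(\omega)}$. Hence the right-hand side $\mathcal F_-^{-1}M_\nu R\mathcal F_-$ is unitary on $L^2(\R,dx)$ with the $2\pi$-normalized inverse. The left-hand side is bounded because $\mathcal H_\nu$ is a unitary involution by Lemma~\ref{lem:hankel-exchange}. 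The two bounded operators agree on a dense subspace, so they are equal.
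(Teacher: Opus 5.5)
Your proposal is correct and is essentially the paper's proof: the same log-coordinate kernel $e^{x+y}J_\nu(e^{x+y})$, the same reduction of the symbol to $\int_0^\infty t^{-i\omega}J_\nu(t)\,dt$ evaluated by the Mellin--Bessel formula (your $\mu=1-i\omega$ is the paper's $z=-i\omega$), and the same $\Gamma(\bar z)=\overline{\Gamma(z)}$ argument. Your regularization and density step add justification the paper leaves implicit, since it interchanges the integrals for $u\in C_c^\infty(\R)$ without comment.

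One caution on that extra layer: keep the paper's test class $u\in C_c^\infty(\R)$ rather than band-limited $u$. For band-limited $u$ the Hankel integral of $S^{-1}u$ need not converge absolutely, because absolute convergence requires $e^{y/2}u(y)$ to be integrable as $y\to+\infty$. You would then additionally have to show that the damped operators converge to $\mathcal H_\nu$ itself. For compactly supported $u$ this is unnecessary: truncating the outer integral to $|y|\le A$ makes Fubini trivial, and uniform convergence of the improper Bessel integral over $\operatorname{supp}u$ justifies the interchange.
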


\begin{proof}
For $u\in C_c^\infty(\R)$,
\begin{align}
    (S\mathcal H_\nu S^{-1}u)(y)
    =\int_\R e^{x+y}J_\nu(e^{x+y})u(x)\,dx.
\end{align}
Taking $\mathcal F_-$ and substituting $t=e^{x+y}$ gives
\begin{align}
 (\mathcal F_-S\mathcal H_\nu S^{-1}u)(\omega)
 =\left(\int_0^\infty t^{-i\omega}J_\nu(t)\,dt\right)
     (\mathcal F_-u)(-\omega).
\end{align}
The Mellin-Bessel identity
\begin{align}
 \int_0^\infty t^zJ_\nu(t)\,dt
 =2^z\frac{\Gamma((\nu+1+z)/2)}{\Gamma((\nu+1-z)/2)}
\end{align}
at $z=-i\omega$ gives the multiplier.
For real $\omega$, the two Gamma inputs are complex conjugates. Since $\Gamma(\overline z) = \overline{\Gamma(z)}$, $|m_\nu(\omega)|=1$, and replacing $\omega$ by $-\omega$ swaps the two Gamma factors and conjugates $2^{-i\omega}$, which gives $m_\nu(-\omega) = \overline{m_\nu(\omega)}$.
\end{proof}

The identity of Proposition~\ref{prop:mellin-hankel} is exact; discretization
enters now, and with it the only approximation in the construction. Sampling the
log coordinate on a finite periodic grid replaces the kernel by its
anti-circulant approximant and the resulting block is exactly
diagonalized by the discrete Fourier transform.

Fix endpoints $x_-<x_+$, let $L=2^{q_L}$, and set
\begin{align}
    \Delta x=\frac{x_+-x_-}{L},
    \qquad
    x_j=x_-+j\Delta x,
    \qquad 0\le j<L.
\end{align}
Let $\omega_L$ be a smooth window supported on $[x_-,x_+]$. The sampling map used throughout the construction is
\begin{align}
    (\mathsf S_Lu)_j=\sqrt{\Delta x}\,\omega_L(x_j)u(x_j).
    \label{eq:laguerre-log-sampling}
\end{align}
For each degree $n$, define the normalized sampled Laguerre state
\begin{align}
    \ket{\Phi_{n,\nu}^{(L)}}
    =\frac{\mathsf S_LS\phi_{n,\nu}}{\|\mathsf S_LS\phi_{n,\nu}\|_2}.
    \label{eq:laguerre-log-target}
\end{align}
The basis state $\ket j$ represents the log coordinate $x_j$ and hence the physical radius $r_j=e^{x_j}$. The radii are geometrically spaced. The construction never resamples these amplitudes onto a grid that is uniform in $r$.

The $L$-point quantum Fourier transform is
\begin{align}
    F_L\ket{j}=L^{-1/2}\sum_{k=0}^{L-1}e^{2\pi ijk/L}\ket{k}
\end{align}
and its output is periodic. To evaluate the continuum multiplier at the correct signed frequency, define
\begin{align}
    \kappa_L(k)=
    \begin{cases}
        k, & 0\le k\le L/2,\\
        k-L, & L/2<k<L
    \end{cases}
    \qquad
    \omega_k=\frac{2\pi\kappa_L(k)}{L\Delta x},
    \qquad
    R_L\ket{k}=\ket{-k\bmod L}
\end{align}

The finite FFTLog block is
\begin{align}
    W_{\nu,L}=F_L^\dagger D R_L F_L,
    \qquad
    D\ket k=e^{-i\theta_\nu(\omega_k)-ic_{\rm lr}\omega_k}\ket k
\end{align} 
where $m_\nu(\omega)=e^{i\theta_\nu(\omega)}$ as in \eqref{eq:mellin-phase}. The conjugate phase appears because the quantum Fourier transform above uses the positive exponential convention. The low-ringing center correction $c_{\rm lr}$ and the exact Nyquist sign are specified in Appendix~\ref{app:log-grid-analysis}. They are chosen so that $R_LDR_L=D^\dagger$ and hence $W_{\nu,L}^2=I$.

The chirps are diagonal on the same register:
\begin{align}
    C_{\gamma,L}\ket j=e^{-i\gamma e^{2x_j}}\ket j.
\end{align}
The finite radial propagator corresponding to Proposition~\ref{prop:radial-factorization} is therefore
\begin{align}
    P_L(t)=W_{\nu,L}C_{a(t),L}W_{\nu,L}C_{b(t),L}W_{\nu,L}C_{a(t),L}W_{\nu,L}.
    \label{eq:finite-radial-propagator}
\end{align}

The finite propagator supplies the controlled radial evolution needed for phase estimation. It does not by itself prepare a Laguerre eigenfunction. To prepare the degree-$n$ eigenfunction, the algorithm first constructs a WKB trial state with a fixed positive overlap with that eigenfunction. Phase estimation then separates the trial state into radial energy eigenspaces, and an equality test selects the eigenspace with degree $n$. Fixed-point amplification raises the selected component close to unit amplitude without measuring the degree register. The circuit evaluates the trial amplitudes and all diagonal phases reversibly from the binary degree and grid indices; it does not load an $L$-entry table. The trial-state, filtering, and arithmetic estimates are proved in Appendix~\ref{app:log-grid-analysis}.

\begin{algorithm}[H]
\caption{Fixed-order quantum Laguerre transform on a logarithmic grid}
\label{alg:laguerre-qopt}
\textbf{Input:} Fixed $\nu\ge0$, cutoff $N\ge1$, error $0<\epsilon<1/4$, and a normalized input state $\sum_{n<N}a_n\ket n$.

\textbf{Output:} A state approximating
$\sum_{n<N}a_n\ket{\Phi_{n,\nu}^{(L)}}$ on the logarithmic grid in \eqref{eq:laguerre-log-sampling}.

\begin{enumerate}
    \item Choose an auxiliary spectral cutoff $K_{\rm aux}>N$, a power of two $M=2^m>K_{\rm aux}$, and one logarithmic grid that controls the finite radial evolution and the sampled inner products below $K_{\rm aux}$.
    \item Controlled by $n$, prepare the sampled WKB trial state $\ket{\mu_n}$ on this logarithmic grid.
    \item Run phase estimation for $U_M$ using the finite propagator \eqref{eq:finite-radial-propagator} for each controlled power.
    \item Mark the branch on which the degree written by phase estimation equals $n$. Apply fixed-point amplification using a reflection about the trial state and the equality-marking reflection.
    \item Run the phase-estimation labeler on the prepared state, subtract the resulting degree from the input register modulo $M$, and reverse the labeler. Uncompute every remaining work register.
\end{enumerate}
\textbf{Return:} The logarithmic-grid register in a state within $\epsilon$ of
\begin{align*}
    \sum_{n<N}a_n\ket{\Phi_{n,\nu}^{(L)}}.
\end{align*}
\end{algorithm}

\begin{theorem}[Fixed-order quantum Laguerre transform on a logarithmic grid]
\label{thm:fixed-order-laguerre}
Fix a nonnegative real number $\nu$, independent of $N$ and $\epsilon$. For every $N\ge1$ and $0<\epsilon<1/4$, let
\begin{align}
    B = 1+ \lceil \log_2(N+1) \rceil + \lceil \log_2(\epsilon^{-1}) \rceil
\end{align}
There are effective endpoints $x_-<x_+$, a smooth window $\omega_L$, and a logarithmic radial grid $L=2^{q_L}=\operatorname{poly}_{\nu}(N,\epsilon^{-1})$, with $q_L=O_\nu(B)$, such that Algorithm~\ref{alg:laguerre-qopt} implements
\begin{align}
    \sum_{n=0}^{N-1}a_n\ket n_{\rm lab}\ket0_{\rm grid}\ket0_{\rm work}
    \longmapsto
    \ket0_{\rm lab}\left(\sum_{n=0}^{N-1}a_n\ket{\Phi_{n,\nu}^{(L)}}_{\rm grid}\right)\ket0_{\rm work}
    \label{eq:fixed-order-laguerre-log-map}
\end{align}
to Euclidean-norm error at most $\epsilon$, uniformly over all normalized coefficient vectors $(a_n)_{n=0}^{N-1}$. Here
\begin{align}
    \ket{\Phi_{n,\nu}^{(L)}}
    =\frac{1}{\|\mathsf S_LS\phi_{n,\nu}\|_2}
    \sum_{j=0}^{L-1}\sqrt{\Delta x}\,\omega_L(x_j)(S\phi_{n,\nu})(x_j)\ket j
\end{align}
is the normalized sample of the Laguerre eigenfunction on the logarithmic grid. Its basis states correspond to the geometrically spaced radii $r_j=e^{x_j}$. The circuit does not include resampling onto a grid that is uniform in $r$. The implementation uses
\begin{align}
    \mathcal O_\nu(B^{15})
    &\quad\text{gates},\\
    \mathcal O_\nu(B^{12})
    &\quad\text{qubits}
\end{align}
The constants depend on the fixed order $\nu$.
\end{theorem}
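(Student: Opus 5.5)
The proof combines four estimates: a discretization estimate for the finite radial propagator, a spectral-gap argument for phase estimation, an overlap bound for the WKB trial state, and a resource count. Throughout, errors are measured on the certified subspace spanned by the sampled states $\ket{\Phi_{n,\nu}^{(L)}}$ with $n<K_{\rm aux}$.

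\emph{Step 1: endpoints and near-orthonormality.} We first choose the effective endpoints. For $n<K_{\rm aux}=\Theta(N+\log(1/\epsilon))$, the function $S\phi_{n,\nu}(x)=e^{x/2}\phi_{n,\nu}(e^x)$ decays like $e^{(\nu+1)x}$ as $x\to-\infty$. It decays like $e^{-e^{2x}/2}$ beyond the turning point $e^{2x}\approx 4n+2\nu+2$. Taking
\[
x_-=-\Theta_\nu(\log(K_{\rm aux}/\epsilon)),\qquad x_+=\tfrac12\log\bigl(C(K_{\rm aux}+\log(1/\epsilon))\bigr)
\]
makes the tail mass outside $[x_-,x_+]$ at most $\mathrm{poly}(\epsilon/K_{\rm aux})$. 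The window $\omega_L$ then perturbs the states negligibly. Since $S\phi_{n,\nu}$ is analytic, its Fourier transform in $x$ decays like $e^{-c|\omega|}$ up to a bandwidth of order $\mathrm{poly}(K_{\rm aux})$, set by the oscillation frequency $r\sqrt{4n}$ near $x_+$. A Poisson-summation aliasing estimate therefore shows that $L=\mathrm{poly}_\nu(N,1/\epsilon)$ points suffice. With this $L$, the sampled Gram matrix of $\{\mathsf S_LS\phi_{n,\nu}\}_{n<K_{\rm aux}}$ is $\epsilon'$-close to the identity, with $\epsilon'=\epsilon/\mathrm{poly}(K_{\rm aux})$.

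\emph{Step 2: the finite propagator.} The key consistency statement is
\[
\|W_{\nu,L}\mathsf S_Lu-\mathsf S_LS\mathcal H_\nu S^{-1}u\|\le\epsilon'
\]
for every $u$ in the span of $S\phi_{n,\nu}$ with $n<K_{\rm aux}$. This follows from Proposition~\ref{prop:mellin-hankel} together with the same band-limiting and aliasing estimate, using that $\mathcal H_\nu$ maps $\phi_{n,\nu}$ to $(-1)^n\phi_{n,\nu}$. The Laguerre functions are therefore Hankel eigenfunctions, so the certified subspace is preserved. The chirps $C_{\gamma,L}$ act exactly by sampling, but $e^{-i\gamma r^2}\phi_{n,\nu}$ has larger bandwidth; since $|a(t)|,|b(t)|\le O(1)$, this costs only a constant factor in $L$. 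Proposition~\ref{prop:radial-factorization} with $t=\pi/M$ then gives that $P_L$ agrees with $U_M$ on the certified subspace to error $7\epsilon'$. Phase estimation needs the powers $U_M^{2^s}$, which we obtain by repeating $P_L(\pi/M)$ controlled $2^s$ times. The total is about $M\,\mathrm{polylog}$ applications, which is too large if $M\approx K_{\rm aux}$. Instead we implement $U_M^{2^s}=e^{-i2^s\pi(H_\nu-\nu-1)/M}$ directly through a single factorization with $t=2^s\pi/M$ reduced modulo $2\pi$. This is valid because $e^{-2\pi i H_\nu}$ is a known scalar on the Friedrichs domain. It needs $|t|<\pi$; otherwise we split $t$ into two halves. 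Each power then costs $O(1)$ blocks, and phase estimation uses $O(m)$ blocks in total.

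\emph{Step 3: trial state and amplification.} The eigenphases are exactly $2\pi n/M$, so $m$-bit phase estimation is exact on true eigenvectors. Errors therefore come only from the propagator error (times $m$) and from leakage of the trial state above $K_{\rm aux}$. We prepare $\ket{\mu_n}$ by reversible evaluation of the WKB amplitude followed by a Grover--Rudolph-type, integration-free state preparation. The main obstacle is the lemma that this trial state has overlap at least a constant $c_\nu>0$ with $\phi_{n,\nu}$, uniformly in $n$, and mass $O(\epsilon)$ above $K_{\rm aux}$. I would attempt to prove it from uniform Plancherel--Rotach/Bessel asymptotics of $L_n^{(\nu)}$, using the WKB form on the oscillatory region and an Airy cutoff at the turning point. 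Given this lemma, fixed-point amplitude amplification with $O(\log(1/\epsilon))$ rounds yields the $n$-th eigenvector to error $\epsilon/3$, coherently in $n$. The equality test against $n$ marks the good branch. The final step, relabeling by subtracting the phase-estimation output from $\ket n$, clears the label register.

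\emph{Step 4: resources.} Each factor uses a $q_L$-qubit QFT, whose cost is polynomial in $B$. The Gamma-phase table $\theta_\nu(\omega_k)$ and the chirp phases $e^{2x_j}$ are computed by reversible fixed-point arithmetic to precision $\mathrm{poly}(\epsilon/K_{\rm aux})$. Computing $\operatorname{Log}\Gamma$ via Stirling's formula with $O(B)$ terms at $O(B)$ bits costs roughly $B^{4}$ to $B^{5}$ gates. Multiplying this by $O(m)=O(B)$ controlled blocks, $O(B)$ amplification rounds, and the state-preparation cost gives the stated $\mathcal O_\nu(B^{15})$ gates. Keeping the arithmetic workspace live inside the nested calls gives the stated $\mathcal O_\nu(B^{12})$ qubits. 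The error budget is closed by the telescoping bound over all blocks, which is why $\epsilon'$ is chosen as $\epsilon/\mathrm{poly}(B)$.
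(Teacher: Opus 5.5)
Your architecture matches the paper's: the same algorithm, one Hankel--chirp propagator per controlled power, exact phase estimation on the retained sector, equality marking, fixed-point amplification and coherent erasure. Since $M=2^m$, every power has $t_s=2^s\pi/M\le\pi/2$ for $s<m$, so your mod-$2\pi$ reduction and halving never arise.

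The genuine gap is in Step~3. The trial-state lemma is what makes the spectral filter work, and you leave it unproven. Moreover, the version you need is stronger than your suggested tools give, and it is false for the natural trial states. Having fixed $K_{\rm aux}=\Theta(N+\log(1/\epsilon))$, you require $\|(I-P_{K_{\rm aux}})\psi_n\|=O(\epsilon)$. Take an $\epsilon$-independent trial state with bounded support in $x=\log r$, which is what a cheap WKB loader produces. Fixing $N$ and letting $\epsilon\to0$ would then force its Laguerre coefficients to satisfy $|c_k|\le\|(I-P_k)\psi_n\|\le Ce^{-ck}$ for all $k$. In that case $r^{-\nu-1/2}e^{r^2/2}\psi_n$ would be an entire function of $r^2$ vanishing on an interval, hence zero. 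Plancherel--Rotach and Airy asymptotics bear on the overlap, not on this tail. Separately, any asymptotic overlap argument only covers $n\ge n_\nu$. Uniformity in $n$ therefore needs a separate construction for the finitely many low degrees.

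The paper closes this by asking for less. It localizes the WKB state in the bulk $z=e^{2x}/\Lambda_{n,\nu}\in[1/4,3/4]$, away from both the hard edge and the turning point, so only compact-interval (Frenzen--Wong) asymptotics are needed. One integration by parts makes the oscillatory $\cos2\Theta_{n,\nu}$ term $O(\Lambda_{n,\nu}^{-1})$, which gives a constant overlap for $n\ge n_\nu$; a single fixed bump serves all $n<n_\nu$. Leakage is bounded only through the energy, $\|(I-P_K)\psi_n\|^2\le\langle\psi_n,H_\nu\psi_n\rangle/(2K+\nu+1)\le C_\nu(n+1)/(K+1)$, and then $K_{\rm aux}\ge C_\nu N\epsilon^{-2}$ is chosen. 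Because $K_{\rm aux}$, $M$ and $L$ enter only through register lengths, this polynomial cutoff still gives $O_\nu(B)$-bit registers, and your Step~1 choices of $x_\pm$ and $L$ go through with it.

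A smaller point concerns Step~2. Your consistency estimate is stated on the span of the sampled Laguerre functions, where $\mathcal H_\nu\phi_{n,\nu}=(-1)^n\phi_{n,\nu}$ helps. But three of the four Hankel blocks act on chirped states $r^{\nu+1/2}P(r^2)e^{-\alpha r^2}$ with complex $\alpha$, outside that span. So the log-coordinate tails, including the fold of the Hankel output back into $[x_-,x_+]$, and the bandwidth must be controlled uniformly over these families; the paper uses $\frac14\le\operatorname{Re}\alpha\le\frac12$ and $|\operatorname{Im}\alpha|\le\frac12$. Your analytic, Poisson-summation route can plausibly handle this and would give a sharper $L$ than the paper's $C^2$-window $H^1$ bound. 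It has to be carried out for these intermediate states, though, not asserted as a constant-factor loss.
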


\begin{proof}
The construction has two tasks. It must approximate radial oscillator evolution on a finite grid, and it must create a state with a nonzero component in the requested eigenspace before phase estimation can identify that component. We use the same logarithmic grid for both tasks.

We first choose an auxiliary cutoff $K_{\rm aux}$ above the requested degree range. The WKB trial state for degree $n$ has a fixed positive overlap with $\phi_{n,\nu}$, while its squared norm above degree $K_{\rm aux}$ is at most $C_\nu(n+1)/(K_{\rm aux}+1)$ by Lemma~\ref{lem:laguerre-wkb-trial}. Since $n<N$, taking $K_{\rm aux}\ge C_\nu N\epsilon^{-2}$ makes the norm of this discarded component $O(\epsilon)$. The purpose of $K_{\rm aux}$ is only to identify a finite radial sector on which one grid and one evolution estimate work uniformly.

Theorem~\ref{thm:fftlog-control} now chooses $x_-$, $x_+$, $\omega_L$, and $L$ for the first $K_{\rm aux}$ radial eigenfunctions. It has two consequences. First, the finite propagator $P_L(t)$ approximates $e^{-itH_\nu}$ for every time $|t|\le\pi/2$ used by phase estimation. Second, the sampled Laguerre columns have a Gram matrix close to the identity. The second conclusion matters because the normalized sampled columns are not exactly orthogonal on a finite grid. Proposition~\ref{prop:laguerre-sampled-preparation} replaces them, for the analysis, by an exactly orthonormal family $\{\ket{v_k}\}_{k<K_{\rm aux}}$ whose columns differ from the requested normalized samples by at most the Gram error. The same proposition prepares a sampled WKB state $\ket{\mu_n}$ with a positive overlap, bounded below independently of $n<N$, with $\ket{v_n}$.

Choose $M=2^m>K_{\rm aux}$. On the retained sector, the eigenphases of
\begin{align}
    U_M=\exp\!\left[-\frac{i\pi}{M}(H_\nu-(\nu+1)I)\right]
\end{align}
are $e^{-2\pi i k/M}$ for $0\le k<K_{\rm aux}$, so phase estimation writes the exact degree $k$ rather than only its residue. Its $j$th controlled power requires one bounded-time radial evolution,
\begin{align}
    U_M^{2^j}=e^{i\pi2^j(\nu+1)/M}e^{-it_jH_\nu},
    \qquad
    t_j=\frac{\pi2^j}{M}\le\frac\pi2.
\end{align}
The circuit therefore uses one finite propagator for each power and never repeats a short-time simulation $2^j$ times. Corollary~\ref{cor:laguerre-polar-evolution} transfers each finite-evolution estimate to the orthonormal states $\ket{v_k}$. Summing the errors over the $m$ controlled powers bounds the complete phase-estimation labeler in operator norm on their span. The finite-evolution and arithmetic errors can therefore be chosen so that this labeler has arbitrarily small error on the retained sector.

The equality test keeps only the component whose phase-estimation output equals the input label $n$. This component has constant positive amplitude because of the WKB overlap. Proposition~\ref{prop:laguerre-spectral-filter} shows that one fixed-point amplification sequence works coherently for every $n<N$. It also shows that running the labeler once more, subtracting its output from the input label, and reversing the labeler erases the degree and all phase-estimation work registers without introducing an $n$-dependent phase.

Allocate a fixed fraction of the error budget to spectral truncation and fixed-point amplification. Divide the remaining budget among the finite propagators, sampled-column comparison, and reversible arithmetic calls. Theorem~\ref{thm:fftlog-control} permits the grid to be refined until its evolution and Gram errors meet these choices, while Propositions~\ref{prop:laguerre-sampled-preparation} and~\ref{prop:laguerre-spectral-filter} control the preparation and filtering errors. Their bounds imply that the circuit output differs by at most $\epsilon$ from the map in \eqref{eq:fixed-order-laguerre-log-map}. This estimate holds for arbitrary coherent superpositions because the evolution estimates are operator-norm bounds and the preparation and filtering circuits act as a direct sum over the unchanged degree label.

It remains to count resources. The choices above make $K_{\rm aux}$, $M$, and $L$ polynomial in $N$ and $\epsilon^{-1}$, so their register lengths are $O_\nu(B)$. Proposition~\ref{prop:resource-ledger} evaluates every FFTLog phase, chirp phase, trial-state rotation, and comparison with $O_\nu(B^{12})$ gates and reusable workspace. The full preparation, phase-estimation, amplification, and erasure circuit makes $O_\nu(B^3)$ such calls. The total gate count is therefore $O_\nu(B^{15})$, while reuse of the arithmetic workspace gives the stated $O_\nu(B^{12})$ qubit bound.

All circuit states in this proof use the coordinate $x=\log r$. The map $S$ defines the amplitudes represented on that grid; it is not a basis conversion performed during the circuit. Thus no resampling between a logarithmic grid and a grid uniform in $r$ is included in either the algorithm or its resource bound.
\end{proof}

\begin{remark}[Scope of the grid representation]
Theorem~\ref{thm:fixed-order-laguerre} prepares samples at the geometrically spaced radii $r_j=e^{x_j}$. It does not prepare samples at uniformly spaced radii or on a Cartesian grid. A coherent conversion to either representation would require a separate resampling algorithm and is not assumed here.
\end{remark}

\section{Additional Details on Digital Implementation}
\label{sec:additional_digital_details}

\subsection{Phase space representations of Gaussian optical gates}
\label{subsec:phasespace}

As mentioned in Section \ref{subsec:digital_implementation_optics}, we will leverage the quantum Hermite transform (QHT) of \cite{jain2025efficient} to approximately implement Gaussian optical operations on a digital device. In particular, suppose we have a system of $K$ one-dimensional Cartesian oscillator modes: $|n_1, \dots, n_K \rangle$. Gaussian operations are usually expressed in the oscillator basis, i.e. in terms of mode annihilation and creation operators $\{ a_j, a_j^{\dagger}\}_{j=1}^{K}$. Let $U(a_j, a_j^{\dagger}, a_k, a_k^{\dagger})$ denote a unitary that only acts on modes $j$ and $k$ and is expressed in the oscillator basis. The Hermite transform $\mathsf{HT}$ provides a means to move to the position basis and work efficiently with unitary operators described compactly in terms of the position/momentum quadrature operators $\{ \hat{x}_j, \hat{p}_j\}_{j=1}^{K}$:
\begin{align*}
U(a_j, a_j^{\dagger}, a_k, a_k^{\dagger})|n_1, \dots, n_K\rangle &= (\mathsf{HT}_j^{\dagger} \otimes \mathsf{HT}_k^{\dagger})U(\hat{x}_j, \hat{p}_j, \hat{x}_k, \hat{p}_k)(\mathsf{HT}_j \otimes \mathsf{HT}_k)|n_1, \dots, n_K\rangle\\
&=(\mathsf{HT}_j^{\dagger} \otimes \mathsf{HT}_k^{\dagger})U(\hat{x}_j, \hat{p}_j, \hat{x}_k, \hat{p}_k)|n_1, \dots, n_{j-1}, \psi_{n_j}^{c}, n_{j+1} \dots, n_{k-1}, \psi_{n_k}^{c}, n_{k+1}, \dots n_K\rangle,
\end{align*}
where $\psi_{n_k}^c$ is the $n_k$-th Hermite function/oscillator eigenstate in the position basis. Note that we will build more complicated Gaussian unitaries out of simpler ones $U$ that only act on at most two modes at a time.

To enable more efficient simulation, we will utilize the following decompositions for commonly used Gaussian optical gates. 
\begin{itemize}
    \item \textbf{Displacement}:
\begin{align*}
    D(x_0 + ip_0)  = e^{ip_0 \hat{x}}e^{-ix_0 \hat{p}}e^{-ix_0p_0/2}
\end{align*}

\item \textbf{Single-mode squeezing}
\begin{align*}
S(r) = 
e^{\,i\frac{\gamma(r)}{2}\hat{x}^2}\;
e^{\,i\frac{\beta(r)}{2}\hat{p}^2}\;
e^{\,i\frac{\alpha(r)}{2}\hat{x}^2} 
\end{align*}
with $\alpha(r)=\gamma(r)=-\tanh (\frac{r}{2})$ and $\beta(r)=\sinh(r)$. Note that this implements a squeeze operator in the rotated real quadrature basis $(\hat{x} \pm \hat{p})/\sqrt{2}$ and is not the diagonal matrix with diagonal entries $(e^{-r}, e^{r})$ implementing $S(r)$ in the standard basis.  

\item \textbf{(Real) Beamsplitter}
\begin{align*}
    \textup{BS}_{Y}(\theta)  = e^{-iq(\theta)\hat{x}_2\hat{p}_1}e^{is(\theta)\hat{x}_1\hat{p}_2}e^{-iq(\theta)\hat{x}_2\hat{p}_1}
\end{align*}

\textbf{(Complex) Beamsplitter}

\begin{align*}
    \textup{BS}_{X}(\theta)=  e^{-iq(\theta)\hat{x}_1\hat{x}_2}e^{is(\theta)\hat{p}_1\hat{p}_2}e^{-iq(\theta)\hat{x}_1\hat{x}_2}
\end{align*}
with $q(\theta) =  \tan\frac{\theta}{2}$ and $ s(\theta)=\sin(\theta)$ in both cases. 

\item \textbf{Two-mode Squeezing}
\begin{align*}
&S_2(r) = \text{BS}_{Y}(\pi/4)^{\dagger}[S'(r)\otimes S'(-r)] \text{BS}_{Y}(\pi/4)\\
&= e^{iq(\pi/4)\hat{x}_2\hat{p}_1}e^{-is(\pi/4)\hat{x}_1\hat{p}_2}e^{iq(\pi/4)\hat{x}_2\hat{p}_1}e^{\,i\frac{\gamma(r)}{2}\hat{x}_1^2}\;
e^{\,i\frac{\beta(r)}{2}\hat{p}_1^2}\;
e^{\,i\frac{\alpha(r)}{2}\hat{x}_1^2}e^{\,i\frac{\gamma(-r)}{2}\hat{x}_2^2}\;
e^{\,i\frac{\beta(-r)}{2}\hat{p}_2^2}\;
e^{\,i\frac{\alpha(-r)}{2}\hat{x}_2^2}\\
&\cdot e^{-iq(\pi/4)\hat{x}_2\hat{p}_1}e^{is(\pi/4)\hat{x}_1\hat{p}_2}e^{-iq(\pi/4)\hat{x}_2\hat{p}_1}
\end{align*}
The above identity is well known for the squeeze operator $S'(r)$ diagonal in the $(x, p)$ basis. For our implementation of the squeeze operator, the corresponding two-mode beam splitter identity is
$S_2(r) = \text{BS}_{X}(\pi/4)^{\dagger}[S'(r)\otimes S(-r)] \text{BS}_{X}(\pi/4)$.
\item \textbf{Phase/Rotation}
\begin{align*}
    R(\theta) = e^{-\frac{i}{2}q(\theta)\hat{x}^2}e^{-\frac{i}{2}s(\theta)\hat{p}^2}e^{-\frac{i}{2}q(\theta)\hat{x}^2}.
\end{align*}
\end{itemize}

Note that all of the above optical gates can be expressed as a finite product of simpler evolutions
\begin{align}
\label{eq:unitary_decomp}
\widehat{U}  = \prod_{b=1}^R\exp(i\vartheta_b \hat{O}_b),
\end{align}
where $R$ is a constant,  $\vartheta_b \in \R$ and 
\begin{align*}
    \widehat{O}_b \in \{ \hat{x}_j \hat{x}_k, \hat{x}_k\hat{p}_j, \hat{p}_j\hat{p}_k, \hat x_j^{2}, \hat p_j^{2}, \hat{x}_j, \hat{p}_j\} 
\end{align*}
is the set of elementary linear and quadratic operators. 
Thus, if we can digitally implement each of the simpler evolutions from $\hat{O}_b$, then we can also implement an
arbitrary unitary $\hat{U}$ compiled into Gaussian gates. This enables an optical-to-digital compilation that approximately preserves the number of basic gates, i.e. an $R$-gate optical circuit compiles to a $\widetilde{\mathcal{O}}(R^3)$ one- and two-qubit gate circuit. 

\subsection{The discrete oscillator and discrete quantum Hermite transform}

To implement the above operation digitally, we follow the framework of \cite{jain2025efficient, iyer2026efficientquantumcircuitshighdimensional}. Instead of the continuous harmonic oscillator Hamiltonian $H= \frac{1}{2} (x^{2}+ p^{2})$ we work with a discrete harmonic oscillator $\bar{H} = \frac{1}{2} (\bar{x}^{2} + \bar{p}^{2})$ on $L$ grid points. The discrete position operator $\bar{x}$ is an $L$-dimensional diagonal matrix with entries $(\bar{x})_{jj}= \sqrt{\frac{2\pi}{L}}(-L/2+j)$ and the discrete momentum operator $\bar{p} := F^{-1} \bar{x} F$ is defined using the $L$-dimensional discrete Fourier transform matrix $F$.

The continuous oscillator eigenfunctions $\phi_n(x) = \frac{(-1)^n}{\sqrt{2^n n! \sqrt{\pi}}}e^{-x^2/2}H_n(x)$ are replaced with their discrete approximation, 
\begin{align*}
&|\psi_n\rangle = \left(\frac{2\pi}{L}\right)^{1/4}\sum_{j = -L/2}^{L/2-1}\phi_n(x_j)|j\rangle
\end{align*}
with $x_{j} = j\sqrt{2\pi/L}$. 
The notation $|\psi^c_n\rangle := \ket{\phi_{n}(x)}$ is used to denote continuous harmonic oscillator eigenstates. The goal of the discretization error analysis is to show that the discrete states $\ket{\psi_{n}(x)}$ retain the properties of the continuous states $\ket{\psi_{n}^{c}(x)}$
required for our algorithms. 

The following fact shows the approximate orthonormality for low-energy oscillator states:
\begin{fact}[Fact 4 \cite{jain2025efficient}]
\label{fact:orthogonality}
There exist constants $\gamma, c \in (0,1)$
 such that for all $L$ sufficiently large and all 
$k,\ell \le cL$,
\begin{align}
\label{eqn:app_ortho}
\left| \langle \psi_k | \psi_\ell \rangle - \delta_{k,\ell} \right|
\le \exp(-\gamma L). 
\end{align}
\end{fact}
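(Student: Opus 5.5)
The plan is a Poisson summation argument, splitting the error into a truncation tail and an aliasing tail. Write $h=\sqrt{2\pi/L}$ and $f_{k\ell}(x)=\phi_k(x)\phi_\ell(x)$. Then
\begin{align*}
\langle\psi_k|\psi_\ell\rangle = h\sum_{j=-L/2}^{L/2-1} f_{k\ell}(jh).
\end{align*}
Extending the sum to all $j\in\Z$ introduces the tail $h\sum_{|j|\ge L/2} |f_{k\ell}(jh)|$. Poisson summation gives
\begin{align*}
h\sum_{j\in\Z} f_{k\ell}(jh)=\sum_{m\in\Z}\widehat{f_{k\ell}}(2\pi m/h).
\end{align*}
The $m=0$ term is $\int\phi_k\phi_\ell\,dx=\delta_{k\ell}$, by orthonormality of the continuum Hermite functions. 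So it suffices to show that both the spatial tail and the aliasing sum $\sum_{m\ne0}|\widehat{f_{k\ell}}(m\sqrt{2\pi L})|$ are at most $\tfrac12 e^{-\gamma L}$.

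\emph{Spatial tail.} The truncation point is $x_{\max}=(L/2)h=\sqrt{\pi L/2}$. The turning point of $\phi_n$ is $\sqrt{2n+1}\le\sqrt{2cL+1}$. I would use a standard pointwise bound beyond the turning point, for instance
\begin{align*}
|\phi_n(x)|\le C\exp\!\bigl(-\tfrac12(x^2-(2n+1))\bigr)\cdot\mathrm{poly}(x) \quad\text{for } x\ge\sqrt{2n+1}+1.
\end{align*}
This can be derived from the Cramér bound $|\phi_n|\le\pi^{-1/4}$ together with a WKB/comparison argument for $-\phi''+(x^2-(2n+1))\phi=0$. If $c<\pi/4$ with some margin, every sample with $|x|\ge x_{\max}$ has $x^2-(2n+1)\ge(\pi/2-2c)L-1$. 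The product $\phi_k\phi_\ell$ is then exponentially small in $L$ there, and summing the Gaussian tail over $j$ costs only a constant factor.

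\emph{Aliasing tail.} The Hermite functions are Fourier eigenfunctions, $\widehat{\phi_n}=(-i)^n\sqrt{2\pi}\,\phi_n$. Hence $\widehat{f_{k\ell}}$ is, up to constants, the convolution $\phi_k*\phi_\ell$ evaluated at $\omega$. For the frequencies $\omega=m\sqrt{2\pi L}$, $m\neq 0$, one of the two factors in the convolution integrand is always evaluated at a point of modulus at least $|\omega|/2\ge\sqrt{\pi L/2}$. There it is exponentially small by the same pointwise bound, while the other factor is bounded by $\pi^{-1/4}$ and the integral is controlled by the Gaussian decay. This gives $|\widehat{f_{k\ell}}(\omega)|\le C e^{-\gamma' \omega^2}$ for $|\omega|\ge\sqrt{2\pi L}$, and summing over $m\ne0$ gives $e^{-\gamma L}$ for large $L$.

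The main obstacle is a clean, uniform-in-$n$ pointwise decay bound for $\phi_n$ past the turning point, with explicit constants. This is what pins down the admissible $c$ (so that $2c$ stays below $\pi/4$, the square of the half-way point $x_{\max}/2$ relevant for the convolution) and the rate $\gamma$. I would try first the simple bound $|\phi_n(x)|\le e^{-(x-\sqrt{2n+1})^2/2}$, which holds for $x\ge\sqrt{2n+1}$ and can be derived from the ODE by a maximum-principle comparison. The polynomial prefactors from summing and from the convolution are then absorbed by slightly shrinking $\gamma$ and taking $L$ sufficiently large.
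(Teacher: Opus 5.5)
The paper gives no proof of this statement. It is imported as Fact~4 of \cite{jain2025efficient}, so there is no in-paper argument to compare against. Judged on its own, your Poisson-summation route is sound and gives a short self-contained proof with explicit constants. It splits the error into the truncation tail beyond $x_{\max}=\sqrt{\pi L/2}$ and the aliasing terms at frequencies $m\sqrt{2\pi L}$, with the $m=0$ term giving $\delta_{k\ell}$ exactly. However, the first pointwise bound you display is false and must be replaced by the one you mention at the end.

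\textbf{The first bound is false.} The estimate $|\phi_n(x)|\le C\,e^{-\frac12(x^2-(2n+1))}\,\mathrm{poly}(x)$, with $C$ and the polynomial independent of $n$, does not hold. Write $a=\sqrt{2n+1}$.
\begin{itemize}
\item Past the turning point the true size is $\exp\bigl(-\int_a^x\sqrt{t^2-a^2}\,dt\bigr)$ up to algebraic factors.
\item This exponent falls short of $\tfrac12(x^2-a^2)$ by $\int_a^x\bigl(t-\sqrt{t^2-a^2}\bigr)\,dt$, which is $\Theta(a^2)$ whenever $x/a$ is a fixed ratio greater than $1$.
\item That is exactly your regime, since $x_{\max}$ and $a$ are both $\Theta(\sqrt L)$.
\item Already at $x=a+1$ (the Airy regime) one has $|\phi_n(x)|=e^{-\Theta(\sqrt a)}$, not $e^{-a}\,\mathrm{poly}(a)$.
\end{itemize}
So the inequality $x^2-(2n+1)\ge(\pi/2-2c)L-1$ is not the quantity that controls the tail.

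\textbf{The replacement bound works.} The bound $|\phi_n(x)|\le e^{-(x-a)^2/2}$ for $x\ge a$ is correct, and your comparison argument takes one line. For $g=|\phi_n(a)|e^{-(x-a)^2/2}$,
\begin{align*}
(x^2-a^2)g-g''=\bigl(2a(x-a)+1\bigr)g>0\qquad(x\ge a).
\end{align*}
So $w=\pm\phi_n-g$ satisfies $w''>(x^2-a^2)w$ and cannot have a positive interior maximum on $[a,\infty)$. It is $\le 0$ at $x=a$ and tends to $0$ at infinity. Together with $|\phi_n|\le\pi^{-1/4}$ and parity, this gives $|\phi_n(x)|\le e^{-(|x|-a)^2/2}$ for $|x|\ge a$.

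Redo both tails with this bound, taking $|x|\ge x_{\max}$ in the spatial sum, $|\xi|\ge|\omega|/2\ge x_{\max}$ in the convolution, and $n\le cL$. One gets
\begin{align*}
(|x|-a)^2\ge\bigl(\sqrt{\pi/2}-\sqrt{2c}\bigr)^2L-O(1).
\end{align*}
Hence any $c<\pi/4$ and any $\gamma<\tfrac12\bigl(\sqrt{\pi/2}-\sqrt{2c}\bigr)^2$ work for $L$ large. The sums over the missing samples $j$ and over $m\neq 0$ cost only constant factors.

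\textbf{A small slip.} The half-way point of the first alias frequency is $\sqrt{2\pi L}/2=x_{\max}$, not $x_{\max}/2$. The convolution therefore imposes the same condition $c<\pi/4$ as the spatial tail, not $2c<\pi/4$. This is harmless, since any smaller $c$ also works, but your two stated constraints should agree.
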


Further, it can be shown that, for $m\leq cL$, the states $\ket{\phi_{m}}$ are
approximate eigenfunctions of the discrete harmonic oscillator,
\al{ 
\norm{ \bar{H} - (m+ 1/2) \phi_{m} } \leq exp(-\gamma L). 
}
The above equation shows that the discrete eigenstates approximate the Hermite-polynomial-based continuous eigenstates well on the low-energy subspace. This leads us to define two qubit-encoded Hilbert spaces: (1) the \emph{effective space} $\mathcal{H}_N = \{ |n \rangle : 0\leq n < N\}$ (encoded using a $\lceil \log(N)\rceil$-qubit space), and (2) the \emph{operational space} $\mathcal{H}_{L} = \{ |n \rangle : 0\leq n < L\}$. Clearly $\mathcal{H}_N \subset \mathcal{H}_L$.

Working with these two spaces, the efficient quantum Hermite transform of \cite{jain2025efficient} provides the following algorithm mapping states in $\mathcal{H}_{N}$ to superpositions of discrete eigenfunctions of $\bar{H}$.
\quantumHermiteTransform*
In particular, the quantum circuit $\mathsf{QHT}$ applied to $\sum_{m=0}^{N-1} \alpha_m |m\rangle$ produces a normalized quantum state $|\widetilde{\phi}\rangle$ that is at most $\epsilon$ in $\ell_2$ distance from the almost-normalized state $\sum_{m=0}^{N-1} \alpha_m |\psi_m\rangle$. Note that the action of $\mathsf{QHT}$ on $|n\rangle$ for $n>N$ has no guarantees. Thus we will actually be working with:
\begin{align}
\label{eqn:quadrature_qht}
    (\mathsf{QHT}_j^{\dagger} \otimes \mathsf{QHT}_k^{\dagger})U(\overline{x}_j, \overline{p}_j, \overline{x}_k, \overline{p}_k)(\mathsf{QHT}_j \otimes \mathsf{QHT}_k)|n_1, \dots, n_K\rangle,
\end{align}
where the unitary $U(\overline{x}_j, \overline{p}_j, \overline{x}_k, \overline{p}_k)$ is applied to a state that is at most $\epsilon$ away from  \begin{align*}
    |n_1, \dots, n_{j-1}, \psi_{n_j}^{c}, n_{j+1} \dots, n_{k-1}, \psi_{n_k}^{c}, n_{k+1}, \dots n_K\rangle.
\end{align*}
Note that Theorem \ref{thm:qht} has the same $\epsilon$ error bound for inputs that are superpositions.
In contrast to the first pair of QHTs, the inverse step will actually work over larger effective and operational spaces. This is because, unless the operation is photon-number preserving, $U$ will cause some significant leakage outside of $\textup{span}\{ |\psi_n\rangle : 0\leq n < N\}$. Thus inverses actually work with different $N$ and $L$ parameters (or more precisely we will just use the maximum parameter choice for both).

\subsection{Efficient digital implementation of Gaussian optical gates}
\label{sec:eff_implement}
In this section we adapt/generalize the results of \cite{jain2025efficient,iyer2026efficientquantumcircuitshighdimensional, guan2026quantumKrawtchouktransformusing}  to handle the larger Gaussian gate set. We also improve upon \cite{guan2026quantumKrawtchouktransformusing} by handling the non-compact, $\textup{SU}(1,1)$ case. As a result, we also obtain improved bounds on the required digital resources.


Consider a state $|\phi \rangle \in \mathcal{H}_{E}^{\otimes K} = \textup{span}_{\mathbb{C}}\{ |\mathbf{m}\rangle, \max_j m_j \leq E\}$. We define the following mapping from $\mathcal{H}_{N}^{\otimes K}$ into the span of the $L$-point discrete Hermite states $|\psi_{m_1}, \dots, \psi_{m_K}\rangle$:
\begin{align}
\label{eq:sampled-hermite-synthesis}
    \mathsf S_{E,L}\!\left(\sum_{\max_jm_j\leq E}\alpha_{\mathbf m}
    |\mathbf m\rangle\right)
    :=\sum_{\max_jm_j\leq E}\alpha_{\mathbf m}
    |\psi_{m_1},\ldots,\psi_{m_K}\rangle .
\end{align}
We also equip $\mathcal{H}_{E}^{\otimes K}$ with the natural orthogonal projector $Q_{E}$ from $\mathcal{H}_{\infty}^{\otimes K}$.

The map $\mathsf S_{E,L}$ has controlled spectral norm.
\begin{Lemma}
\label{lem:synth_spec_op}
There exists a constant $c >0$ such that if $E \leq c L$, then
\begin{align*}
    L=\Omega(E+\log K) \implies \lVert S_{E, L} \rVert = \mathcal{O}(1).
\end{align*}
\end{Lemma}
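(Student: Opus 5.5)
The plan is to bound $\lVert \mathsf S_{E,L}\rVert^2=\lVert \mathsf S_{E,L}^\dagger \mathsf S_{E,L}\rVert$ through its Gram matrix. With respect to the orthonormal basis $\{\ket{\mathbf m}:\max_j m_j\le E\}$, the matrix $G=\mathsf S_{E,L}^\dagger\mathsf S_{E,L}$ has entries
\begin{align*}
G_{\mathbf m,\mathbf n}=\prod_{j=1}^K\langle\psi_{m_j}|\psi_{n_j}\rangle .
\end{align*}
This factorization holds because the discrete Hermite states are tensor products over modes. Thus $G=g^{\otimes K}$, where $g$ is the $(E+1)\times(E+1)$ single-mode Gram matrix $g_{k\ell}=\langle\psi_k|\psi_\ell\rangle$ for $0\le k,\ell\le E$.

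Since the operator norm is multiplicative under tensor products, $\lVert G\rVert=\lVert g\rVert^K$. It therefore suffices to show $\lVert g\rVert\le 1+O(1/K)$.

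To bound $\lVert g\rVert$, take $c$ to be the constant of Fact~\ref{fact:orthogonality}. Then $E\le cL$ puts every index in range, and each entry satisfies $|g_{k\ell}-\delta_{k\ell}|\le e^{-\gamma L}$. Write $g=I+\Delta$. Bounding the operator norm by the Frobenius norm gives
\begin{align*}
\lVert\Delta\rVert\le (E+1)e^{-\gamma L}.
\end{align*}
Consequently $\lVert G\rVert\le\bigl(1+(E+1)e^{-\gamma L}\bigr)^K\le\exp\bigl(K(E+1)e^{-\gamma L}\bigr)$.

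The main step is to show that the condition $L=\Omega(E+\log K)$ makes the exponent $K(E+1)e^{-\gamma L}$ at most a constant. This is exactly where the naive doubly exponential term is avoided. Choose the implicit constant large enough that $\gamma L\ge \log K+\log(E+1)$. This holds because $\log(E+1)\le E$ and $L\ge C(E+\log K)$ with $C\ge 2/\gamma$. Then $K(E+1)e^{-\gamma L}\le 1$, and so $\lVert \mathsf S_{E,L}\rVert\le e^{1/2}=O(1)$.

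One subtlety is that Fact~\ref{fact:orthogonality} requires $L$ sufficiently large. This is also ensured by the $\Omega$ condition, and small cases can be absorbed into the constant. I would also note that the same argument gives a matching lower bound $\sigma_{\min}(\mathsf S_{E,L})\ge e^{-1/2}$, using $(1-(E+1)e^{-\gamma L})^K$, if near-isometry is needed later.
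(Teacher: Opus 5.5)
Your proof is correct and follows the paper's route: you factor the Gram matrix as $g^{\otimes K}$, use Fact~\ref{fact:orthogonality} to get $\lVert g-I\rVert\le (E+1)e^{-\gamma L}$ (the paper bounds this with $\sqrt{\lVert\cdot\rVert_1\lVert\cdot\rVert_\infty}$ where you use the Frobenius norm, an immaterial difference), and conclude $\lVert \mathsf S_{E,L}\rVert\le (1+(E+1)e^{-\gamma L})^{K/2}$; you also make explicit the final step the paper leaves implicit, namely that $L=\Omega(E+\log K)$ forces $K(E+1)e^{-\gamma L}\le 1$. One small correction to your closing aside: since $(1-x)^K\le e^{-Kx}$, the condition $K(E+1)e^{-\gamma L}\le 1$ does not by itself give $\sigma_{\min}(\mathsf S_{E,L})\ge e^{-1/2}$ (for $K=1$ it even permits $\lambda_{\min}(g)=0$), so you need a slightly stronger condition such as $K(E+1)e^{-\gamma L}\le 1/2$, after which $\log(1-x)\ge -x/(1-x)\ge -2x$ yields $\lambda_{\min}(g)^K\ge e^{-1}$.
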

\begin{proof}
By Fact
\ref{fact:orthogonality}, there are constants $c, \gamma > 0$ such that, for all $m, n \leq E \leq cL$,
\begin{align*}
    \lvert \langle \psi_{m}, \psi_{n}\rangle - \delta_{m, n}\rvert \leq \exp(-\gamma L).
\end{align*}
Hence, if $G_{E,L}$ is the Gram matrix $(G_{E,L})_{m, n}:=\langle \psi_{m}| \psi_{n}\rangle$, then, because $G_{E,L}$ is Hermitian,
\begin{align*}
    \lVert G_{E,L} - I\rVert &\leq \sqrt{\lVert G_{E,L} - I\rVert_1 \lVert G_{E,L} - I\rVert_{\infty}}\\
    &= {\lVert G_{E,L} - I\rVert_{\infty}} \\
    &\leq (E+1)\exp(-\gamma L).
\end{align*}
Then 
\begin{align*}
    \lVert S_{E,L}S_{E,L}^{\dagger} \rVert^2 = \lVert G_{E,L}^{\otimes K} \rVert \leq (1 +(E+1)\exp(-\gamma L))^{K}.
\end{align*}
Hence 
\begin{align*}
    \lVert S_{E, L}\rVert \leq (1 +(E+1)\exp(-\gamma L))^{K/2}.
\end{align*}
\end{proof}

For the lemma below, we derive an improved version of \cite{iyer2026efficientquantumcircuitshighdimensional}.
\begin{Lemma}[Lemma 3.3 \cite{iyer2026efficientquantumcircuitshighdimensional}, adapted]
\label{lem:basic_gaussian_operation_error}
Let $\widehat W=\exp(i\vartheta\widehat O)$, where
$|\vartheta|\leq1/(4\sqrt{2}e)$ and
\begin{align*}
\widehat O\in\{\hat x_j,\hat p_j,\hat x_j^2,\hat p_j^2,
\hat x_j\hat x_k,\hat x_j\hat p_k,\hat p_j\hat p_k:j\ne k\}.
\end{align*}
Let $\overline W$ be the corresponding $L$-point discretization.  There are
constants $C,c,\gamma>0$ such that, whenever $N\leq cL$,
\begin{align}
\label{eq:elementary-fock-tail}
    \|(I-Q_{5N})\widehat WQ_N\|
    &\leq \eta(N,L),\\
\label{eq:elementary-grid-bridge}
    \|\overline W\mathsf S_{N,L}
      -\mathsf S_{5N,L}Q_{5N}\widehat WQ_N\|
    &\leq \eta(N,L),
\end{align}
where
\begin{align*}
    \eta(N,L)=CN^2\exp(-\gamma N).
\end{align*}
\end{Lemma}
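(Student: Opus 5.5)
The plan has two parts. The Fock-tail bound \eqref{eq:elementary-fock-tail} is an energy-growth estimate for a short-time quadratic evolution. The grid bridge \eqref{eq:elementary-grid-bridge} compares the continuum evolution with its $L$-point discretization on Hermite-sampled states. Throughout, I will work one input basis vector $\ket{\mathbf m}$ with $\max_j m_j\le N$ at a time. Only one or two modes are touched, so the $K$-fold tensor structure enters only through the other factors, which are left untouched.

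For \eqref{eq:elementary-fock-tail} I would expand $\widehat W=\sum_{r\ge0}(i\vartheta)^r\widehat O^r/r!$ in the Fock basis. Writing $\hat x,\hat p$ in terms of $a,a^\dagger$, each application of $\widehat O$ raises the total occupation of the one or two involved modes by at most $2$. The associated matrix elements are bounded by $(\sqrt{n+r})^2$-type factors, giving $\|\widehat O^r\ket{\mathbf m}\|\le (2\sqrt2)^r\,(N+2r)^{r}$ roughly. To leave the range $\max m\le 5N$, one needs $r\ge 2N$ applications; with the same counting for the linear generators one needs $r\ge 4N$. Using $r!\ge (r/e)^r$, the $r$-th term is bounded by $(2\sqrt2 e|\vartheta|(N+2r)/r)^r\le (2\sqrt2 e|\vartheta|\cdot 2)^r\le 2^{-r}$ for $r\ge N$. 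This is exactly where the threshold $|\vartheta|\le 1/(4\sqrt2 e)$ is calibrated. Summing the geometric tail from $r\approx 2N$ gives $\exp(-\gamma N)$. A polynomial prefactor $N^2$ accounts for summing over the at most $O(N^2)$ relevant two-mode output indices when converting entrywise bounds into an operator-norm bound on $Q_N$.

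For \eqref{eq:elementary-grid-bridge} I would follow the \cite{iyer2026efficientquantumcircuitshighdimensional} strategy. The discretized generators $\overline x,\overline p=F_L^\dagger\overline xF_L$ act on the sampled states $\ket{\psi_n}$ as $a,a^\dagger$ do on $\ket{n}$, up to $\exp(-\gamma L)$ errors, for $n\le cL$. This uses the approximate eigen-relation for $\bar H$ and the Poisson-summation estimate behind Fact~\ref{fact:orthogonality}. Then I would compare the truncated Taylor series $\sum_{r\le 2N}(i\vartheta)^r\overline O^{\,r}/r!$ applied to $\mathsf S_{N,L}$ with $\mathsf S_{5N,L}$ applied to the same truncated series of $\widehat O$. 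The two series agree term by term up to a telescoping sum of $O(r)$ ladder-defect errors, each of size $\exp(-\gamma L)\le\exp(-\gamma N)$, since $N\le cL$. The tails beyond $r=2N$ are controlled on both sides by the estimate of the first part. On the discrete side this uses $\|\overline O\|\le O(L)$ together with the fact that intermediate states remain close to the span of $\ket{\psi_n}$ for $n\le 5N$. Lemma~\ref{lem:synth_spec_op} bounds $\|\mathsf S_{5N,L}\|=O(1)$, so transferring the continuum tail through $\mathsf S$ costs only a constant.

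I expect the main obstacle to be the discrete-side tail. The grid operator $\overline O$ is not exactly a ladder operator, and its powers could a priori drive mass out of the low-energy sampled subspace. I would handle this by an induction on $r$. At each step I write $\overline O^{\,r}\mathsf S_{N,L}\ket{\mathbf m}=\mathsf S_{N+2r,L}\widehat O^{\,r}\ket{\mathbf m}+e_r$ with $\|e_r\|\le r\,(C(N+2r))^{r}\exp(-\gamma L)$. This is valid while $N+2r\le cL$. Choosing the truncation $r\le 2N$ keeps this regime valid, and the factorial still suppresses $e_r$. The remaining series beyond $r=2N$ for $\overline W$ is bounded simply by unitarity together with the continuum tail bound via the identity $\overline W=\overline W_{\le 2N}+(\overline W-\overline W_{\le 2N})$. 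This last step is the delicate point I would check most carefully.
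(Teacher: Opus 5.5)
Your outline matches the paper's: Taylor expansion, counting how many applications of $\widehat O$ are needed to leave $Q_{5N}$, truncated-series comparison on the grid, and $\|\mathsf S_{5N,L}\|=O(1)$. However, the step you flag as delicate is a genuine gap, and neither resolution you sketch closes it.

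Knowing $\|\overline W v\|=\|v\|$, and that $\overline W_{\le 2N}v$ is close to a vector of norm $\|v\|-O(e^{-\gamma N})$, says nothing about $\|(\overline W-\overline W_{\le 2N})v\|$: vectors of nearly equal norm need not be close. Nor can $\sum_{r>2N}(i\vartheta\overline O)^r/r!$ be bounded term by term. For $r>2N$ the vectors $\overline O^{\,r}\mathsf S_{N,L}\ket{\mathbf m}$ are combinations of $\ket{\psi_n}$ with $n$ up to $N+2r$, eventually beyond $cL$. There the sampled Hermite states are no longer nearly orthonormal, so your claim that intermediate states stay near $\mathrm{span}\{\ket{\psi_n}:n\le 5N\}$ fails. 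The crude bound via $\|\overline O\|=O(L)$ only gives a tail of order $e^{|\vartheta|\,\|\overline O\|}$, which is not small when $2N\ll|\vartheta|\,\|\overline O\|$.

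The missing idea, which the paper uses, is the integral form of the Taylor remainder,
\[
\overline W-\overline W_{\le M}=\frac{(i\vartheta)^{M+1}}{M!}\int_0^1(1-s)^M e^{is\vartheta\overline O}\,ds\;\overline O^{\,M+1}.
\]
Since $e^{is\vartheta\overline O}$ is unitary, $\|(\overline W-\overline W_{\le M})\mathsf S_{N,L}\|\le\frac{|\vartheta|^{M+1}}{(M+1)!}\|\overline O^{\,M+1}\mathsf S_{N,L}\|$. So only a single power, still inside the regime where $\|\mathsf S_{E,L}\|=O(1)$, must be controlled, and the discrete remainder is $O(1)$ times the continuum Taylor term of order $M+1$. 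For position-type generators that power is controlled exactly. Sampling commutes with multiplication by $x$ and the Hermite three-term recurrence holds pointwise, so $\overline x^{\,k}\mathsf S_{N,L}=\mathsf S_{N+k,L}\hat x^kQ_N$ with no error. This is why the paper's middle term vanishes identically.

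Two smaller points. First, for momentum-type generators your bound $\|e_r\|\le r\,(C(N+2r))^r e^{-\gamma L}$ is asserted rather than derived. The defect vectors $F\ket{\psi_n}-i^n\ket{\psi_n}$ need not lie in the low-energy sampled span, so later factors of $\overline O$ can amplify them by up to $\|\overline O\|$ per step rather than by $O(N+2r)$. The paper never iterates defects. It writes $e^{i\vartheta\overline p^{\,2}}=F^{-1}e^{i\vartheta\overline x^{\,2}}F$ exactly, reduces to the position case, and commutes $F$ past $\mathsf S_{N,L}$ and $F^{-1}$ past $\mathsf S_{5N,L}$ at cost $O(e^{-\gamma L})$ per basis vector; unitarity guarantees these errors are not amplified.

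Second, your Fock-tail constants do not close. For $r\ge 2N$ one has $(N+2r)/r\le 5/2$, not $2$. With $\|\widehat O^{\,r}\ket{\mathbf m}\|\le(2\sqrt2)^r(N+2r)^r$, the threshold $|\vartheta|\le 1/(4\sqrt2e)$ then yields only $(5/4)^r$, and even your displayed expression yields $1^r$ rather than $2^{-r}$. Use the sharper $\|\hat x^k\ket m\|\le\sqrt{2^k(m+1)\cdots(m+k)}$, i.e. $2^r(N+2r)^r$ for $\hat x_j^2$ and $2^r(N+r)^r$ for $\hat x_j\hat x_k$. The tail terms are then at most $(5e|\vartheta|)^r\le(5/(4\sqrt2))^r$ for $r>2N$ and $(4e|\vartheta|)^r\le 2^{-r/2}$ for $r>N$. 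This is how the stated threshold and the cutoff $5N$ fit together.
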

\begin{proof}
Let us define the truncated Taylor series operator
\begin{align*}
    R_{M}(\widehat{O}) := \sum_{k=0}^{M}\frac{(i\vartheta\widehat{O})^{k}}{k!}.
\end{align*}
 Note that we have the following recurrence (using the Fock basis):
\begin{align}
\label{eqn:recursion}
&\hat{x}|m\rangle = \sqrt{\frac{m}{2}}|{m-1}\rangle + \sqrt{\frac{m+1}{2}}|{m+1}\rangle \notag\\
&\hat{x}|0\rangle = \sqrt{\frac{1}{2}}|1\rangle,
\end{align}
By expanding the recurrence \eqref{eqn:recursion} we get
\begin{equation*}
\| \hat{x}^k\,|\psi^c_m\rangle \|
\leq \sqrt{2^k(m+1)(m+2)\cdots(m+k)}.
\end{equation*}
Then, if $m_1, m_2 \leq N$ and $k > N$,
\begin{align*}
\Big\| \frac{(i\vartheta)^k (\widehat{x}_1 \widehat{x}_2)^k}{k!}\,
|m_1, m_2\rangle \Big\|
&\leq \frac{|\vartheta|^k}{k!}
\sqrt{2^{2k}(m_1+1)\cdots(m_1+k)(m_2+1)\cdots(m_2+k)} \\
&\leq \frac{|{2}\vartheta|^k}{k!}\,(N + 1)\cdots(N + k) \\
&\leq \frac{|{2}\vartheta|^k\,(N + k)^k}{k!} \\
&\leq \Big( \frac{e{2}\,|\vartheta|\,(N + k)}{k} \Big)^k
\qquad \text{(Stirling's)} \\
&\leq (4e\lvert \vartheta\rvert)^k.
\end{align*}
Since $e\,|\vartheta| \leq 1/4\sqrt{2}$, we have
\begin{align*}
    \Big\| \frac{(i\vartheta)^k (\widehat{x}_1 \widehat{x}_2)^k}{k!}\,
|m_1, m_2\rangle \Big\| \leq 2^{-k/2}.
\end{align*}
 Similarly, for $m_1, m_2 \leq N, k > 2N$ we have
 \begin{align*}
     \Big\| \frac{(i\vartheta)^k (\widehat{x}_1^2)^k}{k!}\,
|m_1\rangle \Big\| \leq \left(\frac{5}{4\sqrt{2}}\right)^k.
 \end{align*}
Note that $\widehat{x}_1^{k}Q_{N} \subseteq Q_{N +k}$. Hence the $R_{N}(\widehat{x_1}, \widehat{x_2})$ maps $Q_{N}$ into $Q_{2N}$ and $R_{2N}(\widehat{x}_1^2)$ maps $Q_{N}$ into $Q_{5N}$. Thus  
\begin{align*}
    (I - Q_{5N})R_{\infty}Q_N = (I - Q_{5N})(R_{\infty}(\widehat{x}_1\widehat{x}_2)-  R_{N}(\widehat{x}_1\widehat{x}_2))Q_{N}.
\end{align*}
We currently have that for all $m_1, m_2 \leq N$
\begin{align*}
    \lVert (I - Q_{5N})(R_{\infty}(\widehat{x}_1\widehat{x}_2) - R_N(\widehat{x}_1\widehat{x}_2))Q_{N}|m_1, m_2\rangle \rVert \leq 2^{-N/2}
\end{align*}
Hence, we can promote the above to a spectral norm bound at the cost of a $(N+1)^2$ factor. A similar argument bounds $\lVert (I-Q_{5N})(R_{\infty}(\widehat{x}_1^2) - R_{2N}(\widehat{x}_1^2))Q_{N}\rVert$. We can obtain the bound for the momentum operators by conjugating by Fourier transforms and using that $F|\psi_n^c\rangle = i^{n}|\psi_n^c\rangle$.

For the other error bound:
\begin{align*}
 \lVert R_{\infty}(\overline{x}_1\overline{x}_1)S_{N, L} - S_{5N, L}Q_{5N}R_{\infty}(\widehat{x}_1\widehat{x}_2)Q_{N}|m_1, m_2\rangle\rVert 
 &\leq \lVert [R_{\infty}(\overline{x}_1\overline{x}_1)S_{N, L} - R_{N}(\overline{x}_1,\overline{x}_2)S_{N,L}]|m_1, m_2\rangle\rVert \\
 &+ \lVert [R_{N}(\overline{x}_1,\overline{x}_2)S_{N,L} - S_{5N,L}R_{N}(\widehat{x}_1\widehat{x}_2)Q_{N}]|m_1, m_2\rangle\rVert \\
 &+\lVert [S_{5N, L}Q_{5N}(R_{N}(\widehat{x}_1\widehat{x}_2) - R_{\infty}(\widehat{x}_1\widehat{x}_1))Q_{N}]|m_1, m_2\rangle\rVert.
\end{align*}
The middle term is zero. Since $\lVert S_{5N, L}\rVert = \mathcal{O}(1)$ by Lemma \ref{lem:synth_spec_op}, the last term is bounded by $\mathcal{O}(2^{-N/2})$ using the Taylor bound from above. For the first term, we use the fact that the second term is zero to bound: for $k \leq N$,  $\lVert \overline{O}^{k} S_{N, L} \rVert \leq \lVert S_{5N, L} \widehat{O}^{k} Q_N\rVert  = \mathcal{O}\left(\lVert \widehat{O} \rVert^{k}\right)$. Hence, by applying the integral form of the Taylor remainder, the first term is also $\mathcal{O}(2^{-N/2})$. We can promote it to a spectral norm bound by multiplying by $(N+1)^2$.

For the case of $\overline{x}_1^2$, a similar argument applies giving
\begin{align*}
    \lVert R_{\infty}(\overline{x}_1^2)S_{N, L} - S_{5N, L}Q_{5N}R_{\infty}(\widehat{x}_1^2)Q_{N}\rVert  \leq AN\left(\frac{5}{4\sqrt{2}}\right)^{2N},
\end{align*}
for some constant $A > 0$. For the momentum operator, we just need to insert Fourier transforms. The error analysis now becomes
\begin{align*}
 &\lVert [F^{-1}R_{\infty}(\overline{x}_1^2)FS_{N, L} - S_{5N, L}Q_{5N}F^{-1}R_{\infty}(\widehat{x}_1^2)FQ_{N}]|m_1\rangle\rVert\\ 
 &\leq  \lVert [F^{-1}R_{\infty}(\overline{x}_1^2)FS_{N, L} - F^{-1}R_{\infty}(\overline{x}_1^2)S_{N, L}FQ_{N}]|m_1\rangle\rVert\\
 &+\lVert [F^{-1}R_{\infty}(\overline{x}_1^2)S_{N, L}FQ_{N} - F^{-1}S_{5N, L}Q_{5N}R_{\infty}(\widehat{x}_1^2)FQ_{N}]|m_1\rangle\rVert\\
 &+\lVert [F^{-1}S_{5N, L}Q_{5N}R_{\infty}(\widehat{x}_1^2)FQ_{N}- S_{5N, L}Q_{5N}F^{-1}R_{\infty}(\widehat{x}_1^2)FQ_{N}]|m_1\rangle\rVert
\end{align*}
Note that $F$ and $Q_{E}$ commute. Also note that we are slightly abusing notation when switching between the discrete Fourier transform on the discrete Hermite states and the actual Fourier transform. However, we account for the error that the discrete version does not exactly satisfy $F|\psi_m\rangle = i^{m}|\psi_m\rangle$.

By the unitarity of the Fourier transform, the first term is exactly what we previously bounded. For the two other terms we use the following. Reference~\cite{jain2025efficient} notes that there exist positive constants $c >0, \gamma > 0$ such that for all $m \leq c L$:
\begin{equation}
\label{eqn:approximate_evec}
\big\| F\,|\psi_m\rangle - i^m\,|\psi_m\rangle \big\|
\leq \exp(-\gamma L).
\end{equation}
Hence $\lVert [S_{N,L}F - FS_{N,L}]|m_1, m_2\rangle\rVert = \mathcal{O}(\exp(-\gamma L))$, adjusting constants as needed. Note that $R_{\infty}(\widehat{x}_1^2)$ and $R_{\infty}(\overline{x}_1)^2$ are unitary. We can convert it to a spectral norm bound by multiplying by $\mathcal{O}((N+1)^2)$.
\end{proof}

We derive a slightly modified version of Lemma 3.1
\cite{iyer2026efficientquantumcircuitshighdimensional} for our purposes.

\begin{Lemma}
\label{lem:unitary_error}
Let
\begin{align*}
    \widehat U=\widehat W_r\cdots\widehat W_1,
    \qquad
    \overline U=\overline W_r\cdots\overline W_1,
\end{align*}
where every $\widehat W_b$ is an elementary Gaussian factor from Lemma
\ref{lem:basic_gaussian_operation_error}, acts nontrivially on at most two of the $K$ modes,
and $\overline W_b$ is its $L$-point discretization.  Set $E_b=5^bN$ and define
\begin{align}
\label{eq:truncated-fock-prefix}
    T_0=Q_N,
    \qquad
    T_b=Q_{E_b}\widehat W_bT_{b-1},
    \qquad b=1,\ldots,r.
\end{align}
If $L\geq c_0(5^rN+\log K)$ for a sufficiently large constant $c_0$, then
there are constants $C,\overline\gamma>0$, independent of $N,L,K$, and $r$,
such that
\begin{align}
\label{eq:composed-fock-and-grid-error}
\max\left\{
    \|\widehat UQ_N-T_r\|,
    \|\overline U\mathsf S_{N,L}-\mathsf S_{E_r,L}T_r\|
    \right\}
    \leq CN^2 e^{-\overline\gamma N}.
\end{align}
Consequently,
\begin{align}
\label{eq:composed-tail-and-bridge}
\max\left\{
    \|(I-Q_{E_r})\widehat UQ_N\|,
    \|\overline U\mathsf S_{N,L}
       -\mathsf S_{E_r,L}Q_{E_r}\widehat UQ_N\|
    \right\}
    \leq C'N^2 e^{-\overline\gamma' N}
\end{align}
for another pair of constants $C',\overline\gamma'>0$.
\end{Lemma}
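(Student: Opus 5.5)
The plan is an induction on $b$ with a telescoping error estimate, using Lemma~\ref{lem:basic_gaussian_operation_error} once per factor and Lemma~\ref{lem:synth_spec_op} to keep the synthesis maps bounded. Since $\widehat W_b$ acts on at most two modes, I will first lift the elementary estimates \eqref{eq:elementary-fock-tail}--\eqref{eq:elementary-grid-bridge} from the two active modes to all $K$ modes. On the Fock side, $Q_{E}$ on $K$ modes factors as a tensor product of single-mode projectors, and $\widehat W_b$ is the identity on the inactive modes. Hence the tail bound holds with the cutoff raised to $5E$ on the active modes only, which is still dominated by $Q_{5E}$. On the grid side, $\mathsf S_{E,L}$ factors over modes. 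The inactive factors contribute $\mathsf S^{(1)}_{E,L}$ tensors whose norm is at most $(1+(E+1)e^{-\gamma L})^{(K-2)/2}$. This is $O(1)$ once $L\geq c_0(E+\log K)$, which is exactly where the $\log K$ in the hypothesis enters. It replaces the $L^K e^{-\gamma L}$ term of the cited work.

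Next, for the Fock-side bound, write
\[
\widehat W_b\cdots\widehat W_1Q_N - T_b=(I-Q_{E_b})\widehat W_bT_{b-1}+\widehat W_b\bigl(\widehat W_{b-1}\cdots\widehat W_1Q_N-T_{b-1}\bigr).
\]
The second term is controlled by induction, using unitarity of $\widehat W_b$. For the first term, note that $T_{b-1}$ has range in $\mathcal H_{E_{b-1}}^{\otimes K}$ and $\|T_{b-1}\|\leq1$. Applying the lifted tail estimate at cutoff $E_{b-1}$ then gives a bound of $CE_{b-1}^2e^{-\gamma E_{b-1}}$.

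For the grid side, telescope in the same way:
\[
\overline W_b\cdots\overline W_1\mathsf S_{N,L}-\mathsf S_{E_b,L}T_b=\overline W_b\bigl(\overline W_{b-1}\cdots\mathsf S_{N,L}-\mathsf S_{E_{b-1},L}T_{b-1}\bigr)+\bigl(\overline W_b\mathsf S_{E_{b-1},L}-\mathsf S_{E_b,L}Q_{E_b}\widehat W_b\bigr)T_{b-1}.
\]
Again use unitarity of $\overline W_b$ for the first term and the lifted bridge estimate \eqref{eq:elementary-grid-bridge} at cutoff $E_{b-1}$ for the second. This needs $E_{b-1}\leq cL$, which follows from $L\geq c_0 5^rN$.

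Summing over $b$, both errors are at most
\[
\sum_{b=0}^{r-1}C(5^bN)^2e^{-\gamma5^bN}.
\]
Because the cutoffs grow geometrically, this series is dominated by its first term: $(5^bN)^2e^{-\gamma5^bN}\leq N^2e^{-\gamma N}\cdot 25^be^{-\gamma(5^b-1)N}$, and the latter factor is summable uniformly in $r$ for $N\geq1$. So the total is $CN^2e^{-\overline\gamma N}$ with constants independent of $r$ and $K$, which gives \eqref{eq:composed-fock-and-grid-error}.

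The consequences \eqref{eq:composed-tail-and-bridge} follow from it. For the tail, $(I-Q_{E_r})\widehat UQ_N=(I-Q_{E_r})(\widehat UQ_N-T_r)$, since $(I-Q_{E_r})T_r=0$. For the bridge, write $\mathsf S_{E_r,L}Q_{E_r}\widehat UQ_N-\mathsf S_{E_r,L}T_r=\mathsf S_{E_r,L}Q_{E_r}(\widehat UQ_N-T_r)$ and use $\|\mathsf S_{E_r,L}\|=O(1)$ from Lemma~\ref{lem:synth_spec_op}.

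I expect the main obstacle to be the lifting step. The elementary lemma is stated for inputs with cutoff $N$ on the active modes, but $T_{b-1}$ has cutoff $E_{b-1}$ on all $K$ modes. One must check that the elementary constants do not depend on how many inactive modes are present. One must also check that the $O(1)$ norm bound on inactive synthesis factors does not silently acquire $K$-dependence, which is what forces $L=\Omega(\log K)$.
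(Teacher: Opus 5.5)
Your proposal is correct and follows essentially the same route as the paper's proof. Both use the lift to $K$ modes via Lemma~\ref{lem:synth_spec_op}, the two telescoping recurrences driven by $(I-Q_{E_b})\widehat W_bT_{b-1}$ and $\bigl(\overline W_b\mathsf S_{E_{b-1},L}-\mathsf S_{E_b,L}Q_{E_b}\widehat W_b\bigr)T_{b-1}$, the geometric summation over the cutoffs $E_{b-1}=5^{b-1}N$, and the same derivation of the consequences; the only difference is cosmetic, in that you argue at the operator level while the paper fixes a normalized input vector $\ket{\phi}\in\operatorname{ran}(Q_N)$.
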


\begin{proof}
The local two-mode estimate of Lemma~\ref{lem:basic_gaussian_operation_error} extends to
$K$ modes by tensoring additional $S_{N,L}$ maps. From Lemma \ref{lem:synth_spec_op}, we have that for $L = \Omega(N +\log(K))$, the spectral norm is $\mathcal{O}(1)$. Thus at step $b$
both local bounds in Lemma \ref{lem:basic_gaussian_operation_error} are at most
\begin{align*}
    \eta_b=AE_{b-1}^2e^{-\gamma E_{b-1}}
\end{align*}
for constants independent of $K$.

Fix a normalized input $|\phi\rangle\in\operatorname{ran}(Q_N)$.  Write
\begin{align*}
    |u_b\rangle=\widehat W_b\cdots\widehat W_1|\phi\rangle,
    \qquad
    |t_b\rangle=T_b|\phi\rangle.
\end{align*}
Since every exact factor is unitary and every $Q_{E_b}$ is an orthogonal
projector,
\begin{align*}
|u_b\rangle-|t_b\rangle
&=\widehat W_b(|u_{b-1}\rangle-|t_{b-1}\rangle)
 +(I-Q_{E_b})\widehat W_b|t_{b-1}\rangle,
\end{align*}
and $\|t_{b-1}\|\leq1$.  The first local estimate therefore gives
\begin{align*}
    \|u_b-t_b\|\leq\|u_{b-1}-t_{b-1}\|+\eta_b.
\end{align*}

For the discretized operators, set
\begin{align*}
    |g_b\rangle=\overline W_b\cdots\overline W_1
        \mathsf S_{N,L}|\phi\rangle,
    \qquad
    |h_b\rangle=\mathsf S_{E_b,L}|t_b\rangle.
\end{align*}
Then
\begin{align*}
|g_b\rangle-|h_b\rangle
&=\overline W_b(|g_{b-1}\rangle-|h_{b-1}\rangle)\\
&\quad+\left(\overline W_b\mathsf S_{E_{b-1},L}
-\mathsf S_{E_b,L}Q_{E_b}\widehat W_bQ_{E_{b-1}}\right)|t_{b-1}\rangle.
\end{align*}
The discretized factor $\overline W_b$ is unitary, and the second local estimate
gives
\begin{align*}
    \|g_b-h_b\|\leq\|g_{b-1}-h_{b-1}\|+\eta_b.
\end{align*}
Both errors are consequently bounded by $\sum_{b=1}^r\eta_b$.  Since
$E_b=5^bN$,
\begin{align*}
    \sum_{b=1}^r E_{b-1}^2e^{-\gamma E_{b-1}}
    \leq A'N^2e^{-\gamma' N},
\end{align*}
which proves \eqref{eq:composed-fock-and-grid-error}.  The first bound in
\eqref{eq:composed-tail-and-bridge} follows because $T_r$ is supported on
$Q_{E_r}$.  For the second, insert $\mathsf S_{E_r,L}T_r$ and use
Lemma \ref{lem:synth_spec_op} together with the two bounds in
\eqref{eq:composed-fock-and-grid-error}.
\end{proof}

Via the Hermite transform, any basic Gaussian operator $U_{G}(\theta)$ with evolution time $\theta$ can  be expressed in the quadrature basis as in \ref{subsec:phasespace}. Additionally, we can break them down as in \eqref{eq:unitary_decomp} with $R$ (denoted here $R_G$) being  a constant for each basic Gaussian gate. Then, we need to further decompose each component of $\eqref{eq:unitary_decomp}$ as 
\begin{align*}
    \exp(-i\vartheta_b \hat{O}_b') = (\exp(-i\varphi_b \hat{O}_b'))^{m_{G, b, \theta}},
\end{align*}
where $\varphi_b \in [-\frac{1}{4\sqrt{2}e}, \frac{1}{4\sqrt{2}e}]$. We will refer to $\sum_{b=1}^{R_G} m_{G, b, \theta}$ as the \emph{amplification factor} for the basic Gaussian gate $U_{G}(\theta)$.

Combining Lemma \ref{lem:unitary_error} with the efficient Hermite transform (Theorem \ref{thm:qht}) gives
the following digital implementation theorem for Gaussian optics.

\digitalImplTheorem*
\begin{proof}

After decomposing every Gaussian factor into bounded-time elementary
evolutions, let $r\leq RM$ be their total number and set
$E=5^rN$.  In the quadrature-grid representation write
\begin{align*}
    \overline U'=\overline W_r\cdots\overline W_1.
\end{align*}
The digital circuit is
\begin{align}
\label{eq:digital-gaussian-circuit-new}
    \overline U=
    \left(\bigotimes_{j=1}^K\mathsf{QHT}_{E,L,j}^{\dagger}\right)
    \overline U'
    \left(\bigotimes_{j=1}^K\mathsf{QHT}_{N,L,j}\right),
\end{align}
where the same operational grid dimension $L$ is used throughout.

Fix a normalized input $|\psi\rangle\in\operatorname{ran}(Q_N)$.  Applying
the first $K$ QHTs with individual error $\epsilon_H$ produces a state within
$K\epsilon_H$ of $\mathsf S_{N,L}|\psi\rangle$.  Unitarity of the grid
evolution preserves this error.  Lemma~\ref{lem:unitary_error}, followed by
the inverse QHTs at cutoff $E$, then gives
\begin{align}
\label{eq:main-gaussian-error-new}
    \|\overline U|\psi\rangle-\widehat U|\psi\rangle\|
    \leq 2K\epsilon_H+CN^2e^{-\overline\gamma N}.
\end{align}
Indeed, the second estimate in \eqref{eq:composed-tail-and-bridge} compares
the grid evolution with the sampled state associated with
$Q_E\widehat U|\psi\rangle$, while the first estimate bounds the discarded
exact Fock tail.  The adjoint QHT maps that sampled state back to its Fock
coefficients with another $K\epsilon_H$ error.

Take $\epsilon_H=\epsilon/(4K)$.  The QHT theorem at the largest effective
cutoff $E$ requires
\begin{align*}
    L=\Theta\!\left(
      K^{13/4}(5^rN)^{9/4}\epsilon^{-13/4}+\log K
    \right).
\end{align*}
This choice also satisfies the hypothesis
$L=\Omega(5^rN+\log K)$ of Lemma~\ref{lem:unitary_error}.  From \eqref{eq:main-gaussian-error-new}, the
condition
\begin{align*}
    N=\Omega\!\left(\log(1/\epsilon)\right)
\end{align*}
with a sufficiently large implicit constant makes that term at most
$\epsilon/2$, independently of $K$ and $r$.  Since $r\leq RM$, the choices in
the theorem suffice uniformly for every allowed decomposition.  Taking the
supremum over normalized inputs in $\operatorname{ran}(P_N)$ proves the
asserted operator-norm bound.

It remains to count gates.  The forward and inverse basis changes use $2K$
QHT circuits.  Theorem~\ref{thm:qht} bounds their total cost by
\begin{align*}
    \mathcal{O}\!\left(
      K\bigl(\log(5^{RM}N)+\log(K/\epsilon)\bigr)^3
      \log(K/\epsilon)
    \right).
\end{align*}
Each of the at most $RM$ grid evolutions is diagonal in the position basis up
to a constant number of $L$-point Fourier transforms and uses
$\mathcal{O}(\log^2L)$ one- and two-qubit gates.  This gives the stated gate
count.  The $K$ grid registers require $\mathcal{O}(K\log L)$ qubits.
\end{proof}

\section*{Acknowledgment}
We thank Rob Otter and Ruslan Shaydulin for executive support and valuable feedback on this project. AP thanks Alexander Krajenbrink and Harry Buhrman for helpful discussions. We also acknowledge our colleagues at the Global Technology Applied Research Center of JPMorganChase.

\section*{Disclaimer}
This paper was prepared for informational purposes by the Global Technology Applied Research center of JPMorgan Chase \& Co. This paper is not a product of the Research Department of JPMorgan Chase \& Co. or its affiliates. Neither JPMorgan Chase \& Co. nor any of its affiliates makes any explicit or implied representation or warranty and none of them accept any liability in connection with this paper, including, without limitation, with respect to the completeness, accuracy, or reliability of the information contained herein and the potential legal, compliance, tax, or accounting effects thereof. This document is not intended as investment research or investment advice, or as a recommendation, offer, or solicitation for the purchase or sale of any security, financial instrument, financial product or service, or to be used in any way for evaluating the merits of participating in any transaction.

\bibliographystyle{alpha}
\bibliography{references}

\appendix

\section{Log-grid discretization and error control}
\label{app:log-grid-analysis}
This appendix proves the two finite-grid facts used in Theorem~\ref{thm:fixed-order-laguerre}. First, one logarithmic grid approximates every Hankel transform that occurs during bounded-time radial evolution. Second, sampling on that grid approximately preserves inner products among the retained Laguerre eigenfunctions. The proof controls the tails before analyzing periodization and sampling, because the Fourier representation is accurate only after the relevant wavefunctions have been confined to a finite log interval.

For an integer $E\ge1$, let
\begin{align}
    P_E=\sum_{k=0}^{E-1}\ket{\phi_{k,\nu}}\!\bra{\phi_{k,\nu}},
    \qquad
    A_E\ket k=\mathsf S_LS\phi_{k,\nu}.
    \label{eq:log-grid-retained-maps}
\end{align}
The operator $P_E$ selects the radial eigenfunctions on which phase estimation must be accurate. The sampled-column map $A_E$ records how the same eigenfunctions appear on the finite logarithmic grid.

\subsection{Intermediate Hankel-chirp states and endpoint control}
\label{subsec:laguerre-radial-fftlog-details}
The radial factorization contains four Hankel transforms and three chirps. Controlling only its input and output would not justify applying the same finite Hankel block at the intermediate stages. We therefore place every intermediate state in a finite-dimensional family that remains stable under the exact Hankel and chirp operations. Let $U_1,\ldots,U_7$ be the factors in application order
\begin{align}
 U_1&=\mathcal H_\nu,&U_2&=C_{a(t)},&U_3&=\mathcal H_\nu,&U_4&=C_{b(t)},
 &U_5&=\mathcal H_\nu,&U_6&=C_{a(t)},&U_7&=\mathcal H_\nu
\end{align}
and set $G_0=I$, $G_j=U_j\cdots U_1$. For $\Re\alpha>0$, define the finite-dimensional Laguerre-Gaussian space
\begin{align}
    \mathcal Q_E(\alpha) = \{r^{\nu+1/2} P(r^2) e^{-\alpha r^2}:\deg P<E\}
\end{align}
Weber's integral gives
\begin{align}
    \mathcal H_\nu\mathcal Q_E(\alpha)=\mathcal Q_E((4\alpha)^{-1}),
\end{align}
and $C_\gamma$ sends $\mathcal Q_E(\alpha)$ to $\mathcal Q_E(\alpha+i\gamma)$. For $|t| \le \pi/2$, the intermediate states arising from the positive-time factorization, the negative-time factorization, and the same states after one additional Hankel transform belong to spaces $\mathcal Q_E(\alpha)$ whose width parameters satisfy
\begin{align}
    \frac14\le\Re\alpha\le\frac12,
    \qquad |\Im\alpha|\le\frac12
    \label{eq:controlled-prefix-width}
\end{align}
The singular endpoint $r=0$ (hard edge) is controlled uniformly over this compact width region. Removing the phase $e^{i\operatorname{Im}(\alpha)r^2}$ reduces the estimate to the Laguerre-Gaussian space with real Gaussian parameter
$\beta=2\operatorname{Re}(\alpha) \in[1/2,1]$ and orthonormal basis 
\begin{align}
    \psi_{k,\beta}(r)=
    \left(\frac{2\beta^{\nu+1}k!}{\Gamma(k+\nu+1)}\right)^{1/2}
    r^{\nu+1/2}e^{-\beta r^2/2}L_k^{(\nu)}(\beta r^2)
\end{align}
The fixed order Laguerre estimate,
\begin{align}
    e^{-y/2}|L_k^{(\nu)}(y)|
    \le L_k^{(\nu)}(0)
    =\frac{\Gamma(k+\nu+1)}{\Gamma(\nu+1)\Gamma(k+1)}
\end{align}
provides the following diagonal projection-kernel bound after summing over $0\le k<E$,
\begin{align}
    \sum_{k=0}^{E-1}|\psi_{k,\beta}(r)|^2
    \le C_\nu E^{\nu+1}r^{2\nu+1}.
    \label{eq:hard-edge-kernel}
\end{align}
Consequently, for every unit vector $F$ belonging to one of the spaces $\mathcal Q_E(\alpha)$ with $\alpha$ in the width range above,
\begin{align}
    \int_0^{r_-}|F(r)|^2\,dr
    \le C_\nu E^{\nu+1}r_-^{2\nu+2}
    \label{eq:hard-edge-tail}
\end{align}
The outer radial tail is estimated from the radial second moment of the same
intermediate states. For $F=G_j(\sigma t)f$, the bound $\int_0^\infty r^2|F(r)|^2\,dr\le C_\nu(E+1)\|f\|_2^2$ and Markov's inequality gives
\begin{align}
    \int_{r_+}^\infty |G_j(\sigma t)f(r)|^2\,dr
    \le C_\nu(E+1)r_+^{-2}\|f\|_2^2
    \label{eq:large-radius-tail}
\end{align}

\subsection{Periodization and sampling aliases}
The preceding tail bounds allow us to restrict attention to a finite radial range. Choose radii $0<r_-<r_+$ so that the near origin estimate~\eqref{eq:hard-edge-tail} and the outer tail estimate~\eqref{eq:large-radius-tail} fall within their assigned error budgets. Then choose a computational interval $I=[x_-,x_+]$ that contains $[\log r_-,\log r_+]$ with room for a smooth cutoff, and set $T=x_+-x_-$. Let $\omega_L$ be a $C^2$ window supported in $I$ and equal to one on
$[\log r_-,\log r_+]$. Let $\Theta(r) = \omega_L(\log r)$. Since the inverse square potential commutes with multiplication by $\Theta$, the product rule gives
\begin{align}
    B_\nu(\Theta F)
    =\Theta B_\nu F-2\Theta'F'-\Theta''F
\end{align}
The derivatives of $\Theta$ are supported only where the window varies and on that cutoff transition region, $F'$ is controlled by $F$ and $\mathcal A F=r\partial_rF+\frac12F$. The estimates for the intermediate states therefore control all three terms on the RHS, giving
\begin{align}
    \|B_\nu(\Theta F)\|_2
    \le C_\nu^{\rm cut}(E,\delta^{-1})\|f\|_2
    \label{eq:cutoff-Bnu-bound}
\end{align}

Put $u=SG_j(\sigma t)f$ and $v=\omega_Lu$. The function $v$ is supported in the finite interval $I$. Let $E_Tv$ denote its periodic extension with period $T$. The Hankel output is generally not compactly supported, so we fold its translated log coordinate intervals back into $I$ by
\begin{align}
    (P_T^\Sigma w)(x)=\sum_{\ell\in\Z}w(x+\ell T),
    \qquad x\in I
\end{align}
The bounds below show that this sum converges in the required $L^2$ norms. Use the Fourier basis on $I$ with origin at $x_-$,
\begin{align}
    e_k(x)=T^{-1/2}e^{i\omega_k^T(x-x_-)},
    \qquad \omega_k^T=\frac{2\pi k}{T}
\end{align}
Define $M_{\nu,T}$ on Fourier coefficients by
\begin{align}
    \langle e_k,M_{\nu,T}g\rangle_{L^2(I)}
    =e^{2i\omega_k^Tx_-}m_\nu(\omega_k^T)
    \langle e_{-k},g\rangle_{L^2(I)}
    \label{eq:exact-torus-multiplier}
\end{align}
The Mellin diagonalization gives the exact periodization identity
\begin{align}
    M_{\nu,T}E_Tv=P_T^\Sigma S\mathcal H_\nu S^{-1}v
    \label{eq:exact-periodization}
\end{align}
To justify the fold $P_T^\Sigma$, estimate the Hankel output on the intervals $I+\ell T$.  Let $r_{\rm out,+}=e^{x_+}$. On the intervals with $\ell\ge1$, the radial variable satisfies $r\ge r_{\rm out,+}e^{(\ell-1)T}$, and therefore $r^2\mathcal H_\nu=\mathcal H_\nu B_\nu$ together with \eqref{eq:cutoff-Bnu-bound} gives
\begin{align}
    \left\|1_{[r_{\rm out,+}e^{(\ell-1)T},\infty)}
    \mathcal H_\nu(\Theta F)\right\|_2
    \le r_{\rm out,+}^{-2}e^{-2(\ell-1)T}
    C_\nu^{\rm cut}(E,\delta^{-1})\|f\|_2
    \label{eq:right-cell-tail}
\end{align}
On the intervals $I-\ell T$, the fixed order Bessel estimate $|J_\nu(z)|\le C_\nu z^\nu$ gives decay proportional to $e^{-\ell(\nu+1)T}$. These tail bounds also apply after one log derivative as $\mathcal A=r\partial_r+\frac12$ is the generator of translations in the log coordinate and $\mathcal H_\nu\mathcal A\mathcal H_\nu=-\mathcal A$ transfers the same right and left tail estimates to $\mathcal A\mathcal H_\nu(\Theta F)$.

There are two finite sampling effects to control. First, sampling on the $L$-point grid folds together Fourier coefficients whose indices differ by multiples of $L$. Second, the multiplier error, which comes from applying one finite-grid phase to a group of Fourier coefficients. Before sampling, each coefficient $c_k$ carries the phase $m_\nu(\omega_k^T)$ attached to its own frequency. After sampling, all indices that differ by multiples of $L$ contribute to one grid frequency and receive a single finite-grid phase. If $E_Tv = \sum_{k\in\Z}c_ke_k$, then sampling on the $L$ point grid gives
\begin{align}
    \left(F_L(\sqrt{\Delta x}\,v(x_j))_{j=0}^{L-1}\right)_k
    =\sum_{\ell\in\Z}c_{-\kappa_L(k)+\ell L}.
    \label{eq:exact-spectral-aliasing}
\end{align}
This separates the two spectral errors. The sampling error is the contribution of the coefficients with $\ell\ne0$, while the multiplier error comes from assigning one finite grid phase to the whole coefficient sum instead of keeping the separate phases attached to the frequencies
$-\kappa_L(k)+\ell L$. Parseval and Cauchy-Schwarz give

\begin{align}
    \|\mathrm{sampling\ spectral\ alias}\|_2
    \le \frac{T}{2L}\|v'\|_2,
    \qquad
    \|\mathrm{multiplier\ spectral\ alias}\|_2
    \le \frac{T}{L}\|v'\|_2
    \label{eq:spectral-alias-bounds}
\end{align}

The finite block also has to respect the self-paired Nyquist frequency. Let $c_0=2x_-$ be the center appearing in the exact shifted-grid multiplier. Choose the integer $s$ nearest to
\begin{align}
    -\frac{\theta_\nu(\pi/\Delta x)}{\pi}-\frac{c_0}{\Delta x}
\end{align}
and set
\begin{align}
    c_{\rm lr}=\Delta x\left(-\frac{\theta_\nu(\pi/\Delta x)}{\pi}-s\right).
    \label{eq:low-ringing-center}
\end{align}
With certified rounding, $|c_{\rm lr}-c_0|\le3\Delta x/4$. For $1\le k<L/2$, use the diagonal phase in Section~\ref{sec:fftlog} and set the phase at $L-k$ to its complex conjugate. Set the zero-frequency phase to one and the Nyquist phase to $(-1)^s$. These choices imply $R_LDR_L=D^\dagger$ exactly. Replacing $c_0$ by $c_{\rm lr}$ changes the output by at most
\begin{align}
    \frac{3\Delta x}{4}\|v'\|_2,
    \label{eq:low-ringing-error}
\end{align}
because $|e^{-iy}-1|\le|y|$ and multiplication by frequency corresponds to one log derivative.

The remaining comparison is between the folded Hankel output and the windowed exact output on $I$. With $\mathcal M_\nu=S\mathcal H_\nu S^{-1}$,
\begin{align}
 P_T^\Sigma \mathcal M_\nu(\omega_Lu)-\omega_L\mathcal M_\nu u
 ={}&(1-\omega_L)\mathcal M_\nu u-\mathcal M_\nu((1-\omega_L)u)
 \notag\\
 &+\sum_{\ell\ne0}\mathcal M_\nu(\omega_Lu)(\cdot+\ell T)
 \label{eq:periodization-decomposition}
\end{align}
The first term is the part of the exact output outside the window, the second is the Hankel response to the input portion removed by the window, and the last term collects the shifted copies of the windowed output from $I+\ell T$ with $\ell\ne0$.

Sampling in the log coordinate is transferred by
\begin{align}
    \|w_L-w\|_{L^2(I)}\le \Delta x\|w'\|_{L^2(I)},
    \label{eq:H1-sampling}
\end{align}
where $w_L$ is the step function on the grid cells determined by the sampled values of $w$.

The interval length $T$ is chosen so that the shifted interval bounds following \eqref{eq:right-cell-tail} on intervals $I+\ell T$, together with the three terms in \eqref{eq:periodization-decomposition} contribute at most a fixed multiple of $\delta$ to the Hankel block error. Once $I$ and $\omega_L$ have been fixed, choose $L$ so that the two spectral errors in \eqref{eq:spectral-alias-bounds} and the grid reconstruction error in \eqref{eq:H1-sampling} are also at most fixed multiples of $\delta$.

The required derivative bound comes from the same quantities used in the cutoff estimate. For $F=G_j(\sigma t)f$, the bounds on $F$, on $\mathcal A F=r\partial_rF+\frac12F$, and on $B_\nu(\Theta F)$ give
\begin{align}
    \|v'\|_{L^2(I)}\le C_\nu(E+1)\|f\|_2,
    \qquad v=\omega_L S F
\end{align}
Therefore $L=O(\text{poly}_\nu(E,\delta^{-1}))$ suffices for the spectral aliasing and sampling estimates.  Applying the same grid reconstruction estimate to pairs of retained Laguerre states gives the sampled Gram bound used in the transform.

\begin{theorem}[Effective FFTLog control of radial evolution]
\label{thm:fftlog-control}
Fix an order $\nu\ge0$. For every $E\ge1$ and $0<\delta<1/2$, there are effective endpoints $x_-<x_+$, a $C^2$ window, and a grid $L=2^{q_L}$ satisfying
\begin{align}
    x_+-x_-=O_\nu\!\left(\log\frac{E+1}{\delta}\right),
    \qquad
    L=\operatorname{poly}_\nu(E,\delta^{-1})
\end{align}
such that the finite FFTLog block $W_{\nu,L}$ approximates the exact Hankel transform at every intermediate stage of the radial Hankel-chirp factorization. In particular, for all $|t|\le\pi/2$ and all $f\in\operatorname{ran} P_E$ with $\|f\|=1$,
\begin{align}
    \|P_L(t)\mathsf S_LSf-\mathsf S_LSe^{-itH_\nu}f\|_2\le 4\delta
\end{align}
where $P_L(t)$ is the finite radial evolution block obtained from Proposition~\ref{prop:radial-factorization}.
The same grid then certifies
\begin{align}
    \|A_E^*A_E-I\|\le\delta
\end{align}
\end{theorem}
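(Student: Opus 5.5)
The plan is a telescoping argument over the seven factors of Proposition~\ref{prop:radial-factorization}. The diagonal chirps are sampled exactly, so each finite Hankel block $W_{\nu,L}$ only has to track $\mathsf S_L S\mathcal H_\nu$ on the intermediate states. Fix $E$ and $\delta$, and for $f\in\operatorname{ran}P_E$ and $|t|\le\pi/2$ write $F_j=G_j(t)f$.

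\textbf{Step 1 (chirps are exact).} Since $C_{\gamma,L}$ is the diagonal restriction of $C_\gamma$ to the grid, we have $C_{\gamma,L}\mathsf S_LSF=\mathsf S_LSC_\gamma F$ exactly. The window $\omega_L$ commutes with multiplication, and each $C_{\gamma,L}$ is unitary. The telescoping identity for products of unitaries therefore reduces the claim to four one-step estimates
\begin{align}
\|W_{\nu,L}\mathsf S_LSF_{j}-\mathsf S_LS\mathcal H_\nu F_{j}\|_2\le\delta\|f\|_2,\qquad j\in\{0,2,4,6\},
\end{align}
where each $F_j$ is an intermediate state lying in some $\mathcal Q_E(\alpha)$ with $\alpha$ in the range \eqref{eq:controlled-prefix-width}. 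Summing the four estimates gives $4\delta$.

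\textbf{Step 2 (one-step Hankel error).} Write $u=SF_j$ and $v=\omega_Lu$. I would compare $W_{\nu,L}(\sqrt{\Delta x}\,v(x_j))_j$ with the samples of $\omega_L\mathcal M_\nu u$ along the chain
\begin{align}
W_{\nu,L}\mathsf S v\;\to\;\text{samples of }M_{\nu,T}E_Tv\;=\;P_T^\Sigma\mathcal M_\nu v\;\to\;\omega_L\mathcal M_\nu u.
\end{align}
The first arrow has three error contributions, each bounded by $O(T/L+\Delta x)\|v'\|_2$:
\begin{itemize}
\item the spectral aliasing in \eqref{eq:exact-spectral-aliasing};
\item the two spectral alias bounds \eqref{eq:spectral-alias-bounds};
\item the low-ringing correction \eqref{eq:low-ringing-error}.
\end{itemize}
I would use $\|v'\|_2\le C_\nu(E+1)$ throughout. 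The middle equality is the exact periodization identity \eqref{eq:exact-periodization}. The last arrow is handled by the decomposition \eqref{eq:periodization-decomposition}:
\begin{itemize}
\item The window-loss terms are controlled by the hard-edge and outer tail bounds \eqref{eq:hard-edge-tail} and \eqref{eq:large-radius-tail}. These apply because both $F_j$ and $\mathcal H_\nu F_j$ lie in controlled $\mathcal Q_E$ spaces, by Weber's integral.
\item The shifted copies are controlled by \eqref{eq:right-cell-tail} on the right and by the $J_\nu(z)\lesssim z^\nu$ decay on the left.
\end{itemize}
Passing from functions to grid samples costs $\Delta x\|w'\|$ by \eqref{eq:H1-sampling}, and the same log-derivative bounds apply to $\mathcal M_\nu u$ via $\mathcal H_\nu\mathcal A\mathcal H_\nu=-\mathcal A$.

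\textbf{Step 3 (parameter choices).} Pick $r_-=(\delta^2/E^{\nu+1})^{1/(2\nu+2)}$ and $r_+=\sqrt{E+1}/\delta$. This gives $x_+-x_-=O_\nu(\log((E+1)/\delta))$, and enlarging $T$ by a constant factor makes the $e^{-T}$ shifted-cell sums at most $\delta$. The cutoff constant $C^{\rm cut}_\nu$ depends polynomially on $E$ and $\delta^{-1}$, since the window transition has log-width $O(1)$. Choosing $L$ as a power of two with $L\ge C\,T(E+1)/\delta$ then makes every error in Step 2 at most a fixed fraction of $\delta$, so $L=\operatorname{poly}_\nu(E,\delta^{-1})$.

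\textbf{Step 4 (Gram bound).} For the second claim, $\langle A_Ek,A_El\rangle$ is a Riemann sum of $\omega_L^2\,\overline{S\phi_k}\,S\phi_l$, while $\langle\phi_k,\phi_l\rangle=\delta_{kl}$. Each entry has error at most the tail mass plus $\Delta x$ times derivative norms. The sum over $E^2$ entries is handled by applying \eqref{eq:H1-sampling} to $\sum_k c_kS\phi_k$ with unit $c$, giving an operator bound directly after possibly enlarging $L$ polynomially.

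\textbf{Main obstacle.} The main obstacle is the uniformity in Step 2: the bounds on $\|B_\nu(\Theta F)\|$ and $\|v'\|$ must hold for every intermediate state, not just the input. The way through is the width-range claim \eqref{eq:controlled-prefix-width} together with the Hankel covariance of the $\mathcal Q_E(\alpha)$ family. I would first verify this claim by tracking $\alpha\mapsto(4\alpha)^{-1}$ and $\alpha\mapsto\alpha+i\gamma$ explicitly for $|a(t)|,|b(t)|\le1/2$, starting from $\alpha_0=1/2$.
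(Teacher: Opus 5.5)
Your proposal is correct and follows essentially the same route as the paper's proof. The common structure is: exact commutation of the sampled chirps with $\mathsf S_LS$; a telescoping sum over the four Hankel replacements, each controlled uniformly over the $\mathcal Q_E(\alpha)$ family by the tail, periodization, aliasing and low-ringing estimates; and the Gram bound obtained from the grid reconstruction estimate applied to unit vectors of $\operatorname{ran}P_E$.

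One small caution for your planned width check. The bound $|\Im\alpha|\le\frac12$ in \eqref{eq:controlled-prefix-width} relies on the actual relation between $a(t)=\frac12\tan(t/2)$ and $b(t)=\frac12\sin t$, not merely on $|a|,|b|\le\frac12$. With that relation, the widths of $F_0,\ldots,F_7$ are $\frac12$, $\frac12$, $\frac12+ia$, $\frac12\cos^2\frac t2-\frac i4\sin t$, $\frac12\cos^2\frac t2+\frac i4\sin t$, $\frac12-ia$, $\frac12$, $\frac12$. Treating $a$ and $b$ as independent in the box allows $|\Im\alpha|=\frac34$ (for example $a=\frac12$, $b=-\frac12$). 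This is harmless, though: any compact width region with $\Re\alpha$ bounded away from zero gives the same tail estimates with different constants.
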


\begin{proof}
Choose radii $0<r_-<r_+$ so that the near-origin estimate \eqref{eq:hard-edge-tail} and the outer-tail estimate \eqref{eq:large-radius-tail} are each at most a fixed constant times $\delta$. Then choose an interval $I=[x_-,x_+]$ that contains $[\log r_-,\log r_+]$ with room for the window transition region. The bounds in \eqref{eq:hard-edge-tail} and \eqref{eq:large-radius-tail} give
\begin{align}
    x_+-x_-=O_\nu\!\left(\log\frac{E+1}{\delta}\right),
\end{align} 
because the radial quantities in those estimates grow at most polynomially in $E+1$ and $\delta^{-1}$. 

With this interval fixed, choose its length $T=x_+-x_-$ so that the series in \eqref{eq:right-cell-tail}, the corresponding Bessel estimate on the intervals $I-\ell T$, and the three terms in \eqref{eq:periodization-decomposition} are each bounded by fixed constants times $\delta$. The estimate \eqref{eq:cutoff-Bnu-bound} controls the intervals to the right of $I$ through $r^2\mathcal H_\nu=\mathcal H_\nu B_\nu$, and the Bessel estimate controls the intervals to the left of $I$. The identity \eqref{eq:exact-periodization} then gives the periodized Hankel action on the finite interval.

After $I$ and the window have been fixed, choose $L$ so that the two spectral errors in \eqref{eq:spectral-alias-bounds}, the grid reconstruction error in \eqref{eq:H1-sampling}, and the low-ringing origin shift are each bounded by some constants times $\delta$. For $F=G_j(\sigma t)f$ and $v=\omega_LSF$, the bounds on $F$, on $\mathcal A F=r\partial_rF+\frac12F$, and on $B_\nu(\Theta F)$ used in \eqref{eq:cutoff-Bnu-bound} give
\begin{align}
    \|v'\|_{L^2(I)}\le C_\nu(E+1)\|f\|_2
\end{align}
Together with \eqref{eq:exact-spectral-aliasing}, this derivative bound gives $L=O(\text{poly}_\nu(E,\delta^{-1}))$ for the spectral aliasing and sampling estimates.

Only the four Hankel replacements contribute to approximation error in the finite radial evolution block. The chirps are diagonal on the same logarithmic grid, with phase $e^{-i\gamma e^{2x_j}}$ at the grid point $x_j$, so sampling commutes exactly with each radial chirp. The interval estimates, the spectral aliasing bounds, and the grid reconstruction bound control the replacement of $S\mathcal H_\nu S^{-1}$ by $W_{\nu,L}$ at each intermediate Hankel stage. Summing the resulting errors gives the displayed $4\delta$ bound for $P_L(t)$. Applying the grid reconstruction estimate uniformly to every unit vector in $\operatorname{ran}P_E$ shows that sampling preserves its squared norm up to $\delta$. Taking the supremum of the resulting quadratic-form bound gives $\|A_E^*A_E-I\|\le\delta$.
\end{proof}

\begin{corollary}[Radial evolution on the orthonormal sampled family]
\label{cor:laguerre-polar-evolution}
Let $G=A_E^*A_E$, $V_E=A_EG^{-1/2}$, and
\begin{align}
    D_t\ket k=e^{-it(2k+\nu+1)}\ket k,
    \qquad 0\le k<E.
\end{align}
If $\|G-I\|\le g<1$ and $\|P_L(t)A_E-A_ED_t\|\le\delta_t$, then
\begin{align}
    \|P_L(t)V_E-V_ED_t\|
    \le\frac{\delta_t}{\sqrt{1-g}}
    +2\sqrt{1+g}\left((1-g)^{-1/2}-1\right).
    \label{eq:polar-evolution-error}
\end{align}
\end{corollary}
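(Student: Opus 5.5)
The proof splits into two independent estimates: an error coming from the nearly-isometric sampled columns, and an error coming from the evolution. First record the basic properties of $V_E=A_EG^{-1/2}$. Since $\|G-I\|\le g<1$, $G$ is positive definite, and its spectrum lies in $[1-g,1+g]$. Hence $V_E^*V_E=G^{-1/2}GG^{-1/2}=I$, so $V_E$ is an isometry. We also have $\|A_E\|=\|G\|^{1/2}\le\sqrt{1+g}$ and $\|G^{-1/2}\|\le(1-g)^{-1/2}$. Another useful fact is that $D_t$ is diagonal in the same basis $\{\ket k\}$ as $G$ is indexed by. However, $G$ and $D_t$ need not commute, so that commutation cannot be used directly.

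Next, write
\begin{align*}
P_L(t)V_E-V_ED_t
&=\bigl(P_L(t)A_E-A_ED_t\bigr)G^{-1/2}\\
&\quad+A_E\bigl(D_tG^{-1/2}-G^{-1/2}D_t\bigr).
\end{align*}
The first term is bounded by $\delta_t\|G^{-1/2}\|\le\delta_t/\sqrt{1-g}$, which is the first summand of \eqref{eq:polar-evolution-error}.

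For the second term, insert the identity. Using $D_tG^{-1/2}-G^{-1/2}D_t=D_t(G^{-1/2}-I)-(G^{-1/2}-I)D_t$ together with unitarity of $D_t$, its norm is at most $2\|G^{-1/2}-I\|$. By the spectral theorem, $\|G^{-1/2}-I\|=\max_{\lambda\in\operatorname{spec}G}|\lambda^{-1/2}-1|$. On $[1-g,1+g]$ the maximum of $|\lambda^{-1/2}-1|$ is attained at $\lambda=1-g$. This holds because $(1-g)^{-1/2}-1\ge 1-(1+g)^{-1/2}$, which follows from the convexity of $\lambda\mapsto\lambda^{-1/2}$. So the bound is $(1-g)^{-1/2}-1$. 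Multiplying by $\|A_E\|\le\sqrt{1+g}$ gives the second summand $2\sqrt{1+g}\bigl((1-g)^{-1/2}-1\bigr)$. Adding the two pieces by the triangle inequality yields the claim.

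The main point requiring care is the inequality $(1-g)^{-1/2}-1\ge1-(1+g)^{-1/2}$. I expect to settle it by convexity, or directly by squaring. There is a subtlety to keep in mind: $P_L(t)$ does not commute with $A_E$. The argument avoids needing this, because the only evolution-dependent quantity used is the hypothesis $\delta_t$.
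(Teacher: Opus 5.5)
Your proposal is correct and follows the paper's proof essentially verbatim: the same add-and-subtract of $A_ED_tG^{-1/2}$, the bound $\delta_t/\sqrt{1-g}$ on the first term, and $\|A_E\|\le\sqrt{1+g}$ together with $\|D_tG^{-1/2}-G^{-1/2}D_t\|\le 2\|G^{-1/2}-I\|$ on the second. The paper asserts the step $\|G^{-1/2}-I\|\le(1-g)^{-1/2}-1$ without comment, and you justify it correctly through the commutator rewrite and the convexity inequality $(1-g)^{-1/2}+(1+g)^{-1/2}\ge 2$.
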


\begin{proof}
Using $V_E=A_EG^{-1/2}$, add and subtract $A_ED_tG^{-1/2}$ to obtain
\begin{align}
    P_L(t)V_E-V_ED_t
    ={}&(P_L(t)A_E-A_ED_t)G^{-1/2}
    +A_E(D_tG^{-1/2}-G^{-1/2}D_t).
\end{align}
The first term is at most $\delta_t/\sqrt{1-g}$. For the second, use $\|A_E\|\le\sqrt{1+g}$ and
\begin{align}
    \|D_tG^{-1/2}-G^{-1/2}D_t\|
    \le2\|G^{-1/2}-I\|
    \le2\left((1-g)^{-1/2}-1\right).
\end{align}
This proves \eqref{eq:polar-evolution-error}.
\end{proof}

\subsection{Sampled states and spectral filtering}

The finite grid does not preserve the orthogonality of the Laguerre functions exactly. We introduce an orthonormal comparison family so that the output of a unitary circuit can be compared with the normalized sampled functions requested in Theorem~\ref{thm:fixed-order-laguerre}.

\begin{proposition}[Sampled trial-state preparation]
\label{prop:laguerre-sampled-preparation}
Let $E>N$ and suppose that the grid satisfies
\begin{align}
    g:=\|A_E^*A_E-I\|\le\frac14.
\end{align}
Set
\begin{align}
    V_E=A_E(A_E^*A_E)^{-1/2},
    \qquad
    \ket{v_k}=V_E\ket k,
\end{align}
and let $\mathsf B_E\ket k=\ket{\Phi_{k,\nu}^{(L)}}$. Then $V_E$ is an isometry and
\begin{align}
    \|\mathsf B_E-V_E\|\le2g.
    \label{eq:sampled-polar-comparison}
\end{align}
The grid can be refined so that there is a clean unitary which, controlled by $0\le n<N$, prepares a sampled WKB state $\ket{\mu_n}$ satisfying
\begin{align}
    \langle v_n|\mu_n\rangle&\ge c_{\rm trial}(\nu),\label{eq:sampled-trial-overlap}\\
    \|(I-V_EV_E^*)\ket{\mu_n}\|_2^2
    &\le C_\nu\left(\frac{n+1}{E+1}+\Delta x(E+1)\right).
    \label{eq:sampled-trial-leakage}
\end{align}
The preparation uses a number of gates and work qubits polynomial in $q_L$, $\log E$, and the requested number of precision bits, and it returns its arithmetic work registers to zero.
\end{proposition}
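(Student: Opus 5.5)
The proposition has three parts: the polar comparison bound \eqref{eq:sampled-polar-comparison}, the trial overlap \eqref{eq:sampled-trial-overlap} with the leakage bound \eqref{eq:sampled-trial-leakage}, and the clean preparation circuit. I will handle them in that order.

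\textbf{Polar comparison.} Write $G=A_E^*A_E$. Since $G\ge(1-g)I>0$, the operator $V_E=A_EG^{-1/2}$ satisfies $V_E^*V_E=G^{-1/2}GG^{-1/2}=I$, so it is an isometry. The normalized columns are $\mathsf B_E=A_E\Lambda^{-1/2}$ with $\Lambda=\mathrm{diag}(G)$, and $|\Lambda_{kk}-1|\le g$. Then
\[
\|\mathsf B_E-V_E\|\le\|A_E\|\,\|\Lambda^{-1/2}-G^{-1/2}\|\le\sqrt{1+g}\,\bigl(\|\Lambda^{-1/2}-I\|+\|G^{-1/2}-I\|\bigr).
\]
Each of the two differences is at most $(1-g)^{-1/2}-1\le\tfrac{2}{3}g$ for $g\le 1/4$. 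Multiplying by $\sqrt{1+g}\le\sqrt{5/4}$ gives a constant at most $2$.

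\textbf{Trial state.} Let $\mu_n^c$ be the continuum WKB function. I take it to be the Liouville--Green approximation of $\phi_{n,\nu}$ on the classically allowed region between the two turning points of $r^2+(\nu^2-1/4)/r^2=2(2n+\nu+1)$. I smooth it with a Bessel-type hard edge near $r=0$ and with a cutoff at the outer turning point, then normalize. The earlier result I rely on is Lemma~\ref{lem:laguerre-wkb-trial}, already used in Theorem~\ref{thm:fixed-order-laguerre}. It gives $\langle\phi_{n,\nu},\mu_n^c\rangle\ge c(\nu)>0$ and continuum leakage above degree $E$ of at most $C_\nu(n+1)/(E+1)$.

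The sampled state is $\ket{\mu_n}=\mathsf S_LS\mu_n^c/\|\cdot\|$. To transfer the continuum facts to the grid, I apply the reconstruction bound \eqref{eq:H1-sampling} to $S\mu_n^c$ and to each $S\phi_{k,\nu}$, and to their products for the inner products. The derivative of $S\mu_n^c$ in the log coordinate is $O(\sqrt{E+1})$, and the same holds for $S\phi_{k,\nu}$ when $k<E$. Sampled inner products therefore differ from continuum ones by $O(\Delta x(E+1))$; this is the second term in \eqref{eq:sampled-trial-leakage}. The projector $V_EV_E^*$ differs from $A_EA_E^*$ by $O(g)$, and the grid can be refined so that this $O(g)$ term is absorbed. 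The overlap $\langle v_n|\mu_n\rangle$ then equals the continuum overlap up to $O(g+\Delta x(E+1))$. It stays above $c(\nu)/2$ once the grid is refined. A global phase on $\mu_n$ makes the overlap positive real.

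\textbf{Circuit.} Controlled by $n$, I compute the turning points and the WKB phase integral in closed form by reversible fixed-point arithmetic. The state $\ket{\mu_n}$ is then prepared by the standard Grover--Rudolph bisection over the $q_L$ grid bits. This requires the integrated mass of $|\mu_n^c|^2$ on dyadic log-intervals, which I approximate by arithmetic quadrature using $\mathrm{poly}(q_L)$ nodes.

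\textbf{Main obstacle.} The hard step is certifying these dyadic masses to the requested precision uniformly in $n<N$ and $x$, near the turning points where WKB amplitudes blow up. I would use the Airy-smoothed uniform approximation to keep amplitudes bounded. I would also choose the quadrature step polynomially small in $E$, so that errors are $\mathrm{poly}(q_L,\log E)$-bit controlled. All workspace is uncomputed by running the arithmetic in reverse after each bisection level.
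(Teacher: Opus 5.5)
\textbf{Verdict: there is a genuine gap in the preparation circuit.} Your polar-comparison argument is correct. It matches the paper's functional-calculus argument, with explicit constants that land below $1.4g$.

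The gap concerns the trial state and, above all, the circuit that loads it. First, Lemma~\ref{lem:laguerre-wkb-trial} is proved only for the specific state $\mathcal G_{n,\nu}$. That state is the bulk WKB function multiplied by a cutoff $\chi(z_{n,\nu}(x))$ supported in $z\in[1/4,3/4]$, a fixed distance from both turning points, plus a fixed bump for $n<n_\nu$. Your state is instead a global Liouville--Green approximation that reaches both turning points, smoothed by Bessel and Airy uniform asymptotics. The overlap and leakage bounds you quote from the lemma therefore do not apply to it.

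Second, and more seriously, your Grover--Rudolph loader needs certified masses of $|\mu_n^c|^2$ on dyadic log-intervals. This function has $\Theta(n)$ oscillations across its support: its log-coordinate phase derivative is $\Lambda_{n,\nu}\sqrt{z(1-z)}$. Here $n$ ranges up to $N-1$, while the allowed cost is only polylogarithmic in $N$. A generic quadrature with $\mathrm{poly}(q_L)$ nodes cannot certify these masses. Your suggested remedy, a quadrature step polynomially small in $E$, costs $\mathrm{poly}(E)$ gates. That contradicts the required cost, which must be polynomial in $q_L$, $\log E$ and the precision. So the step you flag as the main obstacle is not resolved, and as proposed it breaks the resource claim. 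The turning-point blow-up you worry about comes only from your own choice of trial state.

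\textbf{Missing idea: no integrals are needed.}
\begin{itemize}
\item Take the lemma's state. It is supported on an interval of fixed log-width ($\frac12\log 3$ in the bulk) and is bounded by a constant $M_{\rm amp}$.
\item Controlled by $n$, compute the first grid point of that interval and prepare a uniform superposition over its $R=\Theta_\nu(\Delta x^{-1})$ grid points.
\item Evaluate the signed amplitude pointwise and rotate an ancilla so that the accepted branch has amplitude $\mathcal G_{n,\nu}(x_j)/(\sqrt R\,M_{\rm amp})$. That branch is proportional to the sampled trial state, signs included.
\item Its probability is $p_n=(RM_{\rm amp}^2)^{-1}\sum_j|\mathcal G_{n,\nu}(x_j)|^2$. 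This is bounded below by a constant depending only on $\nu$, because $\Delta x\sum_j|\mathcal G_{n,\nu}(x_j)|^2=\Theta_\nu(1)$ and $R\Delta x=\Theta_\nu(1)$.
\item Fixed-point amplification then needs only $O(\log(1/\text{error}))$ rounds, and reversing the pointwise arithmetic cleans the workspace.
\end{itemize}

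\textbf{Secondary correction to your sampling step.} Log-coordinate derivatives of $S\phi_{k,\nu}$ for $k<E$, and of the trial state, are $O(E+1)$, not $O(\sqrt{E+1})$. This is the bound $\|v'\|_{L^2(I)}\le C_\nu(E+1)\|f\|_2$ used in Theorem~\ref{thm:fftlog-control}. With that scaling, summing entrywise inner-product errors over $k<E$ gives $\Delta x(E+1)^{3/2}$. To obtain the stated $\Delta x(E+1)$ term, apply \eqref{eq:H1-sampling} in operator norm on $\operatorname{ran}P_E$ rather than entry by entry.
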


\begin{proof}
Put $G=A_E^*A_E$. The Gram assumption places the spectrum of $G$ in $[1-g,1+g]$, so $G^{-1/2}$ exists and $V_E^*V_E=I$. Functional calculus bounds $\|A_E-V_E\|$ by $1-\sqrt{1-g}$. Individual normalization of the columns of $A_E$ changes them by at most another constant multiple of $g$, which proves \eqref{eq:sampled-polar-comparison}.

Appendix~\ref{app:laguerre-wkb-analysis} constructs a continuum trial state with overlap at least $2c_{\rm trial}(\nu)$ with $\phi_{n,\nu}$ and squared spectral tail at most $C_\nu(n+1)/(E+1)$. The sampling estimate \eqref{eq:H1-sampling} preserves the overlap after a sufficiently fine grid is chosen. Applied to the high-energy tail, the same estimate contributes the second term in \eqref{eq:sampled-trial-leakage}. Equation~\eqref{eq:sampled-polar-comparison} then transfers the overlap from the normalized sampled Laguerre function to $\ket{v_n}$.

It remains to explain why the trial state can be prepared without loading an $L$-entry table. Its WKB expression is supported on an interval of fixed width in the log coordinate. Controlled by $n$, the circuit computes the first grid point in that interval, prepares a uniform superposition over the $R=\Theta_\nu(\Delta x^{-1})$ possible offsets, evaluates the signed WKB amplitude at each selected point, and applies one controlled rotation. Choose a constant $M_{\rm amp}$ larger than the absolute value of every trial amplitude. The probability of the accepted rotation branch is
\begin{align}
    p_n=\frac{1}{RM_{\rm amp}^2}\sum_j|\mathcal G_{n,\nu}(x_j)|^2.
\end{align}
The rectangle estimate gives $\Delta x\sum_j|\mathcal G_{n,\nu}(x_j)|^2=\Theta_\nu(1)$, while $R\Delta x=\Theta_\nu(1)$. Hence $p_n$ is bounded below by a positive constant depending only on $\nu$. Fixed-point amplitude amplification therefore prepares $\ket{\mu_n}$ with a number of repetitions logarithmic in the target error. Reversing the pointwise calculations clears every work register. Each evaluated function is smooth on the fixed WKB interval, so fixed-precision arithmetic has cost polynomial in the stated register lengths.
\end{proof}

The next proposition explains why the trial state can be converted into the requested degree-$n$ state without measuring $n$.

\begin{proposition}[Spectral filtering and coherent label erasure]
\label{prop:laguerre-spectral-filter}
Let $N<E<M=2^m$, and suppose that a phase-estimation labeler approximates
\begin{align}
    \ket{v_k}\ket0_{\rm qpe}\longmapsto\ket{v_k}\ket k_{\rm qpe},
    \qquad 0\le k<E,
\end{align}
to operator-norm error at most $\lambda$. Let
\begin{align}
    d=\max_{0\le n<N}\|(I-V_EV_E^*)\ket{\mu_n}\|_2.
\end{align}
If $\lambda+d+2g\le c_{\rm trial}(\nu)/4$, then equality testing against the input degree followed by one common fixed-point amplification sequence prepares every $\ket{v_n}$ coherently. For fixed-point error $\rho$, the preparation error is
\begin{align}
    O_\nu(\rho+\lambda+d+g).
    \label{eq:spectral-filter-error}
\end{align}
The input degree and the phase-estimation register can then be erased coherently by labeling $\ket{v_n}$, subtracting $n$ modulo $M$, and reversing the labeler.
\end{proposition}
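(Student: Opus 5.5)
The plan is to write the approximate labeler as an exact ideal labeler plus a small perturbation, carry out the whole filtering argument with the ideal one, and account for the perturbation only at the end. Let $\Lambda_0$ be the exact isometric map $\ket{v_k}\ket0\mapsto\ket{v_k}\ket k$ on $\operatorname{ran}V_E$, extended arbitrarily (unitarily) on the complement. The actual labeler $\Lambda$ satisfies $\|(\Lambda-\Lambda_0)(V_EV_E^*\otimes\ket0\bra0)\|\le\lambda$. Controlled on the input degree register $\ket n$, I decompose the prepared trial state as
\begin{align*}
\ket{\mu_n}=a_n\ket{v_n}+\sum_{k<E,k\ne n}b_{n,k}\ket{v_k}+\ket{\tau_n},
\end{align*}
with $\|\tau_n\|\le d$. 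By \eqref{eq:sampled-trial-overlap}, $a_n\ge c_{\rm trial}(\nu)$.

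Next, I define the marked good subspace: the equality test flags the branch where the qpe register equals $n$. Under $\Lambda_0$, the flagged component of $\Lambda_0\ket{\mu_n}\ket0$ is $a_n\ket{v_n}\ket n$ plus a contribution coming only from $\tau_n$. Its norm lies within $d$ of $a_n$, so it is at least $3c_{\rm trial}/4$ uniformly in $n$. This is the step that uses $\lambda+d+2g\le c_{\rm trial}/4$. That hypothesis guarantees that, after including the labeler error $\lambda$ and the comparison error $2g$ from \eqref{eq:sampled-polar-comparison}, the good amplitude stays above a known constant lower bound $c_{\rm trial}/2$ for every $n$ simultaneously.

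I then apply a single fixed-point amplitude amplification sequence (Yoder--Low--Chuang type), with oracle reflections given by the reflection about $\Lambda\ket{\mu_n}\ket0$ and the equality-mark reflection. Because the sequence is fixed-point, it needs only the lower bound on the good amplitude, not its exact value. It is therefore independent of $n$ and acts coherently as a direct sum over the degree label. With $O(\log(1/\rho)/c_{\rm trial})$ rounds, it outputs the normalized good state within $\rho$. In the ideal case this state is $\ket{v_n}\ket n$ up to an $O(d/c_{\rm trial})$ term from $\tau_n$.

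The next step is a telescoping bound. Each round uses $\Lambda$ and $\Lambda^\dagger$ a constant number of times, and replacing $\Lambda$ by $\Lambda_0$ costs $\lambda$ per use. The main obstacle I expect is here: the naive telescoping gives $O(\lambda\log(1/\rho))$, not $O(\lambda)$. To stay within \eqref{eq:spectral-filter-error} I would absorb the $\log(1/\rho)/c_{\rm trial}$ factor into the $O_\nu$ constant after fixing $\rho$, or equivalently demand $\lambda$ a factor $\log(1/\rho)$ smaller. A second subtlety is that the fixed-point sequence applies a phase to the final good state. I expect this phase to be a fixed function of the sequence parameters and independent of $n$, which is needed to preserve coherence across superpositions. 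I would verify this by noting that the good and bad components span an invariant two-dimensional subspace on which the sequence acts by the same polynomial for every $n$.

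Finally, for erasure I apply $\Lambda$ to $\ket{v_n}\ket0_{\rm qpe}$ to obtain $\ket{v_n}\ket n_{\rm qpe}$ up to error $\lambda$. I then subtract the qpe register from the input register mod $M$, which is exact reversible arithmetic, to get $\ket0_{\rm lab}$, and reverse with $\Lambda^\dagger$. The idle qpe register left by amplification is uncomputed in the same way. Summing the errors gives $O_\nu(\rho+\lambda+d+g)$. Every step is an operator-norm estimate applied blockwise in $n$, so the bound holds uniformly for coherent superpositions.
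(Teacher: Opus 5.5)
Your architecture matches the paper's. You expand $\ket{\mu_n}$ over $\{\ket{v_k}\}_{k<E}$ and note that the equality test keeps only $k=n$ because $k<E<M$. You lower-bound the marked amplitude by $c_{\mathrm{trial}}(\nu)-\lambda-d\ge 3c_{\mathrm{trial}}(\nu)/4$, run one fixed-point sequence controlled on the unchanged degree register, and erase by label, subtract, unlabel at cost $2\lambda$.

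\textbf{Gap 1: the telescoping step.} The gap is the step you single out as the main obstacle. As written it yields only $O_\nu(\rho+\lambda\log(1/\rho)+d+g)$, and neither patch proves the stated bound. Absorbing $\log(1/\rho)$ into $O_\nu$ makes the constant depend on $\rho$, and shrinking $\lambda$ changes the hypothesis. It is also not true that each replacement of $\Lambda$ by an \emph{arbitrarily} extended $\Lambda_0$ costs only $\lambda$. Inside the loop, $\Lambda$ acts on states such as $\ket{\mu_n}\ket0$ with weight up to $d$ outside $\operatorname{ran}V_E\otimes\ket0$, where the two operators need not be close. So the accumulated error would really be $O((\lambda+d)\log(1/\rho))$.

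The fix is to drop the comparison inside the loop altogether. Fixed-point amplification is exact for whatever unitary and projector it is given. Apply it to the actual $U_n=\Lambda(A_n\otimes I)$, with $A_n\ket0=\ket{\mu_n}$, and to $\Pi_n=I\otimes\ket n\bra n_{\mathrm{qpe}}$; your reflections already use the actual $\Lambda$. Your own computation gives $\Pi_nU_n\ket0=a_n\ket{v_n}\ket n+f_n$ with $\|f_n\|\le\lambda+d$. Hence $\|\Pi_nU_n\ket0\|\ge 3c_{\mathrm{trial}}(\nu)/4$, and the normalized good state $\ket{g_n}$ satisfies $\|\ket{g_n}-\ket{v_n}\ket n\|\le 2(\lambda+d)/c_{\mathrm{trial}}(\nu)$. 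The amplified output is within $\rho$ of $\ket{g_n}$, so $\lambda$ and $d$ enter exactly once. The bound before erasure is then $O_\nu(\rho+\lambda+d)$, with no logarithm. This is the route the paper's proof takes when it chooses the polynomial for the interval of actual marked amplitudes.

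\textbf{Gap 2: the phase argument.} A common polynomial acting on each invariant two-dimensional subspace does not by itself give an $n$-independent phase. The polynomial is evaluated at different amplitudes $\|\Pi_nU_n\ket0\|$ for different $n$. For a phased Grover-type fixed-point sequence whose guarantee concerns only the success probability, the phase of the target component generally varies with the initial amplitude. You need a fixed-point scheme with a norm guarantee: the output must be within $\rho$ of $\ket{g_n}$ itself, not merely of $e^{i\phi}\ket{g_n}$ with an amplitude-dependent $\phi$, uniformly for marked amplitudes in $[c_{\mathrm{trial}}(\nu)/2,1]$. Singular-value-transformation amplification with a real polynomial close to $1$ on that interval provides this. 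Together with $a_n>0$, this fixes the output phase relative to $\ket{v_n}\ket n$, and block-diagonality in $n$ then gives the uniform bound for superpositions. The paper's proof makes the same ``same polynomial'' assertion, so this justification should be supplied in either version.
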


\begin{proof}
Expand the retained part of $\ket{\mu_n}$ in the orthonormal family $\{\ket{v_k}\}_{k<E}$. Since $n<N<E<M$, equality modulo $M$ selects only the term $k=n$. Equations~\eqref{eq:sampled-polar-comparison} and \eqref{eq:sampled-trial-overlap} show that its amplitude is positive and bounded below by a constant depending only on $\nu$. The discarded part contributes at most $d$, and the imperfect labeler contributes at most $\lambda$. These bounds prove that the equality-marked component remains separated from zero.

A fixed-point amplification polynomial can therefore be chosen for the common interval of possible marked amplitudes. The same polynomial acts in every degree block, so the error for a superposition of degrees is the largest block error rather than the sum of the block errors. This proves \eqref{eq:spectral-filter-error} and prevents an $n$-dependent output phase.

For label erasure, an ideal labeler maps $\ket{v_n}\ket0$ to $\ket{v_n}\ket n$. Modular subtraction clears the original input label, and the inverse labeler clears its own output. The subtraction is an exact reversible permutation. The two uses of the approximate labeler add at most $2\lambda$ to the error.
\end{proof}

\subsection{Resource estimates}

The FFTLog block needs the phase $\theta_\nu(\omega_k)$ on the frequency register, the chirp factors need the angles $\gamma e^{2x_j}$ on the grid register, and the trial-state loader needs rotations derived from its amplitude. Shifted phase estimation also applies the controlled scalar phase $e^{i\pi2^j(\nu+1)/M}$ and compares the recovered degree with the input degree. For the FFTLog multiplier, write
$m_\nu(\omega)=e^{i\theta_\nu(\omega)}$ with
\begin{align}
    \theta_\nu(\omega)
    =-\omega\log2
    -2\operatorname{Im}\operatorname{Log}\Gamma
      \left(\frac{\nu+1}{2}+\frac{i\omega}{2}\right)
\end{align} 
To evaluate this phase to $p$ bits, set
\begin{align}
    z=\frac{\nu+1}{2}+\frac{i\omega}{2}
\end{align} 
For fixed $\nu$, choose $q=O_\nu(p)$ so that $z+q$ lies in a region where the Stirling expansion has a certified remainder. The Gamma recurrence gives
\begin{align}
\operatorname{Log}\Gamma(z)
=\operatorname{Log}\Gamma(z+q)-\sum_{j=0}^{q-1}\operatorname{Log}(z+j)
\end{align} 
The term $\operatorname{Log}\Gamma(z+q)$ is then evaluated with $O(p)$ Bernoulli terms, with remainder at most $2^{-p}$. The number of recurrence factors depends on the target precision and on the fixed order $\nu$ and is independent of $|\omega|$.

\begin{proposition}[Resource bounds]
\label{prop:resource-ledger}
Fix an order $\nu$. Let
\begin{align}
    W=O_\nu(q_L+p+\log(M+2)+\log(\epsilon^{-1}+2)).
\end{align}
The phase function $\theta_\nu(\omega_k)$, the chirp angles $\gamma e^{2x_j}$, the controlled offset phases, and the trial-state rotation angles can each be evaluated reversibly to operator error at most $2^{-p}$. The filter comparison uses the map
\begin{align}
    \ket{k}\ket n\ket0\longmapsto
    \ket{k}\ket n\ket{k-n\bmod M}
\end{align}
followed by a zero test on the last register and by uncomputation of that register. Each arithmetic or rotation subroutine uses $O_\nu(W^{12})$ gates and $O_\nu(W^{12})$ qubits. Each grid Fourier transform uses $O(q_L^2)$ gates and $O(q_L)$ qubits and is therefore dominated by these bounds. In Theorem~\ref{thm:fixed-order-laguerre}, choosing $p=O_\nu(B)$ so that every operator error is at most $\epsilon/(100C_\nu(B+1)^8)$ gives a full circuit with $O_\nu(B^{15})$ gates and $O_\nu(B^{12})$ qubits.
\end{proposition}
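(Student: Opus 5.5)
The plan is to treat the four kinds of subroutine separately, bound each by the cost of fixed-point reversible arithmetic at width $W$, and then multiply by the number of calls made in Theorem~\ref{thm:fixed-order-laguerre}.

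\textbf{Chirp and offset phases.} The chirp angle $\gamma e^{2x_j}$ is an exponential of an affine function of the grid index $j$. I would compute $x_j=x_-+j\Delta x$ by one multiply-add at width $W$, then evaluate $e^{2x_j}$ by range reduction, writing $2x_j=k\log 2+s$ with $|s|\le\log 2$, followed by a Taylor series of $O(W)$ terms. The total cost is $O(W)$ multiplications of $O(W^2)$ gates each. The dynamic range $e^{2x_+}$ is only polynomial in $E/\delta$, since $x_+-x_-=O_\nu(\log((E+1)/\delta))$ by Theorem~\ref{thm:fftlog-control}, so $O(W)$ integer bits suffice. The controlled offset phases $e^{i\pi 2^j(\nu+1)/M}$ are constants. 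Each is synthesized as a single-qubit phase to precision $2^{-p}$, for example by Solovay--Kitaev or a Ross--Selinger-type synthesis, at cost polylogarithmic in $2^p$.

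\textbf{Mellin phase.} I expect $\theta_\nu(\omega_k)$ to be the dominant cost, and I would use the recurrence-plus-Stirling scheme described just before the statement. This needs $q=O_\nu(p)$ complex logarithms $\operatorname{Log}(z+j)$ and $O(p)$ Bernoulli terms. The Bernoulli coefficients are rational constants and can be hard-wired. Each complex logarithm is computed as $\tfrac12\log|\cdot|^2+i\,\mathrm{atan2}$ by a series or AGM routine using $O(W)$ multiplications.

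Two points need care here. First, summing the recurrence terms loses precision, so I would carry $O(W)$ guard bits, which keeps the width at $O(W)$. Second, the branch of $\operatorname{Im}\operatorname{Log}\Gamma$ must be continuous. Since $\operatorname{Re}(z+j)>0$, every principal argument lies in $(-\pi/2,\pi/2)$, so no branch tracking is needed. With $O(W)$ terms, each a product of $O(W)$ multiplications of $O(W^2)$ gates, plus Bennett-style uncomputation, the cost is a fixed polynomial in $W$.

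I would not aim for an optimized exponent. I would only check that the crude ledger is at most $W^{12}$. That accounting comes from $(W)$ terms $\times$ $(W)$ multiplications $\times$ $W^2$ per multiplication, and then squaring the result to account for reversible pebbling and the coherent certification of the Stirling region. For the qubit count, the same crude accounting applies: reversible history storage without space-efficient pebbling is bounded by the gate count.

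\textbf{Trial-state rotations and comparison.} The trial-state angles are $\arcsin(\mathcal G_{n,\nu}(x_j)/M_{\rm amp})$, where the WKB amplitude is a smooth elementary expression in $n$ and $x_j$. It is evaluated by the same primitives (square roots, exponentials, trigonometric functions) within the same polynomial bound. The filter comparison needs one modular subtraction of $O(\log M)$ gates, a zero test, and uncomputation. Each grid QFT costs $O(q_L^2)$, which is dominated by the arithmetic.

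\textbf{Assembly.} Following the proof of Theorem~\ref{thm:fixed-order-laguerre}, the construction makes $O_\nu(B^3)$ subroutine calls in total. These come from $O(\log(1/\epsilon))=O(B)$ fixed-point amplification rounds, each containing a phase estimation with $m=O(B)$ controlled propagators, and each propagator uses $O(1)$ FFTLog and chirp blocks per controlled power. Choosing per-call error $\epsilon/(100C_\nu(B+1)^8)$ and applying the telescoping bound over the $O(B^3)$ calls keeps the total error within budget. This choice forces $p=O_\nu(B)$. In addition, $q_L=O_\nu(B)$ and $\log M=O_\nu(B)$, so $W=O_\nu(B)$. The gate count is therefore $O_\nu(B^3)\cdot O_\nu(B^{12})=O_\nu(B^{15})$. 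Because the arithmetic workspace is returned to zero after each call, it can be reused, so the qubit count stays at $O_\nu(B^{12})$.

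The main obstacle is certifying the Stirling remainder uniformly over all $|\omega_k|\le\pi/\Delta x$ with $q$ independent of $|\omega|$. For this I would use the standard bound on the remainder after $K$ Bernoulli terms, which is of order $|z+q|^{1-2K}$ times a constant. I would make it at most $2^{-p}$ using $|z+q|\ge q$, together with a sector condition that holds because $\operatorname{Re}(z+q)\ge q>0$.
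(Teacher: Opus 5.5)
Your proposal is correct and follows the paper's route. You bound each reversible fixed-point subroutine by a polynomial in $W$: the Mellin phase via the Gamma recurrence plus Stirling, the chirps, the offset phases, the trial rotations, and the modular comparison. You then take per-call error $\epsilon/(100C_\nu(B+1)^8)$, so that $p$, $q_L$, $\log M$ and hence $W$ are all $O_\nu(B)$, multiply by an $O_\nu(B^3)$ call count, and reuse clean workspace for the $O_\nu(B^{12})$ qubit bound. You give more explicit subroutine detail than the paper does; the paper instead emphasizes hard-wiring rational bounds for the fixed constants $\nu,\pi,\log 2,x_\pm,\Delta x$ and applying the computed angles by kickback.

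One small slip: your call-count breakdown multiplies only to $O(B^2)$. It also omits the degree-$O(B)$ fixed-point amplification that prepares $\ket{\mu_n}$, which sits inside every trial-state reflection of the filter amplification. Including it, along with the erasure labeler, keeps the count within $O_\nu(B^3)$, so the $O_\nu(B^{15})$ gate bound is unaffected.
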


\begin{proof}
Because the order is fixed, the constants used by these subroutines are fixed before the reversible subroutines are generated. For each of $\nu$, $\pi$, $\log2$, $x_-$, $x_+$, and $\Delta x$, for each endpoint of a window transition region, and for each Stirling coefficient used in the Gamma expansion, choose rational lower and upper bounds that differ by at most $2^{-p-O_\nu(1)}$. The phase and rotation evaluations use fixed-precision reversible arithmetic with enough extra bits that the final angle differs from the target value by at most $2^{-p}$. The resulting angles are applied by kickback. The comparison register for $k-n\bmod M$ is computed, tested for zero, and uncomputed. Standard reversible arithmetic and single-qubit rotation approximations give the local $O_\nu(W^{12})$ gate and qubit bound.

The phase-estimation register has $m=O_\nu(B)$ qubits and therefore uses $O_\nu(B)$ controlled radial propagators. Both fixed-point amplification procedures have degree $O_\nu(B)$ because their initial success amplitudes are bounded below by constants and their target errors are inverse exponential in $B$. Expanding the labelers and trial-state reflections inside these amplification sequences gives at most $C_\nu(B+1)^3$ primitive calls. Set the operator error of each call to at most $\epsilon/(100C_\nu(B+1)^8)$. The triangle inequality for products of approximate unitaries keeps the total contribution from rounding and rotation approximation errors inside the error budget. With this choice of local precision one has $W=O_\nu(B)$, which gives the stated $O_\nu(B^{15})$ total gate bound. The same work registers are reused across calls, so the maximum number of logical qubits is $O_\nu(B^{12})$.
\end{proof}

\section{Circular beam splitters } \label{App:A}

\cardToRad*
\begin{proof}
The phase correction needed below can also be written using Cartesian gates. Indeed,
\begin{align}
    N_- = \frac{1}{2}\left(a_1^{\dagger}a_1+a_2^{\dagger}a_2+i(a_1^{\dagger}a_2-a_2^{\dagger}a_1)\right),
\end{align}
so
\begin{align}
    \exp(i\theta N_-)=R_1(\theta/2)R_2(\theta/2)\textup{BS}_{Y}(\theta/2),
\end{align}
where $R_j(\theta)=\exp(i\theta a_j^{\dagger}a_j)$ is the phase rotation on the $j$-th Cartesian mode.
Let $G=a_1a_2^{\dagger}+a_2a_1^{\dagger}$, so $U(\theta)=\exp(i\theta G)$. The commutators
\begin{align}
    [G,a_1^{\dagger}]=a_2^{\dagger}, \qquad [G,a_2^{\dagger}]=a_1^{\dagger}
\end{align}
imply
\begin{align}
    U(\theta)a_1^{\dagger}U(\theta)^{\dagger}=\cos(\theta)a_1^{\dagger}+i\sin(\theta)a_2^{\dagger}, \qquad
    U(\theta)a_2^{\dagger}U(\theta)^{\dagger}=i\sin(\theta)a_1^{\dagger}+\cos(\theta)a_2^{\dagger}.
\end{align}
Setting $U=U(\pi/4)$ gives $U a_1^{\dagger}U^{\dagger}=a_+^{\dagger}$ and $U a_2^{\dagger}U^{\dagger}=i a_-^{\dagger}$. By the Cartesian decomposition above, $P:=R_1(-\pi/4)R_2(-\pi/4)\textup{BS}_{Y}(-\pi/4)=\exp(-i\pi N_-/2)$. This gate leaves $a_+^{\dagger}$ unchanged and satisfies $P a_-^{\dagger}P^{\dagger}=-i a_-^{\dagger}$. Therefore the combined operation $\widetilde{U}=PU$ satisfies
\begin{align}
    \widetilde{U} a_1^{\dagger}\widetilde{U}^{\dagger}=a_+^{\dagger}, \qquad \widetilde{U} a_2^{\dagger}\widetilde{U}^{\dagger}=a_-^{\dagger}.
\end{align}

Since $\widetilde{U}\ket{0,0}=\ket{0,0}$, we have
\begin{align}
    \widetilde{U}\ket{n_1,n_2}
    &=\widetilde{U}\frac{(a_1^{\dagger})^{n_1}(a_2^{\dagger})^{n_2}}{\sqrt{n_1!n_2!}}\ket{0,0} \notagbreak
    &=\frac{(\widetilde{U} a_1^{\dagger}\widetilde{U}^{\dagger})^{n_1}(\widetilde{U} a_2^{\dagger}\widetilde{U}^{\dagger})^{n_2}}{\sqrt{n_1!n_2!}}\ket{0,0} \notagbreak
    &=\frac{(a_+^{\dagger})^{n_1}(a_-^{\dagger})^{n_2}}{\sqrt{n_1!n_2!}}\ket{0,0}
    =\ket{n_1,n_2}_{+}.
\end{align}
By the circular-mode discussion in the radial oscillator notation section, $\ket{n_1,n_2}_{+}$ has radial quantum number $n_r=\min(n_1,n_2)$ and angular momentum $m=n_1-n_2$. This proves the claimed radial-mode identification.
\end{proof}

\section{Displacement operator and Charlier transform} \label{App:B}
The Charlier family admits an additional structure not shared by Meixner or
Krawtchouk: its polynomials are generated from the constant function by powers of
a single backward shift. This is special to the Poisson weight and arises from the bare
factor $e^{t}$ in the generating function \eqref{gfc}.

\charlierShift*

\begin{proof}
Let $A:=1-\frac n\mu T_-$, so the claim $C_k(n;\mu)=A^k\mathbf 1$ is equivalent,
after packaging all $k$ into one exponential generating function, to
\all{
e^{tA}\mathbf 1=\sum_{k\ge0}\frac{t^k}{k!}A^k\mathbf 1
=\sum_{k\ge0}C_k(n;\mu)\frac{t^k}{k!}.
}{opgen}
Since $A=I-\frac n\mu T_-$, we have $e^{tA}=e^t e^{-(t/\mu)nT_-}$. Evaluating the
powers of $nT_-$ on the constant function gives
$(nT_-)^r\mathbf 1=n^{\underline r}$, the falling factorial, so
$$
e^{-(t/\mu)nT_-}\mathbf 1=\sum_{r\ge0}\frac{(-t/\mu)^r}{r!}\,n^{\underline r}
=\left(1-\frac t\mu\right)^n,
$$
the last equality being the binomial expansion
$(1-x)^n=\sum_{r\ge0}\frac{n^{\underline r}}{r!}(-x)^r$ with $x=t/\mu$. Hence
$e^{tA}\mathbf 1=e^t\left(1-\frac t\mu\right)^n$, which is exactly \eqref{gfc},
establishing \eqref{opgen}.
\end{proof}

\charlierDisp*
\begin{proof}
Applying the intertwining identity
$D(\alpha)\,a^\dagger\,D(\alpha)^\dagger=a^\dagger-\bar\alpha$ to
$D(\alpha)\ket k=D(\alpha)\,\frac{(a^\dagger)^k}{\sqrt{k!}}\ket0$ gives
\al{
D(\alpha)\ket k
&=\frac{(a^\dagger-\bar\alpha)^k}{\sqrt{k!}}\,D(\alpha)\ket0 \notagbreak
&=\frac{(a^\dagger-\bar\alpha)^k}{\sqrt{k!}}\,
   e^{-\mu/2}\sum_{n\ge0}\frac{\alpha^n}{\sqrt{n!}}\ket n,
}
where the second equality uses the coherent-state expansion
$D(\alpha)\ket0=e^{-\mu/2}\sum_{n\ge0}\frac{\alpha^n}{\sqrt{n!}}\ket n$.
Define
$$
\phi_k(n)=\frac{\langle n|D(\alpha)|k\rangle}{e^{-\mu/2}\alpha^{n}/\sqrt{n!}},
$$
so that $\phi_0(n)=1$ and the theorem is equivalent to
$\phi_k(n)=\frac{(-\bar\alpha)^k}{\sqrt{k!}}\,C_k(n;\mu)$.

It is convenient to pass to discrete variables and regard states as functions
$f:\Z_{\ge0}\to\C$ supported on the number basis. In this representation the
creation operator acts as
\all{
a^\dagger f
= a^\dagger\sum_{n\ge0} f(n)\ket n
= \sum_{n\ge0} f(n)\sqrt{n+1}\,\ket{n+1}
= \sum_{n\ge1} f(n-1)\sqrt{n}\,\ket n,
} {five}
so that $(a^\dagger f)(n)=\sqrt n\,f(n-1)=\sqrt n\,(T_-f)(n)$, where
$(T_-f)(n)=f(n-1)$ with $f(-1):=0$. Writing $\alpha=\sqrt\mu\,e^{i\theta}$,
define the diagonal operator
$$
(W_\alpha f)(n)
= e^{in\theta}\sqrt{w_\mu(n)}\,f(n),
\qquad
w_\mu(n)=e^{-\mu}\frac{\mu^n}{n!},
$$
which satisfies $W_\alpha 1=D(\alpha)\ket0$. Conjugating $a^\dagger$ by
$W_\alpha$,
\begin{align*}
W_{\alpha}f(n) &= e^{in\theta}\sqrt{w_{\mu}(n)}\,f(n),\\
a^{\dagger}W_{\alpha}f(n) &= \sqrt{n}\,e^{i(n-1)\theta}\sqrt{w_{\mu}(n-1)}\,f(n-1),\\
W_{\alpha}^{-1}a^{\dagger}W_{\alpha}f(n)
  &= \sqrt{n}\,e^{-i\theta}\sqrt{w_{\mu}(n-1)/w_{\mu}(n)}\,f(n-1)
   = \frac{n}{\alpha}\,f(n-1),
\end{align*}
where the last step uses $w_\mu(n-1)/w_\mu(n)=n/\mu$ and
$\tfrac{n}{\sqrt\mu}e^{-i\theta}=n/\alpha$, since $\alpha=\sqrt\mu\,e^{i\theta}$.
Hence
$$
(W_\alpha^{-1}a^\dagger W_\alpha f)(n)=\frac n\alpha f(n-1)=\frac n\alpha (T_-f)(n),
$$
and, using the hinge identity $\alpha\bar\alpha=\mu$,
$$
W_\alpha^{-1}(a^\dagger-\bar\alpha)W_\alpha
=\frac n\alpha T_- -\bar\alpha
=\bar\alpha\!\left(\frac n\mu T_- - I\right).
$$
It is the identity $\alpha\bar\alpha=\mu$ that collapses the coefficient of
$T_-$ to $n/\mu$ and turns the conjugated operator into the Charlier ladder.
Therefore
$$
\phi_k(n)
=\frac{1}{\sqrt{k!}}
 \left[W_\alpha^{-1}(a^\dagger-\bar\alpha)W_\alpha\right]^k 1(n)
=\frac{\bar\alpha^{\,k}}{\sqrt{k!}}
 \left[\left(\frac n\mu T_- - I\right)^k 1\right](n).
$$
By Proposition~\ref{thm:cop},
$C_k(n;\mu)=\left(I-\frac n\mu T_-\right)^k 1(n)$, so
$\left(\frac n\mu T_- - I\right)^k 1=(-1)^k C_k(n;\mu)$, giving
$$
\phi_k(n)=\frac{(-\bar\alpha)^{k}}{\sqrt{k!}}\,C_k(n;\mu),
$$
which is equivalent to the theorem statement.
\end{proof}

\section{Squeeze operator and the Meixner transform} \label{App:D}
\meixnerSqueeze*
\begin{proof}
    We compute the transition amplitude from $\ket{N, k}$ to $\ket{M, k}$ under
    the gate $S(z)$. Starting from annihilation:
    \begin{align}
        K_- \, \ket{n, k} = \sqrt{n(n+2k - 1)}\ket{n-1, k}
    \end{align}
    After $j=N-m$ applications,
    \begin{align}
        (K_-)^j \, \ket{N, k} = \prod_{i=0}^{j-1}\sqrt{(N-i)(N-i+2k - 1)}\ket{N-j, k}
    \end{align}
    Setting $m=N-j$,
    \begin{align}
        e^{-z^*K_-} \, \ket{N, k} = \sum_{m=0}^N \frac{(-z^*)^{N-m}}{(N-m)!} \sqrt{\frac{N!\frac{(2k+N-1)!}{(2k-1)!}}{m! \frac{(2k+m-1)!}{(2k-1)!}}} \ket{m, k}
    \end{align}
    Application of $(1-|z|^2)^{K_0}$:
    \begin{align}
        K_0 \ket{m, k} &= (k+m) \ket{m, k} \\
        (1-|z|^2)^{K_0} \ket{m, k} &= (1-|z|^2)^{k+m}\ket{m,k}
    \end{align}
    The creating part:
    \begin{align}
        (K_+) \, \ket{n, k} = \sqrt{(n+1)(n+2k)}\ket{n+1, k}
    \end{align}
    After $l$ applications from $\ket{m, k}$:
    \begin{align}
        (K_+)^l \, \ket{m, k} = \sqrt{\frac{(m+l)!}{m!} \frac{\frac{(2k+m+l-1)!}{(2k-1)!}}{\frac{(2k+m-1)!}{(2k-1)!}}} \ket{m+l, k}
    \end{align}
    With $l=M-m$ and projecting,
    \begin{align}
        \bra{M, k}{e^{zK_+}}\ket{m,k} = \frac{z^{M-m}}{(M-m)!} \sqrt{\frac{M!\frac{(2k+M-1)!}{(2k-1)!}}{m!\frac{(2k+m-1)!}{(2k-1)!}}}
    \end{align}
    Therefore,
    \begin{align*}
        \bra{M, k} S(z) \ket{N, k} &=
        (1-|z|^2)^{k} \sqrt{N!M! \tfrac{(2k+N-1)!}{(2k-1)!} \tfrac{(2k+M-1)!}{(2k-1)!}} \sum_{m=0}^{\min(N,M)} \frac{(-z^*)^{N-m}}{(N-m)!} \frac{z^{M-m} (1-|z|^2)^m}{(M-m)!\,m!\, \frac{(2k+m-1)!}{(2k-1)!}} \\
        &= (-1)^N z^{*N} z^M (1-|z|^2)^k \sqrt{\tfrac{(2k)_N}{N!}\tfrac{(2k)_M}{M!}} \sum_{m} \frac{(-N)_m(-M)_m}{m!\,(2k)_m} \left(\frac{-(1-|z|^2)}{|z|^2}\right)^m \\
        &= (-1)^N z^{*N} z^M (1-|z|^2)^k \sqrt{\tfrac{(2k)_N}{N!}\tfrac{(2k)_M}{M!}}\; {}_2F_1\!\left(-N, -M; 2k;\; -\frac{1-|z|^2}{|z|^2}\right) \\
        &= (-1)^N z^{*N} z^M (1-|z|^2)^k \sqrt{\binom{N+2k-1}{N} \binom{M+2k-1}{M}} \, M_N(M; 2k, |z|^2)
    \end{align*}
    where the last line uses ${}_2F_1(-N,-M;2k;1-1/|z|^2)=M_N(M;2k,|z|^2)$.
\end{proof}

\section{Beamsplitter operator and the Krawtchouk transform} \label{App:C}

\begin{proposition}
Let $w_p(n) = \binom{N}{n} p^n q^{N-n}$ be the weights for a binomial distribution with parameter $p$, then the Krawtchouk polynomials satisfy the orthogonality relation 
\[ 
\sum_{n=0}^N w_p(n) K_j(n;p,N) K_k(n;p,N) = \frac{1}{\binom{N}{k}} \left(\frac{q}{p}\right)^k \delta_{jk}. 
\]
\end{proposition}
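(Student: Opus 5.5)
I will prove the orthogonality relation directly from the generating function \eqref{gfk}. This avoids any appeal to the beam-splitter unitarity, which is the result we ultimately want to derive from this proposition.

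\textbf{Step 1: bilinear generating function.} Multiply two copies of the generating function, with variables $s$ and $t$, and sum against the binomial weight:
\begin{align*}
\sum_{j,k}\binom{N}{j}\binom{N}{k}s^jt^k\sum_{n=0}^N w_p(n)K_j(n)K_k(n)
=\sum_{n=0}^N\binom{N}{n}p^nq^{N-n}\bigl[(1+s)(1+t)\bigr]^{N-n}\Bigl[\bigl(1-\tfrac{q}{p}s\bigr)\bigl(1-\tfrac{q}{p}t\bigr)\Bigr]^n .
\end{align*}
By the binomial theorem, the right-hand side equals $\bigl[q(1+s)(1+t)+p(1-\tfrac qp s)(1-\tfrac qp t)\bigr]^N$.

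\textbf{Step 2: simplify the bracket.} Expand the bracket:
\begin{align*}
q(1+s+t+st)+p-q(s+t)+\tfrac{q^2}{p}st .
\end{align*}
The linear terms cancel. The constant term is $q+p=1$. The coefficient of $st$ is $q+q^2/p=q/p$. The right-hand side is therefore $(1+\tfrac qp st)^N=\sum_k\binom{N}{k}(q/p)^k s^kt^k$.

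\textbf{Step 3: compare coefficients.} Equating the coefficients of $s^jt^k$ on both sides gives
\begin{align*}
\binom{N}{j}\binom{N}{k}\sum_n w_p(n)K_jK_k=\delta_{jk}\binom{N}{k}(q/p)^k .
\end{align*}
Dividing by $\binom{N}{k}^2$ yields the claimed relation.

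The only real obstacle is the algebraic cancellation in Step 2, and it is routine. It is worth double-checking, because the correctness of the constant term and the cancellation of the linear terms both depend on $p+q=1$. For completeness, I would also note that the generating function indeed has degree at most $N$ in $t$, so that $K_k$ for $k\le N$ is well defined by \eqref{gfk}.
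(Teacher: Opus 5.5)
Your proposal is correct and follows essentially the same route as the paper's proof: the bilinear generating function summed against the binomial weight, the binomial theorem, the simplification of the bracket to $1+\tfrac{q}{p}st$ using $p+q=1$, and comparison of the coefficients of $s^jt^k$. Your algebra, including the final division by $\binom{N}{k}^2$, matches the paper's line by line.
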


\begin{proof}
    Define,
    \begin{align}
        W(s, t) &= \sum_{j,k=0}^{N} \left[\sum_{n=0}^{N} w_p(n) K_j(n;p,N) K_k(n;p,N)\right] \binom{N}{j} \binom{N}{k} s^j t^k \\
        &= \sum_{n=0}^{N} w_p(n) \left(1 - \frac{q}{p}s\right)^n (1+s)^{N-n} \left(1 - \frac{q}{p}t\right)^n (1+t)^{N-n}
    \end{align}
    We know that, 
    \begin{align}
        \sum_{n=0}^{N} w_p(n) u^n v^{N-n} = (pu + qv)^N
    \end{align}
    for the binomial moment generating function with $u = (1 - \frac{q}{p}s)(1 - \frac{q}{p}t)$ and $v = (1+s)(1+t)$. Then, 
    \begin{align}
        W(s,t) &= \left(p\left(1 - \frac{q}{p}s\right)\left(1 - \frac{q}{p}t\right) + q(1+s)(1+t)\right)^N \\
        &= \left(p - qs - qt + \frac{q^2}{p}st + q + qs + qt + qst\right)^N \\
        &= \left(1 + \frac{q^2 + pq}{p} st\right)^N  = \left(1 + \frac{q}{p} st\right)^N
    \end{align}
    Expanding $\left(1 + \frac{q}{p} st\right)^N = \sum_{k=0}^{N} \binom{N}{k} \left(\frac{q}{p}\right)^k s^k t^k$ and comparing coefficients of $s^j t^k$ yields, 
    \begin{align}
        \sum_{n=0}^{N} w_p(n) K_j(n;p,N) K_k(n;p,N) = \frac{1}{\binom{N}{k}^2}\binom{N}{k} \left(\frac{q}{p}\right)^k \delta_{jk} = \frac{1}{\binom{N}{k}} \left(\frac{q}{p}\right)^k \delta_{jk}
    \end{align}
\end{proof}

\begin{proposition}[Beam splitter intertwining]
Let $B(\theta) = \exp[\theta(a_1^\dagger a_2 - a_1 a_2^\dagger)]$ be the beam splitter operator acting on two bosonic modes. Then the following intertwining identities hold: 
\al{
B(\theta)\, a_1^\dagger\, B(\theta)^\dagger &= \cos\theta\, a_1^\dagger - \sin\theta\, a_2^\dagger, \notagbreak
B(\theta)\, a_2^\dagger\, B(\theta)^\dagger &= \sin\theta\, a_1^\dagger + \cos\theta\, a_2^\dagger.
}
\end{proposition}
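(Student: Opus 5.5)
The plan is to compute the adjoint action by differentiating in $\theta$ and solving a linear system of ordinary differential equations for the transformed creation operators. Set $G:=a_1^\dagger a_2-a_1a_2^\dagger$, so that $B(\theta)=e^{\theta G}$ with $G^\dagger=-G$ and $B(\theta)^\dagger=e^{-\theta G}$. Define
\begin{align*}
A_j(\theta):=B(\theta)\,a_j^\dagger\,B(\theta)^\dagger,\qquad j=1,2 .
\end{align*}
Differentiating gives $\frac{d}{d\theta}A_j(\theta)=B(\theta)[G,a_j^\dagger]B(\theta)^\dagger$, with $A_j(0)=a_j^\dagger$.

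The first step is to evaluate the two commutators using only $[a_i,a_j^\dagger]=\delta_{ij}$. The term $a_1^\dagger a_2$ commutes with $a_1^\dagger$. For the other term, $[a_1a_2^\dagger,a_1^\dagger]=a_2^\dagger$, so $[G,a_1^\dagger]=-a_2^\dagger$. Similarly, $[a_1^\dagger a_2,a_2^\dagger]=a_1^\dagger$ while $a_1a_2^\dagger$ commutes with $a_2^\dagger$, so $[G,a_2^\dagger]=a_1^\dagger$. Conjugating by $B(\theta)$ then produces the closed system
\begin{align*}
A_1'=-A_2,\qquad A_2'=A_1 .
\end{align*}

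The second step is to solve this system. With the initial data above, the solution is $A_1=\cos\theta\,a_1^\dagger-\sin\theta\,a_2^\dagger$ and $A_2=\sin\theta\,a_1^\dagger+\cos\theta\,a_2^\dagger$, which are exactly the claimed identities. Uniqueness follows because these are linear ODEs with constant coefficients in the two-dimensional span of $a_1^\dagger,a_2^\dagger$. As an independent check, one can instead expand via the Baker--Campbell--Hausdorff series: $\mathrm{ad}_G^2 a_1^\dagger=-a_1^\dagger$, and summing the series recovers the cosine and sine expansions.

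The only delicate point is rigor with unbounded operators. I would handle it by working on the dense invariant domain of finite-photon states. $G$ preserves the total photon number, so $B(\theta)$ acts as a finite-dimensional unitary on each fixed-number sector, and $a_j^\dagger$ maps sector $n$ to sector $n+1$. The differentiation is then a finite-dimensional computation between sectors, and the identities hold on all finite-photon vectors. The main obstacle is simply sign bookkeeping in the commutators; the convention here differs from Lemma~\ref{CBS}, so each sign must be verified directly.
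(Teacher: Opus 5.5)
Your proof is correct and uses the same two commutators as the paper, $[G,a_1^\dagger]=-a_2^\dagger$ and $[G,a_2^\dagger]=a_1^\dagger$, but solves for the conjugated operators differently.

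\begin{itemize}
\item The paper expands $B(\theta)a_j^\dagger B(\theta)^\dagger$ as the Baker--Campbell--Hausdorff series $\sum_k \frac{1}{k!}\mathrm{ad}_{\theta G}^k a_j^\dagger$. It notes that $\mathrm{ad}_G^2=-1$ on the span of $a_1^\dagger,a_2^\dagger$ (the nested commutators ``cycle'') and resums into cosines and sines.
\item You differentiate in $\theta$ and solve the $2\times 2$ system $A_1'=-A_2$, $A_2'=A_1$. This is the differential form of the same expansion.
\end{itemize}

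Your route avoids any discussion of resumming an operator series. Together with your restriction to finite-photon sectors, where $B(\theta)$ is a finite-dimensional unitary, it is more careful about domains than the paper's formal resummation, which does not address them.

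The one sentence to tighten is the uniqueness claim. A priori $A_j(\theta)$ is not known to lie in the span of $a_1^\dagger,a_2^\dagger$. So invoke uniqueness in the finite-dimensional space of maps from the $n$-photon sector to the $(n+1)$-photon sector. Alternatively, check directly that $\cos\theta\,A_1+\sin\theta\,A_2$ and $-\sin\theta\,A_1+\cos\theta\,A_2$ have zero $\theta$-derivative. They therefore equal $a_1^\dagger$ and $a_2^\dagger$, and inverting this rotation gives the claimed identities.
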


\begin{proof}
Let $A = \theta(a_1^\dagger a_2 - a_1 a_2^\dagger)$. By BCH, $B(\theta) a_j^\dagger B(\theta)^\dagger = a_j^\dagger + [A, a_j^\dagger] + \frac{1}{2!}[A, [A, a_j^\dagger]] + \cdots$. Using $[a_i, a_j^\dagger] = \delta_{ij}$ between the two modes, 
\[ 
[A, a_1^\dagger] = -\theta a_2^\dagger, \qquad [A, a_2^\dagger] = \theta a_1^\dagger. 
\] 
 The higher commutators cycle: $[A, [A, a_1^\dagger]] = -\theta^2 a_1^\dagger$, $[A, [A, a_2^\dagger]] = -\theta^2 a_2^\dagger$, etc. Re-summing produces the rotation $\begin{pmatrix} \cos\theta & -\sin\theta \\ \sin\theta & \cos\theta \end{pmatrix}$ acting on $(a_1^\dagger, a_2^\dagger)^T$.  
\end{proof}

\kravchuckBs*
\begin{proof}
    Using the Beam Splitter intertwining identity, $B(\theta)(a_j^\dagger)^m B(\theta)^\dagger = (B(\theta) a_j^\dagger B(\theta)^\dagger)^m$:
    \begin{align}
        B(\theta) |k, N{-}k\rangle = \frac{(\cos\theta\, a_1^\dagger - \sin\theta\, a_2^\dagger)^k (\sin\theta\, a_1^\dagger + \cos\theta\, a_2^\dagger)^{N-k}} {\sqrt{k!(N-k)!}} \ket{0,0}
    \end{align}
    Writing $c = \cos\theta$, $s = \sin\theta$ and summing over $k$
    \begin{align}
        \sum_{k=0}^N \frac{t^k}{\sqrt{k!(N-k)!}} B(\theta)|k,N{-}k\rangle = \frac{1}{N!} \left[(tc+s)\,a_1^\dagger + (c - ts)\, a_2^\dagger\right]^N  \ket{0,0}
    \end{align}
    Projecting onto $\bra{ n, N{-}n}$,
    \begin{align}
        \Phi_n(t) &= \sum_{k=0}^N \frac{t^k}{\sqrt{k!(N-k)!}} \bra{n, N{-}n}B(\theta) \ket{ k, N{-}k } \\
        &= \frac{(tc+s)^n (c-ts)^{N-n}}{\sqrt{n!(N-n)!}}
    \end{align}
    With $\tau = -ts/c$, so that $tc/s = -\tau c^2/s^2 = -\tau q/p$.  Then $tc+s = s(1+tc/s) = s(1-q\tau/p)$ and $c-ts = c(1+\tau)$, yields:
    \begin{align}
        \Phi_n(t) = \frac{s^n c^{N-n}}{\sqrt{n!(N-n)!}} \left(1-\frac{q}{p}\tau\right)^n (1+\tau)^{N-n}
    \end{align}

    By the generating function, $(1-q\tau/p)^n(1+\tau)^{N-n} = \sum_{k=0}^N \binom{N}{k} K_k(n;p,N) \tau^k$. Since $\tau^k = (-s/c)^k t^k = (-1)^k (p/q)^{k/2} t^k$ (using $s^2=p$, $c^2=q$)
    \begin{align}
        \Phi_n(t) = \frac{s^n c^{N-n}}{\sqrt{n!(N-n)!}} \sum_{k=0}^N \binom{N}{k} K_k(n;p,N)\, (-1)^k (p/q)^{k/2}\, t^k
    \end{align}
    Comparing coefficients of $t^k$ and multiplying by $\sqrt{k!(N-k)!}$
    \begin{align}
        \bra{ n, N{-}n} B(\theta) \ket{ k, N{-}k } = (-1)^k \binom{N}{k} \frac{\sqrt{k!(N-k)!}}{\sqrt{n!(N-n)!}}\, p^{(n+k)/2} q^{(N-n-k)/2}\, K_k(n;p,N)
    \end{align}
    Finally, 
    \begin{align}
        \binom{N}{k} \frac{\sqrt{k!(N-k)!}}{\sqrt{n!(N-n)!}} &= \frac{N!}{\sqrt{k!(N-k)!\, n!(N-n)!}} \\
        &= \sqrt{\binom{N}{n}\binom{N}{k}}
    \end{align}
\end{proof}

\section{Hahn polynomials and Clebsch-Gordan coefficients} \label{app:hahn-cg}
Throughout, $J_z\ket{j,m}=m\ket{j,m}$ and
$J_\pm\ket{j,m}=\sqrt{(j\mp m)(j\pm m+1)}\ket{j,m\pm1}$, the coefficient being
read as zero when $m\pm1\notin\{-j,\ldots,j\}$, and
$\mathbf J^2:=J_z^2+\tfrac12(J_+J_-+J_-J_+)$ acts on $V_j$ as $j(j+1)$.  On
$V_{j_1}\otimes V_{j_2}$ a subscript $i\in\{1,2\}$ indicates the tensor factor
acted on, and $\mathbf J^2:=(\mathbf J_1+\mathbf J_2)^2$ with
$J_z=J_{1z}+J_{2z}$, $J_\pm=J_{1\pm}+J_{2\pm}$; the space
$\mathcal V_M^{j_1,j_2}$, the parametrisation $m_1=M-m_--x$, $m_2=m_-+x$, the
weight $\rho$ and the constants $h_\ell$ are as in
Section~\ref{subsec:hahn_transform}.
\lemHahnCG*
\begin{proof}
Set
\begin{align*}
    s:=\frac{\alpha+\beta}{2}.
\end{align*}
The endpoints of the allowed $m_2$ interval are determined by
$-j_2\leq m_2\leq j_2$ and $-j_1\leq M-m_2\leq j_1$.  For the stated
parameters,
\begin{align*}
    M-j_1
    &=\frac{\beta-\alpha}{2}
      -\frac{N+\alpha+\beta}{2}
      =-\frac N2-\alpha,\\
    M+j_1
    &=\frac{\beta-\alpha}{2}
      +\frac{N+\alpha+\beta}{2}
      =\frac N2+\beta.
\end{align*}
Since $\alpha,\beta\geq0$, it follows that
\begin{align*}
    m_-&=\max\left(-j_2,M-j_1\right)=-\frac N2,\\
    m_+&=\min\left(j_2,M+j_1\right)=\frac N2.
\end{align*}
Moreover,
\begin{align*}
    j_1-j_2=s,
    \qquad
    |M|=\frac{|\beta-\alpha|}{2}\leq s,
\end{align*}
and hence the smallest total spin occurring on the fixed-$M$ shell, $\mathcal{V}_{M}^{j_1, j_2}$, is
$J_{\min}=s$.  For $x\in\{0,\ldots,N\}$, write
\begin{align}
\label{eq:hahn-cg-product-basis}
    \ket{x}
    &:={}
    \ket{j_1,m_1(x)}\ket{j_2,m_2(x)},\\
    m_1(x)&=M+\frac N2-x
      =\frac{N-\alpha+\beta}{2}-x,
    \qquad
    m_2(x)=-\frac N2+x.
\end{align}

We briefly recall the operators used below.  On the standard basis of $V_j$,
define
\begin{align*}
    J_z\ket{j,m}
    &=m\ket{j,m},\\
    J_+\ket{j,m}
    &=\sqrt{(j-m)(j+m+1)}\ket{j,m+1},\\
    J_-\ket{j,m}
    &=\sqrt{(j+m)(j-m+1)}\ket{j,m-1}.
\end{align*}
Thus $J_+$ and $J_-$ increase and decrease the label $m$ by one, respectively;
the coefficient vanishes when the resulting label would lie outside
$\{-j,\ldots,j\}$.  The quadratic operator
\begin{align*}
    \mathbf J^2
    :=J_z^2+\frac12(J_+J_-+J_-J_+)
\end{align*}
satisfies $\mathbf J^2\ket{j,m}=j(j+1)\ket{j,m}$.  On
$V_{j_1}\otimes V_{j_2}$, a subscript indicates the tensor factor on which an
operator acts; for example,
$J_{1-}J_{2+}=J_-\otimes J_+$.  The combined operators are
$J_z=J_{1z}+J_{2z}$ and $J_\pm=J_{1\pm}+J_{2\pm}$.

We next compute the matrix of the corresponding total quadratic operator
$\mathbf J^2=(\mathbf J_1+\mathbf J_2)^2$ in the basis
\eqref{eq:hahn-cg-product-basis}.  Expanding it gives
\begin{align*}
    \mathbf J^2
    =\mathbf J_1^2+\mathbf J_2^2
     +2J_{1z}J_{2z}+J_{1+}J_{2-}+J_{1-}J_{2+},
\end{align*}
the diagonal entry is
\begin{align}
\label{eq:hahn-cg-diagonal}
    d_x
    :=j_1(j_1+1)+j_2(j_2+1)+2m_1(x)m_2(x).
\end{align}
The operator $J_{1-}J_{2+}$ sends $\ket{x}$ to $\ket{x+1}$ with coefficient
\begin{align*}
    u_x
    &:=\sqrt{
       (j_1+m_1(x))(j_1-m_1(x)+1)
       (j_2-m_2(x))(j_2+m_2(x)+1)}\\
    &=\sqrt{
       (N+\beta-x)(\alpha+x+1)(N-x)(x+1)}.
\end{align*}
Similarly, $J_{1+}J_{2-}$ sends $\ket{x}$ to $\ket{x-1}$ with coefficient
$u_{x-1}$.  We set $u_{-1}=u_N=0$ to include the endpoints.

Let
\begin{align*}
    J=s+n,
    \qquad
    c_n(x):=
    C_{j_1,m_1(x);j_2,m_2(x)}^{s+n,M}.
\end{align*}
The equation $\mathbf J^2\ket{J,M}=J(J+1)\ket{J,M}$ therefore gives
\begin{align}
\label{eq:cg-casimir-recurrence}
    u_xc_n(x+1)+u_{x-1}c_n(x-1)+d_xc_n(x)
    =(s+n)(s+n+1)c_n(x).
\end{align}

We now compare this with the Hahn difference equation.  Define
\begin{align*}
    p_x&:=(x+\alpha+1)(N-x),\\
    q_x&:=x(N+\beta+1-x),\\
    \lambda_n&:=n(n+\alpha+\beta+1).
\end{align*}
Expanding the forward and backward differences in the Hahn equation yields
\begin{align}
\label{eq:hahn-three-term-x}
    p_xQ_n(x+1)+q_xQ_n(x-1)
    +\bigl(\lambda_n-p_x-q_x\bigr)Q_n(x)=0.
\end{align}
Indeed, the coefficient of $Q_n(x+1)$ is
\begin{align*}
&x(N+\beta+1-x)
 +N(\alpha+1)-(\alpha+\beta+2)x\\
&\hspace{3cm}=(x+\alpha+1)(N-x)=p_x,
\end{align*}
whereas the coefficient of $Q_n(x-1)$ is $q_x$; the remaining coefficient is
$-p_x-q_x+\lambda_n$.

The Hahn weight in \eqref{eq:hahn_weight} satisfies the detailed-balance
identity
\begin{align}
\label{eq:hahn-detailed-balance}
    \frac{\rho(x+1)}{\rho(x)}
    &=\frac{\alpha+x+1}{x+1}
      \frac{N-x}{N+\beta-x}
      =\frac{p_x}{q_{x+1}}.
\end{align}
Set
\begin{align*}
    \phi_n(x):=\widehat Q_n(x;\alpha,\beta,N)
    =\frac{\sqrt{\rho(x)}}{\sqrt{h_n}}Q_n(x;\alpha,\beta,N).
\end{align*}
Multiplying \eqref{eq:hahn-three-term-x} by
$\sqrt{\rho(x)/h_n}$ and using \eqref{eq:hahn-detailed-balance} gives
\begin{align}
\label{eq:normalized-hahn-recurrence}
    &\sqrt{p_xq_{x+1}}\,\phi_n(x+1)
    +\sqrt{q_xp_{x-1}}\,\phi_n(x-1)\notag\\
    &\hspace{2.5cm}
    +\bigl(\lambda_n-p_x-q_x\bigr)\phi_n(x)=0.
\end{align}
The off-diagonal coefficients are exactly those of the Casimir recurrence:
\begin{align*}
    p_xq_{x+1}
    &=(x+\alpha+1)(N-x)(x+1)(N+\beta-x)
      =u_x^2,\\
    q_xp_{x-1}&=u_{x-1}^2.
\end{align*}

It remains to compare the diagonal terms.  First,
\begin{align}
\label{eq:px-qx-expanded}
    p_x+q_x
    =N(\alpha+1)+2Nx+(\beta-\alpha)x-2x^2.
\end{align}
Since $j_1=N/2+s$, $j_2=N/2$, $m_1(x)=M+N/2-x$, and
$m_2(x)=x-N/2$, direct substitution into \eqref{eq:hahn-cg-diagonal} gives
\begin{align*}
    d_x
    &={}
    \left(\frac N2+s\right)
    \left(\frac N2+s+1\right)
    +\frac N2\left(\frac N2+1\right)
    +2\left(M+\frac N2-x\right)\left(x-\frac N2\right)\\
    &={}
    s(s+1)+N(\alpha+1)+2Nx+(\beta-\alpha)x-2x^2\\
    &=s(s+1)+p_x+q_x.
\end{align*}
Also, because $\alpha+\beta=2s$,
\begin{align*}
    (s+n)(s+n+1)
    &=s(s+1)+n(n+2s+1)\\
    &=s(s+1)+\lambda_n.
\end{align*}
Consequently,
\begin{align}
\label{eq:hahn-cg-diagonal-match}
    (s+n)(s+n+1)-d_x=\lambda_n-p_x-q_x.
\end{align}

Define $g_n(x):=(-1)^xc_n(x)$.  Substituting
$c_n(x)=(-1)^xg_n(x)$ into \eqref{eq:cg-casimir-recurrence}, multiplying by
$-1$, and using \eqref{eq:hahn-cg-diagonal-match} yields
\begin{align*}
    u_xg_n(x+1)+u_{x-1}g_n(x-1)
    +\bigl(\lambda_n-p_x-q_x\bigr)g_n(x)=0.
\end{align*}
This is precisely \eqref{eq:normalized-hahn-recurrence}.  For $N\geq1$, all
$u_x$ with $0\leq x<N$ are positive, so this irreducible tridiagonal
eigenvalue problem has a one-dimensional eigenspace for each eigenvalue
$\lambda_n$.  Both vectors are normalized:
\begin{align*}
    \sum_{x=0}^{N}|g_n(x)|^2&=1,\\
    \sum_{x=0}^{N}|\phi_n(x)|^2
    &=\frac{1}{h_n}\sum_{x=0}^{N}\rho(x)Q_n(x)^2=1.
\end{align*}
It follows that $g_n=\pm\phi_n$.

Finally, the sign is fixed at $x=0$.  Since $Q_n(0)=1$ and
$\rho(0),h_n>0$, we have $\phi_n(0)>0$.  We use the following form of the
Condon--Shortley factorial formula:
\begin{align*}
&C_{j_1,m_1;j_2,m_2}^{J,M}\\
&={}
\left[
\frac{
 (2J+1)(J+j_1-j_2)!(J-j_1+j_2)!(j_1+j_2-J)!
}{
 (j_1+j_2+J+1)!
}
\right]^{1/2}\\
&\quad\times
\left[
 (J+M)!(J-M)!
 (j_1-m_1)!(j_1+m_1)!
 (j_2-m_2)!(j_2+m_2)!
\right]^{1/2}\\
&\quad\times
\sum_z
\frac{(-1)^z}{
 z!(j_1+j_2-J-z)!(j_1-m_1-z)!(j_2+m_2-z)!}\\
&\hspace{3.8cm}\times
\frac{1}{
 (J-j_2+m_1+z)!(J-j_1-m_2+z)!},
\end{align*}
where the sum is over those integers for which every factorial is
nonnegative.  At $x=0$, one has $m_2=-j_2$, and the factor
$(j_2+m_2-z)!=(-z)!$ forces $z=0$.  The remaining factorial arguments are
\begin{align*}
    J+j_1-j_2&=\alpha+\beta+n,
    &J-j_1+j_2&=n,\\
    j_1+j_2-J&=N-n,
    &j_1+j_2+J+1&=N+\alpha+\beta+n+1,\\
    J+M&=\beta+n,
    &J-M&=\alpha+n,\\
    j_1-m_1(0)&=\alpha,
    &j_1+m_1(0)&=N+\beta,\\
    j_2-m_2(0)&=N,
    &j_2+m_2(0)&=0,\\
    J-j_2+m_1(0)&=\beta+n,
    &J-j_1-m_2(0)&=n.
\end{align*}
Substitution and cancellation therefore give
\begin{align*}
    c_n(0)
    =\left[
    \frac{
      (\alpha+\beta+2n+1)
      (\alpha+\beta+n)!(\alpha+n)!(\beta+N)!N!
    }{
      (\alpha+\beta+N+n+1)!(N-n)!\alpha!(\beta+n)!n!
    }
    \right]^{1/2}>0.
\end{align*}
Hence $g_n(0)=c_n(0)>0$, so $g_n=\phi_n$.  Recalling the definitions of
$g_n$, $\phi_n$, $m_1(x)$, and $m_2(x)$ proves
\begin{align*}
    \widehat Q_n(x;\alpha,\beta,N)
    =(-1)^x
    C_{j_1,M-m_--x;j_2,m_-+x}^{J_{\min}+n,M}.
\end{align*}
For $N=0$, both sides are the unique normalized one-dimensional vector and the
same conclusion is immediate.
\end{proof}
\section{Computation of the Racah Rotation Amplitudes}
\label{sec:racah_computation}

For a general Racah transformation, the change-of-basis coefficient is
\begin{align}
\label{eq:racah-6j-hahn}
    F_{ed}^{abcf}
    &=\braket{(a(bc)_d)_f}{((ab)_e c)_f}\\
    &=(-1)^{a+b+c+f}\sqrt{(2e+1)(2d+1)}
      \begin{Bmatrix}
          a&b&e\\
          c&f&d
      \end{Bmatrix}.
\end{align}
We now evaluate this formula for $c=\frac12$.  For
\begin{align*}
    ((a b)_e\,\tfrac12)_f
    \longleftrightarrow
    (a(b\,\tfrac12)_d)_f,
\end{align*}
the admissible intermediate labels are
\begin{align*}
    e=f\pm\frac12,
    \qquad
    d=b\pm\frac12,
\end{align*}
subject to the remaining triangle constraints.  Write
\begin{align*}
        e_-&:=f-\frac12,
        &e_+&:=f+\frac12,\\
        d_+&:=b+\frac12,
        &d_-&:=b-\frac12,
\end{align*}
and define
\begin{align*}
        \varepsilon&:=(-1)^{a+b+f+\frac12},\\
        R_+&:=(a+b+f+\tfrac32)(-a+b+f+\tfrac12),\\
        R_-&:=(a-b+f+\tfrac12)(a+b-f+\tfrac12),\\
        D&:=(2b+1)(2f+1).
\end{align*}
The needed evaluations follow from the two spin-$\frac12$ special cases in
the NIST Digital Library of Mathematical Functions, together with the standard
symmetries of the $6j$ symbol.\footnote{See the NIST Digital Library of
Mathematical Functions, Section~34.5(i), Eqs.~(34.5.2)--(34.5.3), for the two
spin-$\frac12$ evaluations, and Section~34.5(ii), Eq.~(34.5.8), for the
symmetries used here.  See also D.~A.~Varshalovich, A.~N.~Moskalev, and
V.~K.~Khersonskii, \emph{Quantum Theory of Angular Momentum}, World
Scientific, 1988, Section~9.6.}
For the $e_+$ row, Eq.~(34.5.2) with
\begin{align*}
    (j_1,j_2,j_3)=\left(a,b,f+\frac12\right)
\end{align*}
gives the $(e_+,d_+)$ symbol directly, while Eq.~(34.5.3) with the same
substitution gives the $(e_+,d_-)$ symbol.  The $e_-$ row follows by first
using the symmetries
\begin{align*}
    \begin{Bmatrix}
        a&b&e_-\\
        \frac12&f&d_+
    \end{Bmatrix}
    &=
    \begin{Bmatrix}
        a&f&d_+\\
        \frac12&b&e_-
    \end{Bmatrix},\\
    \begin{Bmatrix}
        a&b&e_-\\
        \frac12&f&d_-
    \end{Bmatrix}
    &=
    \begin{Bmatrix}
        a&e_-&b\\
        \frac12&d_-&f
    \end{Bmatrix}.
\end{align*}
The first expression is Eq.~(34.5.3) with
$(j_1,j_2,j_3)=(a,f,b+\frac12)$, and the second is Eq.~(34.5.2) with
$(j_1,j_2,j_3)=(a,f-\frac12,b)$.  In the notation of those formulas, their
phase exponent is $j_1+j_2+j_3$.  It equals
$a+b+f+\frac12$ in the first three cases and
$a+b+f-\frac12$ in the $(e_-,d_-)$ case, accounting for the sole minus sign
below.  After these substitutions, the four values are
\begin{align}
\label{eq:half-6j-values-hahn}
\begin{Bmatrix}
        a&b&e_-\\
        \frac12&f&d_+
\end{Bmatrix}
&=\varepsilon
    \sqrt{\frac{R_+}{D(2f)(2b+2)}},\\
\begin{Bmatrix}
        a&b&e_-\\
        \frac12&f&d_-
\end{Bmatrix}
&=-\varepsilon
    \sqrt{\frac{R_-}{D(2f)(2b)}},\\
\begin{Bmatrix}
        a&b&e_+\\
        \frac12&f&d_+
\end{Bmatrix}
&=\varepsilon
    \sqrt{\frac{R_-}{D(2f+2)(2b+2)}},\\
\begin{Bmatrix}
        a&b&e_+\\
        \frac12&f&d_-
\end{Bmatrix}
&=\varepsilon
    \sqrt{\frac{R_+}{D(2f+2)(2b)}}.
\end{align}
Since $c=\frac12$, the phase in \eqref{eq:racah-6j-hahn} is precisely
$\varepsilon$.  Also,
\begin{align*}
        2e_-+1&=2f,
        &2e_++1&=2f+2,\\
        2d_++1&=2b+2,
        &2d_-+1&=2b.
\end{align*}
Substituting these identities and \eqref{eq:half-6j-values-hahn} into
\eqref{eq:racah-6j-hahn} gives, entry by entry,
\begin{align*}
        F_{e_-,d_+}^{ab,\frac12,f}
        &=\varepsilon\sqrt{(2f)(2b+2)}
            \left(\varepsilon
            \sqrt{\frac{R_+}{D(2f)(2b+2)}}\right)
            =\sqrt{\frac{R_+}{D}},\\
        F_{e_-,d_-}^{ab,\frac12,f}
        &=\varepsilon\sqrt{(2f)(2b)}
            \left(-\varepsilon
            \sqrt{\frac{R_-}{D(2f)(2b)}}\right)
            =-\sqrt{\frac{R_-}{D}},\\
        F_{e_+,d_+}^{ab,\frac12,f}
        &=\varepsilon\sqrt{(2f+2)(2b+2)}
            \left(\varepsilon
            \sqrt{\frac{R_-}{D(2f+2)(2b+2)}}\right)
            =\sqrt{\frac{R_-}{D}},\\
        F_{e_+,d_-}^{ab,\frac12,f}
        &=\varepsilon\sqrt{(2f+2)(2b)}
            \left(\varepsilon
            \sqrt{\frac{R_+}{D(2f+2)(2b)}}\right)
            =\sqrt{\frac{R_+}{D}},
\end{align*}
where $\varepsilon^2=1$.  Hence, in the ordered old basis
$\{e_-,e_+\}$ and ordered new basis $\{d_+,d_-\}$, the Racah matrix is
\begin{align}
\label{eq:spin-half-racah-matrix-hahn}
    F(a,b,f)
    =\begin{pmatrix}
        \cos\vartheta&-\sin\vartheta\\
        \sin\vartheta&\cos\vartheta
      \end{pmatrix},
\end{align}
where
\begin{align}
\label{eq:spin-half-racah-angle-hahn}
    \cos\vartheta
    &=\sqrt{\frac{
       (a+b+f+\frac32)(-a+b+f+\frac12)}
       {(2b+1)(2f+1)}},\\
    \sin\vartheta
    &=\sqrt{\frac{
       (a-b+f+\frac12)(a+b-f+\frac12)}
       {(2b+1)(2f+1)}}.
\end{align}
Here the row is indexed by $e$ and the column by $d$.  The two numerators add
to the denominator:
\begin{align*}
&(a+b+f+\tfrac32)(-a+b+f+\tfrac12)\\
&\quad +(a-b+f+\tfrac12)(a+b-f+\tfrac12)
=(2b+1)(2f+1),
\end{align*}
so the displayed matrix is orthogonal.

For the $t$-th layer, the correspondence with
the general formula is
\begin{align*}
        a&=j_1,
        &b&=S_{t-1},
        &c&=\frac12,\\
        e&=K_{t-1},
        &f&=K_t,
        &d&=S_t.
\end{align*}
Therefore the ordered old and new labels are, respectively,
\begin{align*}
        K_{t-1}&\in
        \left\{K_t-\frac12,K_t+\frac12\right\},\\
        S_t&\in
        \left\{S_{t-1}+\frac12,S_{t-1}-\frac12\right\}.
\end{align*}
Substituting $a=j_1$, $b=S_{t-1}$, and $f=K_t$ into
\eqref{eq:spin-half-racah-angle-hahn} gives
\begin{align*}
        \cos\vartheta_t
        &=\sqrt{\frac{
            (j_1+S_{t-1}+K_t+\frac32)
            (-j_1+S_{t-1}+K_t+\frac12)}
            {(2S_{t-1}+1)(2K_t+1)}},\\
        \sin\vartheta_t
        &=\sqrt{\frac{
            (j_1-S_{t-1}+K_t+\frac12)
            (j_1+S_{t-1}-K_t+\frac12)}
            {(2S_{t-1}+1)(2K_t+1)}}.
\end{align*}
In particular, the coefficient stated above is the upper-left entry,
\begin{align*}
        F_{K_t-\frac12,S_{t-1}+\frac12}^{
            j_1,S_{t-1},\frac12,K_t}
        =\cos\vartheta_t.
\end{align*}
The other three entries are
\begin{align*}
        F_{K_t-\frac12,S_{t-1}-\frac12}^{
            j_1,S_{t-1},\frac12,K_t}
        &=-\sin\vartheta_t,\\
        F_{K_t+\frac12,S_{t-1}+\frac12}^{
            j_1,S_{t-1},\frac12,K_t}
        &=\sin\vartheta_t,\\
        F_{K_t+\frac12,S_{t-1}-\frac12}^{
            j_1,S_{t-1},\frac12,K_t}
        &=\cos\vartheta_t.
\end{align*}
At a triangle boundary one channel is absent, and the corresponding
one-dimensional transformation is used.  Thus the circuit computes
$\vartheta_t$ from $j_1,S_{t-1},K_t$, applies the phase-consistent
single-qubit rotation, and uncomputes its arithmetic workspace.

\section{WKB analysis for the radial Laguerre trial state}
\label{app:laguerre-wkb-analysis}

The WKB trial state is localized inside the oscillatory region of the degree-$n$ Laguerre function. Its natural radial scale is $\Lambda_{n,\nu}=4n+2\nu+2$, so we use the dimensionless position
\begin{align}
    F_{n,\nu}(x)&=(S\phi_{n,\nu})(x),
    &\Lambda_{n,\nu}&=4n+2\nu+2,
    &z_{n,\nu}(x)&=\frac{e^{2x}}{\Lambda_{n,\nu}}.
\end{align}
The phase and slowly varying amplitude in the fixed-order Laguerre asymptotics are
\begin{align}
    \xi(z)&=\frac12\left(\sqrt{z-z^2}+\arcsin\sqrt z\right),
    &q_{\rm WKB}(z)&=\left(\frac{z}{1-z}\right)^{1/4},\\
    \Theta_{n,\nu}(z)&=\Lambda_{n,\nu}\xi(z)-\frac{\pi\nu}{2}-\frac\pi4.
\end{align}
Fix a real $C^2$ function $\chi$ supported on $[1/4,3/4]$, equal to one on $[1/3,2/3]$, and satisfying $0\le\chi\le1$. This cutoff stays a fixed distance from both turning points. We form the localized oscillatory approximation
\begin{align}
    U_{n,\nu}(x)&=\chi(z_{n,\nu}(x))q_{\rm WKB}(z_{n,\nu}(x))\cos\Theta_{n,\nu}(z_{n,\nu}(x)),\\
    Z_{n,\nu}&=\|U_{n,\nu}\|_2^2,
    &G_{n,\nu}&=Z_{n,\nu}^{-1/2}U_{n,\nu}.
\end{align}

\begin{Lemma}[WKB trial state]
\label{lem:laguerre-wkb-trial}
For each fixed $\nu\ge0$, there exist a threshold $n_\nu$, one fixed normalized nonnegative function $G_\nu^{\rm lo}\in C_c^2(\R)$, and constants $c_{\rm trial}(\nu)>0$ and $C_{\rm trial}(\nu)\ge1$ with the following properties. Set $\mathcal G_{n,\nu}=G_\nu^{\rm lo}$ for $0\le n<n_\nu$ and $\mathcal G_{n,\nu}=G_{n,\nu}$ for $n\ge n_\nu$, and define $\psi_n=S^{-1}\mathcal G_{n,\nu}$. Then
\begin{align}
    \langle\phi_{n,\nu},\psi_n\rangle&\ge2c_{\rm trial}(\nu),\\
    \|(I-P_K)\psi_n\|_2^2&\le C_{\rm trial}(\nu)\frac{n+1}{K+1},
    \qquad K\ge n+1.
\end{align}
\end{Lemma}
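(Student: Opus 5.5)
The plan splits into two regimes, small $n<n_\nu$ and large $n\ge n_\nu$. The spectral tail bound follows in both from a single energy estimate: if $\langle\psi_n,H_\nu\psi_n\rangle\le C(n+1)$, then since $H_\nu\ge(2K+\nu+1)$ on $\operatorname{ran}(I-P_K)$, Markov's inequality gives
\begin{align*}
\|(I-P_K)\psi_n\|_2^2\le\frac{\langle\psi_n,H_\nu\psi_n\rangle}{2K+\nu+1}\le C_{\rm trial}(\nu)\frac{n+1}{K+1}.
\end{align*}
So the work is to bound the energy of the trial state by $O_\nu(n+1)$ and its overlap from below.

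\emph{Small $n$.} For $0\le n<n_\nu$ there are finitely many $n$, so a single fixed function suffices. Take $G_\nu^{\rm lo}$ to be a normalized nonnegative $C_c^2$ bump in the log coordinate. Choose it supported where $S\phi_{0,\nu}>0$ and, after shrinking, where it has positive overlap with each $S\phi_{n,\nu}$, $n<n_\nu$. If some $\phi_{n,\nu}$ has sign changes that make the overlap vanish, multiply the bump by a fixed sign pattern; the statement only requires one fixed function, so I would first try a bump near the origin. Near $r=0$ each $\phi_{n,\nu}$ has the sign of $L_n^{(\nu)}(0)>0$, so a bump concentrated at small $r$ has positive overlap with all $\phi_{n,\nu}$, $n<n_\nu$, simultaneously. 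Compact support in $x$ gives finite energy, and the constant absorbs the factor $n+1$.

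\emph{Large $n$, overlap.} Use the uniform fixed-order Plancherel--Rotach (oscillatory-region) asymptotics for Laguerre functions. On $z\in[1/4,3/4]$,
\begin{align*}
F_{n,\nu}(x)=A_{n,\nu}\,q_{\rm WKB}(z)\cos\Theta_{n,\nu}(z)+O(n^{-1})
\end{align*}
uniformly, with an explicit amplitude $A_{n,\nu}$ normalizing to $\Theta(1)$ in $L^2(dx)$. Then $\langle F,U\rangle=A\int\chi q^2\cos^2\Theta\,dx+O(n^{-1})$. Write $\cos^2=\tfrac12(1+\cos2\Theta)$. The oscillatory part is $O(n^{-1})$ by one integration by parts, since $d\Theta/dx=2z\Lambda\xi'(z)\ge c\Lambda$ on the support of $\chi$ and $\chi\in C^2$. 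The remainder is $\tfrac12\int\chi q^2\,dx$, a positive constant independent of $n$ because $dx=dz/(2z)$. The same computation gives $Z_{n,\nu}=\Theta(1)$, so $\langle\phi_{n,\nu},\psi_n\rangle\ge 2c_{\rm trial}$ for $n\ge n_\nu$.

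\emph{Large $n$, energy.} Compute $\langle\psi_n,H_\nu\psi_n\rangle=\tfrac12(\|\psi_n'\|^2+\langle(\nu^2-\tfrac14)r^{-2}\psi_n,\psi_n\rangle+\|r\psi_n\|^2)$ in the $r$ variable. On the support, $r^2\in[\Lambda/4,3\Lambda/4]$, so the potential terms are $O(\Lambda)$. The derivative of the WKB phase in $r$ is $O(\sqrt\Lambda)$; derivatives falling on $\chi$ or $q$ are smaller. Hence the energy is $O_\nu(n+1)$.

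The main obstacle is citing or deriving the uniform oscillatory asymptotics with error $O(n^{-1})$ and the exact phase normalization $\Theta_{n,\nu}$ matching the chosen $\xi$. I would quote Szeg\H{o}'s/Erd\'elyi's Laguerre asymptotics in the bulk and check that the phase equals $\Lambda\xi(z)-\pi\nu/2-\pi/4$ by matching the derivative of $\xi$ with the local WKB wavenumber $\sqrt{\Lambda-r^2}$.
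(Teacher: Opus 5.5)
Your proposal is correct and follows essentially the paper's proof. The ingredients match: a single nonnegative bump near the origin for $n<n_\nu$, bulk Plancherel--Rotach asymptotics with one integration by parts on the $\cos 2\Theta_{n,\nu}$ term (giving both $Z_{n,\nu}$ bounded above and below and the overlap lower bound), an $O(\Lambda_{n,\nu})$ energy bound on the fixed-log-width support, and Markov's inequality on the spectral expansion. Positivity of $L_n^{(\nu)}(0)$ makes your sign-pattern fallback unnecessary, which is just as well since it would break the required nonnegativity of $G_\nu^{\rm lo}$. The paper cites Frenzen--Wong and evaluates the quadratic form in the log coordinate, while you work in $r$; these are equivalent.

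One caution on the phase check. Matching $d\Theta/dr=\sqrt{\Lambda_{n,\nu}-r^2}$ fixes $\Theta_{n,\nu}$ only up to an additive constant. The offset $-\pi\nu/2-\pi/4$ and the positive sign of the amplitude must therefore be read off the quoted formula itself. With $z=\cos^2\phi$, the classical form $(-1)^n\sin[(n+\frac{\nu+1}{2})(\sin 2\phi-2\phi)+\frac{3\pi}{4}]$ equals $\cos\Theta_{n,\nu}(z)$ exactly, with a positive prefactor. This matters because an offset error of $\pi$ would flip the sign of the overlap.
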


\begin{proof}
The fixed-order Frenzen--Wong expansion \cite{frenzen1988uniform}, specialized to $[a,b]=[1/4,3/4]$, gives uniformly in $z\in[a,b]$ 
\footnote{see also NIST DLMF
Section 18.15(iv), equations 18.15.17--18.15.20.}
\begin{align}
 L_n^{(\nu)}(\Lambda_{n,\nu}z)
 ={}&\frac{e^{\Lambda_{n,\nu}z/2}}
 {2^\nu z^{\nu/2+1/4}(1-z)^{1/4}}
 \sqrt{\frac{2}{\pi\Lambda_{n,\nu}}}\\
 &\quad\times
 \left(\cos\Theta_{n,\nu}(z)
 +O_\nu(\Lambda_{n,\nu}^{-1})\right).
 \label{eq:laguerre-wkb-uniform-input}
\end{align}
The cited remainder majorant is explicit in the fixed order, compact interval,
and truncation index. It therefore supplies a constant and threshold for fixed $\nu$. Substitution
of \eqref{eq:laguerre-wkb-uniform-input} into the normalized function
$F_{n,\nu}=S\phi_{n,\nu}$ cancels the exponential factors and gives
\begin{align}
 F_{n,\nu}\!\left(\frac12\log(\Lambda_{n,\nu}z)\right)
 =c_{n,\nu}q_{\rm WKB}(z)
 \left(\cos\Theta_{n,\nu}(z)+R_{n,\nu}(z)\right),
 \qquad
 \|R_{n,\nu}\|_\infty\le\frac{C_\nu}{\Lambda_{n,\nu}},
\end{align}
where
\begin{align}
 c_{n,\nu}=\frac{2^{1-\nu}}{\sqrt\pi}
 \sqrt{\frac{\Gamma(n+1)}{\Gamma(n+\nu+1)}}
 \Lambda_{n,\nu}^{\nu/2}\longrightarrow\frac2{\sqrt\pi}.
\end{align}

Direct differentiation gives
\begin{align}
 \xi'(z)=\frac12\sqrt{\frac{1-z}{z}},
 \qquad
 \frac d{dz}\bigl(2\Theta_{n,\nu}(z)\bigr)
 =\Lambda_{n,\nu}\sqrt{\frac{1-z}{z}}.
\end{align}
Since $dx=dz/(2z)$, one integration by parts gives, for $r=1,2$,
\begin{align}
 \left|\int_a^b\chi(z)^r q_{\rm WKB}(z)^2
 e^{2i\Theta_{n,\nu}(z)}\frac{dz}{2z}\right|
 \le\frac{C_r}{\Lambda_{n,\nu}},
 \qquad
 C_r=\frac12\int_a^b
 \left|\frac d{dz}\frac{\chi(z)^r}{1-z}\right|dz.
\end{align}
There is no boundary term because $\chi(a)=\chi(b)=0$.  Writing
$\cos^2\Theta=(1+\cos2\Theta)/2$ shows that
\begin{align}
 Z_{n,\nu}
 =\frac12\int_a^b\chi(z)^2q_{\rm WKB}(z)^2\frac{dz}{2z}
 +O_\nu(\Lambda_{n,\nu}^{-1})
\end{align}
is bounded above and below by positive constants.  Pairing the bulk formula
with $G_{n,\nu}$ similarly gives
\begin{align}
 \langle F_{n,\nu},G_{n,\nu}\rangle
 =\frac{c_{n,\nu}}{2\sqrt{Z_{n,\nu}}}
 \int_a^b\chi(z)q_{\rm WKB}(z)^2\frac{dz}{2z}
 +O_\nu(\Lambda_{n,\nu}^{-1}).
\end{align}
The main term is strictly positive.  An effective $n_\nu$ therefore gives a
uniform positive lower bound for all $n\ge n_\nu$.  For the finitely many
lower labels, $L_m^{(\nu)}(0)>0$.  Continuity supplies one common compact
interval on which every $F_{m,\nu}$, $m<n_\nu$, is positive.  This interval
can be found effectively: evaluate the finitely many computable Laguerre
polynomials by directed interval arithmetic on successive dyadic intervals
$[-q-1,-q]$ in the log coordinate until positivity is certified.  Choose a
dyadic subinterval of the first certified interval and place on it a fixed
nonnegative piecewise-quintic $C_c^2$ bump.  Its normalization is a computable
constant, so this gives the stated efficiently computable
$G_\nu^{\rm lo}$.  The finitely many overlap integrals are positive and admit
certified lower bounds by interval quadrature.  Choose
$c_{\rm trial}(\nu)$ smaller than one half of the resulting common lower
bound.  Then
\begin{align}
 \langle\phi_{n,\nu},\psi_n\rangle
 =\langle F_{n,\nu},\mathcal G_{n,\nu}\rangle
 \ge2c_{\rm trial}(\nu)
\end{align}
for every $n\ge0$.

It remains to bound spectral leakage.  The compact trial states belong to the
Friedrichs form domain.  In log coordinates, for
$g=\mathcal G_{n,\nu}$,
\begin{align}
 \langle S^{-1}g,H_\nu S^{-1}g\rangle
 =\frac12\int_\R\Bigl[
 &e^{-2x}|g'(x)-g(x)/2|^2
 +(\nu^2-1/4)e^{-2x}|g(x)|^2\\
 &+e^{2x}|g(x)|^2\Bigr]dx.
\end{align}
On the bulk support, $e^{2x}=\Lambda_{n,\nu}z$ and
\begin{align}
 \frac d{dx}\Theta_{n,\nu}(z_{n,\nu}(x))
 =\Lambda_{n,\nu}
 \sqrt{z_{n,\nu}(x)(1-z_{n,\nu}(x))}.
\end{align}
The functions $\chi$, $q_{\rm WKB}$, and their first derivatives are bounded
on $[a,b]$, while the normalization $Z_{n,\nu}$ is bounded away from zero.
Thus, on the support of a bulk trial state,
\begin{align}
 |g(x)|&\le C_\nu,&
 |g'(x)|&\le C_\nu\Lambda_{n,\nu},\\
 e^{-2x}&=\frac{1}{\Lambda_{n,\nu}z}=O(\Lambda_{n,\nu}^{-1}),&
 e^{2x}&=\Lambda_{n,\nu}z=O(\Lambda_{n,\nu}).
\end{align}
Its support has the fixed log width $\frac12\log(b/a)$.  Substitution in the
quadratic form gives
\begin{align}
 \langle\psi_n,H_\nu\psi_n\rangle
 \le C_\nu\Lambda_{n,\nu}
 \le C_\nu'(n+1).
\end{align}
The common low branch has fixed finite energy, so the same bound holds after
enlarging $C_\nu'$.  Expand
$\psi_n=\sum_{k\ge0}c_k\phi_{k,\nu}$.  Every $k\ge K$ has energy at
least $2K+\nu+1$, and hence
\begin{align}
 \|(I-P_K)\psi_n\|_2^2
 &=\sum_{k\ge K}|c_k|^2
 \le\frac{1}{2K+\nu+1}
 \sum_{k\ge K}(2k+\nu+1)|c_k|^2\\
 &\le\frac{\langle\psi_n,H_\nu\psi_n\rangle}{2K+\nu+1}
 \le C_{\rm trial}(\nu)\frac{n+1}{K+1}.
\end{align}
This proves both assertions of the lemma.
\end{proof}

\end{document}